\documentclass[11pt]{article}
\usepackage[utf8]{inputenc}
\usepackage{amsmath,amsfonts,amsthm,amssymb,url, bbm}
\usepackage{xcolor}
\usepackage{fullpage}
\usepackage{thmtools, thm-restate}
\usepackage{algorithm,algpseudocode}
\usepackage[T1]{fontenc}
\usepackage[colorlinks=true, linkcolor=red, urlcolor=blue, citecolor=gray]{hyperref}
\usepackage{mathtools}
\usepackage{footnote}
\usepackage{graphicx}
\usepackage{caption}
\usepackage{subcaption}
\usepackage{tikz}
\usetikzlibrary{calc,arrows.meta,positioning}
\usetikzlibrary{decorations.pathreplacing}
\usetikzlibrary{arrows}

\usepackage[color=blue!10,linecolor=blue,textsize=footnotesize]{todonotes}

\newtheorem{theorem}{Theorem}
\newtheorem{fact}{Fact}
\newtheorem{lemma}{Lemma}
\newtheorem{corollary}{Corollary}
\newtheorem{definition}{Definition}[section]

\newenvironment{reptheorem}[1]
  {\rthm}
  {\endrthm}
\newenvironment{repcorollary}[1]
  {\rcrlry}
  {\endrcrlry}

\newcount\Comments  
\Comments=1
\newcommand{\kibitz}[2]{\ifnum\Comments=1\textcolor{#1}{#2}\fi}

\hypersetup{linktocpage}
\definecolor{ForestGreen}{RGB}{34,139,34}

\definecolor{Orangish}{RGB}{210,105,30}
\definecolor{Cyanish}{RGB}{64,224,208}

\usepackage{color}


\bibliographystyle{alpha}

\title{Improved Spectral Density Estimation via Explicit and Implicit Deflation}
\author{
   Rajarshi Bhattacharjee\\UMass Amherst\\ \texttt{rbhattacharj@cs.umass.edu}
	\and 
	Rajesh Jayaram\\Google Research\\ \texttt{rkjayaram@google.com }
	\and 
   Cameron Musco\\UMass Amherst\\ \texttt{cmusco@cs.mass.edu}
	\and 
   Christopher Musco\\New York University\\ \texttt{cmusco@nyu.edu}
	\and 
	Archan Ray\\Sloan Kettering Institute\\ \texttt{talk2archan@gmail.com}
}
\date{}

  \usepackage{nth}
  \usepackage{intcalc}

\definecolor{darkgreen}{HTML}{1b7837}

\newcommand{\bv}[1]{\mathbf{#1}}

\DeclareUnicodeCharacter{2212}{-}

\DeclareMathOperator*{\argmin}{arg\,min}

\DeclareMathOperator{\range}{range}
\newcommand{\R}{\mathbb{R}}

\newcommand{\E}{\mathbb{E}}
\newcommand{\poly}{\mathop\mathrm{poly}}

\DeclareMathOperator{\rank}{rank}

\newcommand{\Vb}{\bv V}
\newcommand{\Ub}{\bv U}
\newcommand{\Qb}{\bv Q}
\newcommand{\Tb}{\bv T}

\newcommand{\vb}{\bv v}

\newcommand{\Xb}{\bv X}

\newcommand{\Ab}{\mathbf{A}}
\newcommand{\Bb}{\mathbf{B}}
\newcommand{\Cb}{\mathbf{C}}
\newcommand{\s}{\text{s}}
\newcommand{\q}{\text{q}}
\newcommand{\W}{\text{W}}
\newcommand{\zb}{\mathbf{z}}
\newcommand{\ub}{\mathbf{u}}

\newcommand{\xb}{\mathbf{x}}
\newcommand{\Pb}{\mathbf{P}}

\newcommand{\Zb}{\mathbf{Z}}
\newcommand{\Ib}{\mathbf{I}}
\newcommand{\n}{{n \times n}}

\newcommand{\tth}{\textsuperscript{th} }

\DeclareMathOperator{\tr}{tr}
\newcommand{\norm}[1]{\|#1\|}

\begin{document}

\sloppy

\begin{titlepage}
\maketitle
 \thispagestyle{empty}
\begin{abstract}
We study algorithms for approximating the spectral density (i.e., the eigenvalue distribution) of a symmetric matrix $\mathbf A \in \R^{n \times n}$ that is accessed through matrix-vector product queries. Recent work has analyzed  popular Krylov subspace methods for this problem,
showing that they output an $\epsilon \cdot \|\bv A\|_2$ error approximation to the spectral density in the Wasserstein-$1$ metric using $O(1/\epsilon)$ matrix-vector products. By combining a previously studied Chebyshev polynomial moment matching method with a \emph{deflation} step that approximately projects off the largest magnitude eigendirections of $\mathbf A$ before estimating the spectral density, we give an improved error bound of $\epsilon \cdot \sigma_{\ell}(\mathbf A)$ using $O(\ell \log n+ 1/\epsilon)$ matrix-vector products, where $\sigma_\ell(\mathbf A)$ is the $\ell^{th}$ largest singular value of $\mathbf A$. In the common case when $\mathbf A$ exhibits fast singular value decay and so $\sigma_\ell(\mathbf A) \ll \norm{\mathbf A}_2$, our bound can be much stronger than prior work. We also show that it is nearly tight: any algorithm giving error $\epsilon \cdot \sigma_\ell(\mathbf A)$ must use $\Omega(\ell+1/\epsilon)$ matrix-vector products.

We further show that the popular Stochastic Lanczos Quadrature (SLQ) method  essentially matches the above bound \emph{for any choice of parameter $\ell$}, even though SLQ itself is parameter-free and performs no explicit deflation. Our bound helps to explain the strong practical performance and observed `spectrum adaptive' nature of SLQ, and motivates a simple variant of the method that achieves an even tighter error bound.
Technically, our results require a careful analysis of how eigenvalues and eigenvectors are approximated by (block) Krylov subspace methods, which may be of independent interest. Our error bound for SLQ  leverages an analysis of the method  that views it as an implicit polynomial moment matching method, along with recent results on low-rank approximation with single-vector Krylov methods. We use these results to show that the method can perform `implicit deflation' as part of moment matching.

\end{abstract}

\end{titlepage}

\tableofcontents
\thispagestyle{empty}

\clearpage

\section{Introduction}

\emph{Spectral density estimation (SDE)} is a fundamental task in computational linear algebra.
Given a symmetric matrix $\bv A \in \R^{n \times n}$ with eigenvalues $\lambda_1(\bv A),\ldots,\lambda_n(\bv A)$, the goal is to approximate $\bv A$'s eigenvalue distribution (i.e., its spectral density) $s_{\bv A}$, which is the distribution that places probability mass $1/n$ at each of $\bv A$'s $n$ eigenvalues. Formally, letting $\delta(\cdot)$ be the Dirac delta function,
\begin{align}\label{eq:spectralDensity}
s_{\bv A}(x) = \frac{1}{n} \sum_{i=1}^n \delta(x-\lambda_i(\bv{A})).
\end{align}
Understanding $s_\bv{A}$ can provide important information about the input matrix $\bv A$. To what degree does it exhibit low-rank structure (i.e., have a decaying eigenvalues)? How close is the spectrum, or some part of the spectrum, to that of a random matrix? Does $\bv {A}$ have many repeated or nearly repeated eigenvalues that may indicate anomalies or other interesting structure? As such, spectral density estimation is applied throughout the sciences \cite{Skilling:1989,SilverRoder:1994,SbierskiTrescherBergholtz:2017,SchnackRichterSteinigeweg:2020}, network science \cite{FarkasDerenyiBarabasi:2001,EikmeierGleich:2017,DongBensonBindel:2019}, machine learning and deep learning in particular \cite{RameshLeCun:2018,PenningtonSchoenholzGanguli:2018,MahoneyMartin:2019,GhorbaniKrishnanXiao:2019}, numerical linear algebra \cite{Di-NapoliPolizziSaad:2016,LiXiErlandson:2019}, and beyond.

The spectral density $s_{\bv A}$ can be computed directly by performing a full eigendecomposition of $\bv A$, in $O(n^\omega)$ time, for $\omega \approx 2.37$ being the exponent of fast matrix multiplication \cite{parlett1998symmetric,PanChen:1999,DemmelDumitriuHoltz:2007}. However, when $\bv A$ is very large, or in settings where $\bv A$ can only be accessed through a small number of queries, we often seek an approximation $\tilde s_{\bv A}$, such that $\tilde s_{\bv A}$ and $s_{\bv A}$ are close in some metric. In this work we will focus on the Wasserstein-$1$ (i.e., earth mover's) distance, $W_1(s_{\bv A}, \tilde s_{\bv A})$, which has been studied in a number of recent works giving formal approximation guarantees for SDE \cite{Cohen-SteinerKongSohler:2018,chen2021analysis,braverman:2022,ChenTrogdonUbaru:2022,JinMuscoSidford:2023,JinKarmarkarMusco:2024}. When $\tilde s_{\bv A}$ is the uniform distribution over approximate eigenvalues $\tilde \lambda_1(\bv A),\ldots,\tilde \lambda_n(\bv A)$, and when we order both sets of eigenvalues in decreasing order,  $W_1(s_{\bv A},\tilde s_{\bv A}) = \frac{1}{n} \sum_{i=1}^n |\lambda_i(\bv A)-\tilde \lambda_i(\bv A)|$. I.e., it is the average absolute error of our eigenvalue estimates. More generally, when $\tilde s_{\bv A}$ is any distribution, $W_1(s_{\bv A},\tilde s_{\bv A})$ is the minimum cost of transforming  $s_{\bv A}$ into $\tilde s_{\bv A}$, where moving probability $\delta$ from $x$ to $y$ incurs cost $\delta \cdot |x - y|$.

\subsection{Matrix-Vector Query Algorithms for SDE}

Given its practical importance, efficient  algorithms for SDE have been widely studied \cite{BenoitRoyerPoussigue:1992,Wang:1994,WeisseWelleinAlvermann:2006,LinSaadYang:2016,chen2021analysis,braverman:2022}. A large fraction of these methods operate in the \emph{matrix-vector query model}: they only access the input matrix $\bv A \in \R^{n \times n}$ through multiplication on the left or right with a sequence of (possibly adaptively chosen) query vectors $\bv x_1,\ldots,\bv x_m \in \R^n$. The goal is to minimize the number of queries $m$, which typically dominate the runtime cost. 

Matrix-vector query algorithms encompass both linear sketching methods (when queries are chosen non-adaptively) and Krylov subspace methods (when queries are of the form $\bv x, \bv A\bv x, \bv A^2 \bv x,\ldots$, for some starting vector $\bv x$, or set of starting vectors). Beyond spectral density estimation, they are the dominant algorithms in practice for many linear algebraic problems, including eigenvalue and eigenvector computation \cite{parlett1998symmetric}, low-rank approximation \cite{HalkoMartinssonTropp:2011,Musco:2015}, linear system solving \cite{LehoucqSorensenYang:1998,Saad:2003}, and beyond. Such methods typically have low-memory overhead, since even when $\bv A$ is very large, they only need to store the outputs of the matrix-vector products. Further, they can often take advantage of highly optimized software and hardware for matrix-vector multiplication, including parallel hardware like GPUs, and faster matrix-vector multiplication routines when $\bv A$ is sparse or structured. Moreover, matrix-vector query algorithms are applicable in settings where $\bv A$ cannot be efficiently materialized, but can be efficiently multiplied by vectors. This is the case e.g., when $\bv A$ is the Hessian of a neural network \cite{Pearlmutter:1994,GhorbaniKrishnanXiao:2019} or a function of some other matrix that can be efficiently applied to vectors using e.g., an iterative method \cite{ubaru2017fast}.

Recently, matrix-vector query algorithms have received significant attention in theoretical work on numerical linear algebra since in many cases, it is possible to prove (nearly) matching query complexity upper and lower bounds for central problems like trace estimation \cite{meyer2021hutch++}, low-rank approximation \cite{SimchowitzEl-AlaouiRecht:2018,BakshiClarksonWoodruff:2022,BakshiNarayanan:2023}, linear regression \cite{BravermanHazanSimchowitz:2020}, structured matrix approximation \cite{HalikiasTownsend:2022,DharangutteMusco:2023,AmselChenHalikias:2024}, and beyond \cite{SunWoodruffYang:2021,NeedellSwartworthWoodruff:2022,SwartworthWoodruff:2023}.

Most state-of-the-art matrix-vector query algorithms for spectral density estimation are Krylov subspace methods that fall into two general classes.

\medskip

\noindent\textbf{Moment Matching.} The first class of methods  approximates $s_{\bv A}$ by approximating its polynomial moments. I.e., $\E_{s_\bv{A}} [p(x)] = \frac{1}{n} \sum_{i=1}^n p(\lambda_i(\bv A)) = \frac{1}{n}\tr(p(\bv A))$, where $p$ is a low-degree polynomial. We can employ stochastic trace estimation methods like Hutchinson's method \cite{Girard:1987,Hutchinson:1990} to approximate this trace using just a small number of  matrix-vector products with $p(\bv A)$ and in turn $\bv A$, since if $p$ has degree $k$, a single matrix-vector product with $p(\bv A)$ can be performed using $k$ matrix vector products with $\bv A$. After approximating the moments for a set of low-degree polynomials (e.g., the first $k$ monomials, or the first $k$ Chebyshev polynomials), we can let $\tilde s_{\bv A}$ be a distribution that matches these moments as closely as possible, and thus should closely match $s_{\bv A}$.

Moment matching methods include the popular Kernel Polynomial Method (KPM)  \cite{SilverRoder:1994,Wang:1994,WeisseWelleinAlvermann:2006}  and its  variants \cite{CovaciPeetersBerciu:2010,LinSaadYang:2016,braverman:2022,Chen:2023}. Several works also use moment matching to give sublinear time SDE methods for graph adjacency matrices \cite{Cohen-SteinerKongSohler:2018,braverman:2022,JinMuscoSidford:2023}, leveraging structure to estimate moments faster than with matrix-vector queries.

\medskip

\noindent\textbf{Lanczos-Based Methods.} The second class of methods computes a small number of approximate eigenvalues of $\bv A$ using the Lanczos method, and lets $\tilde s_{\bv A}$ be a distribution supported on these eigenvalues, with appropriately chosen probability mass placed at each. The canonical method of this form is Stochastic Lanczos Quadrature (SLQ) \cite{chen2021analysis}. Many other variants have also been studied. Some place probability mass not just at the approximate eigenvalues, but on Gaussian or other simple distributions centered at these eigenvalues \cite{LambinGaspard:1982,BenoitRoyerPoussigue:1992,LinSaadYang:2016,HaydockHeineKelly:1972}.

\subsection{Existing Bounds}\label{sec:existing}

While matrix-vector query algorithms for SDE have been studied for decades, theoretical guarantees on their approximation error in terms of the distance between the true spectral density $s_{\bv A}$ and the approximate density $\tilde s_{\bv A}$ have only recently been formalized.
Braverman et al. \cite{braverman:2022} analyze a  Chebyshev Moment Matching method, which can be thought of as a simple variant of KPM, showing that the method can compute $\tilde s_{\bv A}$ satisfying $W_1(s_{\bv A},\tilde s_{\bv A}) \le \epsilon \cdot \norm{\bv A}_2$ with  probability $\ge 1-\delta$ using just $O(b/\epsilon)$ matrix vector products, where $b = \max(1, \frac{1}{n \epsilon^2} \log^2 \frac{1}{\epsilon \delta} \log^2 \frac{1}{\epsilon})$. Note that $b = 1$ in the common case when $\epsilon = \tilde \Omega(1/\sqrt{n})$. Here $\norm{\bv A}_2$ denotes the spectral norm of $\bv A$ -- i.e., its  largest eigenvalue magnitude. They prove a similar guarantee for KPM itself, but with a worse dependence on $\epsilon$. Chen et al. \cite{chen2021analysis,ChenTrogdonUbaru:2022} prove that the Lanczos-based SLQ method gives essentially the same approximation bound: error $\epsilon \cdot \norm{\bv A}_2$ using $O(1/\epsilon)$ matrix-vector products when $\epsilon = \tilde \Omega(1/\sqrt{n})$.\footnote{Work on eigenvalue estimation \cite{AndoniNguyen:2013,BhattacharjeeDexterDrineas:2024,SwartworthWoodruff:2023}  can also  give guarantees for SDE. However they are generally weaker than those discussed above. E.g., \cite{SwartworthWoodruff:2023} shows how to approximate all eigenvalues of symmetric $\bv A$ to additive error $\epsilon \norm{\bv A}_F$ using $O(1/\epsilon^2)$ matrix-vector products, which is optimal. Letting $\tilde s_{\bv A}$ be the uniform distribution over their approximate eigenvalues,  we obtain the somewhat weak bound of $W_1(s_{\bv A},\tilde s_{\bv A}) \le \epsilon \norm{\bv A}_F$.} 

The above error bounds for KPM, Chebyshev Moment Matching, and SLQ help to justify the effectiveness of these methods in practice. However, in many cases, they can be loose. A bound of $W_1(s_{\bv A},\tilde s_{\bv A}) \le \epsilon \cdot \norm{\bv A}_2$ roughly corresponds to estimating each eigenvalue to average error $\epsilon \cdot \norm{\bv A}_2$. Many matrices however exhibit spectral decay: most of their eigenvalues are much smaller in magnitude than their largest (i.e., than $\norm{\bv A}_2$). Thus, this error bound does not guarantee that $\tilde s_{\bv A}$ effectively captures information about $\bv A$'s smaller magnitude eigenvalues.   

\subsection{Our Results}

Our main contribution is to show that both moment matching and Lanczos based methods for SDE can achieve improved bounds on $W_1(s_{\bv A},\tilde s_{\bv A})$ that depend on $\sigma_{l+1}(\bv A)$, the $(l+1)^{st}$ largest singular value of $\bv A$ (i.e., the $(l+1)^{st}$ largest eigenvalue magnitude) for some  parameter $l$, instead of $\norm{\bv A}_2$. For matrices that exhibit spectral decay and thus have $\sigma_{l+1}(\bv A) \ll \sigma_1(\bv A) = \norm{\bv A}_2$, our bounds can be much stronger than those given in prior work. 

\vspace{-.25em}
\subsubsection{Improved SDE via Moment Matching with Explicit Deflation}\label{sec:initialDeflation}

Our first contribution is a modification of the moment matching method of \cite{braverman:2022} that first `deflates' off any eigenvalue of $\bv A$ with magnitude significantly larger than $\sigma_{l+1}(\bv A)$, before estimating the spectral density. Specifically, the method uses a block Krylov subspace method \cite{Musco:2015,tropp:2018} to first compute highly accurate approximations to the $p$ largest magnitude eigenvalues of $\bv A$, for some $p \le l$, along with an orthonormal matrix $\bv Z \in \R^{n \times p}$ with columns approximating the corresponding eigenvectors. 
It uses moment matching to estimate the spectral density of $\bv{A}$ projected away from these approximate eigendirections $(\bv I - \bv{ZZ}^T)\bv A (\bv I-\bv{ZZ}^T)$, achieving error $\epsilon \sigma_{l+1}(\bv A)$ since this matrix has spectral norm bounded by $O(\sigma_{p+1}(\bv A)) = O(\sigma_{l+1}(\bv A))$ if $\bv Z$ is sufficiently accurate. It then modifies this approximate density to account for the probability mass at the top $p$ eigenvalues. While block Krylov methods are well understood for the closely related tasks of low-rank approximation and singular value approximation \cite{Musco:2015,Musco:2018,BakshiNarayanan:2023}, our work requires a careful analysis of eigenvalue/eigenvector approximation with these methods that may be of independent interest -- see Section \ref{sec:techOverview} for details.
Overall, the above approach gives the following result:
\begin{theorem}[SDE with Explicit Deflation]\label{thm:sde1}
  Let $\Ab \in \R^\n$ be symmetric. For any $\epsilon \in (0,1)$,  $l \in [n]$, and any constants $c_1,c_2 > 0$, Algorithm~\ref{alg:sde} performs $O\left(l\log n+\frac{b}{\epsilon} \right)$ matrix-vector products with $\Ab$ where $b=\max\left(1,\frac{1}{n\epsilon^2}\log^2 \frac{n}{\epsilon}\log^2 \frac{1}{\epsilon} \right)$ and computes a probability density function $\tilde s_{\bv A}$ such that, with probability at least $1-\frac{1}{n^{c_1}}$, \vspace{-.75em} $$\W_1(s_{\Ab},\tilde s_{\bv A}) \leq \epsilon \cdot \sigma_{l+1}(\Ab) +\frac{\|\Ab \|_2}{n^{c_2}}.$$
\end{theorem}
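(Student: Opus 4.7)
The plan is to follow the three-stage structure sketched in Section~\ref{sec:initialDeflation}: (i) use a block Krylov method to deflate off a small set of top eigendirections, (ii) run the Chebyshev moment matching method of Braverman et al.\ on the deflated matrix, and (iii) stitch the two partial densities together and bound the Wasserstein error by a transport argument. Concretely, I would first run $O(\log n)$ iterations of block Krylov with block size $l$ on $\bv A$, which uses $O(l\log n)$ matrix--vector products. From the resulting Krylov subspace I would extract an orthonormal matrix $\bv Z\in\R^{n\times p}$ for some $p\le l$, whose columns are approximate eigenvectors corresponding to the Ritz values $\tilde\lambda_1,\ldots,\tilde\lambda_p$ that exceed a threshold chosen in terms of $\sigma_{l+1}(\bv A)$; choosing $p$ adaptively is important so that no Ritz value close to $\sigma_{l+1}(\bv A)$ is forced into $\bv Z$.

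The key technical step is to argue that this $\bv Z$ simultaneously yields (a) Ritz values $|\tilde\lambda_i-\lambda_i(\bv A)|\le \|\bv A\|_2/n^{c_2}$ for $i\le p$, and (b) $\|\bv B\|_2=O(\sigma_{l+1}(\bv A))+\|\bv A\|_2/n^{c_2}$ for the deflated matrix $\bv B=(\bv I-\bv{ZZ}^T)\bv A(\bv I-\bv{ZZ}^T)$. Both are gap-free statements, and classical Kaniel--Paige--Saad analyses give them directly only under spectral-gap hypotheses that $\bv A$ need not satisfy. I would therefore adapt the gap-free block Lanczos analysis from \cite{Musco:2015,Musco:2018} using shifted Chebyshev polynomials of degree $\Theta(\log n)$: this gives $1+1/\poly(n)$ multiplicative relative error on top singular values and, crucially, a spectral-norm residual bound on $(\bv I-\bv{ZZ}^T)\bv A$ of $O(\sigma_{l+1}(\bv A))+\|\bv A\|_2/n^{c_2}$ using only $O(l\log n)$ matrix--vector products. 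The same polynomial approximation argument also yields (a) since Ritz values are sandwiched between $\lambda_i(\bv A)$ and the corresponding Chebyshev image of $\lambda_i(\bv A)$.

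Given $\bv B$, I would apply the Chebyshev moment matching algorithm of \cite{braverman:2022} treating $\bv B$ as a black box: each matrix--vector product with $\bv B$ costs one matrix--vector product with $\bv A$ plus $O(np)$ additional work, so the stated query complexity is $O(l\log n+b/\epsilon)$. The output is a distribution $\tilde s_{\bv B}$ satisfying $W_1(s_{\bv B},\tilde s_{\bv B})\le\epsilon\|\bv B\|_2\le \epsilon\cdot O(\sigma_{l+1}(\bv A))+\epsilon\|\bv A\|_2/n^{c_2}$ with probability $1-1/n^{c_1}$. I would then form $\tilde s_{\bv A}$ by removing $p/n$ units of probability mass from $\tilde s_{\bv B}$ near $0$ (the eigenvalues contributed by the $p$ deflated directions, which become eigenvalues of $\bv B$ equal to zero) and replacing them by point masses $1/n$ at each $\tilde\lambda_i$.

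The final Wasserstein bound follows from a two-part transport argument. The eigenvalues of $\bv A$ split as the top $p$ (matched to $\tilde\lambda_1,\ldots,\tilde\lambda_p$ within $\|\bv A\|_2/n^{c_2}$ by step~(a)) and the remaining $n-p$ (which, by the projection, coincide up to error $O(\|\bv A\|_2/n^{c_2})$ with the nonzero eigenvalues of $\bv B$ by Weyl's inequality on $\bv A-\bv B$). Transporting the top part costs at most $(p/n)\cdot\|\bv A\|_2/n^{c_2}$ and transporting the tail part costs at most $W_1(s_{\bv B},\tilde s_{\bv B})$ plus $O(\|\bv A\|_2/n^{c_2})$ from the Weyl perturbation, yielding the claimed bound after adjusting constants in the exponents $c_1,c_2$. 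The main obstacle, as flagged above, is step (b): producing a clean spectral-norm residual bound for the deflated matrix that is \emph{not} of the form $(1+\epsilon)\sigma_{l+1}(\bv A)$ for arbitrarily small $\epsilon$ (which would require too many Krylov iterations), but rather an $O(1)$ multiplicative factor plus an $n^{-c_2}$ polynomial additive slack that can be absorbed into the second term of the theorem; the rest of the argument is a relatively standard combination of existing moment-matching guarantees with transport estimates.
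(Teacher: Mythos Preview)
Your overall architecture matches the paper's, but the transport step at the end has a genuine gap, and fixing it forces a change earlier in the algorithm that you have not made.

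The problematic claim is that the bottom $n-p$ eigenvalues of $\bv A$ coincide, up to $O(\|\bv A\|_2/n^{c_2})$, with the nonzero eigenvalues of $\bv B=(\bv I-\bv{ZZ}^T)\bv A(\bv I-\bv{ZZ}^T)$ ``by Weyl's inequality on $\bv A-\bv B$.'' But $\|\bv A-\bv B\|_2$ is not small: the deflated part $\bv A-\bv B$ has spectral norm on the order of $\|\bv A\|_2$, so Weyl gives nothing here. If $\bv Z$ were the exact top-$p$ eigenvectors the splitting would be exact, but your $\bv Z$ is only approximate, and the gap-free Krylov bounds you cite do not give $n^{-c_2}$-accurate eigenvector approximation without an eigenvalue gap that $\bv A$ need not have. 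Cauchy interlacing only gives $\lambda_{i+p}(\bv A)\le\lambda_i(\bv Z_\perp^T\bv A\bv Z_\perp)\le\lambda_i(\bv A)$, which is again useless without a gap.

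The paper closes this gap via a backward perturbation bound (Theorem~3.1 of Sun, stated as Theorem~\ref{thm:bkwd_ptr} and used in Lemma~\ref{Lem:bkwd_err}): if each returned column of $\bv Z$ has small residual $\|\bv A\bv z_i-\tilde\lambda_i\bv z_i\|_2\le\|\bv A\|_2/n^\beta$, then there exists a symmetric $\bv B'$ with $\|\bv A-\bv B'\|_2\le\|\bv A\|_2/n^{\beta-1}$ for which the $\bv z_i$ are \emph{exact} eigenvectors with eigenvalues $\tilde\lambda_i$. Weyl now applies to $\bv A-\bv B'$, and the splitting for $\bv B'$ is exact, so the remaining eigenvalues of $\bv B'$ are precisely the eigenvalues of $\Pb\bv B'\Pb$ on $\range(\bv Z)^\perp$; a second application of Weyl on $\Pb(\bv A-\bv B')\Pb$ then relates these to the eigenvalues of $\Pb\bv A\Pb$. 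This is why Algorithm~\ref{alg:krylov} selects $\bv Z$ by the residual criterion in line~\ref{alg1:if-condition}, not by a threshold on Ritz values as you propose: the residual criterion is exactly what feeds the backward-error step. The paper then separately proves (Theorems~\ref{thm:converge}, \ref{thm:eig_error}, \ref{thm:sigma_S}) that enough eigenvectors pass this residual test to ensure $\|\Pb\bv A\Pb\|_2=O(\sigma_{l+1}(\bv A))$, by locating an index $p\in[k,l]$ with a $\|\bv A\|_2/\poly(n)$ gap---this is the place where the ``adaptive choice of $p$'' you allude to actually happens, and it is more delicate than a threshold on Ritz values.

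A smaller but related point: rather than ``removing $p/n$ mass near $0$'' from $\tilde s_{\bv B}$ (which is awkward because the moment-matching output need not place exactly $p/n$ mass at $0$), the paper adjusts the approximate Chebyshev moments of $\Pb\bv A\Pb$ before the optimization (line~\ref{alg2:line5} of Algorithm~\ref{alg:sde}) to subtract the known contribution $\bar T_i(0)$ of the $|S|$ artificial zeros, and proves a modified moment-matching bound (Lemma~\ref{lem:mma}) for the resulting density on the $n-|S|$ ``real'' eigenvalues.
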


  As compared to the result of \cite{braverman:2022}, Theorem \ref{thm:sde1} uses $O(l \log n)$ additional matrix-vector products -- these are used to compute the approximate top eigenvalues and eigenvectors for deflation. However, the method gives a significantly improved error bound of roughly $\epsilon \sigma_{l+1}(\bv A)$. The additive error $\norm{\bv A}_2/n^{c_2}$ can be thought of as negligible -- comparable e.g., to round-off error when  directly computing $s_{\bv A}$ using a full eigendecomposition in finite precision arithmetic \cite{Banks:2022tn}.
  
  We further show that our algorithm is optimal amongst all matrix-vector query algorithms, up to logarithmic factors and the negligible additive error term. Formally:

  \begin{theorem}[SDE Lower Bound]\label{thm:lower_bound}
 Any (possibly randomized) algorithm that given symmetric $\bv A \in \R^{n \times n}$ outputs $\tilde s_{\bv A}$ such that, with probability at least $1/2$, $W_1(s_\bv{A},\tilde s_{\bv A}) \le \epsilon \sigma_{l+1}(\Ab)$ for $\epsilon \in (0,1)$ and $l \in [n]$ must make $\Omega \left(l +\frac{1}{\epsilon} \right)$ (possibly adaptively chosen) matrix-vector queries to $\Ab$.
\end{theorem}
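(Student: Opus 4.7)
The bound $\Omega(l+1/\epsilon)$ splits into two separate lower bounds, $\Omega(1/\epsilon)$ and $\Omega(l)$, which I would prove independently and combine by taking the maximum. In each case, the strategy is to construct two distributions over symmetric $n\times n$ matrices $\mathcal{D}_0, \mathcal{D}_1$ such that (i) under both, $\sigma_{l+1}(\bv A)$ takes (approximately) the same value $\sigma$; (ii) the expected spectral densities differ by more than $3\epsilon \sigma$ in $W_1$; and (iii) no matrix-vector query algorithm making fewer than the claimed number of queries can tell $\mathcal{D}_0$ from $\mathcal{D}_1$ with constant advantage. Condition (ii) forces any valid density estimator to identify the generating distribution by the triangle inequality, so the lower bound follows from (iii) via Yao's principle (averaging over the randomness of the algorithm).

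For the $\Omega(1/\epsilon)$ term, I would specialize to $l = 0$, where $\sigma_{l+1}(\bv A) = \|\bv A\|_2$, and invoke the existing SDE lower bound for error $\epsilon \|\bv A\|_2$. A standard hard-instance pair consists of two diagonal matrices whose spectra are supported on $[-1,1]$ and whose polynomial (Chebyshev) moments agree up to degree $\Omega(1/\epsilon)$, yet whose spectral measures differ by $\Omega(\epsilon)$ in $W_1$; this is exactly the construction dual to the Chebyshev moment matching upper bound of \cite{braverman:2022}. Since the joint distribution of responses to $k$ matrix-vector queries depends on $\bv A$ only through polynomial statistics of degree $\le k$, the two hard instances are statistically indistinguishable when $k = o(1/\epsilon)$.

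For the $\Omega(l)$ term I would use a planted low-rank perturbation in a Haar-random subspace. With $\bv U \in \R^{n\times l}$ Haar-random orthonormal and $\bv D$ diagonal with balanced $\pm 1$ entries, set
\[
\bv A \;=\; \sigma \bv I \;+\; M\, \bv U \bv D \bv U^T, \qquad\text{vs.\ the reference }\bv A_0 = \sigma \bv I.
\]
Under both, $\sigma_{l+1}(\bv A) = \sigma$; the spectral densities differ by $\Theta(lM/n)$ in $W_1$, which exceeds $3\epsilon\sigma$ once $M \gtrsim \epsilon \sigma n/l$. To establish (iii), I would invoke the standard rotational-invariance argument for Haar-random subspaces: after $k$ adaptively chosen queries $\bv x_1,\ldots,\bv x_k$, the posterior distribution of $\bv U$ is uniform over all orthonormal $n\times l$ frames in the complement of $\colspan(\bv x_1,\ldots,\bv x_k)$ conditioned on a Gram-matrix statistic, which for $k<l/2$ retains enough entropy that the response distribution is close in total variation to the one produced by $\bv A_0$.

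The main obstacle is the $\Omega(l)$ part: the naive planted construction above can in fact be distinguished from $\bv A_0$ by a single query estimating $\tr((\bv A-\sigma\bv I)^2) = lM^2$ via a Hutchinson-style quadratic form, so simply invariance of $\bv U$ is not enough. To rule out such attacks I would either (a) modify the reference distribution so that its first $\Theta(1)$ moments match those of the planted ensemble in expectation (e.g.\ by replacing $\sigma\bv I$ with a carefully calibrated deterministic spectrum or a Wishart-type ensemble), or (b) directly import the adaptive query lower-bound machinery developed for low-rank approximation (e.g.\ the Wishart-planting arguments of \cite{SimchowitzEl-AlaouiRecht:2018, BakshiClarksonWoodruff:2022, BakshiNarayanan:2023}), which already quantify the TV distance between $k$-query response distributions under planted vs.\ unplanted Haar ensembles and yield an $\Omega(l)$ threshold. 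Either route should drive condition (iii) through, and combining with the $\Omega(1/\epsilon)$ bound from the $l=0$ specialization yields the claimed $\Omega(l + 1/\epsilon)$ lower bound.
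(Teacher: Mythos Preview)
Your overall decomposition into $\Omega(l)$ and $\Omega(1/\epsilon)$ is fine, and your instinct in option (b) for the $\Omega(l)$ part---importing a Wishart-planting lower bound---is precisely what the paper does. However, there is a genuine gap in your $\Omega(1/\epsilon)$ argument, and the paper's route is both different and simpler than what you sketch.

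\textbf{The gap.} For the $\Omega(1/\epsilon)$ bound you propose two \emph{fixed} diagonal matrices with matching Chebyshev moments up to degree $\Theta(1/\epsilon)$, and claim that ``the joint distribution of responses to $k$ matrix-vector queries depends on $\bv A$ only through polynomial statistics of degree $\le k$.'' This is false in the matrix-vector model: a single query $\bv A\bv e_1$ on a diagonal matrix reveals $\bv A_{11}$ exactly, which has nothing to do with moments. The moment-matching indistinguishability heuristic applies to stochastic trace estimators (Hutchinson), not to arbitrary adaptive matrix-vector queries. To salvage this route you would at minimum need to randomize the eigenbasis (Haar rotation) and then argue that $k$ queries reveal only a $k$-dimensional slice---but at that point you are essentially redoing a Wishart-style argument, and the clean moment duality disappears.

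\textbf{What the paper does instead.} The paper uses a single black-box---the $\Omega(r)$ lower bound for distinguishing $W(n,r)$ from $W(n,r+2)$ due to \cite{SwartworthWoodruff:2023}---for \emph{both} terms. For $\Omega(l)$ (case $l>1/\epsilon$) it takes $r\approx l$ and observes that both Wishart ensembles have $\sigma_{l+3}(\bv A)=0$, so an algorithm achieving error $\epsilon\sigma_{l+3}(\bv A)$ must output the spectral density \emph{exactly}, hence can read off the rank. For $\Omega(1/\epsilon)$ (case $l\le 1/\epsilon$) it takes $r=\lfloor 1/\epsilon\rfloor$; now $\sigma_{l+1}(\bv A)\le 1$ for both ensembles, so error $\epsilon\sigma_{l+1}(\bv A)\le \epsilon$ is small enough to distinguish $\lambda_{r+1}=0$ from $\lambda_{r+1}\ge 1/4$. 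The key trick you are missing is the first one: forcing $\sigma_{l+1}(\bv A)=0$ collapses the error guarantee to zero and turns the SDE problem into exact rank recovery, which sidesteps the entire calibration headache you identified in your planted construction.
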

Theorem \ref{thm:lower_bound} leverages an existing lower bound for distinguishing Wishart matrices of different ranks, previously used to give matrix-vector query lower bounds for the closely related problem of eigenvalue estimation \cite{SwartworthWoodruff:2023}.

\medskip

\noindent\textbf{Application to Schatten-1 (Nuclear) Norm Estimation.}
  Even for matrices that don't exhibit spectral decay, by balancing the $O(l \log n)$ and $O(b/\epsilon)$ terms we can leverage Theorem \ref{thm:sde1} in applications that require understanding the small magnitude eigenvalues of $\bv A$, where previous SDE bounds gave weak results. For example, consider the estimating the Schatten-1 (nuclear) norm $\norm{\bv A}_1 =\sum_{i=1}^n |\lambda_i(\bv A)| = \sum_{i=1}^n \sigma_i(\bv A)$. It is not hard to show that we can estimate $\norm{\bv A}_1$ to relative error $\epsilon \norm{\bv A}_1$ given an approximate spectral density $\tilde s_{\bv A}$ with $W_1(s_{\bv A},\tilde s_{\bv A}) \le \frac{\epsilon \norm{\bv A}_1}{n}$ (see~\cite{Cohen-SteinerKongSohler:2018}, Theorem B.1 in~\cite{braverman:2022}). We can find such a density by applying Theorem \ref{thm:sde1} with $l = \frac{\sqrt{n}}{\epsilon}$ and $\epsilon' = \frac{1}{\sqrt{n}}$ so that $\epsilon' \cdot  \sigma_{l+1}(\bv A) \le \frac{\epsilon' \norm{\bv A}_1}{l} \le \frac{\epsilon \norm{\bv A}_1}{n}$. Doing so yields the following corollary:

\begin{corollary}[Schatten-$1$ Norm Estimation]\label{cor:schatten}
Let $\Ab \in \R^\n$ be symmetric. For any $\epsilon \in (0,1)$ and any constant $c > 0$, there exists an algorithm that performs $O\left(\frac{\sqrt{n} \log n}{\epsilon} + \sqrt{n} \log^4 n \right)$ matrix vector products with $\Ab$ and computes $M$ such that, with probability at least $1-\frac{1}{n^c}$, $|M-\norm{\bv A}_1| \le \epsilon \norm{\bv A}_1$.   
\end{corollary}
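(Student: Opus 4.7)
The plan is to follow the sketch given immediately before the corollary statement, fleshing out the reduction from Schatten-1 norm estimation to SDE in Wasserstein-$1$ distance, then carefully choosing parameters in Theorem~\ref{thm:sde1} so that the resulting query complexity and error bound match what is claimed.

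First, I would establish the reduction. Observe that $\norm{\bv A}_1 = \sum_{i=1}^n |\lambda_i(\bv A)| = n \cdot \E_{x \sim s_{\bv A}}[\,|x|\,]$. So given any probability density $\tilde s_{\bv A}$ on $\R$, the natural estimator is $M \eqdef n \cdot \E_{x \sim \tilde s_{\bv A}}[\,|x|\,]$. Since $x \mapsto |x|$ is $1$-Lipschitz, the dual (Kantorovich--Rubinstein) formulation of $W_1$ gives
\[
|M - \norm{\bv A}_1| \;=\; n \cdot \bigl|\E_{\tilde s_{\bv A}}[\,|x|\,] - \E_{s_{\bv A}}[\,|x|\,]\bigr| \;\leq\; n \cdot W_1(s_{\bv A},\tilde s_{\bv A}).
\]
Hence, to achieve relative error $\epsilon \norm{\bv A}_1$ it suffices to produce $\tilde s_{\bv A}$ with $W_1(s_{\bv A},\tilde s_{\bv A}) \leq \frac{\epsilon}{n} \norm{\bv A}_1$. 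This reduction is standard; essentially the same argument appears in~\cite{Cohen-SteinerKongSohler:2018} and Theorem~B.1 of~\cite{braverman:2022}.

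Next, I would invoke Theorem~\ref{thm:sde1} with parameters $\ell = \lceil \sqrt{n}/\epsilon \rceil$ and $\epsilon' = 1/\sqrt{n}$, and use the elementary bound $\sigma_{\ell+1}(\bv A) \leq \frac{1}{\ell+1}\sum_{i=1}^{\ell+1}\sigma_i(\bv A) \leq \norm{\bv A}_1/\ell$. This gives
\[
\epsilon' \cdot \sigma_{\ell+1}(\bv A) \;\leq\; \frac{\epsilon'}{\ell}\norm{\bv A}_1 \;\leq\; \frac{\epsilon}{n}\norm{\bv A}_1.
\]
The additive term $\norm{\bv A}_2/n^{c_2}$ in Theorem~\ref{thm:sde1} is at most $\norm{\bv A}_1/n^{c_2}$, which I can absorb into the $\frac{\epsilon}{n}\norm{\bv A}_1$ target by choosing $c_2$ large enough relative to the desired constant $c$ (e.g., $c_2 = c+2$). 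Combining, $W_1(s_{\bv A},\tilde s_{\bv A}) \leq \frac{2\epsilon}{n}\norm{\bv A}_1$ with probability $\geq 1-1/n^{c_1}$, which after adjusting $\epsilon$ by a constant factor (and picking $c_1 \geq c$) yields the required relative-error guarantee for $M$.

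Finally, I would compute the matrix-vector query cost. With $\epsilon' = 1/\sqrt{n}$, the quantity $b$ from Theorem~\ref{thm:sde1} becomes $b = \max\bigl(1,\frac{1}{n\epsilon'^2}\log^2(n/\epsilon')\log^2(1/\epsilon')\bigr) = \max(1, \log^2(n^{3/2})\log^2(\sqrt{n})) = O(\log^4 n)$. Hence the cost is
\[
O\!\left(\ell \log n + \frac{b}{\epsilon'}\right) \;=\; O\!\left(\frac{\sqrt{n}\log n}{\epsilon} + \sqrt{n}\log^4 n\right),
\]
matching the claim. The main (mild) obstacle is just the bookkeeping of constants $c_1, c_2, c$ and the rescaling of $\epsilon$ so that the additive $\norm{\bv A}_2/n^{c_2}$ term can be swept into the relative error; there is no substantive new analysis required beyond Theorem~\ref{thm:sde1} and the Lipschitz reduction.
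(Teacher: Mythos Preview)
Your proposal is correct and follows essentially the same approach as the paper: both set $\ell = \sqrt{n}/\epsilon$ and $\epsilon' = 1/\sqrt{n}$ in Theorem~\ref{thm:sde1}, use $\sigma_{\ell+1}(\bv A) \le \norm{\bv A}_1/\ell$, and absorb the additive $\norm{\bv A}_2/n^{c_2}$ term. The only cosmetic difference is that you extract $M$ directly via $M = n\cdot \E_{\tilde s_{\bv A}}[|x|]$ and the Kantorovich--Rubinstein dual, whereas the paper instead cites the construction of a list $\tilde\lambda_1,\ldots,\tilde\lambda_n$ with $\sum_i|\lambda_i-\tilde\lambda_i|\le n\cdot W_1(s_{\bv A},\tilde s_{\bv A})$ from~\cite{Cohen-SteinerKongSohler:2018,braverman:2022}; these two reductions are equivalent.
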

Prior work on SDE \cite{chen2021analysis,braverman:2022} could only give error $\epsilon \sqrt{n} \norm{\bv A}_2$ using a comparable number of matrix-vector products, and thus was not able to achieve a relative error guarantee. We note that the $\sqrt{n}$ dependence of Corollary \ref{cor:schatten} matches the best known matrix-vector product algorithms for Schatten-1 norm estimation \cite{Musco:2018}, while the $\epsilon$ dependence improves on prior work. 

\subsubsection{Implicit Deflation Bounds for Stochastic Lanczos Quadrature}

Our second contribution is to show that the popular Stochastic Lanczos Quadrature (SLQ) method for SDE \cite{LinSaadYang:2016,chen2021analysis} nearly matches the improved error bound of Theorem \ref{thm:sde1} \emph{for any choice of $l$}, even though SLQ is  `parameter-free' and performs no explicit deflation step. This result helps to justify the strong practical performance of SLQ and the observed `spectrum adaptive' nature of this method as compared to standard moment matching-based methods like KPM \cite{chen2021analysis}.

A key idea  used to achieve this bound is to view SLQ as an implicit moment matching method as in \cite{chen2021analysis,ChenTrogdonUbaru:2022}, and to analyze it similarly to KPM and other explicit moment matching methods. We combine this analysis approach with recent work on low-rank approximation with single-vector (i.e., non-block) Krylov methods \cite{meyer:2024} to show that SLQ can perform `implicit deflation' as part of moment matching to achieve the improved error bound. See Section \ref{sec:techOverview} for details.
Formally, our error bound for SLQ is as follows: 
\begin{theorem}[SDE with SLQ]\label{thm:slq}
Let $\Ab\in\R^\n$ be symmetric and consider any $l \in [n]$, and $\epsilon, \delta \in (0,1)$. Let $g_{\min}=\min_{i \in [l]} \frac{\sigma_i(\Ab)-\sigma_{i+1}(\Ab)}{\sigma_i(\Ab)}$ and $\kappa=\frac{\|\Ab \|_2}{\sigma_{l+1}(\bv A)}$. Algorithm~\ref{alg:slq} (SLQ) run for $m = O(l\log \frac{1}{g_{\min}}+\frac{1}{\epsilon}\log \frac{n \cdot \kappa}{\delta})$ iterations performs $m$ matrix vector products with $\bv A$ and outputs a probability density function $\tilde s_{\bv A}$ such that, with probability at least $1-\delta$, for a fixed constant $C$,
$$W_1(s_{\Ab},\tilde s_{\bv A}) \leq \epsilon \cdot \sigma_{l+1}(\bv A) + \frac{C\log (n/\epsilon)\log (1/\epsilon) }{\sqrt{n}} \cdot \sigma_{l+1}(\Ab) + \frac{Cl\log (1/\epsilon)\sqrt{\log (l/\delta)}}{n}\|\Ab \|_2.$$
\end{theorem}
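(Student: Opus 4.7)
} The plan is to analyze SLQ through the implicit moment matching viewpoint of \cite{chen2021analysis,ChenTrogdonUbaru:2022}, and then inject an ``implicit deflation'' argument that routes the top $l$ eigendirections through Ritz value convergence (with the gap-dependent rate) and the rest through polynomial moment matching on a matrix of spectral norm $O(\sigma_{l+1}(\bv A))$. Recall that after $m$ Lanczos steps from a starting vector $\bv q_1$, SLQ outputs the Gaussian quadrature rule $\tilde s_{\bv A}$ on the Jacobi matrix $\bv T_m$, so that $\int p\, d\tilde s_{\bv A} = \bv q_1^\top p(\bv A)\bv q_1$ for every polynomial $p$ of degree at most $2m-1$. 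Using the variational characterization $W_1(s_{\bv A},\tilde s_{\bv A}) = \sup_{f\text{ 1-Lipschitz}} \bigl|\int f\, ds_{\bv A} - \int f\, d\tilde s_{\bv A}\bigr|$, the job reduces to bounding, for each 1-Lipschitz $f$ on $[-\|\bv A\|_2,\|\bv A\|_2]$, the sum of (i) a polynomial approximation error $\|f-p\|_\infty$ and (ii) a moment matching error $\bigl|\bv q_1^\top p(\bv A)\bv q_1 - \tfrac{1}{n}\tr p(\bv A)\bigr|$.

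The key idea is to split the 1-Lipschitz test function as $f = f_{\text{top}} + f_{\text{tail}}$, where $f_{\text{top}}$ is supported near the top $l$ eigenvalues and $f_{\text{tail}}$ is $O(1)$-Lipschitz and essentially lives on $[-\sigma_{l+1}(\bv A),\sigma_{l+1}(\bv A)]$. For $f_{\text{top}}$, I will invoke a Saad-type bound showing that $m_1 = O(l\log(1/g_{\min}))$ Lanczos steps suffice so that each of the top $l$ Ritz values converges to the corresponding eigenvalue of $\bv A$ to within $\sigma_{l+1}(\bv A)/n^{\Theta(1)}$, and the corresponding quadrature weights approximate $1/n$. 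This will rely on the single-vector Krylov low-rank approximation analysis of \cite{meyer:2024}, combined with the standard Kaniel--Paige--Saad gap-dependent rate, to argue that after $m_1$ steps the Krylov subspace contains accurate approximations to the top $l$ eigenvectors; the ``implicit deflation'' is then the statement that the Ritz values and weights for the bottom $m - l$ quadrature nodes behave as if we had run Lanczos on a matrix with spectral norm $O(\sigma_{l+1}(\bv A))$. This handles $f_{\text{top}}$ with error contributing to the $\tfrac{l\log(1/\epsilon)\sqrt{\log(l/\delta)}}{n}\|\bv A\|_2$ term in the bound (which captures the $l/n$ total mass at the top, scaled by the Hutchinson-style concentration of the weight residuals).

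For $f_{\text{tail}}$, I will approximate it by a polynomial of degree $k = O(\tfrac{1}{\epsilon}\log(n\kappa/\delta))$, which, by the Jackson-type theorem used in \cite{braverman:2022}, incurs error $O(\epsilon\cdot\sigma_{l+1}(\bv A))$ on the range where the bulk of the spectrum lives, and negligible error outside via the exponential tails built into the logarithmic factor. The moment matching error $|\bv q_1^\top p(\bv A)\bv q_1 - \tfrac{1}{n}\tr p(\bv A)|$ is then controlled by the standard Hutchinson concentration: for a uniformly random unit vector $\bv q_1$, this quantity is $O(\tfrac{\|p(\bv A)\|_F}{n\sqrt{n}}\cdot \sqrt{\log(1/\delta)})$ with high probability. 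Decomposing $p(\bv A)$ across the top-$l$ and tail eigenspaces and bounding $\|p(\bv A_{\text{tail}})\|_F \le \sqrt{n}\cdot O(\sigma_{l+1}(\bv A))$ after trimming via standard scalar Chebyshev bounds produces the middle term $\tfrac{C\log(n/\epsilon)\log(1/\epsilon)}{\sqrt{n}}\sigma_{l+1}(\bv A)$, while the contribution from the top-$l$ part gives the final $\tfrac{l\log(1/\epsilon)\sqrt{\log(l/\delta)}}{n}\|\bv A\|_2$ term. Choosing $m = m_1 + k$ yields the stated iteration count.

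The main obstacle will be rigorously decoupling the top-$l$ and tail contributions inside a \emph{single} Lanczos run. Unlike the explicit-deflation algorithm of Theorem \ref{thm:sde1}, here we cannot physically project onto $(\bv I - \bv Z\bv Z^\top)\bv A(\bv I - \bv Z\bv Z^\top)$; instead we must show that the Gaussian quadrature rule coming from $\bv T_m$ itself splits into a top-$l$ piece that tracks the dominant eigenvalues and a residual piece whose moments are those of the deflated matrix. The analysis of the residual must carefully track the interaction between the random starting vector $\bv q_1$ and the learned top-$l$ subspace, since they are correlated: the Hutchinson concentration needs to be applied to $p(\bv A)$ after conditioning on the gap-dependent event that the top $l$ Ritz pairs have converged. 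I expect this to follow from the deterministic structure theorems for Lanczos together with the refined analysis of single-vector block-free Krylov low-rank approximation in \cite{meyer:2024}, but making the coupling quantitative, and in particular getting the $\log(1/g_{\min})$ factor rather than a $\log(1/\epsilon)$ factor on the deflation part, is the delicate step.
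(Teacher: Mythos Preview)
Your high-level plan matches the paper: split the Wasserstein integral into a top-$k$ piece handled by Ritz convergence (via single-vector Krylov bounds in the style of \cite{meyer:2024}) and a tail piece handled by Chebyshev moment matching on a range of width $O(\sigma_{l+1}(\bv A))$. The paper does exactly this in its Case~1/Case~2 analysis (equations \eqref{Eq:w1_1}, \eqref{Eq:I1}, \eqref{Eq:I2_final}), and the $\tfrac{l}{n}\|\bv A\|_2$ term does arise precisely from the weight concentration you describe (Lemma~\ref{Lem:wts}).

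However, your treatment of the tail piece has a genuine gap. You propose to approximate $f_{\text{tail}}$ by a polynomial $p$ of degree $O(\tfrac{1}{\epsilon}\log(n\kappa/\delta))$ and then control $\bv q_1^\top p(\bv A)\bv q_1 - \tfrac{1}{n}\tr p(\bv A)$ by Hutchinson concentration, splitting $\|p(\bv A)\|_F$ across top and tail eigenspaces. The problem is that a Jackson approximation to $f_{\text{tail}}$ on $[-\sigma_{l+1},\sigma_{l+1}]$ will in general blow up like $\kappa^{O(1/\epsilon)}$ at the outlying eigenvalues, so $\|p(\bv A_{\text{top}})\|_F$ is not $O(\sqrt{l}\,\|\bv A\|_2)$ but can be astronomically large; and a Jackson approximation on the full interval $[-\|\bv A\|_2,\|\bv A\|_2]$ gives error $\|\bv A\|_2/k$, not $\sigma_{l+1}/k$. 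Your phrase ``negligible error outside via the exponential tails built into the logarithmic factor'' does not correspond to any standard polynomial approximation result. Moreover, for the SLQ side you need $p$ to be small not just at the top eigenvalues of $\bv A$ but at the top \emph{Ritz values} $\lambda_j(\bv T)$, since the output density is supported there.

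The paper closes this gap with an explicit construction (Lemmas~\ref{Lem:p(x)}, \ref{Lem:chebyshev}, \ref{Lem:r(x)}): it builds a single polynomial $r(x)$ of degree $O(l\log\tfrac{1}{g_{\min}}+\tfrac{1}{\epsilon}\log\tfrac{n\kappa}{\delta})$ that is exactly zero at $\lambda_1(\bv A),\ldots,\lambda_k(\bv A)$, nearly zero at $\lambda_1(\bv T),\ldots,\lambda_k(\bv T)$, and within $n^{-\Omega(1/\epsilon)}$ of $1$ on $[-\sigma_{l+1},\sigma_{l+1}]$. The moment-matching polynomials are then $t_i(x)=r(x)\bar T_i(x/\sigma_{l+1})$, and the roots of $r$ are placed by a Lagrange-type product through both the $\lambda_j(\bv A)$ and the $\lambda_j(\bv T)$, multiplied by a Chebyshev minimizing polynomial (Lemma~\ref{Lem:chebyshev}) to cancel the Lagrange polynomial's $(2/g_{\min})^{O(l)}$ growth on the tail interval. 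This is the step that makes $\|t_i(\bv A)\|_2=O(1)$ deterministically and simultaneously makes $\langle t_i,\tilde s_{\bv A}\rangle$ pick up only the tail Ritz nodes. Your proposal identifies the decoupling as the ``main obstacle'' but does not supply this construction; without it the tail argument does not go through.
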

Theorem \ref{thm:slq} essentially matches our result for moment matching with explicit deflation (Theorem \ref{thm:sde1}) up to some small caveats, discussed below. First, the number of matrix vector products has a logarithmic dependence on the minimum  gap $g_{\min}$ amongst the top $l$ singular values as well as the condition number $\kappa=\frac{\| \Ab\|_2}{\sigma_{l+1}(\bv A)}$. The dependence on the minimum gap is inherent, as non-block Krylov methods like SLQ require a dependence on $g_{\min}$ in order to perform deflation/low-rank approximation \cite{meyer:2024}. We note that by adding a random perturbation to $\bv A$ with spectral norm bounded by $\frac{\norm{\bv A}_2}{\poly(n)}$, one can ensure that both $g_{\min} \ge \frac{1}{\poly(n)}$ and $\kappa \leq \poly(n)$ with high probability, and thus replace the $O(l \log \frac{1}{g_{\min}})$ term  with an $O(l \log n)$ and the $O(\frac{\log (n \kappa)}{\epsilon})$ term with an $O(\frac{\log n}{\epsilon})$ term, matching Theorem \ref{thm:sde1}. See e.g., \cite{meyer:2024}.

Second, Theorem \ref{thm:slq} has an additional error term of size $\tilde O(\sigma_{l+1}(\bv A)/\sqrt{n})$. This term is lower order whenever $\epsilon = \tilde \Omega(1/\sqrt{n})$. Further, we believe that it can be removed by using a variant on SLQ that is popular in practice, where the densities output by multiple independent runs of the method are averaged together to produce $\tilde s(\bv A)$. See Section \ref{sec:slq} for further discussion.

Finally,  Theorem \ref{thm:slq} has an additional error term of size $\tilde O(\norm{\bv A}_2 \cdot l/n)$. In the natural case when we run for $m \ll n$ iterations and thus $l \ll n$, this term will be small. However, it cannot be avoided: even for a matrix with rank $\le l$ with well-separated eigenvalues, while the Lanczos method will converge to near-exact approximations to these eigenvalues (with error bounded by $\frac{\norm{\bv A}_2}{n^c}$), the probability distribution output by SLQ will not place mass exactly $1/n$ at these approximate eigenvalues and thus will not achieve SDE error $O(\frac{\norm{\bv A}_2}{n^c})$ -- see Section \ref{sec:techOverview} for further details.

This limitation motivates us to introduce a simple variant of SLQ, which we call \emph{variance reduced SLQ}, which places mass exactly $1/n$ at any eigenvalue computed by  Lanczos that has converged to sufficiently small error. This variant gives the following stronger error bound:
\begin{theorem}[SDE with Variance Reduced SLQ]\label{thm:varslq}
Let $\Ab\in\R^\n$ be symmetric and consider any $l \in [n]$, and $\epsilon,\delta \in (0,1)$. Let $g_{\min}=\min_{i \in [l]} \frac{\sigma_i(\Ab)-\sigma_{i+1}(\Ab)}{\sigma_i(\Ab)}$ and $\kappa=\frac{\|\Ab \|_2}{\sigma_{l+1}(\bv A)}$. Algorithm~\ref{alg:slq+} run for $m = O(l\log \frac{1}{g_{\min}}+\frac{1}{\epsilon}\log \frac{n \cdot \kappa}{\delta})$ iterations performs $m$ matrix vector products with $\bv A$ and outputs a probability density function $\tilde s_{\bv A}$ such that, with probability at least $1-\delta$, for any constant $c$ and a fixed constant $C$,
$$W_1(s_{\Ab},\tilde s(\bv A)) \leq \epsilon \cdot \sigma_{l+1}(\bv A) + \left (\frac{C \log(n/\epsilon) \log(1/\epsilon)}{\sqrt{n}} + \frac{C l \log(1/\epsilon) \sqrt{\log (l/\delta)}}{n} \right) \cdot \sigma_{l+1}(\bv A) + \frac{\norm{\bv A}_2}{n^{c}}.$$
\end{theorem}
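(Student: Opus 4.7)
The plan is to modify the proof of Theorem~\ref{thm:slq} in exactly the place where the ``bad'' $\|\Ab\|_2$ factor enters. Recall that in standard SLQ the $\frac{Cl\log(1/\epsilon)\sqrt{\log(l/\delta)}}{n}\|\Ab\|_2$ error term originates from the variance of the implicit Hutchinson-style quadrature estimate, which scales with the spectral norm of the matrix whose moments are being estimated. The idea of variance reduced SLQ (Algorithm~\ref{alg:slq+}) is to identify the Ritz values that Lanczos has already driven to machine-precision accuracy, peel them off, and only run the stochastic estimator on the remaining spectrum, whose effective norm is now $O(\sigma_{l+1}(\Ab))$.

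First I would invoke the single-vector Lanczos/low-rank approximation guarantees of \cite{meyer:2024}: after $m = O(l\log(1/g_{\min}) + \tfrac{1}{\epsilon}\log(n\kappa/\delta))$ iterations, every Ritz value corresponding to a top-$p$ eigenvalue of $\Ab$ (for any $p\le l$ with sufficient gap) agrees with the true eigenvalue up to additive error $\|\Ab\|_2/n^{c'}$ for an arbitrarily large constant $c'$ (the dependence on $g_{\min}$ and $\kappa$ in $m$ absorbs this). Let $\Zb\in\R^{n\times p}$ collect the corresponding Ritz vectors and let $\Abar = (\Ib-\Zb\Zb^\top)\Ab(\Ib-\Zb\Zb^\top)$ be the ``implicitly deflated'' matrix; standard perturbation bounds then give $\|\Abar\|_2 \le O(\sigma_{p+1}(\Ab))\le O(\sigma_{l+1}(\Ab))$.

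Next I would write the density produced by Algorithm~\ref{alg:slq+} as $\tilde s_{\Ab} = \tfrac{1}{n}\sum_{i=1}^p \delta_{\tilde\lambda_i} + \tfrac{n-p}{n}\tilde\mu$, where the first sum is the explicit placement of $1/n$ mass at each converged Ritz value and $\tilde\mu$ is the SLQ-weighted mixture over the remaining Ritz pairs. By the triangle inequality for $W_1$ and the obvious coupling that pairs the $p$ true top eigenvalues with their $\tilde\lambda_i$'s, the converged portion contributes at most $\tfrac{p}{n}\cdot\|\Ab\|_2/n^{c'} \le \|\Ab\|_2/n^c$ to $W_1(s_{\Ab},\tilde s_{\Ab})$, which is the additive term in the statement. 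For the remaining mass I would repeat the proof of Theorem~\ref{thm:slq} applied to $\Abar$ rather than to $\Ab$: the implicit moment-matching analysis of \cite{chen2021analysis,ChenTrogdonUbaru:2022} combined with the Chebyshev/Jackson polynomial error bound gives the $\epsilon\sigma_{l+1}(\Ab)$ and $\tilde O(\sigma_{l+1}(\Ab)/\sqrt n)$ contributions, while the Hutchinson-style variance bound for trace estimates of $p(\Abar)$ scales with $\|\Abar\|_2 = O(\sigma_{l+1}(\Ab))$ rather than $\|\Ab\|_2$, producing the $\frac{Cl\log(1/\epsilon)\sqrt{\log(l/\delta)}}{n}\sigma_{l+1}(\Ab)$ term in place of the worse one from Theorem~\ref{thm:slq}.

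The main technical obstacle is making rigorous the claim that the SLQ weights on the \emph{non-converged} Ritz pairs really do behave like a stochastic quadrature for $\Abar$, so that the trace-estimation variance can be controlled by $\|\Abar\|_2$ rather than $\|\Ab\|_2$. This requires showing that the starting vector $\bv x$ essentially decouples along the converged invariant subspace (so that its mass on that subspace is exactly accounted for by the explicit $1/n$ placements, up to the same $\|\Ab\|_2/n^{c'}$ residual), and then that the component of $\bv x$ orthogonal to $\colspan(\Zb)$ is still well-distributed in the sense required by the Hutchinson-type concentration inequality used for Theorem~\ref{thm:slq}. Once this decoupling is established, the remainder of the argument is a term-by-term re-run of the Theorem~\ref{thm:slq} proof with $\Ab$ replaced by $\Abar$, and the stated bound follows by absorbing Lanczos finite-precision slack into the additive $\|\Ab\|_2/n^c$ term via choice of $c'>c$.
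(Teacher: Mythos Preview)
Your high-level plan---peel off converged Ritz values with mass $1/n$ each, then run the moment-matching analysis on what remains with effective norm $O(\sigma_{l+1}(\Ab))$---matches the paper's strategy. However, there is a genuine gap in how you execute it, and you misidentify the source of the error term you are trying to improve.

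First, the $\tfrac{Cl\log(1/\epsilon)\sqrt{\log(l/\delta)}}{n}\|\Ab\|_2$ term in Theorem~\ref{thm:slq} does \emph{not} come from Hutchinson variance on $p(\Ab)$. It comes from the top-$k$ Ritz values: each has weight $w_j^2=(\vb_j^\top\bv e_1)^2\approx \sqrt{\log(k/\delta)}/n$ rather than exactly $1/n$, and since the corresponding eigenvalues can be as large as $\|\Ab\|_2$, the Wasserstein cost of this mismatch is $\tilde O(k\|\Ab\|_2/n)$. In VR-SLQ this contribution vanishes because those weights are set to $1/n$ by fiat. The residual $\tfrac{Cl\log(1/\epsilon)\sqrt{\log(l/\delta)}}{n}\sigma_{l+1}(\Ab)$ term in the theorem arises for a different reason: the convergence test in Algorithm~\ref{alg:slq+} may admit indices $j>k$ (``spurious'' convergence), and the remaining weights must be renormalized by $(1-|S|/n)/\sum_{j\notin S}w_j^2$. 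Both effects are controlled because (i) the second convergence condition $w_j^2\le C\sqrt{\log(l/\delta)}/n$ caps the damage per spurious index, and (ii) the eigenvalues involved satisfy $|\lambda_j(\Tb)|\le O(\sigma_{l+1}(\Ab))$ once $j>k$. Your proposal does not mention the second convergence condition or the renormalization at all, and these are precisely where the improved coefficient comes from.

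Second, your proposed mechanism---show the non-converged Ritz pairs form a stochastic quadrature for the deflated matrix $\Abar=(\Ib-\Zb\Zb^\top)\Ab(\Ib-\Zb\Zb^\top)$---is not what the paper does and would be hard to make rigorous: the Krylov subspace of $(\Ab,\bv g)$ restricted to non-converged directions is not the Krylov subspace of $(\Abar,\bv g_\perp)$. The paper instead (a) uses the backward perturbation bound (Lemma~\ref{Lem:bkwd_err}) to pass to a nearby matrix $\Cb$ for which the converged Ritz pairs are \emph{exact} eigenpairs, so the converged part contributes zero Wasserstein cost by construction; and (b) for the remaining part, reuses the ``deflated polynomial'' moments $t_i(x)=r(x)\bar T_i(x)$ from Lemma~\ref{Lem:slq_mom}, which vanish on the large eigenvalues of both $\Ab$ and $\Tb$, so moment matching automatically sees only the $O(\sigma_{l+1}(\Ab))$-scale spectrum without ever forming $\Abar$. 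The three-term splitting that produces the $l\sqrt{\log(l/\delta)}/n$ coefficient then falls out of bounding the discrepancy between $\sum_{j\in[n]\setminus S_1}\bar T_i(\lambda_j)$ and the reweighted $\sum_{j\in[m]\setminus S}w_j^2\bar T_i(\lambda_j(\Tb_1))$, using the weight cap and $|S|\le l$.
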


\subsection{Technical Overview}\label{sec:techOverview}

We next overview the main techniques used to achieve our improved SDE bounds for moment matching with deflation  (Theorem \ref{thm:sde1}) and  SLQ (Theorems \ref{thm:slq} and \ref{thm:varslq}).

\subsubsection{Eigenvalue Deflation for SDE}

As discussed in Section \ref{sec:initialDeflation}, the key idea behind Theorem \ref{thm:sde1} is to apply \emph{eigenvalue deflation}. Assume that we are given $\bv Z \in \R^{n \times l}$ with columns equal to the eigenvectors of $\bv A$ corresponding to its $l$ largest magnitude eigenvectors. Then we can write $\bv A = \bv A_l + \bv A_{l,\perp}$, where $\bv A_l = \bv{ZZ}^T \bv A \bv {ZZ}^T$ and $\bv A_{l,\perp} =  (\bv I- \bv{ZZ}^T) \bv A (\bv I - \bv {ZZ}^T)$. Referring to \eqref{eq:spectralDensity}, since $\bv A_l$ an $\bv A_{l,\perp}$ have $n$ eigenvalues in total that are set to $0$ due to projecting off $\bv Z$, we have:
\begin{align}\label{eq:breakdown}
s_{\bv A} = s_{\bv A_l} + s_{\bv A_{l,\perp}} - \delta(0),
\end{align}
where $\delta(x)$ denotes the Dirac distribution that places probability mass one at $x$. If we have $\bv Z$ in hand, we can exactly compute $s_{\bv A_l}$, whose only non-zero eigenvalues are exactly the eigenvalues corresponding to the eigenvectors in $\bv Z$. Further, we can approximate $s_{\bv A_{l,\perp}}$ using an existing SDE algorithm -- e.g., the Chebyshev moment matching method of \cite{braverman:2022}, which will give error $\epsilon \cdot \norm{\bv A_{l,\perp}}_2$ using $O(1/\epsilon)$ matrix vector products when $\epsilon = \tilde \Omega(1/\sqrt{n})$.\footnote{We will focus on the case $\epsilon = \tilde \Omega(1/\sqrt{n})$  and so $b = 1$ in Thm. \ref{thm:sde1} throughout the technical overview for simplicity.} We have $\norm{\bv A_{l,\perp}}_2 = |\lambda_{l+1}(\bv A)| = \sigma_{l+1}(\bv A)$, where $\lambda_{l+1}(\bv A)$ is the $(l+1)^{st}$ largest magnitude eigenvalue of $\bv A$. Thus, combining this approximation with \eqref{eq:breakdown}, we can approximate $s_{\bv A}$ to error $\epsilon \cdot \sigma_{l+1}(\bv A)$ as desired.

We note that eigenvalue deflation is widely applied throughout numerical linear algebra to problems like linear system solving \cite{BurrageErhelPohl:1998,FrankVuik:2001,GonenOrabonaShalev-Shwartz:2016,FrangellaTroppUdell:2023}, trace estimation \cite{GambhirStathopoulosOrginos:2017,Lin:2017,meyer2021hutch++}, norm estimation \cite{Musco:2018}, and beyond \cite{ChapmanSaad:1997} -- the approach is useful whenever the complexity or approximation error of solving a problem depends on some feature of its eigenspectrum (e.g. the spectral norm, Frobenius norm, or condition number) that can be improved by removing the largest magnitude eigenvalues from the matrix.

\subsubsection{Error Analysis of Deflation}\label{sec:deflationIntro}

Of course, when implementing deflation for SDE, we cannot exactly compute $\bv Z \in \R^{n \times l}$ spanning the top $l$ eigenvectors of $\bv A$. Instead we will approximate $\bv Z$, in our case, using a standard block Krylov subspace method (Algorithm \ref{alg:krylov}). It is well known that when $\bv Z$ is approximated in this way, using $O(\log n)$ iterations of the block Krylov method, and thus $O(l \log n)$ matrix-vector products in total, we can still ensure that $\norm{(\bv I- \bv{ZZ}^T) \bv A (\bv I - \bv {ZZ}^T)}_2 = O(\sigma_{l+1}(\bv A))$ \cite{Musco:2015}. This suffices to obtain the error bound of $\epsilon \cdot \sigma_{l+1}(\bv A)$ in approximating the spectral density of $(\bv I- \bv{ZZ}^T) \bv A (\bv I - \bv {ZZ}^T)$. 

The key challenge however is that when $\bv Z$ does not exactly span a set of eigenvectors, \eqref{eq:breakdown} no longer holds. Further, the spectral density of $\bv{ZZ}^T \bv A \bv{ZZ}^T$ will no longer exactly match the spectral density of $\bv{A}_l$ -- it will place mass at the eigenvalues of $\bv{ZZ}^T \bv A \bv{ZZ}^T$, equal to the eigenvalues of $\bv Z^T \bv A \bv Z$, which approximate, but don't exactly match, the top eigenvalues of $\bv A$.

Thus, our proof requires showing that $\bv Z$ is very close to spanning a subspace of eigenvectors, up to the small $\frac{\norm{\bv A}_2}{n^c}$ additive error of Theorem \ref{thm:sde1}. 
To do so, in our block Krylov algorithm (Algorithm \ref{alg:krylov} line 7) we let $\bv Z$ contain only the approximate eigenvectors that have converged to  very small \emph{residual error} -- i.e., for which $\bv A  \bv{\tilde z}_j \approx \lambda_j  \bv{\tilde z}_j$. Via standard backward error analysis bounds for eigenvector approximation (see Theorem \ref{thm:bkwd_ptr}),  this is enough to handle  the above two issues. However, we can now no longer be sure that $\norm{(\bv I- \bv{ZZ}^T) \bv A (\bv I - \bv {ZZ}^T)}_2$ is  small -- what if, e.g., no eigenvectors have converged and thus $\bv Z$ is empty? Or e.g., if just the top eigenvector has failed to converge.

Our main technical contribution is to argue that that for some $p \le l$ with $\sigma_{p+1}(\bv A) = O(\sigma_{l+1}(\bv{A}))$, the set of converged eigenvectors (i.e., the columns of $\bv Z$) will contain approximations to at least the top $p$ magnitude eigenvectors of $\bv A$, ensuring that $\norm{(\bv I- \bv{ZZ}^T) \bv A (\bv I - \bv {ZZ}^T)}_2 = O(\sigma_{p+1}(\bv A)) =  O(\sigma_{l+1}(\bv A))$ as needed.
Theorems \ref{thm:converge}  and \ref{thm:eig_error} give our convergence bounds for the top $p$ eigenvectors, and Theorem \ref{thm:sigma_S} states the ultimate resulting bound on $\norm{(\bv I- \bv{ZZ}^T) \bv A (\bv I - \bv {ZZ}^T)}_2$. 

Our work fits into a line of work that focuses on more refined eigenvalue/singular value approximation bounds for block Krylov subspace methods -- see e.g. \cite{Musco:2015,Musco:2018,drineas2018structural,Tropp:2022}. We believe that Theorems  \ref{thm:converge}  and \ref{thm:eig_error} may be of independent interest, outside our application to SDE.

\subsubsection{SLQ and its Existing Analysis}

We next turn our attention to the Stochastic Lanczos Quadrature (SLQ) method, which is detailed in Algorithm \ref{alg:slq}. This is a popular and extremely simple, parameter-free algorithm for SDE based on the classic Lanczos method (Algorithm \ref{alg:lanczos-slq}). SLQ uses Lanczos to compute an orthonormal basis $\bv Q$ for the Krylov subspace $\{\bv g, \bv A \bv g, \bv A^2 \bv g,\ldots, \bv A^m \bv g\}$ for random starting vector $\bv g \in \R^n$ -- typically, $\bv g$ is Gaussian. The algorithm then computes the eigenvalues of $\bv T = \bv Q^T \bv A \bv Q$ and lets $s_{\bv{\tilde A}} = \sum_{i=1}^m w_j \cdot  \delta(x - \lambda_j(\bv T))$ where $w_j$ are appropriately chosen weights: $w_j = (\bv g^T \bv Q \bv v_j)^2$ where $\bv v_j$ is the $j^{th}$ eigenvector of $\bv T$. I.e., the spectral density of $\bv A$ is approximated with a reweighted spectral density of $\bv T$.
Lanczos constructs $\bv Q$ such that $\bv T$ is tridiagonal -- this makes computing its eigenvalues very efficient, but is otherwise not critical in the analysis.

The key idea behind SLQ (and Lanczos-based algorithms in general) is that, for any polynomial with degree $< m$, one can show that $p(\bv A) \bv g = \bv Q p(\bv T) \bv {Q}^T \bv g$. Using this fact, if we consider any low-degree polynomial moment for the SLQ approximate spectral density, we have:
\begin{align}
\E_{\tilde s_{\bv A}} [p(x)] = \sum_{j=1}^m w_j \cdot p(\lambda_j(\bv T)) &= \sum_{j=1}^m \bv{g}^T \bv Q \bv {v}_j \bv v_j^T \bv Q^T \bv g \cdot p(\lambda_j(\bv T))\nonumber\\ 
&= \bv g^T \bv Q \left (\sum_{j=1}^m \bv v_j \bv v_j^T p(\lambda_j(\bv T) )\right ) \bv Q^T \bv g\nonumber \\ &= \bv g^ T \Qb p(\bv T) \Qb^T\bv g = \bv g^T p(\bv A) \bv g.\label{eq:slqIntro}
\end{align} 
We can observe that when $\bv g$ has i.i.d. Gaussian entries with variance $\frac{1}{n}$, $\E_{\bv g}[\bv g^T p(\bv A) \bv g] = \frac{1}{n} \tr(p(\bv A)) = \frac{1}{n}\sum_{i=1}^n p(\lambda_i) =  \E_{s_\bv A}[p(x)]$. That is, the approximate spectral density output by SLQ matches all low-degree polynomial moments of the true spectral density in expectation.

To formally argue that $\tilde s_{\bv A}$ is close to $s_{\bv A}$ in Wasserstein distance, Chen, Trogden, and Ubaru \cite{chen2021analysis} argue that any two distributions supported on $[-\norm{\bv A}_2, \norm{\bv A}_2]$ that exactly match on any polynomial moment of degree $O(1/\epsilon)$ have Wasserstein distance at most $\epsilon \cdot \norm{\bv A}_2$. This allows them to argue that $W_1(\bar s_{\bv A}, \tilde s_{\bv A}) \le \epsilon \cdot \norm{\bv A}_2$ where $\bar s_{\bv A} = \sum_{i=1}^n \bv \eta_i \cdot \lambda_i(\bv A)$ with $\eta_i  = (\bv g^T \bv u_i)^2$, for $\bv u_i$ being the $i^{th}$ eigenvector of $\bv A$. Note that $\bar s_{\bv A}$ is constructed so that for any $p$, $\E_{\bar s_{\bv A}}[p(x)] = \bv g^T p(\bv A) \bv g$ and thus the low-degree moments of $\tilde s_{\bv A}$ and $\bar s_{\bv A}$ match by \eqref{eq:slqIntro}. \cite{chen2021analysis} next uses a concentration bound to argue that the CDFs of $\bar s_{\bv A}$ and $s_{\bv A}$ are close, and therefore that these two distributions are close in Wasserstein distance. They conclude via triangle inequality that $\tilde s_{\bv A}$ is close to $s_{\bv A}$.

\subsubsection{Moment Matching-Based Analysis of SLQ}

Our analysis follows a similar approach to  \cite{chen2021analysis}, also viewing SLQ as a moment matching method, but analyzing it in essentially an identical manner to the Chebyshev moment matching method of \cite{braverman:2022}. 
We consider the first $m = O(1/\epsilon)$ \emph{Chebyshev polynomials of the first kind}: $T_0,T_1,\ldots,T_m$. These polynomials are sine-like functions that are bounded in magnitude by $1$ on $[-1,1]$ (in the SDE setting, we rescale them to be bounded on $[-\norm{\bv A}_2,\norm{\bv A}_2]$). \cite{braverman:2022} show that any two distributions that are close in their Chebyshev moments are also close in Wasserstein distance. They then approximate the moments of $s_{\bv A}$ and find a distribution $\tilde s_{\bv A}$ matching these moments (and thus satisfying $W_1(s_{\bv A},\tilde s_{\bv A}) \le \epsilon \norm{\bv A}_2$) by solving a constrained optimization problem.

 Like many moment maching methods, \cite{braverman:2022} use Hutchinson's trace estimator \cite{Girard:1987,Hutchinson:1990} to approximate the Chebyshev moments -- i.e., they approximate $\frac{1}{n} \tr(T_i(\bv A)) \approx \frac{1}{b} \sum_{j=1}^b \bv g_j^T T_i(\bv A) \bv g_j$ where each $\bv g_j$ has i.i.d. Gaussian entries with variance $\frac{1}{n}$. They argue that since $T_i$ is bounded, Hutchinson's method is highly accurate, and thus for $\epsilon = \tilde \Omega(1/\sqrt{n})$, they can in fact just set $b = 1$. So their approximate moments are just of the form $\bv g^T T_i(\bv A)\bv g$ for a single random $\bv g$.
 
 Recall that by  \eqref{eq:slqIntro} these are exactly the moments of $\tilde s_{\bv A}$ output by SLQ! So, SLQ `automatically' finds a distribution matching the approximate Chebyshev moments of \cite{braverman:2022}, and thus satisfies $W_1(\tilde s_{\bv A},s_{\bv A}) \le \epsilon \cdot \norm{\bv A}_2$.\footnote{Even when $\epsilon = \tilde o(1/\sqrt{n})$ and so we need  $b > 1$ repetitions of Hutchinson's to estimate the Chebyshev moments in \cite{braverman:2022}, a similar analysis follows from averaging together the densities output by $b$ independent runs of SLQ.} 
 We prove this bound in Section \ref{sec:slq1}, Theorem \ref{thm:slq_main}.

\subsubsection{Implicit Deflation with SLQ}

Armed with the above moment matching-based analysis of SLQ, we next show that the method performs implicit deflation and thus nearly matches the bound of Theorem \ref{thm:sde1}. The key idea is as follows: the Chebyshev moments used in the analysis above are the polynomials $T_1(x/\norm{\bv A}_2), T_2(x/\norm{\bv A}_2),\ldots$ -- i.e., the Chebyshev polynomials scaled to be bounded on the range $[-\norm{\bv A}_2,\norm{\bv A}_2]$, which contains the eigenvalues of both $\bv A$ and $\bv T$. This scaling is what leads to the error term scaling with $\norm{\bv A}_2$.

In our explicit deflation approach, after deflating off the largest magnitude eigenvalues we apply moment matching to  $(\bv I-\bv{ZZ}^T) \bv A(\bv I-\bv{ZZ}^T) $. We thus approximate the Chebyshev moments of this matrix where the polynomials are scaled to the (possibly much smaller range) $[- \norm{(\bv I-\bv{ZZ}^T) \bv A(\bv I-\bv{ZZ}^T)}_2, \norm{(\bv I-\bv{ZZ}^T) \bv A(\bv I-\bv{ZZ}^T)}_2]$ of width $O(\sigma_{l+1}(\bv A))$. This leads to error scaling with $\sigma_{l+1}(\bv A)$. We cannot use these moments to approximate $s_{\bv A}$  directly: these scaled Chebyshev polynomials blow up outside the range on which they are bounded and the moments would be dominated by eigenvalues with magnitudes $\ge \sigma_{l+1}(\bv A)$ and so not informative for approximating $s_{\bv A}$.

\medskip

\noindent\textbf{Deflated Polynomial Moments.} 
To handle this issue, we use that the density output by SLQ approximates $s_{\bv A}$ on \emph{any low-degree polynomial moment}. Instead of the scaled Chebyshev polynomials, we use a set of ``deflated polynomials'', denoted $t_1,\ldots,t_m$ which are approximately equal to the scaled Chebyshev polynomials on the range $[- O(\sigma_{l+1}(\bv A)), O(\sigma_{l+1}(\bv A))]$ and have roots placed outside this range so they are equal to zero for any eigenvalue with magnitude significantly larger than $\sigma_{l+1}(\bv A)$. Constructing such polynomials is somewhat delicate --  naively placing the roots at large eigenvalues would distort the Chebyshev polynomials on the range of interest. This distortion needs to be canceled using a separate Chebyshev damping polynomial. Our analysis follows ideas from work studying the convergence of single-vector Krylov subspace methods like Lanczos for eigenvector/singular vector approximation \cite{Saad:1980}, and in particular, recent work proving low-rank approximation guarantees for these methods \cite{meyer:2024}.

Once we show that  $t_1,\ldots,t_m$ can be constructed as described, we know that SLQ matches the moments of $s_{\bv A}$ with respect to these polynomials. This is enough to argue that the output $\tilde s_{\bv A}$ must closely match the mass of $s_{\bv A}$ on the eigenvalues with magnitude $\le \sigma_{l+1}(\bv A)$. But it says nothing about the large eigenvalues, which contribute nothing to these moments.  We need to separately argue about this part of the density.

\medskip

\noindent{\textbf{Top Eigenvalue Approximation.} To do so, we directly look at the form of $\tilde s_{\bv A}$. Following a similar strategy to our proof for explicit deflation and again building on recent work studying the convergence of single-vector Krylov methods like Lanczos  \cite{meyer:2024}, we argue that  any eigenvalue $\lambda_j(\bv T)$ with magnitude significantly larger than $\sigma_{l+1}(\bv A)$ is extremely close to $\lambda_j(\bv A)$. This finding is not surprising: the Lanczos method is a popular choice for approximating outlying eigenvalues \cite{parlett1998symmetric}. Thus, on the large magnitude eigenvalues, the error that $\tilde s_{\bv A}$ incurs  vs. $s_{\bv A}$ is roughly: $$w_j \lambda_j(\bv T) - \frac{1}{n} \lambda_j(\bv A) \approx \left (w_j - \frac{1}{n} \right ) \lambda_j(\bv A) \le \left (w_j - \frac{1}{n} \right )  \norm{\bv A}_2.$$
Recall that $w_j = (\bv g^T \bv Q \bv v_j)^2$. Further, since our approximations to the top eigenvectors have converged, $\bv Q \bv v_j \approx 
\bv u_j$, where $\bv u_j$ is the eigenvector of $\bv A$ corresponding to $\lambda_j(\bv A)$. Since $\bv u_j$ is a fixed unit vector and $\bv g$ has random Gaussian entries with variance $\frac{1}{n}$, $w_j$ is simply the square of a Gaussian with variance $\frac{1}{n}$. I.e., $\E[w_j] = 1/n$ and we can use a Chi-Squared concentration bound to argue that $|w_j-\frac{1}{n}| = \tilde O(\frac{1}{n})$ with high probability. Accounting for the $l$ top eigenvalues, this leads to the $\tilde O(l/n \cdot \norm{\bv A}_2)$ error term in Theorem \ref{thm:slq} and completes our analysis.

\subsubsection{Variance Reduced SLQ}

 The above argument immediately suggests a simple improvement to SLQ that can avoid the $\tilde O(l/n \cdot \norm{\bv A}_2)$ error term due to the large magnitude eigenvalues. If an eigenvector has converged and thus we know that $|\lambda_j(\bv T) - \lambda_j(\bv A)|$ is small, then in the weighted spectral density $\tilde s_{\bv A}$ output by Lanczos, we know that we should set $w_j = \frac{1}{n}$ rather than setting $w_j$ to be the square of a Gaussian with variance $\frac{1}{n}$. We formalize this approach in Algorithm \ref{alg:slq+} and show that it obtains the stronger error bound of Theorem \ref{thm:varslq}, with only a negligible additive term depending on $\|\Ab \|_2$. 

\medskip

\noindent\textbf{A Remark on Numerical Stability.} We note that there exist results proving that the Conjugate Gradient (CG) algorithm for solving linear systems, which is closely related to Lanczos,  can perform implicit eigenvalue deflation to achieve faster convergence rates, analogous to our analysis for SDE \cite{SpielmanWoo:2009,AxelssonLindskog:1986}. These bounds are known to suffer from numerical stability issues -- when CG is implemented in finite precision arithmetic, it may not be able to apply the necessary polynomials to achieve these convergence rates. It is likely that our bounds also suffer from stability issues. However 1) in finite precision, the method likely works well as long as the  number of deflated eigenvalues $l$ is relatively small and 2) our analysis directly applies to SLQ implemented using a stable algorithm to compute a basis for the Krylov subspace -- e.g., one that performs full reorthogonalization at each step. Using such an algorithm will not affect the matrix-vector query complexity of the algorithm, and only gives a small runtime overhead when the number of iterations is relatively small.

\subsection{Roadmap}

The remainder of the paper is organized as follows: In Section \ref{sec:prelim} we define basic notation and preliminary results used throughout our proofs. In Section \ref{sec:explicit} we analyze our explicit moment matching method with deflation, culminating in the proof of Theorem \ref{thm:sde1}. In Section \ref{sec:slq} we give our analysis of SLQ and prove Theorems \ref{thm:slq} and \ref{thm:varslq}. In Section \ref{sec:lowerBound} we prove our matching lower bound, Theorem \ref{thm:lower_bound}, which shows that our SDE bounds are near-optimal. Finally, in Section \ref{sec:exp}, we report the results of some numerical experiments comparing various moment matching and Lanczos-based algorithms for SDE.

\section{Notation and Preliminaries}\label{sec:prelim}

We first introduce notation and foundational results that we will use throughout.
 
 \subsection{Basic Notation} 
For any integer $n$, let $[n]$ denote the set $\{1,2,\ldots,n\}$. We write matrices and vectors in bold literals -- e.g., $\bv A$ or $\bv{x}$. For a vector $\bv{x}$, we let $\norm{\bv{x}}_2$ denote its Euclidean norm. 
 The Frobenius norm and spectral (i.e., operator) norm of a matrix $\Ab$ are denoted by $\|\Ab \|_F$ and $\|\Ab \|_2$ respectively. We often use that for any two matrices $\Ab, \Bb$ of appropriate dimensions, the spectral norm is sub-multiplicative, i.e., $\|\Ab\Bb\|_2 \leq \|\Ab\|_2\|\Bb\|_2$. The column span of a matrix $\Ab$ is denoted by $\range(\Ab)$.
 
 \subsection{Linear Algebra Preliminaries}\label{sec:linAlg}
 
\noindent\textbf{SVD and Eigendecomposition.}
The singular values of $\bv{A} \in \R^{n \times n}$ are denoted by $\sigma_i(\bv{A})$ for $i \in [n]$ with $\sigma_1(\Ab) \geq \ldots \ge \sigma_n(\Ab)$. We denote the singular value decomposition (SVD) of $\bv{A} \in \R^{n \times n}$ by $\Ub \bv{\Sigma} \Vb^T$ where $\bv{\Sigma} \in \R^{n \times n}$ is a diagonal matrix with $\bv{\Sigma}_{ii}=\sigma_i(\Ab)$ and $\Ub \in \R^{n \times n}$ and $\Vb \in \R^{n \times n}$ are orthonormal matrices with columns containing $\bv A$'s left and right singular vectors respectively. 

Our main theorems are concerned with the eigenspectrum of symmetric matrices. We denote the eigenvalues of a symmetric matrix $\bv{A} \in \R^{n \times n}$ by $\lambda_i(\bv A)$ for $i \in [n]$ such that  $|\lambda_1(\bv A)| \ge \ldots \ge |\lambda_n(\bv A)|$. Observe that we have $|\lambda_i(\Ab)| = \sigma_i(\Ab)$ for $i \in [n]$. We denote the eigendecomposition of a symmetric  $\bv A \in \R^{n \times n}$ by $\Ub \bv{\Lambda} \Ub^T$ where $\bv{\Lambda} \in \R^{n \times n}$ is a diagonal matrix such that $\bv{\Lambda}_{ii}=\lambda_i(\Ab)$ and $\Ub \in \R^{n \times n}$ has orthonormal columns equal to the corresponding eigenvectors of $\Ab$. 
 
 For  $p \in [n]$, let $\Ub_p \in \R^{n \times p}$ (or $\Vb_p \in \R^{n \times p}$) denote the matrix containing the first $p$ columns of $\Ub$ (or $\Vb$) -- i.e., the singular vectors or the eigenvectors corresponding to the largest $p$ singular values or eigenvalues of $\Ab$ by magnitude. Similarly, let $\Ub_{p,\perp } \in \R^{n \times (n-p)}$ (or $\Vb_{p,\perp} \in \R^{n \times (n-p)}$) denote the matrix which containing the last $n-p$ columns of $\Ub$ (or $\Vb$) -- i.e., the singular vectors or the eigenvectors corresponding to the smallest $n-p$ singular values or eigenvalues of $\Ab$ by magnitude.  Let $\bv{\Sigma}_p \in \R^{p \times p}$ (and $\bv{\Sigma}_{p,\perp} \in \R^{(n-p) \times (n-p)}$) denote the matrix containing the top $p$ (or the bottom $n-p$) singular values of $\Ab$ along its diagonal. For symmetric $\Ab \in \R^{n \times n}$ with eigendecomposition $\bv U \bv \Lambda \bv U^T$, define $\bv \Lambda_p \in \R^{p \times p}$ and $\bv \Lambda_{p,\perp} \in \R^{(n-p) \times (n-p)}$ analogously. 
 
 The best $p$-rank approximation to $\Ab$ in the spectral or Frobenius norms is denoted by $\Ab_p$ and is given by $\Ab_p=\Ub_p \bv{\Sigma}_p \Vb_p^T$. The pseudoinverse of $\bv A$ is denoted by $\Ab^{\dag}$ and given by $\Ab^\dag = \bv V \bv \Sigma^\dag \bv U$, where $\bv \Sigma^\dag_{ii} = 1/\sigma_i(\bv A)$ if $\sigma_i(\bv A) > 0$ and $\bv \Sigma^\dag_{ii} = 0$ otherwise. 

 We will frequently use Weyls's inequality which states that a small perturbation of a symmetric matrix will not significantly change its eigenvalues.

\begin{fact}[Weyls' inequality~\cite{weyl1912asymptotic}]\label{fact:weyl}
For any two symmetric matrices $\Ab \in \R^{\n}$ and $\Bb^{\n}$,
\begin{align*}
    \max_{i \in [n]} |\lambda_i(\Ab) -\lambda_i(\Bb)| \leq \| \Ab-\Bb \|_2.
\end{align*}
\end{fact}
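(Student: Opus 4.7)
The plan is to prove Weyl's inequality via the Courant--Fischer min--max characterization of eigenvalues, which is the standard textbook route. The essential observation is that the spectral norm uniformly controls all unit-vector quadratic forms: for any unit $\bv{x}$, $|\bv{x}^T(\Ab-\Bb)\bv{x}| \le \|\Ab-\Bb\|_2$. So perturbing $\Bb$ by $(\Ab-\Bb)$ shifts every quadratic form by at most $\|\Ab-\Bb\|_2$, and the min--max formula promotes this pointwise bound to a uniform eigenvalue bound.

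Concretely, I would carry out three steps. First, invoke Courant--Fischer: for any symmetric $\Mb \in \R^{n \times n}$ with eigenvalues $\mu_1 \ge \cdots \ge \mu_n$ ordered by value,
$$\mu_i \;=\; \max_{\substack{S \subseteq \R^n \\ \dim S = i}} \; \min_{\substack{\bv{x} \in S \\ \|\bv{x}\|_2 = 1}} \bv{x}^T \Mb \bv{x}.$$
Second, for any $i$-dimensional subspace $S$ and any unit $\bv{x} \in S$, decompose $\bv{x}^T \Ab \bv{x} = \bv{x}^T \Bb \bv{x} + \bv{x}^T (\Ab - \Bb) \bv{x} \ge \bv{x}^T \Bb \bv{x} - \|\Ab - \Bb\|_2$. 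Taking the min over unit $\bv{x} \in S$ and then the max over $i$-dimensional $S$, and applying Courant--Fischer to $\Bb$, gives $\lambda_i(\Ab) \ge \lambda_i(\Bb) - \|\Ab - \Bb\|_2$. Third, swap the roles of $\Ab$ and $\Bb$ to get the reverse inequality, and combine the two to conclude $|\lambda_i(\Ab) - \lambda_i(\Bb)| \le \|\Ab - \Bb\|_2$ for every $i$, hence the max over $i$ is also bounded by $\|\Ab-\Bb\|_2$.

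I do not expect any real obstacle, since this is a classical result whose proof is essentially a direct application of the variational characterization. The one subtlety worth flagging is a mismatch of conventions: the paper indexes eigenvalues by magnitude ($|\lambda_1| \ge \cdots \ge |\lambda_n|$), whereas the proof above produces the inequality for the value-ordered indexing $\lambda_1 \ge \cdots \ge \lambda_n$. A literal magnitude-ordered version need not hold (sign flips between $\Ab$ and $\Bb$ can permute indices), so one should be careful to apply the fact in the value-ordered form or in contexts (e.g.\ when $\|\Ab-\Bb\|_2$ is small compared to the relevant eigenvalues of $\Bb$) where the two orderings coincide. This is a bookkeeping issue rather than a mathematical obstacle, and does not affect the length-one proof sketched above.
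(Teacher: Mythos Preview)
Your proof is correct and is the standard Courant--Fischer argument. The paper does not give a proof of this fact at all; it is stated as a cited classical result (Fact~\ref{fact:weyl}, attributed to~\cite{weyl1912asymptotic}) and invoked as a black box. So there is nothing to compare against, and your argument is exactly what one would expect to see in a textbook.

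Your remark about the ordering convention is well taken and worth emphasizing: the paper's global convention orders eigenvalues by magnitude, whereas Weyl's inequality holds for the value-ordered eigenvalues. In the paper's actual uses of Fact~\ref{fact:weyl} (e.g.\ in Lemma~\ref{lem:bound3}, equation~\eqref{Eq:weyl1}, and equation~\eqref{eq:weyl_slq}), the perturbation $\|\Ab-\Bb\|_2$ is always of size $\|\Ab\|_2/n^{c}$, which is far smaller than the relevant eigenvalue gaps, so the two orderings agree on the indices in play and no harm results. But you are right that the fact as literally stated under the paper's magnitude-ordering convention is not what is proved; the value-ordered version is what is needed and what your argument delivers.
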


 \medskip

\noindent\textbf{Matrix functions.} Given a function $\phi: \R \rightarrow \R$, we define its application to a matrix $\bv A$ with SVD $\bv A = \bv U \bv \Sigma \bv V^T$ as $\phi(\Ab)=\Ub\phi(\bv{\Sigma})\Vb^T$, where $\phi(\bv{\Sigma})$ is formed by applying $\phi$ element-wise to the diagonal entries  of $\bv{\Sigma}$ (i.e., to the singular values of $\Ab$). We overload notation and, for symmetric $\bv A$ with eigendecomposition $\Ab = \bv U \bv \Lambda \bv U^T$, also let $\phi(\Ab)=\Ub \phi(\bv{\Lambda})\Ub^T$, where $\phi(\bv{\Lambda})$ is formed by applying $\phi$ element-wise to the diagonal entries of $\bv{\Lambda}$ (i.e., to the eigenvalues of $\Ab$). We  will specify which specific form we are using when needed.

\medskip

\noindent\textbf{Krylov Subspaces.}
All algorithms analyzed in this work are Krylov subspace methods -- we introduce basic notation for Krylov subspaces below.
\begin{definition}[Block Krylov Subspace]\label{def:krylov}
The Krylov subspace of a matrix $\Ab \in \R^{n \times n}$ with respect to a starting block $\bv{X} \in \R^{n \times l}$ is given by:\begin{align*}
    \mathcal{K}_q(\bv A, \bv X) = \left[\Ab\Xb, (\Ab\Ab^T)\Ab\Xb, \ldots, (\Ab\Ab^T)^q\Ab\Xb\right].
\end{align*}
Here, $l$ is called the block size and $q$ is the depth or the number of iterations. 
\end{definition}
Notice that we require $O(ql)$ matrix-vector products with $\Ab$ to generate a Krylov subspace $\mathcal K_q(\bv A,\bv X)$ with depth $q$ and block size $l$. 
We will typically denote an orthonormal basis of the Krylov subspace by $\Qb \in \R^{n \times r}$ where $r$ is the dimension to the column span of $\mathcal{K}_q(\bv A,\bv X)$.

 \subsection{Moment Matching and Wasserstein Distance Preliminaries} 

\noindent\textbf{Inner Product between Functions.} For any two functions $f:[-L,L] \rightarrow \R$ and $g:[-L,L] \rightarrow \R$, the inner product between $f$ and $g$ is defined as $\langle f,g \rangle=\int_{-L}^{L} f(x) g(x) dx$.

\medskip

\noindent\textbf{Chebyshev polynomials.} Both our explicit moment matching method and our error analysis for SLQ are based on analyzing approximations to the  Chebyshev polynomial moments of the spectral density $s_{\bv A}$. We now describe some basic properties of Chebyshev polynomials.  We will denote the $k$th Chebyshev polynomial of the first kind by $T_k$. These polynomials are defined by the recurrence:
\begin{align*}
    T_0(x)=1 \text{ , } T_1(x)=x \text{ , } T_k(x)=2x \cdot T_{k-1}(x)-T_{k-2}(x) \text{ for } k \geq 2.
\end{align*}
We use the well known fact that the Chebyshev polynomials are bounded between $[-1,1]$ i.e., $\max_{x \in [-1,1]}|T_k(x)| \leq 1$. These polynomials also have an explicit expression of the form: 
\begin{equation}\label{Eq:T(x)}
    T_q(x)=\frac{1}{2}[(x+\sqrt{x^2-1})^q+(x-\sqrt{x^2-1})^q].
\end{equation}
Let $w(x) := \frac{1}{\sqrt{1-x^2}}$. It is well known that $\langle T_k, w \cdot T_k \rangle=\int_{-1}^{1} \frac{1}{\sqrt{1-x^2}}T^2_k(x) dx =\frac{\pi}{2}$ for $k>0$, $\langle T_0, w \cdot T_0 \rangle=\pi$ and $\langle T_i, w \cdot T_j \rangle=0$ for $i \neq j$.
We define the $k^{th}$ \emph{normalized Chebyshev polynomial} to be $\Bar{T}_k := T_k/\sqrt{\langle T_k, w \cdot T_k  \rangle}$. Note that we have $|\Bar{T}_k(x)| \leq \sqrt{\frac{2}{\pi}}$ for $x \in [-1,1]$.
Also note that $T_k(-x)=T_k(x)$ when $k$ is even and $T_k(-x)=-T_k(x)$ when $k$ is odd.

We now describe the Chebyshev Series expansion of a function:
\begin{definition}[Chebyshev Series Expansion]
The Chebyshev series expansion for a function $f :[-1,1] \rightarrow \R$ is given by:
\begin{align*}
    \sum_{k=0}^{\infty} \langle f, w \cdot \Bar{T}_k \rangle \cdot \Bar{T}_k.
\end{align*}
If f is Lipschitz
continuous then this expansion converges absolutely and uniformly to $f$ \cite{trefethen2019approximation}.
\end{definition}

\medskip

\noindent\textbf{Wasserstein Distance.} The Wasserstein-$1$ distance, also known as the earth mover's distance, is a way to measure distance between two distributions. We will use the dual formulation of this distance given by Kantorovich-Rubinstein theorem~\cite{kantorovich1957functional}. In particular, for any two probability densities $s$ and $q$ supported on $[-L,L]$ for some $L \in \R^{+}$  the Wasserstein-$1$ distance is given by:
\begin{align*}
    \W_1(s,q)=\sup_{h \in \text{1-Lip}} \int_{-L}^{L} h(x)(s(x)-q(x))dx,
\end{align*}
where $h \in \text{1-Lip}$ means we are optimizing the integral over all 1-Lipschitz functions.

\section{SDE via Moment Matching with Explicit Deflation}\label{sec:explicit}

In this section, we introduce our deflation-based approach to SDE (Algorithm~\ref{alg:sde}) which combines a block Krylov method for deflation  (Algorithm~\ref{alg:krylov}) with Chebyshev moment matching (Algorithm 2 of~\cite{braverman:2022}) for SDE. Pseudocode for this algorithm is given below.

\begin{algorithm}[H] 
\caption{Spectral Density Estimation with Deflation}
\label{alg:sde}
\begin{algorithmic}[1]
\Require{Symmetric $\bv A \in \mathbb{R}^{n\times n}$, error $\epsilon \in (0,1)$, confidence $\delta \in (0,1)$, block size $l$.}
\State Let $\Zb \in \R^{n \times s}$, $\bv{\tilde \Lambda} \in \R^{s \times s}$ be the outputs of Algorithm~\ref{alg:krylov} (Block Krylov iteration) with inputs $\Ab$, block size $l$, and constant $\beta  $. \label{alg2:line1}
\State Let $q_1$ be the spectral density corresponding to $\bv{\tilde \Lambda}$: i.e., $q_1(x)=\frac{1}{s}\sum_{i =1}^{s}\delta(x-(\Tilde{\bv{\Lambda}})_{ii})$. \label{alg2:q-def}
\State Let $\Pb=\Ib-\Zb\Zb^T$ and let $L$ be an upper bound on $\| \Pb\Ab\Pb\|_2$ such that $\| \Pb\Ab\Pb\|_2 \leq L \leq 2\| \Pb\Ab\Pb\|_2$.\label{alg2:line3}
Run Algorithm 2  of~\cite{braverman:2022} (Hutchinson-based Chebyshev moment estimation) with input  matrix $\frac{1}{L}\Pb\Ab\Pb$, number of moments $N=\frac{c_1}{\epsilon}$, and number of repetitions of Hutchinson's method $b= \max \left(1,\frac{c_2}{n\epsilon^2}\log^2 \frac{1}{\epsilon \delta}\log^2 \frac{1}{\epsilon} \right)$, where $c_1,c_2 > 0$ are sufficiently large constants. Let the approximate moments $\tilde{\tau}_1, \ldots, \tilde{\tau}_N$ denote the output of this algorithm. \label{alg2:line4}
\State Set $\hat{\tau}_i \rightarrow \frac{1}{n-s}(n \cdot \tilde{\tau}_i-s \cdot \Bar{T}_i(0))$ for $i \in [N]$ where $\Bar{T}_i(x)$ is the $i$'th \textit{normalized} Chebyshev polynomial. \label{alg2:line5}
\State Run Algorithm 1 of~\cite{braverman:2022} (Chebyshev moment matching) with the modified approximate moments $\hat{\tau}_1, \ldots, \hat{\tau}_N$ as inputs. Let $q'_2$ be the output of this algorithm.\label{alg2:line6}
\State Define $q_2$ as $q_2(x)=q'_2(x/L)$ if $x \in [-L,L]$.
\State \Return the spectral density $q=\frac{s \cdot q_1 + (n-s) \cdot q_2}{n}$. 
\end{algorithmic}
\end{algorithm}

\noindent\textbf{Description of Algorithm \ref{alg:sde}.}
As discussed in Section \ref{sec:deflationIntro}, 
Algorithm~\ref{alg:sde} first uses a randomized block Krylov method (Algorithm~\ref{alg:krylov}) to compute a set of approximate eigenvectors and eigenvalues for $\bv A$, denoted by $\bv Z$ and $\bv{\tilde \Lambda}$ respectively. Importantly for our error analysis, Algorithm~\ref{alg:krylov} only returns approximate eigenvectors that satisfy a convergence condition (line~\ref{alg1:if-condition} of Algorithm~\ref{alg:krylov}), which ensures that  $\Ab\zb_i \approx \tilde{\lambda}_i\zb_i$. 

Algorithm ~\ref{alg:sde} computes the spectral density $q_1$ of these converged approximate eigenvalues. It then \emph{deflates} $\Ab$ using the corresponding converged eigenvectors in line~\ref{alg2:line4} by computing $\Pb\Ab\Pb$ where $\Pb=\bv{I}-\Zb\Zb^T$. The algorithm then approximates the spectral density of $\Pb\Ab\Pb$ corresponding to the \textit{non-deflated} eigenvalues of $\Ab$ using a moment matching method. To do so, Algorithm~\ref{alg:sde} first computes estimates $\tilde{\tau}_1, \ldots, \tilde{\tau}_N$ of  the normalized Chebyshev moments of $\Pb\Ab\Pb$ (after normalizing $\Pb\Ab\Pb$ so that its spectral norm is bounded by 1), using Hutchinson's method (Algorithm 2 of  \cite{braverman:2022}). Since $\Pb\Ab\Pb$ contains $s$ zero eigenvalues corresponding to the deflated eigenvectors $\Zb \in \R^{n \times s}$, the moments need to adjusted in line~\ref{alg2:line5} so that they do not take into account this mass of zero eigenvalues. These adjusted moments, $\hat{\tau}_1,\ldots,\hat{\tau}_N$, are then passed as an input to Algorithm 2 of~\cite{braverman:2022} in line~\ref{alg2:line6}. This algorithm computes   a spectral density $q_2$ whose moments are equal to the approximate moments $\hat{\tau}_i$ in line~\ref{alg2:line5}. The final output of Algorithm~\ref{alg:sde} is obtained by combining $q_1$ and $q_2$ after appropriately reweighting them. 

\medskip

\noindent\textbf{Outline of Error Analysis.} 
The remainder of this section is dedicating to analyzing Algorithm \ref{alg:sde}, ultimately culminating in the proof of Theorem \ref{thm:sde1}, which bounds the Wasserstein error of the output spectral density estimate by $\epsilon \cdot \sigma_{l+1}(\bv A) + \frac{\norm{\bv A}_2}{n^{c}}$ for any constant $c$. Our analysis breaks down into the following steps:
\begin{itemize}
    \item In Section \ref{subsec:errblk} we analyze the error of the block Krylov based deflation step (Algorithm \ref{alg:krylov}). The main technical results of this section are Theorems \ref{thm:converge} and \ref{thm:eig_error}, which argue that, for some $k \le l$ with $\sigma_{k+1}(\bv A) = O(\sigma_{l+1})$, the method finds highly accurate approximations to at least the top $k$ eigenvectors and eigenvalues of $\bv A$. In particular, these approximate eigenpairs meet the convergence condition of the algorithm and thus are returned as columns of $\bv Z$. With this fact established, we can prove the main export of the section, Theorem \ref{thm:sigma_S}, which shows that $\norm{\bv P \bv A \bv P}_2 = O(\sigma_{l+1}(\bv A))$. That is, when we deflate off the converged approximate eigenvectors, we reduce the spectral norm of the matrix to $O(\sigma_{l+1}(\bv A))$, allowing us to obtain $\epsilon \cdot \sigma_{l+1}(\bv A)$ error when approximating the spectral density of $\bv P \bv A \bv P$ with Chebyshev moment matching.
    \item In Section \ref{sec:deflate} we give the error analysis of Algorithm \ref{alg:sde} itself. This analysis is fairly straightforward -- using existing backward stability analysis for eigenpair approximation, we can argue that, since $\bv P$ only deflates off highly accurate approximations to $\bv A$'s eigenvectors, our combined spectral density $q$ has Wasserstein error roughly equal to (up to an additive  $\frac{\norm{\bv A}_2}{n^{c}}$ term) the Wasserstein error of approximating the spectral density of $\bv P \bv A \bv P$ with moment matching. As discussed above, this error is bounded by $O(\epsilon \cdot \norm{\bv P \bv A \bv P}_2) = \epsilon \cdot \sigma_{l+1}(\bv A)$, allowing us to achieve our final error bound of $\epsilon \cdot \sigma_{l+1}(\bv A) + \frac{\norm{\bv A}_2}{n^{c}}$.
\end{itemize}

\begin{savenotes}
\begin{algorithm}[H] 
\caption{Block Krylov Iteration for Deflation}
\label{alg:krylov}
\begin{algorithmic}[1]
\Require{Symmetric $\bv A \in \mathbb{R}^{n\times n}$, block size $l \in [n]$, iterations $q=O(\log n)$, constant $\beta>0$}
\State Let $\bv{X} \in \R^{n \times l}$ be a starting block with independent $\mathcal{N}(0,1)$ Gaussian entries.
\State Compute $\mathcal{K}_q = \left[\Ab\Xb, (\Ab\Ab^T)\Ab\Xb, \ldots, (\Ab\Ab^T)^q\Ab\Xb\right]$.\label{alg1:krylov-subspace}
\State Orthonormalize the columns of $\mathcal{K}_q$ to get $\Qb \in \R^{n \times r}$ where $r$ is the rank of $\mathcal{K}_q$ \label{line:q}.
\State Compute $\bv{T}=\bv{Q}^T\bv{A}\bv{Q}$ and let the eigenvectors of $\bv{T}$ be $\vb_1, \ldots, \vb_r$ corresponding to eigenvalues $|\lambda_1(\bv{T})| \geq \ldots \geq |\lambda_r(\bv{T})|$.
\State Set $S=\{\}$.
\For{$j=1, \ldots, r$}
\If{$\|\Ab \Qb \vb_j-  \lambda_j(\Tb) \Qb\vb_j\|_2 \leq \frac{\|\Ab \|_2}{n^{\beta}}$\footnote{Any upper bound on $\|\Ab \|_2$ off by at most constant multiplicative factors suffices here. We can compute such an upper bound eusing $O(\log n)$ matrix-vector products via e.g., the power method or a single-vector Krlyov method.} }\label{alg1:if-condition}
\State $S=S \cup \{j \}$
\EndIf
\EndFor
\State \Return $\bv{Z}_S=\bv{Q}\bv{V}_S$ and $\Tilde{\bv{\Lambda}}_S$ where $\bv{V}_S \in \R^{r \times |S|}$ contains all eigenvectors with indices in $S$ and $\Tilde{\bv{\Lambda}}_S \in \R^{|S| \times |S|}$ is a diagonal matrix containing the corresponding eigenvalues of $\bv{T}$. 
\end{algorithmic}
\end{algorithm}
\end{savenotes}

\subsection{Error bounds for Deflation via Block Krylov }\label{subsec:errblk}

In this section, we analyze Algorithm~\ref{alg:krylov} and prove that it outputs highly accurate approximations to the top eigenvalues and eigenvectors of $\Ab$. 

Recall that we let $\bv A = \bv U \bv{\Sigma} \bv V^T$ denote the SVD of $\bv A$ -- see Section \ref{sec:linAlg}.
We start by proving Lemma \ref{thm:angle_bound}, a generalization of Theorem 2.1 of~\cite{drineas2018structural}, which bounds the error $\|(\bv{I}-\bv{Q}\bv{Q}^T)\bv{U}_p \|_F^2$ of projecting the top $p$ left singular vectors of $\bv A$ onto the span of the block Krylov subspace, for any $p \le l$\footnote{In the  numerical linear algebra literature, this quantity is often denoted as $\|(\bv{I}-\bv{Q}\bv{Q}^T)\bv{U}_p \|_F^2 = \sin(\Qb, \Ub_p)$}.
Notice that, for any $q$ and for any polynomial $\phi(x)$ of degree $\leq 2q+1$ containing only terms with odd powers, $\phi(\Ab)\bv{X}$ lies exactly in the span of the Krylov subspace $\mathcal{K}_q$.  Lemma~\ref{thm:angle_bound} bounds $\|(\bv{I}-\bv{Q}\bv{Q}^T)\bv{U}_p \|_F^2$ in  terms of the norms of the $\phi(\bv{\Sigma}_p)$, $\phi(\bv{\Sigma}_{l,\perp})$ and $\bv{V}^T_{l,\perp}\bv{X}(\bv{V}^T_{l}\bv{X})^{-1}$ for any such polynomial $\phi(x)$ which is not zero at any of the top $p$ singular values of $\Ab$. Later, by choosing such a polynomial $\phi(x)$ of degree at most $2q+1$ with odd powers so that it is large at the top singular values of $\Ab$ and small at the rest, we can bound the projection error.

\begin{lemma}[Angle between subspaces, generalization of Theorem 2.1 of~\cite{drineas2018structural}]\label{thm:angle_bound}
Let $\bv{A} \in \R^{n \times n}$ have $\rank(\Ab) >l$ and SVD given by $\bv A = \bv{U}\bv{\Sigma}\bv{V}^T$. Let $\Xb \in \R^{n \times l}$ be such that $\rank(\bv{V}_l^T\bv{X})=l$ and let $\Qb$ be an orthonormal basis of the depth $q$ Krylov subspace $\mathcal{K}_q(\bv A,\bv X)$ (see Def. \ref{def:krylov}). For any $p \leq l$ and any polynomial $\phi(x)$ of degree $2q+1$ with odd powers only, such that $\phi(\bv{\Sigma}_p)$ is non-singular,
\begin{align*}
    \|(\bv{I}-\bv{Q}\bv{Q}^T)\bv{U}_p \|_F^2 \leq \|\phi(\bv{\Sigma}_{l,\perp}) \|^2_2 \cdot \| \phi(\bv{\Sigma}_p)^{-1}\|^2_2 \cdot \|\bv{V}^T_{l,\perp}\bv{X}(\bv{V}^T_{l}\bv{X})^{-1} \|^2_2.
\end{align*}
\end{lemma}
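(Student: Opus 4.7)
The proof exploits the defining property of the block Krylov subspace: because $\phi$ is an odd polynomial of degree at most $2q+1$, the matrix $\phi(\bv A)\bv X$ lies in $\range(\bv Q)$. Indeed, via the SVD and the identity $(\bv A\bv A^T)^k \bv A = \bv U\bv\Sigma^{2k+1}\bv V^T$, every odd-power polynomial in $\bv A$ applied to $\bv X$ is a linear combination of the block columns $(\bv A\bv A^T)^k \bv A\bv X$ of $\mathcal K_q(\bv A,\bv X)$. Consequently, for any coefficient matrix $\bv M \in \R^{l\times p}$ we have $\phi(\bv A)\bv X\bv M \in \range(\bv Q)$, hence $(\bv I - \bv Q\bv Q^T)\phi(\bv A)\bv X\bv M = \bv 0$. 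The plan is to pick $\bv M$ so that $\phi(\bv A)\bv X\bv M$ agrees with $\bv U_p$ up to an explicit residual $\bv E$ supported on the tail singular directions; since $\bv I - \bv Q\bv Q^T$ is an orthogonal projection and hence a contraction in the Frobenius norm, this immediately gives $\|(\bv I - \bv Q\bv Q^T)\bv U_p\|_F = \|(\bv I - \bv Q\bv Q^T)\bv E\|_F \le \|\bv E\|_F$.

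To construct $\bv M$, first decompose
\[
\phi(\bv A)\bv X = \bv U_l\,\phi(\bv\Sigma_l)\,\bv V_l^T\bv X \;+\; \bv U_{l,\perp}\,\phi(\bv\Sigma_{l,\perp})\,\bv V_{l,\perp}^T\bv X.
\]
Using the assumed invertibility of $\bv V_l^T\bv X$ and of $\phi(\bv\Sigma_p)$, I would set
\[
\bv M \;=\; (\bv V_l^T\bv X)^{-1}\begin{bmatrix}\phi(\bv\Sigma_p)^{-1} \\ \bv 0_{(l-p)\times p}\end{bmatrix}.
\]
Because $\phi(\bv\Sigma_l)$ is diagonal, multiplying yields $\phi(\bv\Sigma_l)\begin{bmatrix}\phi(\bv\Sigma_p)^{-1};\,\bv 0\end{bmatrix} = \begin{bmatrix}\bv I_p;\,\bv 0\end{bmatrix}$, so the ``signal'' part of $\phi(\bv A)\bv X\bv M$ collapses to $\bv U_l\begin{bmatrix}\bv I_p;\,\bv 0\end{bmatrix} = \bv U_p$, giving
\[
\phi(\bv A)\bv X\bv M \;=\; \bv U_p + \bv E, \qquad \bv E \;=\; \bv U_{l,\perp}\,\phi(\bv\Sigma_{l,\perp})\,\bv V_{l,\perp}^T\bv X\,(\bv V_l^T\bv X)^{-1}\begin{bmatrix}\phi(\bv\Sigma_p)^{-1} \\ \bv 0\end{bmatrix}.
\]
Padding with zeros in $\bv M$ is the key technical move: it means only $\phi(\bv\Sigma_p)$ (not the full $\phi(\bv\Sigma_l)$) needs to be invertible, so the lemma can later be applied with polynomials that vanish at $\sigma_{p+1},\ldots,\sigma_l$, which is crucial for the deflation-style bounds in Section~\ref{subsec:errblk}.

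The final step is to bound $\|\bv E\|_F$ by sub-multiplicativity, peeling off factors via $\|\bv A\bv B\|_F \le \|\bv A\|_2\|\bv B\|_F$ and $\|\bv U_{l,\perp}\|_2 = 1$: this yields $\|\bv E\|_F \le \|\phi(\bv\Sigma_{l,\perp})\|_2 \cdot \|\bv V_{l,\perp}^T\bv X(\bv V_l^T\bv X)^{-1}\|_2 \cdot \|[\phi(\bv\Sigma_p)^{-1};\,\bv 0]\|$, and the last factor equals $\|\phi(\bv\Sigma_p)^{-1}\|$ in either the Frobenius or spectral norm, reproducing the right-hand side of the lemma after squaring. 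There is no serious obstacle: once the zero-padded $\bv M$ is written down the rest is routine linear algebra. The only creative ingredient is the construction of $\bv M$, which generalizes the $p=l$ argument of Theorem~2.1 of Drineas et al.\ to arbitrary $p \le l$.
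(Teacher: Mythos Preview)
Your proposal is correct and is essentially the same argument as the paper's proof. The paper phrases it via the projector $\bv\Phi\bv\Phi^{\dag}$ and the variational identity $\|(\bv I-\bv\Phi\bv\Phi^\dag)\bv U_p\|_F = \min_{\bv\Psi}\|\bv U_p - \bv\Phi\bv\Psi\|_F$, then chooses $\bv\Psi = \bv\Phi_l^{\dag}\bv U_p$; computing this pseudoinverse explicitly gives precisely your zero-padded $\bv M = (\bv V_l^T\bv X)^{-1}[\phi(\bv\Sigma_p)^{-1};\,\bv 0]$, and the resulting residual and final norm bound are identical.
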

\begin{proof}

Let $\bv{\Phi}=\phi(\Ab)\bv{X}=\bv{U}\phi(\bv \Sigma)\bv{V}^T\bv{X}$. As explained above, since $\phi(x)$ consists of only odd powers and has degree at most $2q+1$, the columns of $\bv{\Phi}$ lie in the span of the Krylov susbspace $\mathcal{K}_q(\bv A,\bv X)$ and thus $\range(\bv{\Phi})\subseteq \range(\mathcal{K}_q(\bv A,\bv X)) = \range(\bv Q)$. $\bv{\Phi}\bv{\Phi}^{\dag}$ and $\bv Q \bv Q^T$ are the orthogonal projectors onto $\range(\bv{\Phi})$ and $\range(\mathcal{K}_q(\bv A,\bv X))$ respectively. Since $\text{range}(\bv{\Phi}) \subseteq \text{range}(\bv{Q})$ we have:
\begin{align}\label{eq:up}
    \|(\bv{I}-\bv{Q}\bv{Q}^T)\bv{U}_p \|_F^2 \leq \|(\bv{I}-\bv{\Phi}\bv{\Phi}^{\dag})\bv{U}_p \|_F^2.
\end{align}
Thus, to prove the lemma it suffices to upper bound the righthand side of \eqref{eq:up}.
We write $\bv{\Phi}=\bv{\Phi}_{l}+\bv{\Phi}_{l,\perp}$ where $\bv{\Phi}_{l}=\phi(\Ab_l)\bv{X}$ and $\bv{\Phi}_{l, \perp}=\phi(\Ab_{l,\perp})\bv{X}$.  Since $p \le l$ and $\phi$ is non-zero on the top $p$ singular values, we have $\bv{\Phi}_l\bv{\Phi}_l^{\dag}\Ub_p=\Ub_p$. We upper bound the righthand side of \eqref{eq:up} by:
\begin{align*}
 \|(\bv{I}-\bv{\Phi}\bv{\Phi}^{\dag})\bv{U}_p \|_F^2 = \|\bv{U}_p-\bv {\Phi}(\bv{\Phi}^{\dag}\bv{U}_p) \|_F^2 = \min_{\bv{\Psi} \in \R^{\ell \times p}}\|\bv{U}_p-\bv{\Phi}\bv{\Psi} \|_F^2 \leq \|\bv{U}_p - \bv{\Phi}(\bv{\Phi}_{l}^{\dag}\bv{U}_p)\|_F^2.  
\end{align*}
Then, 
replacing $\bv{\Phi}$ by $\bv{\Phi}_{l}+\bv{\Phi}_{l,\perp}$ and using that $\bv{\Phi}_l\bv{\Phi}_l^{\dag}\Ub_p=\Ub_p$ we get:
\begin{align*}
  \|(\bv{I} -\bv{\Phi}\bv{\Phi}^{\dag})\bv{U}_p \|_F^2 &\leq   \|\bv{U}_p - \bv{\Phi}(\bv{\Phi}_{l}^{\dag}\bv{U}_p)\|_F^2 \\
  &= \|(\bv{I} - \bv{\Phi}_{l}\bv{\Phi}_{l}^{\dag})\bv{U}_p-\bv{\Phi}_{l,\perp}\bv{\Phi}_l^{\dag}\bv{U}_p\|_F^2 \\
  &= \| \bv{\Phi}_{l,\perp}\bv{\Phi}_{l}^{\dag} \bv{U}_p\|_F^2 \\
  &=  \| \bv{U}_{l,\perp}\phi(\bv{\Sigma}_{l,\perp})\bv{V}_{l,\perp}^T\bv{X}(\bv{V}_l^T\bv{X})^{-1}\phi(\bv{\Sigma}_l)^{-1}\bv{U}_l^T\bv{U}_p\|_F^2 \\
  &= \|\bv{U}_{l,\perp}\phi(\bv{\Sigma}_{l,\perp})\bv{V}_{l,\perp}^T\bv{X}(\bv{V}_l^T\bv{X})^{-1} \phi(\bv{\Sigma}_p^{-1}) \|_F^2 \\
  &= \|\phi(\bv{\Sigma}_{l,\perp})\bv{V}_{l,\perp}^T\bv{X}(\bv{V}_l^T\bv{X})^{-1} \phi(\bv{\Sigma}_p^{-1}) \|_F^2 \\
  &\leq \|\phi(\bv{\Sigma}_{l,\perp}) \|_F^2 \cdot \| \phi(\bv{\Sigma}_p^{-1})\|_F^2 \cdot\|\bv{V}_{l,\perp}^T\bv{X}(\bv{V}_l^T\bv{X})^{-1} \|_F^2. 
\end{align*}
In line 4 we used that $\bv{V}_l^T\bv{X} \in \R^{l \times l}$ has rank $l$ by assumption, and is thus invertible.
Combined with \eqref{eq:up} the above bound completes the proof. 
\end{proof}

Using Lemma~\ref{thm:angle_bound} we can show that for any $p$ with $\sigma_p(\bv A)$ larger than $\sigma_{l+1}(\bv A)$ by a constant multiplicative factor, the Krylov subspace $\mathcal{K}_q(\bv A, \bv X)$ generated with random width-$l$ starting block  $\bv X \in \R^{n \times l}$ and with depth $q = O(\log n)$ approximately spans the top $p$ singular vectors of $\bv A$.

\begin{lemma}[Convergence of block Krylov subspace to top singular vectors]\label{lem:sin_bound}
Let $\bv{A} \in \R^{n \times n}$ have $\rank(\Ab) >l$ and SVD given by $\bv A = \bv{U}\bv{\Sigma}\bv{V}^T$. Let $\Xb \in \R^{n \times l}$ be a matrix with independent $\mathcal{N}(0,1)$ Gaussian entries and let $\Qb$ be an orthonormal basis for the depth $q=O(\log n)$ Krylov subspace $\mathcal{K}_q(\bv A,\bv X)$ generated by $\bv X$ (see Def. \ref{def:krylov}). Then, for any $p \le l$, such that $\sigma_p(\Ab) \geq \frac{3}{2}\sigma_{l+1}(\Ab)$, for any constants $c,c' >0$, with probability at least $1-\frac{1}{n^{c'}}$, $$\|\Ub_{p} - \Qb\Qb^T\Ub_{p}\|_F^2 \leq \frac{1}{n^{c}}.$$ \end{lemma}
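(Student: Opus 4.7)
The plan is to apply Lemma~\ref{thm:angle_bound} with a carefully chosen odd-degree polynomial $\phi$ obtained from a Chebyshev polynomial, and then bound the Gaussian term $\|\bv V_{l,\perp}^T \bv X (\bv V_l^T \bv X)^{-1}\|_2^2$ by standard random matrix tail bounds. Since $T_{2q+1}$ consists of only odd powers, we set $\phi(x) = T_{2q+1}\!\left(x/\sigma_{l+1}(\bv A)\right)$, which has degree $2q+1$ and contains only odd powers of $x$, as required by Lemma~\ref{thm:angle_bound}. The key properties we will exploit are that $|T_k(y)| \le 1$ for $|y|\le 1$, and that by the explicit formula~\eqref{Eq:T(x)}, $|T_k(y)| \ge \tfrac{1}{2}(|y| + \sqrt{y^2-1})^k$ for $|y|\ge 1$.

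First I would bound the two deterministic norms appearing in Lemma~\ref{thm:angle_bound}. For any singular value $\sigma \in \sigma(\bv\Sigma_{l,\perp})$, we have $\sigma/\sigma_{l+1}(\bv A) \in [0,1]$, so $|\phi(\sigma)| \le 1$ and therefore $\|\phi(\bv\Sigma_{l,\perp})\|_2 \le 1$. On the other hand, for any $\sigma \in \sigma(\bv\Sigma_p)$ we have $\sigma/\sigma_{l+1}(\bv A) \ge 3/2$ by hypothesis, so
\[
|\phi(\sigma)| \ge \tfrac12\bigl(\tfrac{3}{2} + \sqrt{\tfrac{9}{4}-1}\bigr)^{2q+1} = \tfrac12\bigl(\tfrac{3+\sqrt{5}}{2}\bigr)^{2q+1} \ge \tfrac12\, \alpha^{2q+1},
\]
where $\alpha > 1$ is an absolute constant. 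Thus $\|\phi(\bv\Sigma_p)^{-1}\|_2^2 \le 4\alpha^{-2(2q+1)}$. Choosing the hidden constant in $q = O(\log n)$ large enough makes this quantity smaller than $1/n^{c+c'+C}$ for any desired constant $C$.

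Next I would bound the Gaussian term with high probability. Since $\bv V$ is orthogonal and $\bv X$ has i.i.d.\ $\mathcal N(0,1)$ entries, $\bv V_l^T \bv X \in \R^{l \times l}$ and $\bv V_{l,\perp}^T \bv X \in \R^{(n-l)\times l}$ are independent Gaussian matrices with i.i.d.\ $\mathcal N(0,1)$ entries. By standard random matrix bounds (e.g.\ Rudelson--Vershynin for the smallest singular value of a square Gaussian, and a Gaussian operator norm bound), for any $c' > 0$, with probability at least $1 - 1/n^{c'}$ both $\|\bv V_{l,\perp}^T \bv X\|_2 \le \poly(n)$ and $\sigma_{\min}(\bv V_l^T\bv X) \ge 1/\poly(n)$ hold. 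In particular, $\bv V_l^T \bv X$ is invertible (so the hypothesis of Lemma~\ref{thm:angle_bound} holds) and submultiplicativity gives $\|\bv V_{l,\perp}^T\bv X(\bv V_l^T \bv X)^{-1}\|_2^2 \le \poly(n)$.

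Finally I would combine everything using Lemma~\ref{thm:angle_bound}:
\[
\|(\bv I - \bv Q \bv Q^T)\bv U_p\|_F^2 \le \|\phi(\bv\Sigma_{l,\perp})\|_2^2 \cdot \|\phi(\bv\Sigma_p)^{-1}\|_2^2 \cdot \|\bv V_{l,\perp}^T\bv X(\bv V_l^T\bv X)^{-1}\|_2^2 \le 1 \cdot 4\alpha^{-2(2q+1)} \cdot \poly(n),
\]
and choose $q = O(\log n)$ with a sufficiently large constant so that the exponential decay dominates the polynomial factor, yielding the claimed bound $1/n^c$. The only nontrivial piece is the high-probability bound on the Gaussian term, but this is a standard application of concentration for rectangular and square Gaussian matrices and is not the conceptual difficulty; the main qualitative insight is simply that the $3/2$-gap between $\sigma_p(\bv A)$ and $\sigma_{l+1}(\bv A)$ translates, through a single Chebyshev polynomial evaluation, into an exponential-in-$q$ separation that we can afford to spend $O(\log n)$ iterations to exploit.
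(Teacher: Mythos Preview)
Your proposal is correct and follows essentially the same approach as the paper's proof: apply Lemma~\ref{thm:angle_bound} with a Chebyshev-based odd polynomial and then control the Gaussian term by standard random matrix tail bounds, letting the exponential-in-$q$ decay absorb the $\poly(n)$ factor. The only cosmetic difference is that the paper invokes a ``gap amplifying polynomial'' from \cite{drineas2018structural} (itself a scaled Chebyshev construction), whereas you write down $\phi(x)=T_{2q+1}(x/\sigma_{l+1}(\bv A))$ directly; your explicit use of \eqref{Eq:T(x)} to get $|\phi(\sigma)|\ge \tfrac12\alpha^{2q+1}$ on $\bv\Sigma_p$ and $|\phi|\le 1$ on $\bv\Sigma_{l,\perp}$ is exactly the content of that lemma specialized to the constant gap $3/2$.
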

\begin{proof}
Since $\Xb \in \R^{n \times l}$ is a random Gaussian matrix, with probability one, $\rank(\Vb_l^T\Xb)=l$. Thus, we can apply Lemma~\ref{thm:angle_bound} to give, for any degree $2q+1$ polynomial $\phi$ consisting of only odd powers where $\phi(\bv{\Sigma}_p)$ is non-singular,
\begin{align}\label{eq:sine}
    \|\Ub_{p} - \Qb\Qb^T\Ub_{p}\|_F^2  \leq \|\phi(\bv{\Sigma}_{l,\perp})\|_2^2 \cdot \|\phi(\bv{\Sigma}_{p})^{-1}\|_2^2 \cdot \|\Vb_{l,\perp}^T\Xb(\Vb_l^T\Xb)^{\dag}\|_F^2.
\end{align}
We will now bound each term on the righthand side of \eqref{eq:sine}. First, we bound $\|\phi(\bv{\Sigma}_{l,\perp})\|_2$ and $\|\phi(\bv{\Sigma}_{p})^{-1}\|_2$ similarly to Lemma 2.4 of \cite{drineas2018structural}. We consider a gap amplifying polynomial $\phi(x)$ of degree $2q+1$ consisting only of odd powers as defined in Lemma 4.5 of~\cite{drineas2018structural} with parameters $\alpha=\sigma_{l+1}(\Ab)$ and gap $\gamma=\frac{\sigma_{p}(\Ab)}{\sigma_{l+1}(\Ab)}-1$. 
Observe that we have,
\begin{align}\label{Eq:phi_1}
   \|\phi(\bv{\Sigma}_{p})^{-1}\|_2 = \max_{1 \leq i \leq p}  \phi(\sigma_i(\Ab))^{-1} \leq \max_{1 \leq i \leq p} \sigma_i^{-1}(\Ab)=\sigma_{p}^{-1}(\Ab),
\end{align}
where we used the fact that $\sigma_i(\Ab)>0$ for $i \leq p \leq l$ as $\text{rank}(\Ab) >l$
and the fact that for any $i \leq p$, $\phi(\sigma_i(\Ab)) \geq \sigma_i(\Ab)$ as $\sigma_i(\Ab) \geq \sigma_p(\Ab) \geq (1+\gamma)\sigma_{l+1}(\Ab)=(1+\gamma)\alpha$ which follows from Lemma 4.5 of~\cite{drineas2018structural}.
Also, as $\sigma_i(\Ab) \leq \sigma_{l+1}(\Ab)$ for any $i \geq l+1$, from Lemma 4.5 of~\cite{drineas2018structural} we have:
\begin{align}\label{Eq:phi_2}
   \|\phi(\bv{\Sigma}_{l,\perp})\|_2 =\max_{i \geq l+1} |\phi(\sigma_i(\Ab))| \leq \frac{4\sigma_{l+1}(\Ab)}{2^{(2q+1)\min(\sqrt{\gamma},1 )}}.
\end{align}

Finally, we bound the middle term of \eqref{eq:sine}, $\|\Vb_{l,\perp}^T\Xb(\Vb_l^T\Xb)^{-1}\|_F^2 \leq \|\Vb_{l,\perp}^T\Xb \|_F^2 \cdot \|(\Vb_l^T\Xb)^{-1} \|_2^2$. By rotational invariance of the Gaussian distribution, $\bv V_l^T \bv X$ and $\bv V_{l,\perp}^T \bv X$ are $l \times l$ and $n-l \times l$ random Gaussian matrices respectively. Thus, by Corollary 5.35 of~\cite{vershynin2018high}, with probability $1-\frac{1}{n^{c_3}}$ we have $\|\Vb_{l,\perp}^T\Xb \|_2^2 \leq c_1 n$ and $\sigma^2_l(\Vb_l^T\Xb) \geq c_2 n$ for constants $c_1$, $c_2$ and $c_3$. So, we have $\|\Vb_{l,\perp}^T\Xb \|_F^2 \leq c_1n^2$ and $\|(\Vb_l^T\Xb)^{-1} \|_2^2 \leq \sigma^{-2}_l(\Vb_l^T\Xb)\leq \frac{1}{c_2n}$ which overall gives us that
\begin{align}\label{Eq:vbnorm}
    \|\Vb_{l,\perp}^T\Xb(\Vb_l^T\Xb)^{-1}\|_F^2 \leq c''n
\end{align}
for some constant $c''$. Plugging~\eqref{Eq:vbnorm},~\eqref{Eq:phi_2} and~\eqref{Eq:phi_1} back into~\eqref{eq:sine}, we get:
\begin{align*}
     \|\Ub_{p} - \Qb\Qb^T\Ub_{p}\|_F^2 \leq \frac{c''\sigma^2_{l+1}(\Ab)}{\sigma^2_{p}(\Ab)}\frac{n}{2^{(4q+2)\min(\sqrt{\gamma},1 )}},
\end{align*}
for some constant $c''$. Since $\sigma_p(\Ab) \geq \frac{3}{2}\sigma_{l+1}(\Ab)$, the gap is given by $\gamma=\frac{\sigma_{p}(\bv A)}{\sigma_{l+1}(\bv A)}-1 \geq \frac{1}{2}$. This gives us $\min(\sqrt{\gamma},1) \geq \frac{1}{\sqrt{2}}$.  Finally, choosing $q=C\log n$ where $C$ is a large enough constant we obtain the final bound.
\end{proof}

Lemma \ref{lem:sin_bound} establishes that the Krylov subspace generated by a random starting block $\bv X \in \R^{n \times l}$ will approximately span any singular vector corresponding to a singular value significantly larger than $\sigma_{l+1}(\bv A)$. Intuitively,  projection onto this subspace should thus preserve the largest singular values (and the largest magnitude eigenvalues) of $\bv A$. Below we prove several Lemmas that allow us to argue this formally. We consider the matrix $\Qb\Qb^T\Ab\Qb\Qb^T$ -- the projection of $\bv A$ onto the Krylov subspace on both the left and right.

\subsubsection{Eigenvector Alignment}\label{subsec:eigvec}

In this section, we first prove that the eigenvectors corresponding to the large magnitude eigenvalues of $\Qb\Qb^T\Ab\Qb\Qb^T$ are also approximate eigenvectors of $\Ab$. We first prove that if there exists some $k \in [l]$ such that $\sigma_k(\bv A)$ is larger than $\sigma_{l+1}(\Ab)$ by at least a constant multiplicative factor, then there exists some $k \leq p \leq l$ such that the top $p$ singular vectors of $\Ab$ approximately span the top $p$ singular vectors of $\Qb\Qb^T\Ab\Qb\Qb^T$ (i.e. the  matrix $\Ab$ projected on both sides to the Krylov subspace). Note that if no such $k$ exists, then $\norm{\bv A}_2 = O(\sigma_{l+1}(\bv A))$ and thus there is nothing to gain from deflation.

\begin{lemma}[Projection of $\Zb_p$ on $\Ub_p$]\label{Lem:eig_vec}
Consider the setting of Lemma~\ref{lem:sin_bound} where $\bv A \in \R^{n \times n}$ is symmetric.  Let $\alpha=\max \left(\frac{\| \Ab|_2}{n^{c/2}},\sigma_{l+1}(\Ab) \right)$ where $c>0$ is the constant in Lemma~\ref{lem:sin_bound}. Let $k \in [l]$ be such that $\sigma_{k}(\Ab) \geq 2\alpha$ and $\sigma_{k+1}(\Ab) < 2\alpha$. Let $\bv{z}_1, \ldots, \bv{z}_n$ be the eigenvectors of $\Qb\Qb^T\Ab\Qb\Qb^T$ corresponding to its eigenvalues $|\lambda_1(\Qb\Qb^T\Ab\Qb\Qb^T)| \geq \ldots \geq |\lambda_n(\Qb\Qb^T\Ab\Qb\Qb^T)|$. Let $c_1, c'>0$ be some large constants. Then, there exists some $k \leq p \leq l$ such that $\sigma_p(\Ab) \geq \frac{3}{2}\alpha$, $\sigma_p(\Ab)-\sigma_{p+1}(\Ab) \geq \frac{\|\Ab \|_2}{2n^{c/2+1}}$, and, letting $\bv{Z}_p \in \R^{n \times p}$  have $\zb_1, \ldots, \zb_p$ as its columns, $$\|\Ub_p\Ub_p^T\Zb_p -\Zb_p \|_2 \leq \frac{1}{n^{c_1}},$$ with probability at least $1-\frac{1}{n^{c'}}$. 
\end{lemma}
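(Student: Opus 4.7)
The plan is to locate $p$ by a pigeonhole argument on the consecutive differences of singular values in $[k,l]$, invoke Lemma~\ref{lem:sin_bound} to conclude that $\Ub_p$ is nearly spanned by $\Qb$, and then promote this subspace approximation to the claimed eigenvector approximation via a Davis-Kahan–style perturbation bound.

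First, let $p_{\max}$ denote the largest index with $\sigma_{p_{\max}}(\Ab) \geq \tfrac{3}{2}\alpha$. The hypothesis $\sigma_k(\Ab) \geq 2\alpha$ forces $p_{\max} \geq k$, while $\sigma_{l+1}(\Ab) \leq \alpha$ forces $p_{\max} \leq l$. The $p_{\max}-k+1 \leq n$ consecutive differences $\sigma_p(\Ab) - \sigma_{p+1}(\Ab)$ for $p \in \{k,\ldots,p_{\max}\}$ telescope to $\sigma_k(\Ab) - \sigma_{p_{\max}+1}(\Ab) \geq 2\alpha - \tfrac{3}{2}\alpha = \alpha/2$, so by averaging, some $p$ in this range satisfies
\[
\sigma_p(\Ab) - \sigma_{p+1}(\Ab) \;\geq\; \frac{\alpha}{2n} \;\geq\; \frac{\|\Ab\|_2}{2\,n^{c/2+1}},
\]
where the last inequality uses $\alpha \geq \|\Ab\|_2/n^{c/2}$. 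For this $p$ both $\sigma_p(\Ab) \geq \tfrac{3}{2}\alpha$ and the required gap condition hold.

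Since $\sigma_p(\Ab) \geq \tfrac{3}{2}\sigma_{l+1}(\Ab)$, Lemma~\ref{lem:sin_bound} applies and, by taking its constant sufficiently large, yields $\|\Ub_p - \Qb\Qb^T \Ub_p\|_F \leq 1/n^{c_2}$ with probability at least $1 - 1/n^{c'}$ for any desired constant $c_2$. Let $\bv{E} := \Ub_p - \Qb\Qb^T \Ub_p$ and $\bv{M} := \Qb\Qb^T \Ab \Qb\Qb^T$. Using $\Ab\Ub_p = \Ub_p\bv{\Lambda}_p$, a direct expansion gives
\[
\bv{M}\, \Ub_p \;=\; \Ub_p \bv{\Lambda}_p \;-\; \bv{E}\bv{\Lambda}_p \;-\; \Qb\Qb^T \Ab \bv{E},
\]
so $\Ub_p$ is an approximate invariant subspace of $\bv{M}$ with residual $\bv F := \bv{E}\bv{\Lambda}_p + \Qb\Qb^T\Ab\bv{E}$ of norm $\|\bv F\|_2 \leq 2\|\Ab\|_2/n^{c_2}$. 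A Bauer-Fike argument applied columnwise then shows that $\bv M$ has $p$ eigenvalues within $2\|\Ab\|_2/n^{c_2}$ of $\lambda_1(\Ab),\ldots,\lambda_p(\Ab)$; for $c_2$ large relative to $c/2+1$ these must be precisely the top-$p$ eigenvalues of $\bv M$ by magnitude, and they are separated from the remaining spectrum of $\bv M$ by a gap of at least $\delta := \|\Ab\|_2/(4\,n^{c/2+1})$.

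A block Davis-Kahan $\sin\Theta$ bound then gives
\[
\|\Ub_p\Ub_p^T \Zb_p - \Zb_p\|_2 \;=\; \|(\Ib - \Ub_p\Ub_p^T)\Zb_p\|_2 \;\leq\; \frac{\|\bv F\|_2}{\delta} \;\leq\; \frac{8\, n^{c/2+1}}{n^{c_2}},
\]
and choosing $c_2 \geq c_1 + c/2 + 2$ yields the claimed $1/n^{c_1}$ bound. The main technical obstacle is this last step: standard Davis-Kahan is cleanest when eigenvalues are ordered by value, whereas our ordering is by magnitude. The cleanest workaround is to replace $\bv M$ by the auxiliary matrix $\Ub_p \bv{\Lambda}_p \Ub_p^T + (\Ib - \Ub_p\Ub_p^T)\, \bv M\, (\Ib - \Ub_p\Ub_p^T)$, which genuinely has $\Ub_p$ as an invariant subspace with spectrum $\bv{\Lambda}_p$ and is close to $\bv M$ in operator norm (again controlled by $\|\bv F\|_2$); applying Davis-Kahan to this pair on the complementary spectral intervals determined by $\delta$ produces the desired bound, with all polynomial factors in $n$ absorbed by our freedom to choose $c_2$ large.
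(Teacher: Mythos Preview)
Your proof is correct and follows essentially the same route as the paper: locate $p$ by pigeonholing on the telescoping gaps in $[k,p_{\max}]$, use Lemma~\ref{lem:sin_bound} to show $\Ub_p$ is an approximate invariant subspace of $\bv M=\Qb\Qb^T\Ab\Qb\Qb^T$ with residual $\bv F$ of size $O(\|\Ab\|_2/n^{c_2})$, and then convert this into the subspace bound using the spectral gap of order $\|\Ab\|_2/n^{c/2+1}$.

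The only real difference is packaging. Where you invoke Bauer--Fike and Davis--Kahan as black boxes, the paper unpacks the identical calculation by hand: it writes $\Ub_p=\Zb\bv C$, derives the Sylvester-type relation $\bv C\bv\Lambda_p-\tilde{\bv\Lambda}\bv C=\text{small}$, and reads off entrywise that $|\bv C_{ji}|$ is small for $j>p$ from $|\lambda_i(\Ab)-\lambda_j(\bv M)|\ge \sigma_p(\Ab)-\sigma_{p+1}(\Ab)$. This last inequality is exactly the separation condition you need for Davis--Kahan, and it already handles the magnitude-ordering issue directly via the reverse triangle inequality and the minimax bound $|\lambda_j(\bv M)|\le\sigma_{p+1}(\Ab)$ for $j>p$. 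Your auxiliary-matrix detour is therefore unnecessary: the Sylvester-equation form of the $\sin\Theta$ theorem (bounding $\|\Zb_{p,\perp}^T\Ub_p\|_2\le\|\bv F\|_2/\delta$ whenever $\min_{i\le p,\,j>p}|\lambda_i(\Ab)-\lambda_j(\bv M)|\ge\delta$) applies verbatim with no interval assumption, and together with $\|(\Ib-\Ub_p\Ub_p^T)\Zb_p\|_2=\|(\Ib-\Zb_p\Zb_p^T)\Ub_p\|_2$ gives the conclusion.
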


\begin{proof}
Let $l_1$ be the largest index with $k \leq l_1 \leq l$ and $\sigma_{l_1}(\bv A) \geq \frac{3}{2}\alpha$, i.e., we must have $\sigma_{l_1+1}(\Ab) < \frac{3}{2}\alpha$. Note that since $\sigma_k(\Ab) \geq 2\alpha$, such an $l_1$ must exist.  Then, $\sigma_{k}(\Ab)-\sigma_{l_1+1}(\Ab) \geq  \alpha/2 \geq \frac{\|\Ab \|_2}{2n^{c/2}}$ where we use the fact that $\alpha \geq \frac{\|\Ab \|_2}{n^{c/2}}$.  Since there are at most $l_1$ indices in the range $[k,l_1]$, there must be some $p \in [k,l_1]$ such that:

\begin{align}\label{Eq:gap1}
    \sigma_p(\Ab)-\sigma_{p+1}(\Ab) \geq \frac{\sigma_k(\bv A)-\sigma_{l_1+1}(\bv A)}{l_1}  \geq \frac{\|\Ab \|_2}{2n^{c/2+1}},
\end{align}
where we loosely bound $l_1 \le n$. This establishes the first claim of the Lemma. We now prove the second claim.  

Since $\sigma_p(\Ab) \geq \frac{3}{2}\alpha \geq \frac{3}{2}\sigma_{l+1}(\Ab)$, from Lemma~\ref{lem:sin_bound}, we get that $\bv{Q}\bv{Q}^T\bv{U}_p=\bv{U}_p+\bv{E}$ where $\|\bv{E} \|_2 \leq \frac{1}{n^c}$ with rpobability at least $1-\frac{1}{n^{c'}}$ for some constant $c'>0$.  Let $\Zb \in \R^{n \times n}$ be an orthonormal matrix containing all eigenvectors $\zb_1, \ldots, \zb_n$ of $\bv{Q}\bv{Q}^T \bv{A}\bv{Q}\bv{Q}^T$ as its columns. Since the columns of $\Zb$ form an orthonormal basis of $\R^n$, there exists a matrix $\bv{C} \in \R^{n \times p}$ such that $\bv{U}_p=\bv{Z}\bv{C}$ and $\bv{C}^T\bv{C}=\bv{I}_p$ where $\bv{I}_p$ is the $p \times p$ identity matrix. Then, we have:
\begin{align}\label{Eq:sum}
    \bv{U}_p=\bv{Z}\bv{C}=\bv{Z}_p\bv{C}_1+\Zb_{p,\perp}\bv{C}_2,
\end{align} where $\bv{C}=[\bv{C}_1; \bv{C}_2]$ for $\bv{C}_1 \in \R^{p \times p}$ and $\bv{C}_2 \in \R^{n-p \times p}$. We will now prove that $\|\bv{C}_2 \|_2$ is very small and $\bv{C}_1$ is very close to the identity matrix. This implies that $\bv{U}_p$ approximately spans $\bv{Z}_p$ which proves our second claim.

First observe that we can write $\bv{Q}\bv{Q}^T\bv{A}\bv{Q}\bv{Q}^T \bv{U}_p$ as:
\begin{align}\label{Eq:qq}
    \bv{Q}\bv{Q}^T\bv{A}\bv{Q}\bv{Q}^T \bv{U}_p
    &= \Qb\Qb^T \Ab (\Ub_p+\bv{E}) \nonumber \\
   &= \Qb\Qb^T \Ub_p\bv{\Lambda}_p + \Qb\Qb^T\Ab\bv{E} \nonumber \\
   &= (\Ub_p +\bv{E})\bv{\Lambda}_p+ \Qb\Qb^T\Ab\bv{E} \nonumber\\
   &= \Ub_p \bv{\Lambda}_p + \bv{E}\bv{\Lambda}_p+ \Qb\Qb^T\Ab\bv{E}.
\end{align}
Thus, we get:
\begin{align}\label{Eq:qq1}
     \bv{Q}\bv{Q}^T\bv{A}\bv{Q}\bv{Q}^T \bv{U}_p &= \bv{Z}\bv{C}\bv{\Lambda}_p+\bv{E}\bv{\Lambda}_p+\bv{Q}\bv{Q}^T\bv{A}\bv{E}.
\end{align}
Next, observe that, since $\bv {Z}$ has columns equal to the eigenvalues of $\bv{Q}\bv{Q}^T\bv{A}\bv{Q}\bv{Q}^T$, we can also write $\bv{Q}\bv{Q}^T\bv{A}\bv{Q}\bv{Q}^T \bv{U}_p$ as:
\begin{align}\label{Eq:qq2}
    \bv{Q}\bv{Q}^T\bv{A}\bv{Q}\bv{Q}^T \bv{U}_p = \bv{Q}\bv{Q}^T\bv{A}\bv{Q}\bv{Q}^T\bv{Z}\bv{C}= \bv{Z}\bv{\Tilde{\Lambda}}\bv{C},
\end{align}
where $\bv{\Tilde{\Lambda}} \in \R^{n \times n}$ denotes the diagonal matrix containing the eigenvalues of $\bv{Q}\bv{Q}^T\bv{A}\bv{Q}\bv{Q}^T$ on its diagonal. Thus, combining~\eqref{Eq:qq1} and~\eqref{Eq:qq2} we get that $\bv{Z}\bv{\Tilde{\Lambda}}\bv{C} = \bv{Z}\bv{C}\bv{\Lambda}_p+\bv{E}\bv{\Lambda}_p+\bv{Q}\bv{Q}^T\bv{A}\bv{E}$ or 
\begin{align*}
    \bv{Z}\bv{\Tilde{\Lambda}}\bv{C}-\bv{Z}\bv{C}\bv{\Lambda}_p=\bv{E}\bv{\Lambda}_p+\bv{Q}\bv{Q}^T\bv{A}\bv{E}.
\end{align*}
Let $\bv{E}'=\bv{E}\bv{\Lambda}_p+\bv{Q}\bv{Q}^T\bv{A}\bv{E}$. Using triangle inequality, we get that $\|\bv{E}' \|_2 \leq \| \bv{E}\bv{\Lambda}_p\|_2 +\| \bv{Q}\bv{Q}^T\bv{A}\bv{E}\|_2$. Now,  since $\| \bv{E}\|_2 \leq \frac{1}{n^c}$ and $\|\bv{\Lambda}_p \|_2=\|\Ab \|_2$, we have $\| \bv{E}\bv{\Lambda}_p\|_2 \leq \|\bv{E} \|_2 \|\bv{\Lambda}_p \|_2 \leq \frac{\|\Ab\|_2}{n^c}$. Similarly, since $\bv Q$ is orthonormal, $\| \bv{Q}\bv{Q}^T\bv{A}\bv{E}\|_2  \leq \frac{\|\Ab\|_2}{n^c}$.  So, we get $\|\bv{E}' \|_2 \leq \frac{2\|\Ab\|_2}{n^c}$ which implies that $\|\bv{Z}\bv{C}\bv{\Lambda}_p-\bv{Z}\bv{\Tilde{\Lambda}}\bv{C} \|_2 \leq \frac{2\| \Ab\|_2}{n^c}$ or, equivalently, since $\bv{Z}$ has orthonormal columns,
\begin{align*}
    \|\bv{C}\bv{\Lambda}_p-\bv{\Tilde{\Lambda}}\bv{C} \|_2 \leq \frac{2\| \Ab\|_2}{n^c}.
\end{align*}

For any $i \in [p]$, the $i^{th}$ column of the $n \times p$ matrix $\bv{C}\bv{\Lambda}_p-\bv{\Tilde{\Lambda}}\bv{C}$ is given by $\lambda_i(\bv{A})\bv{C}_{:,i}-\bv{\Tilde{\Lambda}}\bv{C}_{:,i}$ where $\bv{C}_{:,i}$ is the $i$th column of $\bv{C}$. So, we have $\|\lambda_i(\bv{A})\bv{C}_{:,i}-\bv{\Tilde{\Lambda}}\bv{C}_{:,i}\|_2 \leq \frac{2\| \Ab\|_2}{n^c}$ for all $i \in [p]$. Using the definition of the $l_2$ norm of the vector, we have $\sqrt{\sum_{j=1}^{n}(\lambda_i(\Ab)-\lambda_j(\bv{Q}\bv{Q}^T\bv{A}\bv{Q}\bv{Q}^T))^2\bv{C}^2_{ji}} \leq \frac{2\|\Ab \|_2}{n^c}$. This implies that for all $i \in [p]$ and $j \in [n]$,
\begin{align}\label{eq:gap}
    |\lambda_i(\Ab)-\lambda_j(\bv{Q}\bv{Q}^T\bv{A}\bv{Q}\bv{Q}^T)||\bv{C}_{ji}| \leq \frac{2\|\Ab \|_2}{n^c}.
\end{align}Now, by the minimax principle of singular values, $|\lambda_j(\bv{Q}\bv{Q}^T\bv{A}\bv{Q}\bv{Q}^T)| \leq |\lambda_{p+1}(\Ab)|$ for all $j\geq p+1$. Thus, for any $i \in [p]$ and $j \in \{p+1,\ldots,n\}$, by the triangle inequality, $|\lambda_i(\Ab)-\lambda_j(\bv{Q}\bv{Q}^T\bv{A}\bv{Q}\bv{Q}^T)| \geq |\lambda_i(\Ab)|-|\lambda_j(\bv{Q}\bv{Q}^T\bv{A}\bv{Q}\bv{Q}^T)| \geq |\lambda_p(\Ab)|-|\lambda_{p+1}(\Ab)| \geq \frac{\|\Ab \|_2}{2n^{c/2+1}}$ where the last step follows from the first claim~\eqref{Eq:gap1}. Thus, from~\eqref{eq:gap}, we get that for any $i \in [p]$ and $j \in \{p+1,\ldots,n \}$, $$|\bv{C}_{ji}| \leq \frac{4}{n^{c/2-1}}.$$ Note that from~\eqref{Eq:sum}, we have that $\bv{C}_2$ contains all $\bv{C}_{ji}$ such that  $j \in \{p+1,\ldots,n \}$ and $i \in [p]$. So, we can bound:
\begin{align}\label{Eq:c2}
    \| \bv{C}_2\|_2 \leq \|\bv{C}_2 \|_F  \leq  \sqrt{\sum_{i=1}^p\sum_{j=p+1}^{n} \bv{C}_{ji}^2} \leq \frac{4}{n^{c/2-2}}.
\end{align}
Next, since $\bv{C}^T\bv{C}=\bv{I}_p= \bv{C}_1^T\bv{C}_1+\bv{C}_2^T\bv{C}_2$, we have   $\|\bv{C}_1^T\bv{C}_1 -\bv{I}_p\|_2 = \|\bv{C}_2^T\bv{C}_2 \|_2=\|\bv{C}_2\|_2^2 \leq \frac{16}{n^{2(c/2-2)}}$. Thus, we can write $\bv{C}_1^T\bv{C}_1=\bv{I}+\bv{E}_1$ where $\| \bv{E}_1\|_2 \leq \frac{16}{n^{2(c/2-2)}}$. Observe that $\bv{C}_1\bv{C}_1^T$ and $\bv{C}_1^T\bv{C}_1$ have the same eigenvalues, and thus we also have 
\begin{align}\label{Eq:c1c1}
    \bv{C}_1\bv{C}^T_1=\bv{I}+\bv{E}'',
\end{align}
where $\| \bv{E}''\|_2 \leq \frac{16}{n^{2(c/2-2)}}$. Since we have $\|\bv{C}_1 \|_2=\sqrt{\|\bv{C}_1\bv{C}_1^T \|_2}=\sqrt{\|\bv{I}+\bv{E}_1 \|_2} \leq \sqrt{1+\|\bv{E} \|_1} \leq 1+\frac{4}{n^{c/2-2}}$ (where the second to last step follows using triangle inequality), we also have that: 
\begin{align}\label{Eq:c1}
    \bv{C}_1=\bv{I}+\bv{E}_2,
\end{align}
where $\| \bv{E}_2\|_2 \leq \frac{4}{n^{c/2-2}}$. Then, we have:
\begin{align*}
    \Ub_p\Ub_p^T\Zb_p -\Zb_p&= \Ub_p(\bv{C}_1^T\Zb_p^T+ \bv{C}_2^T\Zb_{p,\perp}^T)\Zb_p -\Zb_p \\
    &= \Ub_p \bv{C}_1^T -\Zb_p \\
    &= \bv{Z}_p\bv{C}_1\bv{C}_1^T+\Zb_{p,\perp}\bv{C}_2\bv{C}_1^T-\Zb_p\\
    &=\bv{Z}_p(\bv{I}+\bv{E}'')+\Zb_{p,\perp}\bv{C}_2\bv{C}_1^T -\Zb_p\\
    &= \Zb_p\bv{E}''+\Zb_{p,\perp} \bv{C}_2 \bv{C}_1^T \\
    &= \Zb_p\bv{E}'' + \Zb_{p,\perp}\bv{C}_2(\bv{I}+\bv{E}^T_2) \\
    &= \Zb_p\bv{E}'' + \Zb_{p,\perp}\bv{C}_2 + \Zb_{p,\perp}\bv{C}_2\bv{E}^T_2.
\end{align*}
The first and third equality above follows from~\eqref{Eq:sum}, the fourth equality follows from~\eqref{Eq:c1c1} and the sixth equality follows from~\eqref{Eq:c2}. Thus, using triangle inequality and spectral submultiplicativity, we get $\|\Ub_p\Ub_p^T\Zb_p -\Zb_p \|_2\leq \|\bv{Z}_p\bv{E}'' \|_2+ \|\Zb_{p,\perp}\bv{C}_2 \|_2+\|\Zb_{p,\perp}\bv{C}_2\bv{C}_1^T \|_2\leq \|\bv{E}'' \|_2 + \|\bv{C}_2 \|_2 +\| \bv{C}_1\|_1\|\bv{C}_2 \|_2 \leq \frac{16}{n^{c-4}}+ \frac{4}{n^{c/2-2}} +\frac{16}{n^{c-4}} \leq \frac{36}{n^{c/2-2}}$ where the second inequality follows from the fact that $\bv{Z}_p$ and $\bv{Z}_{p,\perp}$ are orthonormal matrices and the third inequality follows from the error bounds in~\eqref{Eq:c2},~\eqref{Eq:c1c1} and~\eqref{Eq:c1}. Choosing the constant $c_1$ to be suitably large enough so that $\|\Ub_p\Ub_p^T\Zb_p -\Zb_p \|_2$ is bounded by $\frac{1}{n^{c_1}}$ completes the proof.
\end{proof}

The next theorem proves the main result of this subsection, i.e., the top eigenvectors of $\Qb\Qb^T\Ab\Qb\Qb^T$ are also approximately the eigenvectors of $\Ab$, provided there is some $\sigma_k(\bv A)$ that is constant factor larger than $\sigma_{l+1}(\bv A)$, as assumed in Lemma~\ref{Lem:eig_vec}.

\begin{theorem}[Convergence error of top $p$ eigenvectors]\label{thm:converge}

Consider the setting of Lemma~\ref{Lem:eig_vec}. Let $c_1,c'>0$ be some large constants. Let $\Tilde{\bv{\Lambda}}_p \in \R^{p \times p}$ be a diagonal matrix containing the corresponding eigenvalues $\lambda_1(\bv{Q}\bv{Q}^T\Ab\bv{Q}\bv{Q}^T), \ldots, \lambda_p(\bv{Q}\bv{Q}^T\Ab\bv{Q}\bv{Q}^T)$ on its diagonal. Then, we have $$\|\Ab\Zb_p-\Zb_p\Tilde{\bv{\Lambda}}_p \|_2 \leq \frac{\|\Ab \|_2}{n^{c_1}} .$$
with probability at least $1-\frac{1}{n^{c'}}$.
\end{theorem}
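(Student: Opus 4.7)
The plan is to reduce $\Ab\Zb_p - \Zb_p \tilde{\bv{\Lambda}}_p$ to an expression involving $(\Ib - \bv{QQ}^T)$ applied to eigen-related quantities, and then exploit the two approximate-invariance facts already established: (i) the Krylov subspace spanned by $\bv Q$ nearly contains $\Ub_p$ (Lemma~\ref{lem:sin_bound}), and (ii) $\Zb_p$ is nearly contained in $\range(\Ub_p)$ (Lemma~\ref{Lem:eig_vec}).

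First I would observe that the columns of $\Zb_p$ all correspond to nonzero eigenvalues of $\bv{QQ}^T\Ab\bv{QQ}^T$ (their magnitudes are at least $|\lambda_p| \ge \tfrac{3}{2}\alpha$ by Lemma~\ref{Lem:eig_vec}), so each $\zb_i$ lies in $\range(\bv{Q})$; hence $\bv{QQ}^T \Zb_p = \Zb_p$. Combining this with the eigenrelation $\bv{QQ}^T\Ab\bv{QQ}^T \Zb_p = \Zb_p \tilde{\bv{\Lambda}}_p$ yields
\[
\Ab\Zb_p - \Zb_p\tilde{\bv{\Lambda}}_p \;=\; \Ab\Zb_p - \bv{QQ}^T \Ab\Zb_p \;=\; (\Ib - \bv{QQ}^T)\Ab\Zb_p.
\]
So it suffices to upper bound $\|(\Ib - \bv{QQ}^T)\Ab \Zb_p\|_2$.

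Next I would use Lemma~\ref{Lem:eig_vec} to write $\Zb_p = \Ub_p\Ub_p^T\Zb_p + \bv{E}_1$ with $\|\bv{E}_1\|_2 \le 1/n^{c_1}$, and Lemma~\ref{lem:sin_bound} to write $\bv{QQ}^T \Ub_p = \Ub_p + \bv{E}$ with $\|\bv{E}\|_2 \le 1/n^{c}$, hence $(\Ib - \bv{QQ}^T)\Ub_p = -\bv{E}$. Since $\Ab\Ub_p = \Ub_p \bv{\Lambda}_p$, this gives
\[
(\Ib - \bv{QQ}^T)\Ab \Ub_p\Ub_p^T\Zb_p \;=\; (\Ib - \bv{QQ}^T)\Ub_p \bv{\Lambda}_p \Ub_p^T\Zb_p \;=\; -\bv{E}\,\bv{\Lambda}_p \Ub_p^T\Zb_p,
\]
whose spectral norm is at most $\|\bv{E}\|_2 \|\Ab\|_2 \le \|\Ab\|_2/n^{c}$ (using that $\Ub_p$ and $\Zb_p$ have orthonormal columns). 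The remaining piece $(\Ib-\bv{QQ}^T)\Ab \bv{E}_1$ has spectral norm at most $\|\Ab\|_2/n^{c_1}$ by submultiplicativity and the fact that $\Ib - \bv{QQ}^T$ is a projector. Adding the two contributions via the triangle inequality, and then choosing the constant $c$ (the exponent from Lemma~\ref{lem:sin_bound}) large enough in terms of the desired $c_1$, yields the claimed bound $\|\Ab\Zb_p - \Zb_p\tilde{\bv{\Lambda}}_p\|_2 \le \|\Ab\|_2/n^{c_1}$ with the stated failure probability (which is simply inherited from the high-probability events in the two supporting lemmas via a union bound).

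I do not anticipate any serious obstacle here: the conceptual work has already been absorbed into Lemmas~\ref{lem:sin_bound} and~\ref{Lem:eig_vec}. The only thing that requires care is making sure (a) that $\bv{QQ}^T \Zb_p = \Zb_p$ is justified (using that the relevant eigenvalues of $\bv{QQ}^T\Ab\bv{QQ}^T$ are nonzero, which is guaranteed by the lower bound $\sigma_p(\Ab) \ge \tfrac{3}{2}\alpha$ combined with Weyl-type control on how projection changes eigenvalues), and (b) bookkeeping the polynomial factors in $n$ so that $c_1$ can be absorbed into the hidden constant $c$ from Lemma~\ref{lem:sin_bound}.
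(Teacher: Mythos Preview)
Your proposal is correct and arrives at the same conclusion as the paper, but via a cleaner route. The paper does not use the identity $\bv{QQ}^T\Zb_p = \Zb_p$; instead it performs a longer chain of substitutions, repeatedly replacing $\Zb_p$ by $\Ub_p\Ub_p^T\Zb_p + \bv{E}_1$ and $\Ub_p$ by $\bv{QQ}^T\Ub_p + \bv{E}_2$ inside $\Ab\Zb_p$, eventually arriving at $\bv{QQ}^T\Ab\bv{QQ}^T\Zb_p$ plus five error terms, each bounded by $\|\Ab\|_2/n^{c}$-type quantities. Your observation that the top-$p$ eigenvectors of $\bv{QQ}^T\Ab\bv{QQ}^T$ lie in $\range(\Qb)$ collapses the residual directly to $(\Ib - \bv{QQ}^T)\Ab\Zb_p$, after which only two error terms appear. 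The advantage of your route is brevity and transparency; the paper's route avoids having to justify that the top $p$ eigenvalues of $\bv{QQ}^T\Ab\bv{QQ}^T$ are nonzero (your point (a)). That justification is easy, though: from $\bv{QQ}^T\Ab\bv{QQ}^T\Ub_p = \Ub_p\bv{\Lambda}_p + O(\|\Ab\|_2/n^{c})$ (equation \eqref{Eq:qq} in the paper) one sees $\bv{QQ}^T\Ab\bv{QQ}^T\Ub_p$ has rank $p$, hence $\bv{QQ}^T\Ab\bv{QQ}^T$ has at least $p$ nonzero eigenvalues.
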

\begin{proof}
From Lemma~\ref{Lem:eig_vec}, $\Zb_p=\Ub_p\Ub_p^T\Zb_p+\bv{E}_1$ where $\|\bv{E}_1\|_2 \leq \frac{1}{n^{c_2}}$ with probability at least $1-\frac{1}{n^{c''}}$ for some constant $c_2,c''$. Also, from Lemma~\ref{lem:sin_bound}, we have $\bv{U}_p=\bv{Q}\bv{Q}^T\bv{U}_p+\bv{E}_2$ where $\|\bv{E}_2 \|_2 \leq \frac{1}{n^{c_3}}$ with probability at least $1-\frac{1}{n^{c''}}$ for some constant $c_3$. Observe that:
\begin{align*}
    \Ab\Zb_p &\stackrel{a}{=} \Ab(\Ub_p\Ub_p^T\Zb_p+\bv{E}_1) \\
    &= \Ab\Ub_p\Ub_p^T\Zb_p + \Ab \bv{E}_1 \\
    &\stackrel{b}{=} \Ub_p\Ub_p^T\Ab\Ub_p\Ub_p^T\Zb_p + \Ab \bv{E}_1 \\
    &\stackrel{c}{=} (\bv{Q}\bv{Q}^T\bv{U}_p+\bv{E}_2)\Ub_p^T\Ab\Ub_p\Ub_p^T\Zb_p + \Ab \bv{E}_1 \\
    &= \bv{Q}\bv{Q}^T\bv{U}_p\Ub_p^T\Ab\Ub_p\Ub_p^T\Zb_p +\bv{E}_2\Ub_p^T\Ab\Ub_p\Ub_p^T\Zb_p + \Ab \bv{E}_1 \\
    &\stackrel{d}{=} \bv{Q}\bv{Q}^T\Ab\Ub_p\Ub_p^T\Zb_p+\bv{E}_2\Ub_p^T\Ab\Ub_p\Ub_p^T\Zb_p + \Ab \bv{E}_1 \\
    &\stackrel{e}{=} \bv{Q}\bv{Q}^T\Ab(\bv{Q}\bv{Q}^T\bv{U}_p+\bv{E}_2)\Ub^T_p\bv{Z}_p+ \bv{E}_2\Ub_p^T\Ab\Ub_p\Ub_p^T\Zb_p + \Ab \bv{E}_1\\
    &=\bv{Q}\bv{Q}^T\Ab\bv{Q}\bv{Q}^T\bv{U}_p\Ub^T_p\bv{Z}_p+\bv{Q}\bv{Q}^T\Ab\bv{E}_2\Ub^T_p\bv{Z}_p+\bv{E}_2\Ub_p^T\Ab\Ub_p\Ub_p^T\Zb_p + \Ab \bv{E}_1.\\
    &\stackrel{f}{=}\bv{Q}\bv{Q}^T\Ab\bv{Q}\bv{Q}^T\bv{Z}_p-\bv{Q}\bv{Q}^T\Ab\bv{Q}\bv{Q}^T\bv{E}_1+\bv{Q}\bv{Q}^T\Ab\bv{E}_2\Ub^T_p\bv{Z}_p+\bv{E}_2\Ub_p^T\Ab\Ub_p\Ub_p^T\Zb_p + \Ab \bv{E}_1.
\end{align*}
In the above set of equations, (a) follows by replacing $\bv{Z}_p$ with $\Ub_p\Ub_p^T\Zb_p+\bv{E}_1$, (b) and (d) follows from the fact that $\Ab\Ub_p\Ub_p^T=\Ub_p\Ub_p^T\Ab\Ub_p\Ub_p^T$ as $\Ub_p$ contains the eigenvectors of $\Ab$, (c) and (e) follows from replacing $\Ub_p$ in the first term by $\bv{Q}\bv{Q}^T\bv{U}_p+\bv{E}_2$, (f) follows from replacing the $\Ub_p\Ub_p^T\Zb_p$ in the first term by $\Zb_p-\bv{E}_1$.
Now, $\bv{Q}\bv{Q}^T\Ab\bv{Q}\bv{Q}^T\bv{Z}_p=\Zb_p\Tilde{\bv{\Lambda}}_p$. Also, using spectral submultiplicativity, $\|\bv{Q}\bv{Q}^T\Ab\bv{Q}\bv{Q}^T\bv{E}_1 \|_2 \leq \|\Ab \|_2 \|\bv{E}_1 \|_2 \leq \frac{\|\Ab \|_2}{n^{c_2}}$. Similarly, each of the last three terms in the final step above can be bounded by $\max \big(\frac{\|\Ab \|_2}{n^{c_2}}, \frac{\|\Ab \|_2}{n^{c_3}}\big)$. Thus, using triangle inequality, and after adjusting the constants $c_1$ and $c'$ appropriately, we have:
\begin{align*}
    \|\Ab\Zb_p -\Zb_p\Tilde{\bv{\Lambda}}_p \|_2 \leq \frac{\|\Ab \|_2}{n^{c_1}},
\end{align*}
with probability at least $1-\frac{1}{n^{c'}}$.
\end{proof}

\subsubsection{Eigenvalue Alignment}\label{subsec:eigval}

In this section, we show that as a consequence of Lemma~\ref{lem:sin_bound}, the large magnitude eigenvalues of $\Ab$ are approximated by those of $\Qb\Qb^T\Ab\Qb\Qb^T$. We first state a result from~\cite{parlett1998symmetric} (Theorem 11.5.1) which states that for a symmetric matrix  $\bv{D} \in \R^{n \times n}$ and any orthonormal matrix $\bv{C} \in \R^{n \times m}$ the eigenvalues of $\bv{D}$ can be put into one-to-one correspondence with those of  $\bv{C}^T\bv{D}\bv{C}$ such that the error is bounded by the spectral norm of $\bv{DC}-\bv{C}\bv{C}^T\bv{DC}$.
\begin{theorem}[Theorem 11.5.1 of ~\cite{parlett1998symmetric}]\label{thm:parlett}
Let $\bv{D} \in \R^{n \times n}$ be a symmetric matrix and $\bv{C} \in \R^{n \times m}$ be a matrix with orthonormal columns. Then, there exists $m$ eigenvalues of $\bv{D}$, $\{\alpha_i~\lvert~i=1,\ldots,m\}$ such that for $i \in [m]$:
\begin{align*}
    |\alpha_{i}-\lambda_i(\bv{C}^T\bv{D}\bv{C})| \leq \|\bv{DC}-\bv{C}\bv{C}^T\bv{DC} \|_2.
\end{align*} 
\end{theorem}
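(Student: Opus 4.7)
The plan is to reduce the statement to Weyl's inequality applied to a block-diagonal perturbation. The first step is to extend $\bv{C}$ to a full orthogonal matrix $[\bv{C}, \bv{C}_\perp] \in \R^{n \times n}$, where $\bv{C}_\perp \in \R^{n \times (n-m)}$ has orthonormal columns spanning the orthogonal complement of $\range(\bv{C})$. Since orthogonal similarity preserves eigenvalues, $\bv{D}$ shares its spectrum with
$$
\tilde{\bv{D}} := [\bv{C}, \bv{C}_\perp]^T \bv{D} [\bv{C}, \bv{C}_\perp] = \begin{bmatrix} \bv{C}^T \bv{DC} & \bv{C}^T \bv{DC}_\perp \\ \bv{C}_\perp^T \bv{DC} & \bv{C}_\perp^T \bv{DC}_\perp \end{bmatrix},
$$
which now exposes the Ritz block $\bv{C}^T \bv{DC}$ explicitly on the diagonal.

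Next, I would decompose $\tilde{\bv D} = \bv{D}_0 + \bv{E}$, where $\bv{D}_0 = \mathrm{diag}(\bv{C}^T \bv{DC},\ \bv{C}_\perp^T \bv{DC}_\perp)$ is block diagonal and $\bv{E}$ holds the two off-diagonal blocks. The spectrum of $\bv{D}_0$ is, with multiplicity, the disjoint union of the eigenvalues of $\bv{C}^T \bv{DC}$ and those of $\bv{C}_\perp^T \bv{DC}_\perp$. A short calculation identifies the spectral norm of the perturbation with the residual appearing in the theorem:
$$
\|\bv{E}\|_2 = \|\bv{C}_\perp^T \bv{DC}\|_2 = \|\bv{C}_\perp \bv{C}_\perp^T \bv{DC}\|_2 = \|(\bv{I} - \bv{CC}^T) \bv{DC}\|_2 = \|\bv{DC} - \bv{CC}^T \bv{DC}\|_2,
$$
where I use symmetry of $\bv{D}$ (so the two off-diagonal blocks are transposes of one another and $\|\bv{E}\|_2$ equals the norm of either block) together with the fact that $\bv{C}_\perp$ has orthonormal columns.

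The third step is to apply Weyl's inequality to the symmetric perturbation $\tilde{\bv{D}} = \bv{D}_0 + \bv{E}$: sorting the $n$ eigenvalues of both matrices in decreasing (signed) order, matched entries differ by at most $\|\bv{E}\|_2 = \|\bv{DC} - \bv{CC}^T \bv{DC}\|_2$. To extract the desired size-$m$ matching, for each $i \in [m]$ I would locate the position $\pi(i)$ at which $\lambda_i(\bv{C}^T \bv{DC})$ sits in the sorted spectrum of $\bv{D}_0$, choosing tie-breaks so that $\pi$ is injective, and set $\alpha_i$ equal to the eigenvalue of $\bv{D}$ at position $\pi(i)$ in the sorted spectrum of $\tilde{\bv D}$. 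Weyl then gives $|\alpha_i - \lambda_i(\bv{C}^T \bv{DC})| \leq \|\bv{DC} - \bv{CC}^T \bv{DC}\|_2$, and injectivity of $\pi$ guarantees the $\alpha_i$ constitute $m$ eigenvalues of $\bv{D}$ counted with multiplicity.

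The main subtlety I anticipate is this last bookkeeping step: Weyl's inequality naturally pairs full sorted spectra of $\bv{D}_0$ and $\tilde{\bv D}$, whereas the theorem asks only for an $m$-element submatching to the Ritz values. Defining $\pi$ by a sort-then-match procedure sidesteps this cleanly, but repeated eigenvalues -- both among the Ritz values themselves and between Ritz values and eigenvalues of the complementary block $\bv{C}_\perp^T \bv{DC}_\perp$ -- require consistent tie-breaking to ensure each Ritz value is assigned to a distinct slot in the spectrum of $\bv{D}$. Everything else is a straightforward unwinding of the block identity and the residual calculation above.
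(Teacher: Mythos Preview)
The paper does not prove this statement; it is quoted without proof as Theorem 11.5.1 from Parlett's book and used as a black box in the proof of Theorem~\ref{thm:eig_error}. Your argument is correct and is essentially the standard proof of this result: extend $\bv{C}$ to an orthogonal basis, write $\bv{D}$ in $2\times 2$ block form, observe that the off-diagonal block has spectral norm equal to the residual $\|\bv{DC}-\bv{CC}^T\bv{DC}\|_2$, and apply Weyl's inequality to the block-diagonal/off-diagonal splitting. The bookkeeping you flag about tie-breaking repeated eigenvalues is genuine but routine, and your injective-$\pi$ construction handles it.
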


We now prove a couple of Lemmas which we will use along with Theorem~\ref{thm:parlett} for our main theorem. 

\begin{lemma}\label{lem:spec1}
Consider the setting of Lemma~\ref{lem:sin_bound} where $\bv A \in \R^{n \times n}$ is  symmetric. Let the error bound for Lemma~\ref{lem:sin_bound} hold for a constant $c>0$. For some constant $c' >0$, with probability at least $1-\frac{1}{n^{c'}}$, 
$$\|\Qb\Qb^T \Ab \Qb \Qb^T \Ub_p-\Ub_p\Ub_p^T\Qb\Qb^T\Ab\Qb\Qb^T\Ub_p \|_2 \leq \frac{\|\Ab \|_2}{n^{c-1}}.$$
\end{lemma}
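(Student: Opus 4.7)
The plan is a short direct calculation that reduces the quantity to terms involving $(\bv I - \bv Q \bv Q^T) \bv U_p$, whose norm is controlled by Lemma~\ref{lem:sin_bound}.

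First, I invoke Lemma~\ref{lem:sin_bound} (with a sufficiently large constant inside) to write $\bv Q \bv Q^T \bv U_p = \bv U_p + \bv E$, where with probability at least $1 - \frac{1}{n^{c'}}$ we have $\|\bv E\|_2 \le \|\bv E\|_F \le \frac{1}{n^{c}}$. Using this substitution together with the eigenvector identity $\bv A \bv U_p = \bv U_p \bv \Lambda_p$, I can expand
\begin{align*}
\bv Q \bv Q^T \bv A \bv Q \bv Q^T \bv U_p
&= \bv Q \bv Q^T \bv A (\bv U_p + \bv E) \\
&= \bv Q \bv Q^T \bv U_p \bv \Lambda_p + \bv Q \bv Q^T \bv A \bv E \\
&= \bv U_p \bv \Lambda_p + \bv E \bv \Lambda_p + \bv Q \bv Q^T \bv A \bv E.
\end{align*}

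Left-multiplying both sides by $\bv U_p \bv U_p^T$ and using $\bv U_p^T \bv U_p = \bv I_p$ gives
\begin{align*}
\bv U_p \bv U_p^T \bv Q \bv Q^T \bv A \bv Q \bv Q^T \bv U_p
= \bv U_p \bv \Lambda_p + \bv U_p \bv U_p^T \bv E \bv \Lambda_p + \bv U_p \bv U_p^T \bv Q \bv Q^T \bv A \bv E.
\end{align*}
Subtracting, the leading $\bv U_p \bv \Lambda_p$ terms cancel and I obtain
\begin{align*}
\bv Q \bv Q^T \bv A \bv Q \bv Q^T \bv U_p - \bv U_p \bv U_p^T \bv Q \bv Q^T \bv A \bv Q \bv Q^T \bv U_p
= (\bv I - \bv U_p \bv U_p^T) \bv E \bv \Lambda_p + (\bv I - \bv U_p \bv U_p^T) \bv Q \bv Q^T \bv A \bv E.
\end{align*}

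Finally, I bound via the triangle inequality and sub-multiplicativity, using that $\bv I - \bv U_p \bv U_p^T$, $\bv Q \bv Q^T$ are projectors (spectral norm $1$), $\|\bv \Lambda_p\|_2 \le \|\bv A\|_2$, and $\|\bv E\|_2 \le \frac{1}{n^c}$, to get each term at most $\frac{\|\bv A\|_2}{n^c}$. Summing yields the claimed bound $\frac{2 \|\bv A\|_2}{n^c} \le \frac{\|\bv A\|_2}{n^{c-1}}$. There is no real obstacle here; the only subtlety is keeping track of constants in Lemma~\ref{lem:sin_bound}, which can be absorbed by choosing the constant in that lemma slightly larger than the $c$ used in the statement.
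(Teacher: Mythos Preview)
Your proof is correct and follows essentially the same approach as the paper: both substitute $\bv Q\bv Q^T\bv U_p = \bv U_p + \bv E$ with $\|\bv E\|_2 \le 1/n^c$ from Lemma~\ref{lem:sin_bound}, expand, and bound the residual terms by $O(\|\bv A\|_2/n^c)$. Your execution is in fact slightly cleaner: the paper re-expands $\bv U_p\bv U_p^T\bv Q\bv Q^T\bv A\bv Q\bv Q^T\bv U_p$ from scratch (picking up an extra cross term $\bv U_p\bv E^T\bv A\bv U_p$ and a quadratic term $\bv U_p\bv E^T\bv A\bv E$, for a final bound of $5\|\bv A\|_2/n^c$), whereas you obtain the second expression by left-multiplying the first expansion by $\bv U_p\bv U_p^T$, which leaves only two terms and yields $2\|\bv A\|_2/n^c$.
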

\begin{proof}
Recall that we denote the eigendecomposition of a symmetric matrix $\Ab$ by $\Ab = \Ub \bv{\Lambda}\Ub^T$.  $\Ab$'s eigenvectors (i.e., the columns of $\Ub$) are equal to its singular vectors, and recall that $\bv U_p \in \R^{n \times p}$ has columns equal to the $p$ eigenvectors corresponding to the $p$ largest magnitude eigenvalues (i.e., the $p$ singular vectors corresponding to the $p$ largest singular values). From Lemma~\ref{lem:sin_bound}, we have $\Qb\Qb^T\Ub_p= \Ub_p+\bv{E}$ where $\bv{E} \in \R^{n \times p}$ has $\|\bv{E} \|_2 \leq \frac{1}{n^{c}}$ with probability at least $1-\frac{1}{n^{c''}}$. Thus, from~\eqref{Eq:qq}, we get
\begin{align}
   \Qb\Qb^T \Ab \Qb \Qb^T \Ub_p 
   &= \Ub_p \bv{\Lambda}_p + \bv{E}\bv{\Lambda}_p+ \Qb\Qb^T\Ab\bv{E}.
\end{align}
Also, we have:
\begin{align*}
   \Ub_p\Ub_p^T\Qb\Qb^T\Ab\Qb\Qb^T\Ub_p &= \Ub_p (\Ub_p^T+ \bv{E}^T)\Ab (\Ub_p + \bv{E}) \\
   &= \Ub_p \Ub_p^T \Ab \Ub_p + \Ub_p \Ub_p^T \Ab \bv{E} +\Ub_p \bv{E}^T\Ab\Ub_p+\Ub_p \bv{E}^T\Ab\bv{E} \\
   &= \Ub_p \bv{\Lambda}_p+  \Ub_p \Ub_p^T \Ab \bv{E} +\Ub_p \bv{E}^T\Ab\Ub_p+\Ub_p \bv{E}^T\Ab\bv{E}.
\end{align*}
Then, using triangle inequality and spectral submultiplicativity, we have that $$\|\Qb\Qb^T \Ab \Qb \Qb^T \Ub_p- \Ub_p\Ub_p^T\Qb\Qb^T\Ab\Qb\Qb^T\Ub_p\|_2 \leq\norm{\bv \Lambda_p}_2 \norm{\bv{E}}_2 + 3 \norm{\Ab}_2 \norm{\bv{E}}_2+ \norm{\Ab} \norm{\bv{E}}_2^2 \leq \frac{5\|\Ab \|_2}{n^{c}}.$$
This completes the proof.
\end{proof}

Next we prove that the eigenvalues of $\Ab$ are well approximated by those of $\bv{U}_p^T\bv{Q}\bv{Q}^T\bv{A}\bv{Q}\bv{Q}^T\bv{U}_p$.

\begin{lemma}\label{lem:bound3}
Consider the setting of Lemma~\ref{lem:sin_bound} where $\bv A \in \R^{n \times n}$ is  symmetric. Let the error bound for Lemma~\ref{lem:sin_bound} hold for a constant $c>0$. For some constant $c' >0$, with probability at least $1-\frac{1}{n^{c'}}$, we have for all $i \in [p]$, $$| \lambda_i(\bv{U}_p^T\bv{Q}\bv{Q}^T\bv{A}\bv{Q}\bv{Q}^T\bv{U}_p)-\lambda_i(\bv{A})| \leq \frac{\|\bv{A} \|_2}{n^{c-1}}.$$
\end{lemma}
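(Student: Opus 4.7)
The plan is to rewrite the matrix $\bv{U}_p^T\bv{Q}\bv{Q}^T\bv{A}\bv{Q}\bv{Q}^T\bv{U}_p$ as a small additive perturbation of the diagonal matrix $\bv{\Lambda}_p$ (whose diagonal entries are, by construction, exactly the top $p$ eigenvalues of $\bv A$ by magnitude) and then invoke Weyl's inequality (Fact~\ref{fact:weyl}). This is essentially a direct calculation that chains together Lemma~\ref{lem:sin_bound} and the sub-multiplicativity of the spectral norm; the hardest part is really just bookkeeping.

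Concretely, first I would invoke Lemma~\ref{lem:sin_bound} (which applies under the hypotheses of Lemma~\ref{Lem:eig_vec}, in particular $\sigma_p(\bv A)\ge \tfrac{3}{2}\sigma_{l+1}(\bv A)$) to write $\bv{Q}\bv{Q}^T\bv{U}_p = \bv{U}_p + \bv{E}$ with $\|\bv{E}\|_2 \le 1/n^c$ holding with probability at least $1-1/n^{c'}$ for appropriate constants. Substituting this identity symmetrically into $\bv{U}_p^T\bv{Q}\bv{Q}^T\bv{A}\bv{Q}\bv{Q}^T\bv{U}_p$ gives
\begin{align*}
\bv{U}_p^T\bv{Q}\bv{Q}^T\bv{A}\bv{Q}\bv{Q}^T\bv{U}_p
&= (\bv U_p + \bv E)^T \bv A (\bv U_p + \bv E) \\
&= \bv{U}_p^T \bv A \bv{U}_p \,+\, \bv{U}_p^T \bv A \bv{E} \,+\, \bv{E}^T \bv A \bv{U}_p \,+\, \bv{E}^T \bv A \bv{E} \\
&= \bv{\Lambda}_p + \bv{E}',
\end{align*}
where I used $\bv{U}_p^T \bv A \bv{U}_p = \bv{\Lambda}_p$ (since $\bv U_p$ collects the top $p$ eigenvectors of $\bv A$). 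By the triangle inequality and sub-multiplicativity,
\[
\|\bv{E}'\|_2 \,\le\, 2\,\|\bv A\|_2\,\|\bv E\|_2 + \|\bv A\|_2\,\|\bv E\|_2^2 \,\le\, \frac{3\,\|\bv A\|_2}{n^c},
\]
for $n$ large enough (the quadratic term is lower order).

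Finally, since $\bv{U}_p^T\bv{Q}\bv{Q}^T\bv{A}\bv{Q}\bv{Q}^T\bv{U}_p$ and $\bv{\Lambda}_p$ are both symmetric $p\times p$ matrices, Fact~\ref{fact:weyl} yields
\[
\max_{i\in[p]}\bigl|\lambda_i(\bv{U}_p^T\bv{Q}\bv{Q}^T\bv{A}\bv{Q}\bv{Q}^T\bv{U}_p) - \lambda_i(\bv{\Lambda}_p)\bigr| \,\le\, \|\bv{E}'\|_2 \,\le\, \frac{3\,\|\bv A\|_2}{n^c} \,\le\, \frac{\|\bv A\|_2}{n^{c-1}}.
\]
Since by construction the diagonal entries of $\bv{\Lambda}_p$ are $\lambda_1(\bv A),\ldots,\lambda_p(\bv A)$ (ordered by magnitude), we have $\lambda_i(\bv{\Lambda}_p)=\lambda_i(\bv A)$ for all $i\in[p]$, which completes the proof after absorbing constants and the $1/n^{c'}$-probability failure events from Lemma~\ref{lem:sin_bound}. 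The only mild subtlety is confirming that the magnitude-based indexing on both sides matches up, but this is automatic because $\bv E'$ is polynomially smaller than $\|\bv A\|_2$ and hence cannot reorder entries whose magnitudes are themselves captured by the same $\bv\Lambda_p$.
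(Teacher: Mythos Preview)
Your proposal is correct and takes essentially the same approach as the paper: both arguments write $\bv{U}_p^T\bv{Q}\bv{Q}^T\bv{A}\bv{Q}\bv{Q}^T\bv{U}_p$ as $\bv{\Lambda}_p$ plus a perturbation of spectral norm $O(\|\bv A\|_2/n^c)$ via Lemma~\ref{lem:sin_bound}, then apply Weyl's inequality. The paper uses the algebraically equivalent decomposition $\bv{U}_p^T\bv{A}\bv{U}_p - \bv{U}_p^T\bv{Q}\bv{Q}^T\bv{A}\bv{Q}\bv{Q}^T\bv{U}_p = \bv{U}_p^T(\bv I-\bv Q\bv Q^T)\bv A\bv Q\bv Q^T\bv U_p + \bv U_p^T\bv Q\bv Q^T\bv A(\bv I-\bv Q\bv Q^T)\bv U_p + \bv U_p^T(\bv I-\bv Q\bv Q^T)\bv A(\bv I-\bv Q\bv Q^T)\bv U_p$, which is just your $\bv E'$ rewritten with $\bv E = -(\bv I-\bv Q\bv Q^T)\bv U_p$.
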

\begin{proof}
To prove the theorem we will show that $\bv{U}_p^T\bv{Q}\bv{Q}^T\bv{A}\bv{Q}\bv{Q}^T\bv{U}_p $ is close to $\bv U_p^T\bv A \bv U_p = \bv \Lambda_p$ in the spectral norm, and thus has nearby eigenvalues. Observe that 
\begin{align}\label{eq:bound1}
       \bv{U}_p^T\bv{A}\bv{U}_p- \bv{U}_p^T\bv{Q}\bv{Q}^T\bv{A}\bv{Q}\bv{Q}^T\bv{U}_p &=\bv{U}_p^T(\Ib-\Qb\Qb^T)\Ab\Qb\Qb^T\Ub_p+\Ub_p^T\Qb\Qb^T\Ab(\Ib-\Qb\Qb^T)\Ub_p \nonumber \\
      &+ \Ub_p^T(\Ib-\Qb\Qb^T)\Ab(\Ib-\Qb\Qb^T)\Ub_p. 
\end{align}
Since all the terms on the righthand side above contain the term $(\bv I-\bv Q\bv Q^T)\bv U_p$, we can use Lemma~\ref{lem:sin_bound} to show that they are small. For the first term, we have:
\begin{align*}
    \|\bv{U}_p^T(\Ib-\Qb\Qb^T)\Ab\Qb\Qb^T\Ub_p \|_2 &\leq \| \bv{U}_p^T(\Ib-\Qb\Qb^T)\|_2 \cdot \|\Ab \|_2 \cdot \|\Qb\Qb^T\Ub_p \|_2 \leq \frac{\|\bv{A} \|_2}{n^c},
\end{align*}
where we also use the fact $\Qb$ and $\Ub_p$ are orthonormal matrices and that $ \|\bv{U}_p^T(\Ib-\Qb\Qb^T) \|_2  \le  \|\bv{U}_p^T(\Ib-\Qb\Qb^T) \|_F \le \frac{1}{n^{c}}$ by Lemma \ref{lem:sin_bound}. Similarly we can bound the other two terms in \eqref{eq:bound1} by $\frac{\|\Ab \|_2}{n^{c}}$. Applying triangle inequality, we obtain:
\begin{align*}
    \| \bv{U}_p^T\bv{A}\bv{U}_p- \bv{U}_p^T\bv{Q}\bv{Q}^T\bv{A}\bv{Q}\bv{Q}^T\bv{U}_p\|_2 \leq \frac{3\|\bv{A} \|_2}{n^{c}}.
\end{align*}
Observe that $\bv U_p^T \bv A \bv U_p = \bv \Lambda_p$ has eigenvalues $\lambda_1(\bv A) \ldots, \lambda_p(\bv A)$. So, by Weyl's inequality (Fact~\ref{fact:weyl}), we have that $|\lambda_i(\Ab)-\lambda_i(\bv{U}_p^T\bv{Q}\bv{Q}^T\bv{A}\bv{Q}\bv{Q}^T\bv{U}_p)| \leq  \| \bv{U}_p^T\bv{A}\bv{U}_p- \bv{U}_p^T\bv{Q}\bv{Q}^T\bv{A}\bv{Q}\bv{Q}^T\bv{U}_p\|_2 \leq \frac{3\|\bv{A} \|_2}{n^{c}}$ for $i \in [p]$. This completes the proof.
\end{proof}

The next theorem is the main result of this section which states that the top $p$ eigenvalues of $\Ab$ are approximated by the top $p$ eigenvalues of $\Qb\Qb^T \Ab \Qb \Qb^T$ up to some permutation for some $k \leq p \leq l$ as long as the conditions in Lemma~\ref{Lem:eig_vec} are satisfied. i.e., there exists some $k \in [l]$ with at least some constant multiplicative gap between $\sigma_k(\Ab)$ and $\sigma_{l+1}(\Ab)$. 

\begin{theorem}[Eigenvalue alignment]\label{thm:eig_error}
Consider the setting of Theorem~\ref{thm:converge} and Lemma~\ref{Lem:eig_vec}. Let $c_1,c'>0$ be some constants. Then, there exists a permutation $S:[p] \to [p]$ such that for every $i \in [p]$, we have $$|\lambda_i(\Ab)-\lambda_{S(i)}(\Qb\Qb^T \Ab \Qb \Qb^T)| \leq \frac{\|\Ab \|_2}{n^{c_1}},$$ with probability at least $1-\frac{1}{n^{c'}}$.
\end{theorem}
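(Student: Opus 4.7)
The plan is to reduce to the two prior lemmas via a Parlett-style localization argument, and then do a gap-based combinatorial check to ensure the approximating eigenvalues actually live among the top $p$ by magnitude of $\Qb\Qb^T\Ab\Qb\Qb^T$.

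First, I would apply Theorem~\ref{thm:parlett} with $\bv D = \Qb\Qb^T \Ab \Qb\Qb^T$ (symmetric) and $\bv C = \Ub_p$ (orthonormal). This produces $p$ eigenvalues $\alpha_1,\ldots,\alpha_p$ of $\Qb\Qb^T\Ab\Qb\Qb^T$ with
\[
|\alpha_i - \lambda_i(\Ub_p^T\Qb\Qb^T\Ab\Qb\Qb^T\Ub_p)| \le \|\Qb\Qb^T\Ab\Qb\Qb^T\Ub_p - \Ub_p\Ub_p^T\Qb\Qb^T\Ab\Qb\Qb^T\Ub_p\|_2.
\]
Lemma~\ref{lem:spec1} bounds the right-hand side by $\|\Ab\|_2/n^{c-1}$ (up to constants), and Lemma~\ref{lem:bound3} gives $|\lambda_i(\Ub_p^T\Qb\Qb^T\Ab\Qb\Qb^T\Ub_p) - \lambda_i(\Ab)| \le \|\Ab\|_2/n^{c-1}$ (up to constants). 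Combining by triangle inequality yields $|\alpha_i - \lambda_i(\Ab)| \le O(\|\Ab\|_2/n^{c-1})$ for each $i \in [p]$.

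Now I need to upgrade the statement ``$\alpha_i$ is some eigenvalue of $\Qb\Qb^T\Ab\Qb\Qb^T$'' to ``$\alpha_i$ is one of the top $p$ eigenvalues of $\Qb\Qb^T\Ab\Qb\Qb^T$ by magnitude.'' This is the step that requires a little care and where the eigengap from Lemma~\ref{Lem:eig_vec} enters. The nonzero eigenvalues of $\Qb\Qb^T\Ab\Qb\Qb^T$ agree with those of the compression $\bv T = \Qb^T\Ab\Qb$, and for any such compression by an orthonormal matrix, $|\lambda_j(\bv T)| \le \sigma_j(\Ab)$ (by writing $\bv T = (\Qb^T\bv U)\bv\Sigma(\bv V^T \Qb)$ as a product with two contractions, applying singular-value submultiplicativity). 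Hence $|\lambda_{p+1}(\Qb\Qb^T\Ab\Qb\Qb^T)| \le \sigma_{p+1}(\Ab)$.

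On the other hand, for $i \in [p]$, $|\alpha_i| \ge |\lambda_i(\Ab)| - O(\|\Ab\|_2/n^{c-1}) \ge \sigma_p(\Ab) - O(\|\Ab\|_2/n^{c-1})$, while Lemma~\ref{Lem:eig_vec} gives $\sigma_p(\Ab) - \sigma_{p+1}(\Ab) \ge \|\Ab\|_2/(2n^{c/2+1})$. Taking the constant $c$ from Lemma~\ref{lem:sin_bound} large enough (say $c > 4$) ensures $|\alpha_i| > |\lambda_{p+1}(\Qb\Qb^T\Ab\Qb\Qb^T)|$ strictly, so each $\alpha_i$ must be among the top $p$ eigenvalues of $\Qb\Qb^T\Ab\Qb\Qb^T$ in magnitude. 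Since the $\alpha_i$ are $p$ distinct eigenvalues (with multiplicity) of $\Qb\Qb^T\Ab\Qb\Qb^T$, the map $i \mapsto \mathrm{index}(\alpha_i)$ is a permutation $S:[p]\to[p]$, and the bound from the first paragraph gives $|\lambda_i(\Ab) - \lambda_{S(i)}(\Qb\Qb^T\Ab\Qb\Qb^T)| \le \|\Ab\|_2/n^{c_1}$ after relabeling constants. Union-bounding over the failure events of Lemmas~\ref{lem:sin_bound},~\ref{Lem:eig_vec},~\ref{lem:spec1}, and~\ref{lem:bound3} yields the claimed $1 - 1/n^{c'}$ success probability.

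The main (minor) obstacle is the permutation-matching step: Parlett's theorem only guarantees $p$ eigenvalues somewhere in the spectrum of $\Qb\Qb^T\Ab\Qb\Qb^T$, so one has to leverage the eigengap from Lemma~\ref{Lem:eig_vec} together with the compression inequality $|\lambda_j(\Qb^T\Ab\Qb)| \le \sigma_j(\Ab)$ to force those eigenvalues into the top-$p$ block. Everything else is a short chain of triangle inequalities and constant-adjustment, so no new technical machinery beyond what has already been developed is needed.
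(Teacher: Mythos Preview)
Your proposal is correct and follows essentially the same approach as the paper: apply Theorem~\ref{thm:parlett} with $\bv D=\Qb\Qb^T\Ab\Qb\Qb^T$ and $\bv C=\Ub_p$, invoke Lemmas~\ref{lem:spec1} and~\ref{lem:bound3} to get $|\alpha_i-\lambda_i(\Ab)|\le O(\|\Ab\|_2/n^{c-1})$, and then use the interlacing bound $|\lambda_{p+1}(\Qb\Qb^T\Ab\Qb\Qb^T)|\le\sigma_{p+1}(\Ab)$ together with the gap $\sigma_p(\Ab)-\sigma_{p+1}(\Ab)\ge\|\Ab\|_2/(2n^{c/2+1})$ from Lemma~\ref{Lem:eig_vec} to force the $\alpha_i$ into the top-$p$ block. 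The only cosmetic difference is that the paper phrases the last step as a proof by contradiction and cites the minimax principle for the interlacing, whereas you argue it directly via singular-value submultiplicativity; both are equivalent.
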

\begin{proof}
We apply Theorem~\ref{thm:parlett} to first prove that each of the top $p$ eigenvalues of $\Ab$ has an eigenvalue of $\Qb\Qb^T\Ab \Qb\Qb^T$ close to it. Let $\bv{C}=\Ub_p$ and $\bv{D}=\Qb\Qb^T\Ab\Qb\Qb^T$. Let $c>0$ be the constant in the statements of Lemmas~\ref{lem:sin_bound},~\ref{lem:spec1} and~\ref{lem:bound3}. Then, $\bv{C}^T\bv{D}\bv{C}=\Ub_p^T\Qb\Qb^T\Ab\Qb\Qb^T\Ub_p$ and by Theorem~\ref{thm:parlett}, there exist $p$ eigenvalues $\alpha_1, \ldots, \alpha_p$ of $\Qb\Qb^T\Ab\Qb\Qb^T$ such that for $i \in [p]$:
    \begin{align}\label{eq:permute}
        |\alpha_{i}-\lambda_i(\Ub_p^T\Qb\Qb^T\Ab\Qb\Qb^T\Ub_p)| &\leq \|\bv{D}\bv{C}-\bv{C}\bv{C}^T\bv{D}\bv{C} \|_2 \nonumber \\ 
        &=\|\Qb\Qb^T\Ab\Qb\Qb^T\Ub_p-\Ub_p\Ub_p^T\Qb\Qb^T\Ab\Qb\Qb^T\Ub_p \|_2 \nonumber \\
        &\leq \frac{\|\Ab \|_2}{n^{c-1}}.
    \end{align}
The last inequality hold  with probability at least $1-\frac{1}{n^{c'}}$ for some constant $c'>0$ due to Lemma~\ref{lem:spec1}. From Lemma~\ref{lem:bound3}, with probability at least $1-\frac{1}{n^{c'}}$, we get that for all $i \in [p]$, we have:
\begin{equation}\label{eq:1}
    |\lambda_i(\Ub_p^T\Qb\Qb^T\Ab\Qb\Qb^T\Ub_p)-\lambda_i(\Ab)| \leq \frac{\|\Ab \|_2}{n^{c-1}}.
\end{equation}
Combining~\eqref{eq:permute} and~\eqref{eq:1}, and using triangle inequality we have for every $i \in [p]$ (for some constant $c_4>0$), 
\begin{align}\label{Eq:eig_gap1}
    |\alpha_i-\lambda_i(\bv{A})| \leq \frac{2\|\Ab \|_2}{n^{c-1}}.
\end{align}
We now prove that the $\alpha_i$'s (for all $i \in [p]$) are a permutation over $\lambda_1(\Qb\Qb^T\Ab\Qb\Qb^T),\ldots, \lambda_p(\Qb\Qb^T\Ab\Qb\Qb^T)$. Suppose that this is not true, i.e., there exists some $j \in [p]$ such that $\alpha_j=\lambda_{p+r}(\Qb\Qb^T\Ab\Qb\Qb^T)$ for some $r \geq 1$. So we have $|\lambda_{p+r}(\Qb\Qb^T\Ab\Qb\Qb^T)-\lambda_j(\Ab)| \leq \frac{2\|\Ab \|_2}{n^{c-1}}$. Now, by the minimax principle we have 
\begin{align*}
    |\lambda_{p+r}(\Qb\Qb^T\Ab\Qb\Qb^T)| \leq |\lambda_{p+1}(\Qb\Qb^T\Ab\Qb\Qb^T)| \leq |\lambda_{p+1}(\Ab)|.
\end{align*}
Using triangle inequality, we have:
\begin{align*}
    |\lambda_{p}(\Ab)|- |\lambda_{p+1}(\Ab)| \leq |\lambda_j(\Ab)|-|\lambda_{p+r}(\Qb\Qb^T\Ab\Qb\Qb^T)| \leq |\lambda_j(\Ab)-\lambda_{p+r}(\Qb\Qb^T\Ab\Qb\Qb^T)| \leq \frac{2\|\Ab \|_2}{n^{c-1}}.
\end{align*}
From Lemma~\ref{Lem:eig_vec}, we have $\sigma_p(\Ab)-\sigma_{p+1}(\Ab) \geq \frac{\|\Ab \|_2}{2n^{c/2+1}}$ for some constant $c$. For a large enough $c$, we have $\frac{2\|\Ab \|_2}{n^{c-1}} < \frac{\|\Ab \|_2}{2n^{c/2+1}}$ and thus, we have a contradiction. So we must have $\gamma_j=\lambda_i(\Qb\Qb^T\Ab\Qb\Qb^T)$ for some $i \in [p]$. Choosing the constant $c_1$ to  be suitably gives us the bound.
\end{proof}

\subsubsection{Bounding the spectral norm after deflation}\label{subsec:spectral}

In this section, we bound the spectral norm of the matrix $\Ab$ after deflating its top subspace using the converged eigenvectors from the randomized block Krylov algorithm (Algorithm~\ref{alg:krylov}). More formally, let $\Zb_S=\Qb\Vb_S$ be the output of Algorithm~\ref{alg:krylov}. Let $\Pb=\bv{I}-\Zb_S \Zb_S^T$, i.e., the projection matrix onto the subspace orthogonal to $\Zb_S$. Then, we bound $\|\Pb\Ab\Pb \|_2$ based on the fact that $\Zb_S$ must contain the top $p$ eigenvectors of $\Ab$ for some $p \leq l$ as proven in Theorems~\ref{thm:converge} and~\ref{thm:eig_error}, provided the assumptions in those theorems hold for $\Ab$. We first bound the spectral norm after deflating exactly the top $p$ subspace of $\Ab$.

\begin{lemma}[Spectral norm bound after deflation]\label{lem:pap}
Consider the setting of Theorems~\ref{thm:converge} and~\ref{thm:eig_error}. Let $c_1,c'>0$ be some constants. Let $\Zb_p \in \R^{n \times p}$ be as defined in Theorem~\ref{thm:converge}. Then, $$\|(\Ib-\Zb_p\Zb_p^T) \Ab (\Ib-\Zb_p\Zb_p^T) \|_2 \leq \sigma_{p+1}(\Ab)+ \frac{\|\Ab \|_2}{n^{c_1}},$$ with probability at least $1-\frac{1}{n^{c'}}$.
\end{lemma}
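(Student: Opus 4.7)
The plan is to reduce the deflation bound with approximate eigenvectors $\Zb_p$ to the corresponding bound with the exact eigenvectors $\Ub_p$, treating the difference as a small perturbation. Concretely, if $\Zb_p$ were equal to $\Ub_p$, then since $\Ub_p$ spans the top $p$ eigenvectors of symmetric $\Ab$, we would have the exact identity $(\Ib-\Ub_p\Ub_p^T)\Ab(\Ib-\Ub_p\Ub_p^T) = \Ub_{p,\perp}\bv{\Lambda}_{p,\perp}\Ub_{p,\perp}^T$, and so $\|(\Ib-\Ub_p\Ub_p^T)\Ab(\Ib-\Ub_p\Ub_p^T)\|_2 = |\lambda_{p+1}(\Ab)|=\sigma_{p+1}(\Ab)$. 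The remaining task is to control the perturbation incurred by replacing $\Ub_p\Ub_p^T$ by the nearby projector $\Zb_p\Zb_p^T$.

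To bound the projector difference, I would first observe that $\Zb_p=\Qb\Vb_p$ has orthonormal columns (since both $\Qb$ and $\Vb_p$ do), so $\Zb_p\Zb_p^T$ is a genuine rank-$p$ orthogonal projector. Combined with the standard identity that for two rank-$p$ orthogonal projectors $\bv{P}_1,\bv{P}_2$ one has $\|\bv{P}_1-\bv{P}_2\|_2=\|(\Ib-\bv{P}_1)\bv{P}_2\|_2$, this together with Lemma~\ref{Lem:eig_vec} (which gives $\|(\Ib-\Ub_p\Ub_p^T)\Zb_p\|_2 \le \|\Ub_p\Ub_p^T\Zb_p - \Zb_p\|_2 \le n^{-c_1}$) yields
\begin{align*}
\|\Ub_p\Ub_p^T-\Zb_p\Zb_p^T\|_2 \le \frac{1}{n^{c_1}}.
\end{align*}
Call this difference $\bv{F}$, so that $\Ib-\Zb_p\Zb_p^T = (\Ib-\Ub_p\Ub_p^T)+\bv{F}$ with $\|\bv{F}\|_2 \le n^{-c_1}$.

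Next I would expand
\begin{align*}
(\Ib-\Zb_p\Zb_p^T)\Ab(\Ib-\Zb_p\Zb_p^T) &= (\Ib-\Ub_p\Ub_p^T)\Ab(\Ib-\Ub_p\Ub_p^T) \\
&\quad + (\Ib-\Ub_p\Ub_p^T)\Ab\bv{F} + \bv{F}\Ab(\Ib-\Ub_p\Ub_p^T) + \bv{F}\Ab\bv{F},
\end{align*}
then apply the triangle inequality and submultiplicativity. The first term has spectral norm exactly $\sigma_{p+1}(\Ab)$. Each of the remaining three cross-terms is bounded by $\|\Ab\|_2\|\bv{F}\|_2 \le \|\Ab\|_2/n^{c_1}$ (the last term is actually even smaller, of order $\|\Ab\|_2/n^{2c_1}$). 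Collecting terms gives a bound of $\sigma_{p+1}(\Ab)+3\|\Ab\|_2/n^{c_1}$, and then renaming the constant $c_1$ (replacing it, e.g., by $c_1-1$) yields the claimed bound.

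The only delicate step is the projector-difference identity; apart from that the argument is routine algebra, so I expect no real obstacle. The failure probability is inherited directly from the invocation of Lemma~\ref{Lem:eig_vec}, giving probability at least $1-n^{-c'}$.
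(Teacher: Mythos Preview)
Your argument is correct, and it takes a genuinely different route from the paper's proof. The paper proceeds by contradiction: it assumes $\|(\Ib-\Zb_p\Zb_p^T)\Ab(\Ib-\Zb_p\Zb_p^T)\|_2 > \sigma_{p+1}(\Ab)+\|\Ab\|_2/n^{c_2/2-1}$, extracts a unit eigenvector $\bv{x}$ of the deflated matrix with eigenvalue of that size (necessarily orthogonal to $\Zb_p$), and then lower bounds $\|\Ab[\Zb_p\,|\,\bv{x}]\|_F^2$. For this lower bound it invokes both Theorem~\ref{thm:converge} (to control $\|\Ab\Zb_p\|_F$) and Theorem~\ref{thm:eig_error} (to relate $\sum_i \sigma_i^2(\Qb\Qb^T\Ab\Qb\Qb^T)$ to $\sum_i \sigma_i^2(\Ab)$), arriving at $\|\Ab\Zb'\|_F^2 > \sum_{i=1}^{p+1}\sigma_i^2(\Ab)$ for the orthonormal $\Zb'=[\Zb_p\,|\,\bv{x}]$, which contradicts the variational characterization of singular values.

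Your route is more direct and uses strictly less machinery: you only need Lemma~\ref{Lem:eig_vec} together with the equal-rank projector identity $\|\bv{P}_1-\bv{P}_2\|_2=\|(\Ib-\bv{P}_1)\bv{P}_2\|_2$, and then a three-term perturbation expansion around the exact deflation $(\Ib-\Ub_p\Ub_p^T)\Ab(\Ib-\Ub_p\Ub_p^T)$. This avoids Theorems~\ref{thm:converge} and~\ref{thm:eig_error} entirely and gives the bound with a cleaner constant. The paper's argument, while heavier, has the minor advantage of not relying on the projector identity (which, though standard, is a nontrivial spectral fact about equal-rank projectors). Either way the result follows; your approach is the more economical one.
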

\begin{proof}
Let the bounds from Theorem~\ref{thm:converge} and~\ref{thm:eig_error} hold with some constant $c_2>0$. Assume for contradiction, that $\| (\Ib-\Zb_p\Zb_p^T) \Ab (\Ib-\Zb_p\Zb_p^T)\|_2 > \sigma_{p+1}(\Ab)+\frac{\|\Ab \|_2}{n^{c_2/2-1}}$. Then, there exists some eigenvector $\bv{x}$ of $(\Ib-\Zb_p\Zb_p^T) \Ab (\Ib-\Zb_p\Zb_p^T)$ with corresponding eigenvalue $\lambda$ such that $|\lambda|> \sigma_{p+1}(\Ab)+\frac{\|\Ab \|_2}{n^{c_2/2-1}}$. Since any eigenvector corresponding to a nonzero eigenvalue of $(\Ib-\Zb_p\Zb_p^T) \Ab (\Ib-\Zb_p\Zb_p^T)$ must lie in the column space of $\Ib-\Zb_p\Zb_p^T$, they will be orthogonal to $\Zb_p$ and thus, we will have $\Zb_p^T\bv{x}=0$. Since we have $(\Ib-\Zb_p\Zb_p^T)\Ab (\Ib-\Zb_p\Zb_p^T) \bv{x}=\lambda\xb$, we get that $(\Ib-\Zb_p\Zb_p^T)\Ab \xb=\lambda \xb$. Then, we have 
\begin{align}\label{eq:llower}
    \|\Ab \xb\|^2_2 \geq \lambda^2 \geq \sigma^2_{p+1}(\Ab)+ \frac{2\sigma_{p+1}(\Ab)\|\Ab \|_2}{n^{c_2/2-1}}+\frac{\|\Ab \|^2_2}{n^{c_2-2}}.
\end{align}
 Let $\bv{Z}' \in \R^{n \times p+1}$ such that the first $p$ columns of $\bv{Z}'$ are the columns of $\bv{Z}_p$ and $(p+1)$th column is $\bv{x}$. From Theorem~\ref{thm:converge}, we have that $\Ab\Zb_p=\Zb_p\Tilde{\bv{\Lambda}}_p+\bv{E}$ where $\|\bv{E} \|_2 \leq \frac{\| \Ab\|_2}{n^{c_2}}$. Then, we have $\|\bv{E} \|_F \leq \frac{\sqrt{n}\| \Ab\|_2}{n^{c_2}} \leq \frac{\|\Ab \|_2}{n^{c_2-0.5}}$ and thus, we get: 
\begin{align}\label{eq:azp}
    \| \Ab\Zb_p\|_F^2 &\geq (\|\Zb_p\Tilde{\bv{\Lambda}}_p \|_F- \|\bv{E}\|_F)^2 \nonumber \\
    &\geq (\|\Zb_p\Tilde{\bv{\Lambda}}_p \|_F- \frac{\|\Ab\|_2}{n^{c_2-0.5}})^2 \\
    &\geq \|\Zb_p\Tilde{\bv{\Lambda}}_p \|_F^2 +\frac{\|\Ab\|^2_2}{n^{2c_2-1}}-2\|\Zb_p\Tilde{\bv{\Lambda}}_p \|_F\frac{\|\Ab\|_2}{n^{c_2-0.5}} \nonumber\\
    &\geq \|\Zb_p\Tilde{\bv{\Lambda}}_p \|_F^2+\frac{\|\Ab\|^2_2}{n^{2c_2-1}} -\frac{2\|\Ab \|^2_2}{n^{c_2-1}} \nonumber\\
    &=\sum_{i=1}^p \sigma_i^2(\Qb\Qb^T\Ab\Qb\Qb^T)+\frac{\|\Ab\|^2_2}{n^{2c_2-1}}-\frac{2\|\Ab \|^2_2}{n^{c_2-1}},
\end{align}
for some constant $C>0$. The first inequality above follows from the triangle inequality, the second from the bound $\|\bv{E} \|_F \leq \frac{\|\Ab \|_2}{n^{c_2-0.5}}$ and the third from expanding the quadratic expression. The fourth bound follows from $\|\Zb_p\Tilde{\bv{\Lambda}}_p \|_F =\|\Tilde{\bv{\Lambda}}_p \|_F \leq \sqrt{n}\|\Tilde{\bv{\Lambda}}_p \|_2 \leq \sqrt{n}\|\Ab \|_2$ (where we use the fact that $\|\Qb\Qb^T\Ab\Qb\Qb^T \|_2 \leq \|\Ab \|_2$ and $\Zb_p$ is an orthonormal matrix) which gives us $\|\Zb_p\Tilde{\bv{\Lambda}}_p \|_F\|\bv{E}\|_F \leq \frac{\|\Ab \|^2_2}{n^{c_2-1}}$. The final step follows from the fact that $\Zb_p$ and $\Tilde{\bv{\Lambda}}_p$ contain the top $p$ eigenvectors and eigenvalues of $\Qb\Qb^T\Ab\Qb\Qb^T$.
  By the Pythagorean theorem, we have 
 \begin{align}\label{eq:az'}
     \|\Ab\bv{Z}' \|_F^2=\|\Ab\Zb_p \|_F^2+\|\Ab \bv{x} \|_2^2 &> \sum_{i=1}^p \sigma_i^2(\Qb\Qb^T\Ab\Qb\Qb^T)+\frac{\|\Ab\|^2_2}{n^{2c_2-1}}-\frac{2\|\Ab \|^2_2}{n^{c_2-1}}\notag\\
     &+\sigma^2_{p+1}(\Ab)+\frac{2\sigma_{p+1}(\Ab)\|\Ab \|_2}{n^{c_2/2-1}}+\frac{\| \Ab\|^2_2}{n^{c_2-2}},
 \end{align}
where the last inequality follows from the lower bounds in~\eqref{eq:llower} and~\eqref{eq:azp}. From Theorem~\ref{thm:eig_error}, there exists a permutation $S:[p] \rightarrow [p]$ such that for every $i \in [p]$ (and for some constant $C'>0$), $|\lambda_i(\Ab)-\lambda_{S(i)}(\Qb\Qb^T\Ab\Qb\Qb^T)| \leq \frac{\|\Ab \|_2}{n^{c_2}}$ or 
\begin{align*}
    |\lambda^2_i(\Ab)-\lambda^2_{S(i)}(\Qb\Qb^T\Ab\Qb\Qb^T) |\leq \frac{|\lambda_i(\Ab)+\lambda_{S(i)}(\Qb\Qb^T\Ab\Qb\Qb^T)|\|\Ab \|_2}{n^{c_2}} \leq \frac{2\|\Ab \|^2_2}{n^{c_2}},
\end{align*}
where we upper bounded $\lambda_i(\Ab)$ and $\lambda_{S(i)}(\Qb\Qb^T\Ab\Qb\Qb^T)$ by $\|\Ab\|_2$. Thus, we have $\lambda_i^2(\Ab) \leq \lambda^2_{S(i)}(\Qb\Qb^T\Ab\Qb\Qb^T)+\frac{2\|\Ab \|_2^2}{n^{c_2}}$ or $\sigma^2(\Ab) \leq \sigma^2_{S(i)}(\Qb\Qb^T\Ab\Qb\Qb^T)+\frac{2\|\Ab \|_2^2}{n^{c_2}}$ for every $i \in [p]$.  Adding up both sides over $i \in [p]$ we get that $\sum_{i=1}^p \sigma^2_i(\Ab) \leq \sum_{i=1}^p \sigma^2_{i}(\Qb\Qb^T\Ab\Qb\Qb^T)+\frac{2p\|\Ab \|_2^2}{n^{c_2}}\leq \sum_{i=1}^p \sigma^2_{i}(\Qb\Qb^T\Ab\Qb\Qb^T)+\frac{2\|\Ab \|_2^2}{n^{c_2-1}}$. So, we get:
\begin{align}\label{eq:sigmap}
   \sum_{i=1}^{p}\sigma_i^2(\Ab)+\sigma^2_{p+1}(\Ab) &\leq \sum_{i=1}^p \sigma^2_{i}(\Qb\Qb^T\Ab\Qb\Qb^T)+\frac{2\|\Ab \|_2^2}{n^{c_2-1}} + \sigma^2_{p+1}(\Ab) \nonumber\\
   &= \sum_{i=1}^p \sigma^2_{i}(\Qb\Qb^T\Ab\Qb\Qb^T) +\frac{\|\Ab\|^2_2}{n^{2c_2-1}}-\frac{2\|\Ab \|^2_2}{n^{c_2-1}} +\sigma^2_{p+1}(\Ab)+\frac{\| \Ab\|^2_2}{n^{c_2-2}} \nonumber\\
   &+\frac{2\sigma_{p+1}(\Ab)\| \Ab\|_2}{n^{c_2/2-1}}-(\frac{\| \Ab\|^2_2}{n^{c_2-2}}+\frac{\|\Ab\|^2_2}{n^{2c_2-1}}-\frac{4\|\Ab \|^2_2}{n^{c_2-1}}+\frac{2\sigma_{p+1}(\Ab)\| \Ab\|_2}{n^{c_2/2-1}}).
\end{align}
For large enough $n$ and $c_2$, we have $\frac{\| \Ab\|^2_2}{n^{c_2-2}}+\frac{\|\Ab\|^2_2}{n^{2c_2-1}}-\frac{4\|\Ab \|^2_2}{n^{c_2-1}}+\frac{2\sigma_{p+1}(\Ab)\| \Ab\|_2}{n^{c_2/2-1}}>0$. Thus from the lower bound~\eqref{eq:az'} on $\|\Ab\Zb' \|_F^2$ and from~\eqref{eq:sigmap}, we get $\sum_{i=1}^{p+1}\sigma^2_{i}(\Ab) < \|\Ab\Zb' \|_F^2$. But, by the minmax principle for singular values, we have $\max_{\bv{V} \in \R^{n \times (p+1)},\bv{V}^T\bv{V}=\bv{I}}\| \Ab \bv{V}\|_F^2=\sum_{i=1}^{p+1}\sigma_i^2(\Ab)$, which results in a contradiction as $\Zb'$ is an $n \times (p+1)$ orthonormal matrix. Thus, we must have $\| (\Ib-\Zb_p\Zb_p^T) \Ab (\Ib-\Zb_p\Zb_p^T)\|_2 \leq \sigma_{p+1}(\Ab)+\frac{\|\Ab \|_2}{n^{c_2/2-1}}$.
\end{proof}

We now prove our main result which bounds the spectral norm of the deflated matrix $\Pb\Ab\Pb$.

\begin{theorem}\label{thm:sigma_S}
Consider the setting of Theorem~\ref{thm:converge} and let the error bound of Theorem~\ref{thm:converge} hold for some constant $c_3>0$. Let Algorithm~\ref{alg:krylov} be run with the inputs $\Ab \in \R^\n$, block size $l \in [n]$ and $\beta=c_3/2$. Let $S$ be the set of indices as defined in Algorithm~\ref{alg:krylov} such that for any $i \in S$, the eigenvector $\vb_i$ and corresponding eigenvalue $\lambda_i(\Qb^T\Ab\Qb)$ of $\Qb^T\Ab\Qb$ satisfies $\|\Ab\Qb\vb_i-\lambda_i(\Qb^T\Ab\Qb)\Qb\vb_i \|_2 \leq \frac{\|\Ab \|_2}{n^{c_1}}$ for some constant $c_1>0$.  Let $\Pb=\bv{I}-\Zb_S\Zb_S^T$ where $\Zb_S=\Qb\Vb_S$ is the output of Algorithm~\ref{alg:krylov}. Then, for come constants $c_2,c'>0$, we have:
\begin{align*}
   \|\Pb\Ab\Pb \|_2 \leq 2\sigma_{l+1}(\Ab)+\frac{\|\Ab \|_2}{n^{c_2}},
\end{align*}
with probability at least $1-\frac{1}{n^{c'}}$.
\end{theorem}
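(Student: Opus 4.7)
The plan is to split into cases based on whether the top-$l$ spectrum of $\Ab$ has a significant gap, and in the non-trivial case to show that the set $S$ of converged indices returned by Algorithm~\ref{alg:krylov} necessarily contains the first $p$ indices produced by Lemma~\ref{Lem:eig_vec}, so that deflating by $\Zb_S$ is at least as effective as deflating by $\Zb_p$.

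Let $\alpha = \max(\|\Ab\|_2/n^{c/2}, \sigma_{l+1}(\Ab))$ be as in Lemma~\ref{Lem:eig_vec}. If no $k\in[l]$ satisfies $\sigma_k(\Ab)\ge 2\alpha$, then $\|\Ab\|_2=\sigma_1(\Ab)<2\alpha$. The sub-case $\alpha = \|\Ab\|_2/n^{c/2}$ would force $\|\Ab\|_2<2\|\Ab\|_2/n^{c/2}$, impossible for large $n$ unless $\Ab=0$, so $\alpha=\sigma_{l+1}(\Ab)$ and thus $\|\Pb\Ab\Pb\|_2\le\|\Ab\|_2<2\sigma_{l+1}(\Ab)$ using that $\Pb$ is an orthogonal projection. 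This disposes of the degenerate case.

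Otherwise such a $k$ exists and Lemma~\ref{Lem:eig_vec} produces a $p\in[k,l]$ with $\sigma_p(\Ab)\ge\tfrac{3}{2}\alpha$. Theorem~\ref{thm:converge} then gives $\|\Ab\Zb_p-\Zb_p\tilde{\bv\Lambda}_p\|_2\le \|\Ab\|_2/n^{c_3}$. Since each column's residual is bounded by the spectral norm of the residual matrix, for every $j\in[p]$,
\begin{align*}
\|\Ab\Qb\vb_j-\lambda_j(\Tb)\Qb\vb_j\|_2 \le \|\Ab\|_2/n^{c_3} \le \|\Ab\|_2/n^{c_3/2}=\|\Ab\|_2/n^{\beta},
\end{align*}
using $\beta=c_3/2$. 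Hence Algorithm~\ref{alg:krylov}'s acceptance test passes for every $j\in[p]$, so $[p]\subseteq S$ and $\range(\Zb_p)\subseteq\range(\Zb_S)$. This gives $\Pb\Zb_p=0$ and $\Zb_p^T\Pb=0$, so expanding $\Pb\Ab\Pb$ against the decomposition $I=(I-\Zb_p\Zb_p^T)+\Zb_p\Zb_p^T$ yields
\begin{align*}
\Pb\Ab\Pb=\Pb(I-\Zb_p\Zb_p^T)\Ab(I-\Zb_p\Zb_p^T)\Pb.
\end{align*}
Taking spectral norms and applying Lemma~\ref{lem:pap} bounds this by $\sigma_{p+1}(\Ab)+\|\Ab\|_2/n^{c_1}$, and finally $\sigma_{p+1}(\Ab)\le \sigma_{k+1}(\Ab)<2\alpha \le 2\sigma_{l+1}(\Ab)+2\|\Ab\|_2/n^{c/2}$. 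Collecting the two $\|\Ab\|_2/\poly(n)$ terms into a single $\|\Ab\|_2/n^{c_2}$ by choosing $c$ and $c_3$ sufficiently large relative to $c_2$ completes the proof.

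The one delicate point is verifying $[p]\subseteq S$: this is precisely where the choice $\beta=c_3/2$ buys the slack needed to convert Theorem~\ref{thm:converge}'s spectral-norm residual bound into a per-column residual guarantee matching the algorithm's acceptance threshold. Everything else is a routine combination of the projection identity $\Pb\Zb_p=0$, Lemma~\ref{lem:pap}, and the gap structure supplied by Lemma~\ref{Lem:eig_vec}.
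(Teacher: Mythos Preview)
Your proposal is correct and follows essentially the same approach as the paper's proof: both split into the same two cases based on whether a $k\in[l]$ with $\sigma_k(\Ab)\ge 2\alpha$ exists, both use Theorem~\ref{thm:converge} with $\beta=c_3/2$ to show $[p]\subseteq S$, and both conclude by invoking Lemma~\ref{lem:pap} together with $\sigma_{p+1}(\Ab)\le\sigma_{k+1}(\Ab)<2\alpha$. Your writeup is arguably a bit cleaner --- you bound column residuals directly by the spectral norm rather than passing through the Frobenius norm, and you make the projection identity $\Pb(I-\Zb_p\Zb_p^T)=\Pb$ explicit where the paper simply asserts $\|\Pb\Ab\Pb\|_2\le\|(I-\Zb_p\Zb_p^T)\Ab(I-\Zb_p\Zb_p^T)\|_2$ --- but the underlying argument is identical.
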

\begin{proof}

Observe that if $\bv{v}$ is an eigenvector of $\Qb^T \Ab\Qb$ (where $\Qb$ is an orthonormal basis of the Krylov subspace $\mathcal{K}_q$ in line 3 of Algorithm~\ref{alg:krylov}) with corresponding eigenvalue $\lambda$, then $\zb=\Qb \bv{v}$ is an eigenvector of $\Qb \Qb^T\Ab \Qb\Qb^T$ with eigenvalue $\lambda$ as $\Qb$ is an orthonormal matrix. Similarly, for any eigenvector $\zb$ of $\Qb \Qb^T\Ab \Qb\Qb^T$, there is a corresponding eigenvector $\Qb^T\zb$ of $\Qb^T \Ab\Qb$ with the same eigenvalue. Thus, we can interchangeably refer to the eigenvalues and eigenvectors of $\Qb \Qb^T\Ab \Qb\Qb^T$ instead of $\Qb^T \Ab\Qb$ in our proof. Let $\alpha=\max \left(\sigma_{l+1}(\Ab), \frac{\|\Ab \|_2}{n^{c/2}} \right)$ for some constant $c$ as defined in Lemma~\ref{Lem:eig_vec}. We will now prove the Lemma by considering the two cases below:

     \medskip

    \noindent\textbf{Case 1:} Let there be some $k \in [l]$ such that $\sigma_{k}(\Ab) \geq 2\alpha$ and $\sigma_{k+1}(\Ab) < 2\alpha$. Let $p \in [k,l]$ such that $\sigma_p(\Ab) \geq \frac{3}{2}\alpha$, $\sigma_p(\Ab)-\sigma_{p+1}(\Ab) \geq \frac{\|\Ab \|_2}{2n^{c/2+1}}$ as defined in Lemma~\ref{Lem:eig_vec}. Then, from Theorem~\ref{thm:converge}, for any $i \in [p]$, we have $\|\Ab\zb_i-\lambda_i(\Qb\Qb^T\Ab\Qb\Qb^T)\zb_i \|_2 \leq \| \Ab\Zb_p-\Zb_p\Tilde{\bv{\Lambda}}_p\|_F \leq \frac{\|\Ab \|_2}{n^{c_3}}$. Thus, as $\beta=c_3/2$ in Algorithm~\ref{alg:krylov} we have $\frac{\|\Ab \|_2}{n^{c_3}} <\frac{\|\Ab \|_2}{n^{\beta}}$ and the convergence condition in line \ref{alg1:if-condition} of Algorithm~\ref{alg:krylov} is always satisfied for the top $p$ eigenvectors of $\Qb \Qb^T\Ab \Qb\Qb^T$.  So, we must have $[p] \subseteq S$, and $\Zb_S$ must contain at least the top $p$ eigenvectors of $\Qb^T\Ab\Qb$. Thus, observe that using Lemma~\ref{lem:pap}, for some constant $c_4$, we have $\|\Pb \Ab \Pb \|_2 \leq \|(\Ib-\Zb_p\Zb_p^T)\Ab(\Ib-\Zb_p\Zb_p^T) \|_2 \leq \sigma_{p+1}(\Ab)+\frac{\|\Ab \|_2}{n^{c_4}}\leq \sigma_{k+1}(\Ab)+\frac{\|\Ab \|_2}{n^{c_4}}   \leq 2\alpha+\frac{\|\Ab \|_2}{n^{c_4}} \leq 2(\sigma_{l+1}(\Ab)+\frac{\|\Ab \|_2}{n^{c/2}})+\frac{\|\Ab \|_2}{n^{c_4}} \leq 2\sigma_{l+1}(\Ab)+\frac{2\|\Ab \|_2}{n^{c/2}}+\frac{\| \Ab\|_2}{n^{c_4}}$. So, for some constant $c_5>0$ we get:
\begin{align*}
    \|\Pb \Ab \Pb \|_2 \leq  2\sigma_{l+1}(\Ab)+\frac{\|\Ab \|_2}{n^{c_5}}.
\end{align*}

\medskip

    \noindent\textbf{Case 2:} If no such $k \in [l]$ exists, we will have $\sigma_{i}(\Ab) \leq 2\alpha$ for all $i \in [n]$. So, as $\Pb$ is a projection matrix, we have $\|\Pb\Ab\Pb \|_2 \leq \|\Ab \|_2 \leq 2\alpha \leq 2\sigma_{l+1}(\Ab)+\frac{2\|\Ab \|_2}{n^{c/2}}$. This completes the proof after choosingthe constant carefully.
\end{proof}

\subsection{Error Bounds for moment matching with deflation}\label{sec:deflate}

In this section, we prove the final error bounds for Algorithm~\ref{alg:sde}. We first prove the existence of a matrix $\Bb$ that is close in spectral norm to $\Ab$ such that the converged eigenvectors and eigenvalues of $\Qb\Qb^T\Ab\Qb\Qb^T$ as defined in line 7 of Algorithm~\ref{alg:krylov} are a subset of $\Bb$'s true eigenvectors and eigenvalues. First, we state a simplified version of a backward perturbation bound from~\cite{sun1995note}. 

\begin{theorem}[Theorem 3.1 of~\cite{sun1995note}]\label{thm:bkwd_ptr}
Let $\Ab \in \R^{n \times n}$ be a symmetric matrix. Let $\tilde{\bv{X}} \in \R^{n \times l}$ and $\tilde{\bv{\Lambda}} \in \R^{ n\times l}$ (where $\tilde{\bv{\Lambda}}$ is diagonal) be such that $\|\Ab\tilde{\bv{X}}-\tilde{\bv{X}}\tilde{\bv{\Lambda}}\|_2 \leq \Delta$ for some $\Delta>0$. Let $\hat{\bv{R}}=\Ab\tilde{\bv{X}}-\tilde{\bv{X}}\tilde{\bv{\Lambda}}$. Let $\tilde{\bv{X}}=\bv{P}\bv{H}$ be the polar decomposition of $\tilde{\bv{X}}$. Then, there exits a symmetric matrix $\bv{H} \in \R^{n \times n}$ such that $(\Ab+\bv{H})\bv{P}=\bv{P}\tilde{\bv{\Lambda}}$ and $\|\bv{H} \|_F \leq \frac{\sqrt{\|\hat{\bv{R}} \|_F^2+\| \bv{P}^{\perp}\hat{\bv{R}}\|_F^2}}{\sigma_{\min}(\tilde{\bv{X}})}$ where $\bv{P}^{\perp}=\bv{I}-\bv{P}\bv{P}^T$.
\end{theorem}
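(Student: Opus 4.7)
The plan is to construct the symmetric perturbation $\bv{H}$ explicitly by fixing its blocks relative to the orthogonal decomposition $\R^n = \range(\bv{P})\oplus\range(\bv{P})^{\perp}$, and then to bound each block in terms of the residual $\hat{\bv{R}}$. Let $\bv{P}_{\perp}\in\R^{n\times (n-l)}$ be an orthonormal basis for $\range(\bv{P})^{\perp}$, so that $[\bv{P},\bv{P}_{\perp}]$ is orthogonal, and denote the polar factor by $\bv{G}$ (so $\tilde{\bv{X}}=\bv{P}\bv{G}$ with $\bv{G}$ symmetric positive semidefinite and $\lambda_{\min}(\bv{G})=\sigma_{\min}(\tilde{\bv{X}})$). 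The requirement $(\Ab+\bv{H})\bv{P}=\bv{P}\tilde{\bv{\Lambda}}$ with symmetric $\bv{H}$ forces
\[
\bv{Y}:=\bv{P}^T\bv{H}\bv{P}=\tilde{\bv{\Lambda}}-\bv{P}^T\Ab\bv{P}, \qquad \bv{Z}:=\bv{P}_{\perp}^T\bv{H}\bv{P}=-\bv{P}_{\perp}^T\Ab\bv{P},
\]
and since $\tilde{\bv{\Lambda}}$ is diagonal and $\bv{P}^T\Ab\bv{P}$ is symmetric, $\bv{Y}$ is automatically symmetric. I would then set the remaining block $\bv{P}_{\perp}^T\bv{H}\bv{P}_{\perp}$ to zero (this minimizes the Frobenius norm) and take $\bv{H}=\bv{P}\bv{Y}\bv{P}^T+\bv{P}_{\perp}\bv{Z}\bv{P}^T+\bv{P}\bv{Z}^T\bv{P}_{\perp}^T$, which is symmetric and satisfies $(\Ab+\bv{H})\bv{P}=\bv{P}\tilde{\bv{\Lambda}}$ by a direct check. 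Orthonormality of $[\bv{P},\bv{P}_{\perp}]$ gives $\|\bv{H}\|_F^2=\|\bv{Y}\|_F^2+2\|\bv{Z}\|_F^2$.

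Next I would plug $\tilde{\bv{X}}=\bv{P}\bv{G}$ into $\hat{\bv{R}}=\Ab\tilde{\bv{X}}-\tilde{\bv{X}}\tilde{\bv{\Lambda}}$ and project onto $\bv{P}$ and $\bv{P}_{\perp}$ to obtain
\[
\bv{P}^T\hat{\bv{R}}=(\bv{P}^T\Ab\bv{P})\bv{G}-\bv{G}\tilde{\bv{\Lambda}}=-\bv{Y}\bv{G}+(\tilde{\bv{\Lambda}}\bv{G}-\bv{G}\tilde{\bv{\Lambda}}), \qquad \bv{P}_{\perp}^T\hat{\bv{R}}=(\bv{P}_{\perp}^T\Ab\bv{P})\bv{G}=-\bv{Z}\bv{G}.
\]
The off-diagonal block is immediate from the second identity: $\|\bv{Z}\|_F\le\|\bv{P}_{\perp}^T\hat{\bv{R}}\|_F/\sigma_{\min}(\tilde{\bv{X}})$, and since the projector $\bv{P}^{\perp}=\bv{I}-\bv{P}\bv{P}^T$ satisfies $\bv{P}^{\perp}\hat{\bv{R}}=\bv{P}_{\perp}\bv{P}_{\perp}^T\hat{\bv{R}}$, this equals $\|\bv{P}^{\perp}\hat{\bv{R}}\|_F/\sigma_{\min}(\tilde{\bv{X}})$.

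The hard part will be the diagonal block $\bv{Y}$, where the commutator $\tilde{\bv{\Lambda}}\bv{G}-\bv{G}\tilde{\bv{\Lambda}}$ obstructs a direct bound. The trick is to add the first identity above to its transpose: since both $\bv{Y}$ and $\bv{G}$ are symmetric, the commutator cancels and one obtains the Lyapunov-type equation
\[
\bv{G}\bv{Y}+\bv{Y}\bv{G}=-\bigl(\bv{P}^T\hat{\bv{R}}+\hat{\bv{R}}^T\bv{P}\bigr).
\]
Taking the Frobenius inner product with $\bv{Y}$ and using $\bv{G}\succeq\sigma_{\min}(\tilde{\bv{X}})\,\bv{I}$ yields $2\sigma_{\min}(\tilde{\bv{X}})\|\bv{Y}\|_F^2\le\|\bv{P}^T\hat{\bv{R}}+\hat{\bv{R}}^T\bv{P}\|_F\|\bv{Y}\|_F\le 2\|\bv{P}^T\hat{\bv{R}}\|_F\|\bv{Y}\|_F$, hence $\|\bv{Y}\|_F\le\|\bv{P}^T\hat{\bv{R}}\|_F/\sigma_{\min}(\tilde{\bv{X}})$. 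Assembling $\|\bv{H}\|_F^2=\|\bv{Y}\|_F^2+2\|\bv{Z}\|_F^2$ and using the Pythagorean identity $\|\bv{P}^T\hat{\bv{R}}\|_F^2+\|\bv{P}_{\perp}^T\hat{\bv{R}}\|_F^2=\|\hat{\bv{R}}\|_F^2$ then delivers the claimed bound. The main obstacle is spotting this Lyapunov cancellation: it is what forces us to use the polar decomposition (rather than, e.g., a QR factorization) and what explains the appearance of $\sigma_{\min}(\tilde{\bv{X}})$ rather than some larger quantity in the denominator.
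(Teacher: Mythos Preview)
The paper does not prove this statement; it is quoted as Theorem~3.1 of Sun~\cite{sun1995note} and used as a black box in the proof of Lemma~\ref{Lem:bkwd_err}. So there is no ``paper's own proof'' to compare against.

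That said, your argument is correct and is essentially the standard construction behind Sun's result. The explicit block decomposition relative to $[\bv{P},\bv{P}_{\perp}]$, the identification $\bv{P}_{\perp}^T\hat{\bv{R}}=-\bv{Z}\bv{G}$, and the Lyapunov cancellation $\bv{G}\bv{Y}+\bv{Y}\bv{G}=-(\bv{P}^T\hat{\bv{R}}+\hat{\bv{R}}^T\bv{P})$ are exactly the steps Sun uses; the Frobenius inner-product bound you derive, $\|\bv{Y}\|_F\le\|\bv{P}^T\hat{\bv{R}}\|_F/\sigma_{\min}(\tilde{\bv{X}})$, together with the Pythagorean identity, reproduces the stated bound precisely. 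Two minor remarks: (i) you implicitly assume $\tilde{\bv{X}}$ has full column rank so that $\bv{P}$ has orthonormal columns and $\bv{G}\succ 0$; this is harmless since otherwise $\sigma_{\min}(\tilde{\bv{X}})=0$ and the bound is vacuous; (ii) in the paper's application (Lemma~\ref{Lem:bkwd_err}) one always has $\tilde{\bv{X}}=\bv{Z}_S$ orthonormal, so $\bv{G}=\bv{I}$ and the commutator term vanishes identically, making the argument even simpler in that special case.
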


Using Theorem~\ref{thm:bkwd_ptr}, we state a backward error bound for the deflation algorithm which we will use in the final error analysis.

\begin{lemma}[Wasserstein Error using Backward Error Bound]\label{Lem:bkwd_err}
 Let $\Ab \in \R^\n$ be a symmetric matrix. Let $\Zb_S$ and $\Tilde{\bv{\Lambda}}_S$ be the outputs of Algorithm~\ref{alg:krylov} with $\Ab$ as input. Then there exists a symmetric matrix $\bv{B} \in \R^{n \times n}$ such that 
  \begin{align*}
      \bv{B}\bv{Z}_S=\bv{Z}_S\Tilde{\bv{\Lambda}}_S \text{ and } \|\Ab-\bv{B} \|_2 \leq\frac{\|\Ab \|_2}{n^{\beta-1}},
  \end{align*}
  where $\beta>0$ is the constant defined in Algorithm~\ref{alg:krylov}.
\end{lemma}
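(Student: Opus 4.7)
The plan is to apply the backward perturbation bound of Theorem~\ref{thm:bkwd_ptr} directly with $\tilde{\bv{X}} = \bv{Z}_S$ and $\tilde{\bv{\Lambda}} = \tilde{\bv{\Lambda}}_S$. The first thing I would verify is that $\bv{Z}_S$ has orthonormal columns. Since $\bv{Z}_S = \bv{Q}\bv{V}_S$, where $\bv{Q} \in \R^{n \times r}$ is an orthonormal basis of the Krylov subspace computed in line~\ref{line:q} of Algorithm~\ref{alg:krylov} and $\bv{V}_S$ consists of a subset of the eigenvectors of the symmetric matrix $\bv{T} = \bv{Q}^T\bv{A}\bv{Q}$, we have $\bv{Z}_S^T \bv{Z}_S = \bv{V}_S^T \bv{Q}^T \bv{Q} \bv{V}_S = \bv{V}_S^T \bv{V}_S = \bv{I}$. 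In particular, the polar decomposition of $\bv{Z}_S$ is simply $\bv{Z}_S = \bv{Z}_S \cdot \bv{I}$, so in the notation of Theorem~\ref{thm:bkwd_ptr} we have polar factor $\bv{P} = \bv{Z}_S$ and $\sigma_{\min}(\bv{Z}_S) = 1$.

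Next, I would bound the residual matrix $\hat{\bv{R}} = \bv{A}\bv{Z}_S - \bv{Z}_S \tilde{\bv{\Lambda}}_S$. By construction, each column of $\bv{Z}_S$ corresponds to an index $j \in S$, and the inclusion criterion in line~\ref{alg1:if-condition} of Algorithm~\ref{alg:krylov} guarantees that $\|\bv{A}\bv{Q}\bv{v}_j - \lambda_j(\bv{T})\bv{Q}\bv{v}_j\|_2 \leq \|\bv{A}\|_2/n^{\beta}$. Hence every column of $\hat{\bv{R}}$ has Euclidean norm at most $\|\bv{A}\|_2/n^{\beta}$, and because $|S| \leq n$,
\begin{equation*}
\|\hat{\bv{R}}\|_F^2 \;\leq\; |S| \cdot \frac{\|\bv{A}\|_2^2}{n^{2\beta}} \;\leq\; \frac{\|\bv{A}\|_2^2}{n^{2\beta - 1}}.
\end{equation*}

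Applying Theorem~\ref{thm:bkwd_ptr} (and renaming its perturbation matrix to $\bv{E}$ to avoid collision with the polar factor notation) then yields a symmetric $\bv{E} \in \R^{n \times n}$ with $(\bv{A} + \bv{E})\bv{Z}_S = \bv{Z}_S \tilde{\bv{\Lambda}}_S$ and
\begin{equation*}
\|\bv{E}\|_F \;\leq\; \frac{\sqrt{\|\hat{\bv{R}}\|_F^2 + \|\bv{P}^{\perp}\hat{\bv{R}}\|_F^2}}{\sigma_{\min}(\bv{Z}_S)} \;\leq\; \sqrt{2}\,\|\hat{\bv{R}}\|_F \;\leq\; \frac{\sqrt{2}\,\|\bv{A}\|_2}{n^{\beta - 1/2}},
\end{equation*}
using $\|\bv{P}^{\perp}\hat{\bv{R}}\|_F \leq \|\hat{\bv{R}}\|_F$ (as $\bv{P}^{\perp}$ is a projection) and $\sigma_{\min}(\bv{Z}_S) = 1$. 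Setting $\bv{B} = \bv{A} + \bv{E}$ gives the desired eigenrelation $\bv{B}\bv{Z}_S = \bv{Z}_S \tilde{\bv{\Lambda}}_S$, and $\|\bv{A} - \bv{B}\|_2 = \|\bv{E}\|_2 \leq \|\bv{E}\|_F \leq \sqrt{2}\,\|\bv{A}\|_2/n^{\beta - 1/2} \leq \|\bv{A}\|_2/n^{\beta - 1}$ for all sufficiently large $n$.

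There is no real obstacle here beyond bookkeeping: the proof is essentially a direct invocation of Theorem~\ref{thm:bkwd_ptr}. The only care needed is to (i) check that $\bv{Z}_S$ has orthonormal columns so the polar decomposition is trivial and $\sigma_{\min}(\bv{Z}_S) = 1$, and (ii) correctly aggregate the per-column residual bound from line~\ref{alg1:if-condition} of Algorithm~\ref{alg:krylov} into a Frobenius norm bound, absorbing the $\sqrt{|S|} \leq \sqrt{n}$ factor into the $n^{\beta - 1}$ denominator.
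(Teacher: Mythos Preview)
Your proposal is correct and follows essentially the same approach as the paper: both invoke Theorem~\ref{thm:bkwd_ptr} with $\tilde{\bv{X}} = \bv{Z}_S$, observe that $\bv{Z}_S$ is orthonormal so the polar factor is $\bv{Z}_S$ itself with $\sigma_{\min}(\bv{Z}_S)=1$, bound $\|\hat{\bv{R}}\|_F^2 \leq |S|\,\|\Ab\|_2^2/n^{2\beta} \leq \|\Ab\|_2^2/n^{2\beta-1}$ from the per-column residual condition in line~\ref{alg1:if-condition}, and then absorb the resulting $\sqrt{2}/n^{\beta-1/2}$ bound into $1/n^{\beta-1}$.
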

\begin{proof}
We will apply the backward perturbation error bound of Theorem~\ref{thm:bkwd_ptr} to prove the existence of $\Bb$. Following Theorem~\ref{thm:bkwd_ptr}, we have $\Tilde{\bv{X}}_1=\bv{Z}_S$ and hence, $\bv{P}_1=\bv{Z}_S$ since $\bv{Z}_S$ is an orthonormal matrix and its polar decomposition is equal to itself. Next, observe that $\sigma_{\min}(\bv{Z}_S)=1$ and $\|\hat{\bv{R}} \|_F^2=\|\bv{A}\bv{Z}_S-\bv{Z}_S \Tilde{\bv{\Lambda}}_S\|^2_F = \sum_{i \in S} \|\Ab\zb_i-\lambda_i(\Qb^T\Ab\Qb)\zb_i \|^2_2 \leq \frac{\|\Ab \|^2_2}{n^{2\beta-1}}$ where the last bound follows from the fact that $\|\Ab\zb_i-\lambda_i(\Qb^T\Ab\Qb)\zb_i \|_2 \leq \frac{\|\Ab \|_2}{n^{\beta}}$ as stated in line \ref{alg1:if-condition} of Algorithm~\ref{alg:krylov}. We also have $\|(\bv{Z}_S)_{\perp}(\bv{Z}_S)^T_{\perp}\hat{\bv{R}} \|_F^2=\|(\bv{I}-\Zb_S\Zb_S^T)(\bv{A}\Zb_S-\bv{Z}_S\Tilde{\bv{\Lambda}}_S) \|_F^2 \leq \|\bv{A}\bv{Z}_S-\bv{Z}_S \Tilde{\bv{\Lambda}}_S\|^2_F \leq \frac{\|\Ab \|^2_2}{n^{2\beta-1}}$ where the last step follows as $\bv{I}-\Zb_S\Zb_S^T$ is a projection matrix. Thus, there exists a symmetric matrix $\bv{H}$ such that $(\bv{A}+\bv{H})\bv{Z}_S=\bv{Z}_S \Tilde{\bv{\Lambda}}_S$ and $\| \bv{H}\|_F \leq \frac{\sqrt{\|\hat{\bv{R}} \|_F^2+\|(\bv{Z}_S)_{\perp}\hat{\bv{R}} \|_F^2}}{\sigma_{\min}(\bv{Z})} \leq \frac{\|\Ab \|_2}{n^{\beta-1}}$. Setting $\bv{B}=\bv{A+H}$ gives us the required matrix.
\end{proof}

We now prove that the output of Algorithm 1 of~\cite{braverman:2022} using the modified moments $\hat{\tau}_i$ as inputs for $i \in N$ (as defined in lines 4-6 of Algorithm~\ref{alg:sde}) must be close in the Wasserstein distance to the spectral density defined by the non-zero eigenvalues of the deflated matrix $\Pb\Ab\Pb$.

\begin{lemma}[Modified SDE bound for deflated matrix]\label{lem:mma}
Let $\Ab \in \R^{n \times n}$ be a rank $n-r $ symmetric matrix for some $r \in [n]$ with spectral density $s_{\Ab}(x)$ such that $|\lambda_1(\Ab)| \leq 1$. Let $\epsilon \in (0,1)$, $N= O\left(\frac{1}{\epsilon}\right)$ and $\tilde{\tau}_1, \ldots, \tilde{\tau}_N$ be the estimates of the top $N$ normalized Chebyshev moments of $\Ab$ estimated using Algorithm 2 of~\cite{braverman:2022}. Define $\hat{\tau}_i=\frac{1}{n-r}(n\tilde{\tau}_i-r\Bar{T}_i(0))$ for $i \in [N]$ where $\Bar{T}_i(0)$ is the $i$\tth normalized Chebyshev polynomial. Let the density function $q(x)$ be the output of Algorithm 1 of~\cite{braverman:2022} with $\hat{\tau}_1, \ldots, \hat{\tau}_N$ as the inputs for the moment estimates. Let $s'_{\Ab}(x)=\frac{1}{n-r}\sum_{j=1}^{n-r}\delta(x-\lambda_j(\Ab))$ be the probability density defined by the top $n-r$ eigenvalues of $\Ab$, $\lambda_1(\Ab), \ldots, \lambda_{n-r}(\Ab)$. Then, with probability at least $1-\delta$, we have:  $$\W_1(s'_{\Ab},q) \leq \frac{n\epsilon}{n-r}.$$ 
Also, estimating the density function $q$ using Algorithm 1 of~\cite{braverman:2022} and also using Algorithm 2 of~\cite{braverman:2022} as a subroutine to estimate the moments, requires $O\left(\frac{b}{\epsilon}\right)$ matrix-vector products where $b=\max(1,\frac{C'}{n\epsilon^2}\log^2 \frac{1}{\epsilon \delta}\log^2 \frac{1}{\epsilon})$. 
\end{lemma}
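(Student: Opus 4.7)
\bigskip

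\noindent\textbf{Proof proposal for Lemma~\ref{lem:mma}.}
The plan is to show that the modified moments $\hat{\tau}_i$ are approximations of the true normalized Chebyshev moments of $s'_{\Ab}$, and then invoke the guarantees of Algorithms~1 and~2 of~\cite{braverman:2022} applied to $s'_{\Ab}$ in place of $s_{\Ab}$. First, since $\Ab$ has rank $n-r$, exactly $r$ of its eigenvalues equal $0$, so the full density decomposes as $s_{\Ab}(x) = \tfrac{n-r}{n}\, s'_{\Ab}(x) + \tfrac{r}{n}\,\delta(x)$. Writing the $i$\textsuperscript{th} normalized Chebyshev moment of $s_{\Ab}$ as $\tau_i = \tfrac{1}{n}\sum_{j=1}^n \Bar{T}_i(\lambda_j(\Ab))$ and the $i$\textsuperscript{th} normalized Chebyshev moment of $s'_{\Ab}$ as $\tau'_i = \tfrac{1}{n-r}\sum_{j=1}^{n-r}\Bar{T}_i(\lambda_j(\Ab))$, the decomposition above yields $\tau'_i = \tfrac{1}{n-r}\bigl(n\tau_i - r\Bar{T}_i(0)\bigr)$. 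This is exactly the linear transformation used in line~\ref{alg2:line5} of Algorithm~\ref{alg:sde} to define $\hat{\tau}_i$ from $\tilde{\tau}_i$, so $\hat{\tau}_i$ is the natural estimator of $\tau'_i$ obtained by substituting $\tilde{\tau}_i$ for $\tau_i$.

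Next, I would invoke the moment-estimation guarantee for Algorithm~2 of~\cite{braverman:2022}: with $O(b/\epsilon)$ matrix-vector products (where $b$ is as stated), it outputs estimates satisfying $|\tilde{\tau}_i - \tau_i| \le \epsilon$ for all $i \in [N]$ simultaneously, with probability at least $1-\delta$. Combining this with the linear relationship derived above,
\begin{align*}
|\hat{\tau}_i - \tau'_i| \;=\; \frac{n}{n-r}\,|\tilde{\tau}_i - \tau_i| \;\le\; \frac{n\epsilon}{n-r}, \qquad i \in [N].
\end{align*}
Thus the $\hat{\tau}_i$'s are good approximations to the true normalized Chebyshev moments of the density $s'_{\Ab}$, whose eigenvalues all lie in $[-1,1]$ since $|\lambda_1(\Ab)| \le 1$.

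Finally, I would apply the main guarantee for Algorithm~1 of~\cite{braverman:2022}, which states that if its input moments are within additive error $\epsilon'$ of the true normalized Chebyshev moments of some target density supported on $[-1,1]$, and if the number of moments is $N = \Theta(1/\epsilon')$, then the output density $q$ satisfies $W_1(\text{target}, q) \le O(\epsilon')$. Applying this with target $s'_{\Ab}$ and $\epsilon' = \tfrac{n\epsilon}{n-r}$ (and absorbing constants into the choice $c_1$ in the definition of $N$) immediately gives $W_1(s'_{\Ab}, q) \le \tfrac{n\epsilon}{n-r}$. The matrix-vector product count follows directly from the complexity of Algorithm~2 of~\cite{braverman:2022}, since the post-processing step in line~\ref{alg2:line5} is purely algebraic and requires no additional queries to $\Ab$.

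The main conceptual (but not technical) obstacle is verifying that the linear re-scaling $\hat{\tau}_i = \tfrac{1}{n-r}(n\tilde{\tau}_i - r\Bar{T}_i(0))$ correctly reconstructs the moments of $s'_{\Ab}$; the rest is a direct invocation of existing results from~\cite{braverman:2022}. No additional concentration or perturbation analysis is needed beyond what those results already provide.
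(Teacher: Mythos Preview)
Your approach is correct and essentially identical to the paper's: both identify the linear relation $\tau'_i = \tfrac{1}{n-r}(n\tau_i - r\Bar{T}_i(0))$, propagate the Hutchinson error through it, and then feed the result into the moment-matching machinery of~\cite{braverman:2022}. One place where you should tighten the argument: the black-box you invoke for Algorithm~1 is not quite stated in~\cite{braverman:2022} in the form ``moment error $\epsilon' \Rightarrow W_1 \le O(\epsilon')$''; Lemma~3.1 there incurs a harmonic sum $\sum_{i=1}^N 1/i \approx \log N$, so one needs moment accuracy $\Delta = \tfrac{1}{N\ln(eN)}$ (from Lemma~4.2) rather than just $\epsilon$, and the paper accordingly re-traces the proof of Lemma~3.4 of~\cite{braverman:2022} with $s'_{\Ab}$ replacing $s_{\Ab}$ rather than citing it as a black box.
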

\begin{proof}
 $\frac{1}{n}\tr{(\Bar{T}_i(\Ab))}=\frac{1}{n}\sum_{j=1}^n\Bar{T}_i(\lambda_j(\Ab))=\frac{1}{n}\sum_{j=1}^{n-r}\Bar{T}_i(\lambda_j(\Ab))+\frac{r}{n}\Bar{T}_i(0)$ for $i \in [N]$ are the top $N$ normalized Chebyshev moments of $s_{\Ab}(x)=\frac{1}{n}\sum_{i=1}^n \delta(x-\lambda_i(\Ab))$. Then, the $i$\tth normalized Chebyshev moment of $s'_{\Ab}(x)$ is given by $\tau'_i= \frac{1}{n-r}\sum_{j=1}^{n-r}\Bar{T}_i(\lambda_j(\Ab))= \frac{n}{n-r}\cdot \frac{1}{n}\sum_{j=1}^{n}\Bar{T}_i(\lambda_j(\Ab))-\frac{1}{n-r} \sum_{j=n-r+1}^{n}\Bar{T}_i(\lambda_j(\Ab))= \frac{n}{n-r}\cdot \frac{1}{n}\tr(\Bar{T}_i(\Ab))-\frac{r}{n-r}\Bar{T}_i(0)$.

By setting $\Delta=\frac{1}{N \ln (eN)}$ (where $N$ is the number of Chebyshev moments estimated), using Lemma 4.2 of~\cite{braverman:2022}, we get that with $\Ab$ as the input, the normalized Chebyshev moment estimates $\hat{\tau}_i$ returned by Algorithm 2 of~\cite{braverman:2022} must satisfy $\left|\tilde{\tau}_i-\frac{1}{n}\tr{(\Bar{T}_i(\Ab))} \right| \leq \frac{1}{N \ln (eN)}$ for all $i \in [N]$.  So, we have: 
\begin{align}\label{Eq:tau}
    \left|\hat{\tau}_i-\tau'_i \right| &= \left|\frac{n}{n-r}\tilde{\tau}_i-\frac{r}{n-r}\Bar{T}_i(0)-\frac{1}{n-r}\sum_{j=1}^{n-r}\Bar{T}_i(\lambda_j(\Ab)) \right| \nonumber\\
    &= \frac{n}{n-r} \left|\tilde{\tau}_i- \frac{1}{n}\tr(\Bar{T}_i(\Ab)) \right| \nonumber\\
    &\leq \frac{n}{(n-r)N \ln (eN)},
\end{align}
where the second step follows from the fact that $\lambda_{n-r+1}(\Ab)=\ldots =\lambda_n(\Ab)=0$ by assumption. The density $q$ returned by Algorithm 1 of~\cite{braverman:2022} is defined on a $(d+1)-$length evenly spaced grid $X_d=[-1,-1+\frac{2}{d},\ldots,1-\frac{2}{d},1]$ for $d=\lceil N^3/2 \rceil$. Let  $z_{s'}=[\tau'_1/1, \tau'_2/2,\ldots,\tau'_N/N]$ and $z_{q}=\mathcal{T}_N^{d+1} q$ where $(\mathcal{T}_N^{d+1}) \in \R^{N \times (d+1)}$ such that $(\mathcal{T}_N^{d+1})_{ij}=\bar{T}_{i}(-1+\frac{2j}{d})/i$ for $i \in [N]$, $j \in \{0,1,\ldots,d\}$. Also, let $z=[\hat{\tau}_1, \hat{\tau}_2/2, \ldots, \hat{\tau}_N/N].$ Then, using triangle inequality, we have 
\begin{align*}
    \| z_q- z_{s'}\|_1=\|z_q-z \|_1+\|z- z_{s'}\|_1.
\end{align*}
Now, $\|z- z_{s'}\|_1 \leq \sum_{i=1}^N (\hat{\tau}_i-\tau'_i)/i \leq \frac{n}{(n-r)N \ln (eN)} \cdot H_n \leq \frac{n}{N(n-r)}$ where we use~\eqref{Eq:tau} to bound $\hat{\tau}_i-\tau'_i$ and $H_n$ is the $n$\tth harmonic number. Next, consider the following distribution $q^*$ on $X_d$ as defined in Lemma 3.4 of~\cite{braverman:2022}:
\begin{align*}
    q^*(x)=\frac{1}{n-r}\sum_{j=1}^{n-r}\delta(x-\argmin_{p \in X_d} |p-\lambda_j(\Ab)|).
\end{align*}
$q^*$ is the distribution corresponding to moving the mass from each $\lambda_j(\Ab)$ to its nearest point on the grid $X_d$. We have $W_1(s'_{\Ab},q^*)\leq \frac{1}{d}$ due to the earth mover's distance interpretation
of the Wasserstein-$1$ distance. Let $z_{q^*}=[\langle q^*,\bar{T}_1 \rangle,\ldots, \langle q^*,\bar{T}_N \rangle/N]$. Now, from Line 3 of Algorithm 1 of~\cite{braverman:2022}, $q$ is defined as the density which minimizes $\|\mathcal{T}_N^{d+1}q-z \|_1$. Thus, we have $\|\mathcal{T}_N^{d+1}q-z  \|_1 \leq \|z_{q^*}-z \|_1$. Then, following the proof of Lemma 3.4 of~\cite{braverman:2022} exactly, we can upper bound $\|z_{q^*}-z \|_1$ by $\frac{2}{N}$ using the properties of Chebyshev polynomials which gives us
$\|\mathcal{T}_N^{d+1}q-z \|_1 \leq \frac{2}{N}$. Thus, we get $ \| z_q- z_{s'}\|_1 \leq \frac{3n}{(n-r)N}$. Finally, following the proof of Lemma 3.4 of~\cite{braverman:2022}, Lemma 3.1 of~\cite{braverman:2022} gives us the final bound on $W_1(s'_{\Ab},q)$.  

Note that the number of matrix-vector products is obtained by setting $\Delta=\frac{1}{N \ln (eN)}$ in Lemma 4.2 of~\cite{braverman:2022}.
\end{proof}

We now prove the final error bound for Algorithm~\ref{alg:sde}. We will first show using Lemma~\ref{Lem:bkwd_err} that there exists a matrix $\Bb$ such that its spectral density is very close to that of $\Ab$ and $\Zb_S$ and $\Tilde{\bv{\Lambda}}_S$ (the output of Algorithm~\ref{alg:krylov}) are a subset of the eigenvectors and eigenvalues of $\Bb$. Then, it is enough to estimate the spectral density of $\Bb$. We already know the eigenvalues $\Tilde{\bv{\Lambda}}_S$ of $\Bb$. So we just need to estimate the spectral density of its remaining eigenvalues (which is equal to the part of the spectral density of the deflated matrix $\Pb\Bb\Pb$ corresponding to the eigenvalues with eigenvectors which lie in the subspace orthogonal to $\Zb_S$). To do this, we first observe that the spectral density of $\Pb\Bb\Pb$ is again close to that of $\Pb\Ab\Pb$. So, it is enough to estimate the spectral density of the eigenvalues of $\Pb\Ab\Pb$ corresponding to its non-deflated eigenvectors. We show this is exactly estimated by Algorithm 1 of~\cite{braverman:2022} by appropriately modifying the Chebyshev moments of $\Pb\Ab\Pb$ to account for the zero eigenvalues corresponding to the deflated eigenvectors.

\begin{reptheorem}{thm:sde1}[SDE with Explicit Deflation]
Let $\Ab \in \R^\n$ be a symmetric matrix. For any $\epsilon, \delta \in (0,1)$, $l \in [n]$, and constants $c,c_1>0$, Algorithm~\ref{alg:sde} performs $O\left(l\log n+\frac{b}{\epsilon} \right)$ matrix-vector products with $\Ab$ where $b=\max\left(1,\frac{1}{n\epsilon^2}\log^2 \frac{n}{\epsilon}\log^2 \frac{1}{\epsilon} \right)$ and computes a probability density function $\q$ such that $$\W_1(s_{\Ab},\text{q}) \leq \epsilon\sigma_{l+1}(\Ab) +\frac{\|\Ab \|_2}{n^{c}},$$ with probability at least $1-\frac{1}{n^{c_1}}.$
\end{reptheorem}
\begin{proof}

    Let $\Zb_S \in \R^{n \times |S|}$ and $\Tilde{\bv{\Lambda}}_S \in \R^{|S| \times |S|}$ be the output of Algorithm~\ref{alg:krylov}, in line \ref{alg2:q-def} of Algorithm~\ref{alg:sde}. Recall that $\Zb_S=\bv{Q}\Vb_S$ where $\Qb$ is an orthonormal basis of the Krylov subspace of $\Ab$ (Definition~\ref{def:krylov}) with a starting block $\bv{X} \in \R^{n \times l}$ containing random $\mathcal{N}(0,1)$ Gaussian entries and the columns of $\Vb_S$ are all the eigenvectors of $\Qb^T\Ab\Qb$ with the corresponding eigenvalues along the diagonal of $\Tilde{\bv{\Lambda}}_S \in \R^{|S| \times |S|}$ such that for any $i \in S$, $\|\Ab\Qb\vb_{i}- (\Tilde{\bv{\Lambda}}_S)_{ii}\Qb\vb_{i}\|_2 \leq \frac{\|\Ab \|_2}{n^{\beta}}$ for some constant $\beta$ as defined in line \ref{alg1:if-condition} of Algorithm~\ref{alg:krylov} (here $\vb_{i}$ is an eigenvector of $\Qb^T\Ab\Qb$ corresponding to eigenvalue $\lambda_i(\Qb^T\Ab\Qb)$). Equivalently, for every $i \in S$, $\zb_i=\Qb\vb_i$ is an eigenvector of $\Qb \Qb^T\Ab \Qb\Qb^T$ corresponding to an eigenvalue $\lambda_i(\Qb \Qb^T\Ab \Qb\Qb^T)$. Now, using Lemma~\ref{Lem:bkwd_err}, there exists a symmetric matrix $\bv{B} \in \R^{n \times n}$ such that $\bv{B}\bv{Z}_S=\bv{Z}_S \Tilde{\bv{\Lambda}}_S$ and $\|\Ab-\bv{B} \|_2 \leq\frac{\|\Ab \|_2}{n^{\beta-1}}$. Using Weyls' inequality (Fact~\ref{fact:weyl}), for all $i\in [n]$ we have
    \begin{equation}\label{Eq:weyl1}
        |\lambda_i(\Ab)-\lambda_i(\Bb)| \leq \frac{\|\Ab \|_2}{n^{\beta-1}}.
    \end{equation}
    Then, $\frac{1}{n}\sum_{i=1}^n |\lambda_i(\Ab)-\lambda_i(\Bb)| \leq \frac{\|\Ab \|_2}{n^{\beta-1}}$. Using the earth mover's interpretation, this implies that: \begin{equation}\label{Eq:w1}
        \W_1(s_{\Ab},s_{\Bb}) \leq \frac{\| \Ab \|_2}{n^{\beta-1}},
    \end{equation}
    where $s_{\Bb}$ is the spectral density of $\Bb$. Since $\beta=c$ as defined in line 2 of Algorithm~\ref{alg:sde}, it is enough to bound $\W_1(s_{\Bb}, q)$.
    
    Since $\bv{B}\bv{Z}_S=\bv{Z}_S \Tilde{\bv{\Lambda}}_S$, the eigenvalues $\lambda_i(\Qb^T\Ab\Qb)$ such that $i \in S$ are also eigenvalues of $\Bb$. Let $S_1 \subseteq [n]$ be the set of indices of the eigenvalues of $\Bb$ such that $|S_1|=|S|$ and for each $i \in S_1$, there exists some  $j \in S$ such that $\lambda_i(\Bb)=\lambda_j(\Qb^T\Ab\Qb)$. Let $[-L_1,L_1]$ contain the support of both $s_{\Bb}$ and $q$ (the output of Algorithm~\ref{alg:sde}). Let $\Pb=\bv{I}-\Zb_S\Zb_S^T$. Note that, from Algorithm~\ref{alg:sde}, we also have $q(x)=\frac{|S|q_1(x)+(n-|S|)q_2(x))}{n}$ where $q_1(x)=\frac{1}{|S|}\sum_{i =1}^{|S|}\delta(x-(\Tilde{\bv{\Lambda}}_S)_{ii})=\frac{1}{|S|}\sum_{i \in S}\delta(x-\lambda_i(\Qb^T\Ab\Qb))$ (as defined in Line \ref{alg2:q-def} of Algorithm~\ref{alg:sde}) and $q_2(x)=q'_2(x/L)$ for $x \in [-L,L]$ where $q'_2$ is a density supported on $[-1,1]$ which is the output of Algorithm 1 of~\cite{braverman:2022} with the modified approximate moments of $\frac{1}{L}\Pb\Ab\Pb$ as defined in lines 4-6 of Algorithm~\ref{alg:sde} (where $\|\Pb\Ab\Pb \|_2 \leq L\leq 2\|\Pb\Ab\Pb \|_2$). $\W_1(s_{\Bb},q)$ can be written as: 
    \begin{align*}
       \W_1(s_{\Bb},q) &= \sup_{h \in \text{1-Lip}}\int_{-L_1}^{L_1} h(x)(s_{\Bb}(x)-q(x)) dx.
    \end{align*}
Let $h^*(x)$ be the 1-Lipschitz function which maximizes the integral above. Then, using the definitons of $s_{\Bb}$ and $q$ we have:
\begin{align}\label{Eq:dist1}
    \W_1(s_{\Bb},q) &= \int_{-L_1}^{L_1} h^*(x) \left(\frac{1}{n}\sum_{i=1}^n\delta(x-\lambda_i(\Bb)) -\frac{1}{n}\sum_{j \in S}\delta(x-\lambda_j(\Qb^T\Ab\Qb))-\frac{n-|S|}{n} q_2(x) \right) dx \nonumber \\
       &= \frac{|S|}{n}\underbrace{\int_{-L_1}^{L_1} h^*(x) \left(\frac{1}{|S|}\sum_{i \in S_1}\delta(x-\lambda_i(\Bb)) - \frac{1}{|S|}\sum_{j \in S} \delta(x-\lambda_j(\Qb^T\Ab\Qb)) \right) dx}_{\text{$I_1$}} \nonumber \\
       &+  \frac{n-|S|}{n}\underbrace{\int_{-L_1}^{L_1} h^*(x) \left( \frac{1}{n-|S|}\sum_{i \notin S_1} \delta(x -\lambda_i(\Bb))-q_2(x)  \right) dx}_\text{$I_2$}.
\end{align}

We have $I_1=0$ by definition of the sets $S_1$ and $S$. Thus, we have $\W_1(s_{\Bb},q)=\frac{n-|S|}{n}I_2$. We now bound $I_2$. 

 Observe that any eigenvalue $\lambda_i(\Bb)$ such that $i \notin S_1$, is also an eigenvalue value of $\Pb\Bb\Pb$. To see this, let $\xb_i$ be the corresponding eigenvector of $\Bb$. Then, observe that $\Pb\Bb\Pb\xb=\Pb\Bb\xb=\lambda_i(\Bb)\xb$ as $\Zb_S^T\xb=0$. Also both $\Pb\Bb\Pb$ and $\Pb\Ab\Pb$ have $|S|$ (where $|S|=|S_1|$) eigenvalues equal to $0$ (with the corresponding eigenvectors being the columns of $\Zb$). Let $S_2 \subseteq [n]$ be the set of indices such that $|S_2|=n-|S_1|=n-|S|$ and for every $j \in S_2$, there exists some $i \in [n]\setminus S_1$ such that $\lambda_j(\Pb\Bb\Pb)=\lambda_i(\Bb)$. So, we can write $I_2$ as:
\begin{align}\label{Eq:i2_eq}
    I_2 &=\int_{-L_1}^{L_1} h^*(x) \left( \frac{1}{n-|S|}\sum_{i \notin S_1} \delta(x -\lambda_i(\Bb))-q_2(x)  \right) dx \nonumber\\
    &= \int_{-L_1}^{L_1} h^*(x) \left( \frac{1}{n-|S|}\sum_{i \in S_2} \delta(x -\lambda_i(\Pb\Bb\Pb))-q_2(x)  \right) dx\nonumber \\
    &= \underbrace{\int_{-L_1}^{L_1} h^*(x) \left( \frac{1}{n-|S|}\sum_{i \in S_2} \delta(x -\lambda_i(\Pb\Bb\Pb))-\frac{1}{n-|S|}\sum_{i \in S_2} \delta(x -\lambda_i(\Pb\Ab\Pb))  \right) dx}_{\text{$t_1$}} \nonumber \\
    &+ \underbrace{\int_{-L_1}^{L_1} h^*(x) \left( \frac{1}{n-|S|}\sum_{i \in S_2} \delta(x -\lambda_i(\Pb\Ab\Pb))-q_2(x)  \right) dx}_{\text{$t_2$}}.
\end{align}
Using the fact that $\Pb$ is a projection matrix and Lemma~\ref{lem:pap}, we again have that $\|\Pb\Ab\bv{P}-\Pb\Bb\bv{P} \|_2 \leq \|\Pb(\Ab-\Bb)\bv{P} \|_2 \leq \| \Ab-\Bb\|_2 \leq \frac{\|\Ab \|_2}{n^{\beta-1}}$. Thus, using Weyls' inequality (Fact~\ref{fact:weyl}) we have $|\lambda_i(\Pb\Bb\Pb)-\lambda_i(\Pb\Ab\Pb)| \leq \frac{\|\Ab \|_2}{n^{\beta-1}}$ for $i \in [n]$. We can bound $t_1$ using the earth movers' interpretation of the Wasserstein-$1$ distance as
\begin{align}\label{eq:t1bound}
    t_1 \leq \frac{1}{n-|S|}\sum_{i \in S_2} |\lambda_i(\Pb\Bb\Pb)-\lambda_i(\Pb\Ab\Pb)| \leq \frac{\|\Ab \|_2}{n^{\beta-1}}.
\end{align}
We bound $t_2$ next. Since all eigenvalues of $\frac{1}{L}\Pb\Ab\Pb$ are in $[-1,1]$ and $\Pb\Ab\Pb$ has at least $n-|S|$ eigenvalues equal to 0 as described previously (corresponding to $\lambda_i(\Pb\Ab\Pb)$ where $i \notin S_2$), according to Lemma~\ref{lem:mma}, the density $\q'_2$ returned by Algorithm 1 of~\cite{braverman:2022} in line 6 of Algorithm~\ref{alg:sde} satisfies the guarantee 
\begin{align}\label{eq:spap}
    \W_1\bigg(s'_{\Pb\Ab\Pb},\q'_2\bigg) \leq \frac{n \epsilon}{(n-|S|) },
\end{align}
with probability at least $1-\delta$ where $s'_{\Pb\Ab\Pb}(x)=\sum_{i \in S_2}\frac{\delta(x-\frac{1}{L}\lambda_i(\Pb\Ab\Pb))}{n-|S|}$. Since $q'_2$ is supported on $[-1,1]$, we set $q_2(x)=q'_2(x/L)$ in line 7 of Algorithm~\ref{alg:sde} when $x \in [-L,L]$ and $q_2(x)=0$ otherwise so that $q_2$ is now supported on $[-L,L]$. So we get that
    \begin{align*}
        t_2  &= \int_{-L}^{L} h^*(x) \left( \frac{1}{n-|S|}\sum_{i \in S_2} \delta(x -\lambda_i(\Pb\Ab\Pb))-q_2(x)  \right)dx  \\
        &= \int_{-1}^{1} h^*(Lx) \left( \frac{1}{n-|S|}\sum_{i \in S_2} \delta(Lx -\lambda_i(\Pb\Ab\Pb))-q_2(Lx)  \right) dx  \\
        &= \int_{-1}^{1} h^*(Lx) \left( \frac{1}{n-|S|}\sum_{i \in S_2} \delta \left(x -\frac{\lambda_i(\Pb\Ab\Pb)}{L} \right)-q'_2(x)  \right) dx \\
        &\leq \sup_{h_1 \in \text{1-Lip}}L\int_{-1}^{1} h_1(x) \left( \frac{1}{n-|S|}\sum_{i \in S_2} \delta \left(x -\frac{\lambda_i(\Pb\Ab\Pb)}{L} \right)-q'_2(x)  \right) dx \\
        &= L\W_1\bigg(s'_{\Pb\Ab\Pb},\q'_2\bigg) \\
        &\leq \frac{n \epsilon}{(n-|S|)} L.
    \end{align*}
     In the first step above, we use the fact that the densities are supported on $[-L,L]$ instaed of $[-L_1,L_1]$. In the second step, we rescale the integrals from $[-L,L]$ to $[-1,1]$. In the third step, we use the fact that $q_2(Lx)=q'_2(x)$ and $\frac{1}{n-|S|}\sum_{i \in S_2} \delta(Lx -\lambda_i(\Pb\Ab\Pb))=\frac{1}{n-|S|}\sum_{i \in S_2} \delta \left(x -\frac{\lambda_i(\Pb\Ab\Pb)}{L} \right)$ when $x \in [-1,1]$. The fourth step follows from the fact that $h^*(Lx)$ is an $L-$Lipschitz function and the final two steps follows from the definition of the Wasserstein-$1$ distance and~\eqref{eq:spap}. 
    
    Since $\|\Pb\Ab\Pb \|_2 \leq L\leq 2\|\Pb\Ab\Pb \|_2$ from Theorem~\ref{thm:sigma_S}, for some constant $c_2>0$, we have $L \leq 2\|\Pb\Ab\Pb \|_2 \leq 2\sigma_{l+1}(\Ab)+\frac{2\|\Ab \|_2}{n^{c_2}}$. Thus, we get:
\begin{align}\label{Eq:deflate}
    t_2 \leq \frac{\epsilon n}{2(n-|S|)} \cdot \left(2\sigma_{l+1}(\Ab)+\frac{2\|\Ab \|_2}{n^{c_2}} \right).
\end{align}
From~\eqref{Eq:i2_eq}, using the bounds on $t_1$ and $t_2$ from~\eqref{eq:t1bound} and~\eqref{Eq:deflate}, we get $I_2 \leq \frac{\|\Ab \|_2}{n^{\beta-1}}+\frac{\epsilon n}{2(n-|S|)} \cdot \left(2\sigma_{l+1}(\Ab)+\frac{2\|\Ab \|_2}{n^{c_2}} \right) $. Using the bound on $I_2$ and the fact that $I_1=0$ in~\eqref{Eq:dist1}, we get:
\begin{align*}
     \W_1(s_{\Bb},q) \leq  \frac{|S|}{n}I_1+ \frac{n-|S|}{n}I_2 \leq \epsilon\sigma_{l+1}(\Ab)+\frac{\epsilon\|\Ab \|_2}{n^{c_2}}+\frac{\|\Ab \|_2}{n^{\beta-1}}.
\end{align*}
Finally, using triangle inequality and~\eqref{Eq:w1}, we get that, for some suitable chosen constant $c_1>0$:
\begin{align*}
    \W_1(s,q) &\leq \W_1(s_{\Ab},s_{\Bb})+ \W_1(s_{\Bb},q) \leq \epsilon \sigma_{l+1}(\Ab)+\frac{\epsilon\|\Ab \|_2}{n^{c_2}}+\frac{2\|\Ab \|_2}{n^{\beta-1}} \\
    &\leq \epsilon \sigma_{l+1}(\Ab)+\frac{\|\Ab \|_2}{n^{c_1}}.
\end{align*}
This completes the proof.

   \medskip

\paragraph{Matrix vector products.} Line \ref{alg2:line1} of Algorithm~\ref{alg:sde} calls Algorithm~\ref{alg:krylov} with $\Ab$ as input which uses $O(l\log n)$ matrix vector products with $\Ab$ to form the Krylov subspace in Line \ref{alg1:krylov-subspace} of Algorithm~\ref{alg:krylov} (since the Krylov subspace has depth $O(\log n)$ and block size $l$). By setting the error parameter $\epsilon$ in Algorithm 2 of~\cite{Musco:2015} to $0.5$, we get an output $L'$ in $O(\log n)$ iterations such that $0.5\|\Pb\Ab\Pb \|_2 \leq L' \leq \|\Pb\Ab\Pb \|_2 $. Then, we can set $L=2L'$ such that $\|\Pb\Ab\Pb \|_2 \leq L \leq 2\|\Pb\Ab\Pb \|_2$ as defined in Line \ref{alg2:line3} of Algorithm~\ref{alg:sde}. Also, Algorithm 2 of~\cite{braverman:2022} is called with the deflated and scaled matrix $\frac{1}{L}\Pb\Ab\Pb$ as input in line 4 of Algorithm~\ref{alg:sde}. According to Lemma~\ref{lem:mma}, this uses $O\left(\frac{b}{\epsilon}\right)$ matrix vector products with $\Pb\Ab\Pb$ where $b=\max\left(1,\frac{1}{n\epsilon^2}\log^2 \frac{n}{\epsilon}\log^2 \frac{1}{\epsilon}\right)$ i.e. $O\left(\frac{b}{\epsilon}\right)$ matrix vector products with $\Ab$ , since $\delta \ge \frac{1}{n^{c'}}$ (for some $c'\geq 0$). Finally, Line \ref{alg2:line6} of Algorithm~\ref{alg:sde} calls Algorithm 2 of~\cite{braverman:2022} which uses at most $O(\log n)$ extra matrix vector products.  Thus, the total number of matrix vector products is $O(l\log n+\frac{b}{\epsilon})$ where $b=\max(1,\frac{C'}{n\epsilon^2}\log^2 \frac{n}{\epsilon }\log^2 \frac{1}{\epsilon})$.
\end{proof}

We now state a simple corollary which shows that we can get Wasserstein-$1$ error depending on the Schatten $1$-norm of $\Ab$ by appropriatley balancing the errors from deflation and moment matching in Algorithm~\ref{alg:sde}.
\begin{repcorollary}{cor:schatten}

Let $\Ab \in \R^\n$ be symmetric. For any $\epsilon \in (0,1)$ and some constant $c>0$, there exists an algorithm that performs $O\left(\frac{\sqrt{n} \log n}{\epsilon} + \sqrt{n} \log^4 n \right)$ matrix vector products with $\Ab$ and computes $M$ such that, with probability at least $1-\frac{1}{n^c}$, $|M-\norm{\bv A}_1| \le \epsilon \norm{\bv A}_1$.  

\end{repcorollary}
\begin{proof}
Set the block size as $l=\frac{\sqrt{n}}{\epsilon}$ and the error parameter $\epsilon'$ to $O(\frac{1}{\sqrt{n}})$ of Algorithm~\ref{alg:sde}. Let $q$ be the output of Algorithm~\ref{alg:sde} using $O\left(l \log n+\frac{b}{\epsilon'} \right)=O \left(\frac{\sqrt{n}}{\epsilon}\log n + \sqrt{n} \log^4 n \right)$ matrix vector products with $\Ab$. Then, from Theorem~\ref{thm:sde1}, we have $\W_1(\s_{\Ab},\text{q}) \leq  \frac{\sigma_{l'+1}(\Ab)}{\sqrt{n}} +\frac{\|\Ab \|_2}{n^{c/4}} \leq \frac{\|\Ab \|
_1}{l\sqrt{n}}+\frac{\|\Ab \|_2}{n^{c/4}}=\frac{\epsilon \| \Ab\|_1}{n}+\frac{\|\Ab \|_2}{n^{c/4}} \leq \frac{2\epsilon \| \Ab\|_1}{n}$. Then, using $q$, we can construct a list of n values $[\tilde{\lambda}_1, \ldots , \tilde{\lambda}_n]$ in time linear in $n$ and $\frac{1}{\epsilon}$ such that $\sum_{i=1}^n |\lambda_i-\tilde{\lambda}_i| \leq \frac{2\epsilon \|\Ab \|_1 \cdot n}{n} \leq2\epsilon \|\Ab \|_1 $ (see~\cite{Cohen-SteinerKongSohler:2018}, theorem B.1 in~\cite{braverman:2022}). Adjusting $\epsilon$ by constant factors gives us the final bound.
\end{proof}

\section{Analysis of Stochastic Lanczos Quadrature}\label{sec:slq}

In this section, we give our error analysis of  Stochastic Lanczos Quadrature (SLQ) (Algorithm~\ref{alg:slq}) by showing that it implicitly performs a deflation of the input matrix. Our analysis shows that for a symmetric $\Ab \in \R^{\n}$, SLQ achieves an error bound of $\epsilon \sigma_{l+1}(\Ab)+\Tilde{O}(\frac{l\|\Ab \|_2}{n})$ using $O(l \log \frac{1}{g_{\min}}+\frac{1}{\epsilon}\log \frac{n \cdot \kappa}{\delta})$ matrix vector products with $\Ab$ for any $l \in [n]$, $\epsilon = \tilde \Omega(1/\sqrt{n})$, failure probability $\delta \in (0,1)$ and where $g_{\min}$ (minimum singular value gap) and $\kappa$ (condition number) are as stated in Theorem~\ref{thm:slq}. Hence, it almost matches the error bounds of the explicit deflation and moment matching algorithm (Algorithm~\ref{alg:sde}) we described in the previous section (up to the additive $\frac{l\|\Ab \|_2}{n}$ factor). Roughly, we show that the large magnitude eigenvalues of $\Ab$ are estimated almost exactly in $O(l \log \frac{1}{g_{\min}}+\frac{1}{\epsilon}\log \frac{n \cdot\kappa}{\delta})$ iterations of Lanczos (Algorithm~\ref{alg:lanczos-slq}). Estimating the spectral density of the small magnitude eigenvalues requires a further $\Tilde{O}(1/\epsilon)$ iterations. Then, we give a simple variant of SLQ, which we call the Variance Reduced SLQ (VRSLQ) which sets the weights of the converged eigenvalues in the final distribution correctly so that we end up getting an error bound of $\epsilon \sigma_{l+1}(\Ab)+\Tilde{O}(\frac{l\sigma_{l+1}(\Ab)}{n})$. We note that the SLQ algorithm described in this paper (Algorithm~\ref{alg:slq}) uses only one random starting vector for simplicity, though in practice, we can get better concentration when the resulting distribution is averaged over multiple  random starting vectors. 

This section is organized as follows. In Section~\ref{sec:slq1}, we first derive a loose error bound of $\epsilon \|\Ab \|_2$ for SLQ using a simple moment matching based analysis. In Section~\ref{sec:slq2}, we derive error bounds for approximating the top eigenvalue and eigenvector of a matrix using the Lanczos algorithm. In Section~\ref{sec:slq3}, we give a tighter error bound of  $\epsilon \sigma_{l+1}(\Ab)+\Tilde{O}(\frac{l\|\Ab \|_2}{n})$ for SLQ by showing that it implicitly performs deflation and approximates the top eigenvalues (by using the error bounds developed in Section~\ref{sec:slq2}) followed by moment matching. In Section~\ref{sec:slq4} we describe the variance reduced version of SLQ and then give the error bounds. 

\textbf{Notations:} Throughout this section, $\mathcal{U}(\mathcal{S}^{n-1})$ denotes a uniform distribution on the unit sphere of dimension $n$. Also, for two random variables $X$ and $Y$, $X:\stackrel{d}{=}Y$ implies they have the same distribution.

\subsection{SLQ bounds via Moment Matching}\label{sec:slq1}

In this section, we derive an error bound for SLQ (Algorithm~\ref{alg:slq}) using a simple moment matching based argument. We begin by describing the Lanczos algorithm (Algorithm~\ref{alg:lanczos-slq}) on which SLQ is based. The Lanczos algorithm iteratively constructs an orthonormal basis of the Krylov subspace $[\bv g, \Ab\bv g, \ldots, \Ab^{m-1}\bv g]$ generated by an input matrix $\Ab$ and a random starting vector $\bv g$ of appropriate dimensions such that $\Tb = \Qb^T\Ab\Qb$ where $\Tb$ is tridiagonal and where $\Qb$ is an orthonormal basis of the Krylov subspace that is computed by the Lanczos algrorithm. 

\begin{algorithm}[H] 
\caption{Lanczos algorithm (\cite{lanczos1952solution, golub2009matrices, chen2021analysis})}
\label{alg:lanczos-slq}
\begin{algorithmic}[1]
\Require{Symmetric $\bv A \in \mathbb{R}^{n\times n}$, starting vector $\bv g \in \R^n$, number of iterations $m$.}
\State Set $\bv{q}_1 = \bv g / \|\bv g\|_2$, $\alpha_1 = \bv{q}_1^T\Ab\bv{q}_1$, $\tilde{\bv{q}}_2 = \Ab\bv{q}_1 - \alpha_1\bv{q}_1$ and initialize $\Tb \in \bv 0^{m\times m}$ with $\Tb_{11} = \alpha_1$.
\For{$i=2,\ldots,m$}
\State Let $\eta_{i-1} = \|\tilde{\bv{q}}_i\|_2$.
\State Compute $\bv{q}_i = \tilde{\bv{q}}_i / \eta_{i-1}$ (i.e., normalize $\tilde{\bv{q}}_i$ to obtain Lanczos vectors $\bv{q}_i$).
\State Compute $\alpha_i = \bv{q}_i^T\Ab\bv{q}_i$.
\State Set $\tilde{\bv{q}}_{i+1} = \Ab\bv{q}_i - \alpha_i\bv{q}_i - \eta_{i-1}\bv{q}_{i-1}$.
\State Set $\Tb_{ii} = \alpha_i$, and $\Tb_{i,i-1} = \Tb_{i-1,i} = \eta_{i-1}$.
\EndFor
\State \Return $\Tb$, $\Qb$ where $\Qb \in \R^{n \times m}$ is a matrix whose $i$\textsuperscript{th} column is $\bv{q}_i$.
\end{algorithmic}
\end{algorithm}

We have the following well known identity for the Lanczos algorithm (for eg. see~\cite{golub2009matrices}), which we prove here for completeness.

\begin{lemma}(\cite{golub2009matrices})
    \label{lem:lanczos-id1}
    Consider Algorithm~\ref{alg:lanczos-slq}  run with input $\Ab\in\R^\n$, starting vector $\bv g \in \R^{n}$, and number of iterations $m$. Let $\Tb \in \R^{m \times m},\Qb \in \R^{n \times m}$ be the outputs of the algorithm. Then, for any $k\in [m-1]$, we have $\Ab^k\bv g = \Qb\Tb^k\Qb^T\bv g$.
\end{lemma}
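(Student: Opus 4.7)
The plan is to reduce the identity to the standard compact form of the Lanczos three-term recurrence and then induct on $k$. Concretely, I will first rewrite the inner loop of Algorithm~\ref{alg:lanczos-slq} in matrix form as
\[
\Ab\Qb \;=\; \Qb\Tb + \tilde{\bv q}_{m+1}\bv e_m^T,
\]
where $\bv e_m\in\R^m$ is the last standard basis vector. This follows by reading off each column: for $1<i<m$ the recurrence gives $\Ab\bv q_i = \eta_{i-1}\bv q_{i-1}+\alpha_i\bv q_i+\eta_i\bv q_{i+1}$, which is exactly the $i$-th column of $\Qb\Tb$ because $\Tb$ is tridiagonal with entries $(\alpha_i,\eta_i)$; for $i=1$ and $i=m$ the same check works, with the residual at the last column absorbed into $\tilde{\bv q}_{m+1}\bv e_m^T$.

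Two additional facts are needed. First, from $\bv q_1 = \bv g/\|\bv g\|_2$ and the orthonormality of the columns of $\Qb$ (a standard consequence of the double-subtraction in line 6, verified by a short induction using that $\Ab\bv q_j \in \mathrm{span}(\bv q_{j-1},\bv q_j,\bv q_{j+1})$), we have $\Qb^T\bv g = \|\bv g\|_2 \bv e_1$. Second, since $\Tb$ is tridiagonal, $\Tb^{\,j}$ has bandwidth at most $j$, so $(\Tb^{\,j})_{m,1}=0$ whenever $j<m-1$.

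With these in hand, I will prove by induction on $k\in\{0,1,\dots,m-1\}$ that $\Ab^k\bv q_1 = \Qb\Tb^k\bv e_1$. The base case $k=0$ is immediate since $\bv q_1 = \Qb\bv e_1$. For the inductive step, applying $\Ab$ to both sides and using the recurrence yields
\[
\Ab^{k}\bv q_1 \;=\; \Ab\Qb\Tb^{k-1}\bv e_1 \;=\; \Qb\Tb^{k}\bv e_1 \;+\; \tilde{\bv q}_{m+1}\,(\Tb^{k-1})_{m,1}.
\]
The residual term vanishes by the bandwidth fact above, since $k-1<m-1$. Multiplying through by $\|\bv g\|_2$ and applying $\Qb^T\bv g=\|\bv g\|_2\bv e_1$ gives $\Ab^k\bv g = \Qb\Tb^k\Qb^T\bv g$.

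The only subtle step is the bandwidth observation, and it is the reason the identity fails for $k=m$: once $k=m$, the entry $(\Tb^{k-1})_{m,1}=\eta_1\eta_2\cdots\eta_{m-1}$ is generically nonzero and the residual $\tilde{\bv q}_{m+1}\bv e_m^T$ contributes. Everything else is bookkeeping from the Lanczos recurrence itself, so I expect no real obstacle in turning this plan into a full proof.
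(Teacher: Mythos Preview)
Your proof is correct, but it takes a different route from the paper. The paper's argument never writes down the three-term recurrence in matrix form or invokes the bandwidth of $\Tb^j$. Instead it uses only that $\Qb$ is an orthonormal basis for the Krylov subspace $\mathcal{K}_m=[\bv g,\Ab\bv g,\ldots,\Ab^{m-1}\bv g]$, so that $\Qb\Qb^T\bv v=\bv v$ whenever $\bv v\in\mathrm{span}(\mathcal{K}_m)$. The induction then reads
\[
\Qb\Tb^{k+1}\Qb^T\bv g \;=\; \Qb\Tb\Qb^T\bigl(\Qb\Tb^k\Qb^T\bv g\bigr) \;=\; \Qb\Qb^T\Ab\,\Qb\Qb^T\Ab^k\bv g \;=\; \Ab^{k+1}\bv g,
\]
using the inductive hypothesis once and the projection identity twice (on $\Ab^k\bv g$ and on $\Ab^{k+1}\bv g$). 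Your approach trades this span argument for the explicit identity $\Ab\Qb=\Qb\Tb+\tilde{\bv q}_{m+1}\bv e_m^T$ together with the observation $(\Tb^{k-1})_{m,1}=0$ for $k\le m-1$. What you gain is a transparent explanation of exactly why the identity breaks at $k=m$ (the residual entry becomes $\eta_1\cdots\eta_{m-1}$); what the paper's route gains is brevity, since it never needs the tridiagonal structure of $\Tb$ at all, only that $\Qb$ spans the right space.
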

\begin{proof}
The Krylov subspace generated by Algorithm~\ref{alg:lanczos-slq} is $\mathcal{K}_m=[\bv{g}, \Ab\bv{g}, \Ab^2\bv{g}, \ldots, \Ab^{m-1}\bv{g}]$ after $m$ iterations. Since $\Qb$ is an orthonormal basis of $\mathcal{K}_m$, the columns of $\mathcal{K}_m$ are spanned by the columns of $\Qb$. Let $\xb=\|\bv g \|_2\bv e_1$ (recall $\bv e_1$ is the first standard basis vector with a 1 in the first position). So, we have $\bv g=\Qb\xb$.

    We show that for any $p\in [m]$, $\Ab^p\bv g = \Qb\Tb^p\Qb^T\bv g$, via induction. Observe that for $p=1$, $\Qb\Tb\Qb^T\bv g = \Qb\Tb\Qb^T\Qb\xb = \Qb\Qb^T\Ab\Qb\xb = \Qb\Qb^T\Ab\bv g = \Ab\bv g$, where in the last equality we use the fact that $\Ab\bv g$ is a column of $\mathcal{K}_m$, and so it is spanned by the columns of $\Qb$. This shows that the base case for our induction is true.
    As the inductive hypothesis, assume $\Ab^k\bv g = \Qb\Tb^k\Qb^T\bv g$ for some $k \in [m-2]$. Then observe that $\Qb\Tb^{k+1}\Qb^T\bv g = \Qb\Tb\Qb^T\Qb\Tb^{k}\Qb^T\bv g = \Qb\Tb^{k}\Qb^T\Ab^k\bv g = \Qb \Qb^T \Ab \Qb\Qb^T\Ab^k\bv g $, where in the third equality we use our inductive hypothesis. Since $\Ab^k \bv g$ is a column of $\mathcal{K}_m$, $\Ab^k \bv g$ must be spanned by the columns of $\Qb$. So we have, $\Qb \Qb^T \Ab \Qb\Qb^T\Ab^k\bv g =  \Qb \Qb^T \Ab \Ab^k \bv g=\Qb \Qb^T \Ab^{k+1}\bv g=\Ab^{k+1}\bv g$, where in the last step we  used the fact that $\Ab^{k+1}\bv g$ is also spanned by the columns of $\Qb$. This completes our proof.
\end{proof}

We now state the SLQ algorithm for spectral density estimation.
\begin{algorithm}[H] 
\caption{Stochastic Lanczos Quadrature (adapted from \cite{chen2021analysis})}
\label{alg:slq}
\begin{algorithmic}[1]
\Require{Symmetric $\bv A \in \mathbb{R}^{n\times n}$, number of iterations $m (\leq n)$.}
\State Sample  $\bv g \sim \mathcal{U}(\mathcal{S}^{n-1})$.
\State Run Lanczos (Algorithm~\ref{alg:lanczos-slq}) with inputs $\Ab, \bv g$ and $m$ to compute symmetric tridiagonal matrix $\Tb \in \R^{m\times m}$, orthonormal basis $\Qb \in \R^{n \times m}$. Let the eigenvectors of $\Tb$ be $\vb_1, \ldots, \vb_m$.\label{alg4:line2}
\State Set $f (x) = \sum_{j=1}^m w_j^2\delta(x-\lambda_j(\bv{T}))$ where $w_j = \bv{v}_j^T \bv{e}_1$ where $\bv{e}_1 \in \R^m$ is the first standard basis unit vector (i.e. a 1 in the first position and a $0$ everywhere else)
\State \Return $f (x) $
\end{algorithmic}
\end{algorithm}

We give an error bound for SLQ  by showing that the normalized Chebyshev moments of the output density are approximately equal to the normalized Chebyshev moments of the spectral density of the input matrix $\bv A$. We start by proving a lemma showing that the $j$th Chebyshev moment of the output of SLQ $f$ for any $j \leq m-1$ is exactly given by $\bv{g}^T\Bar{T}_j(\bv{A})\bv{g}$ where $\bv{g}$ is the random starting vector.

\begin{lemma}
\label{lem:eq-slq-lanczos}
Consider Algorithm~\ref{alg:slq} run with input $\Ab\in\R^\n$, number of iterations $m$, and sampled vector $\bv g \sim \mathcal{U}(\mathcal{S}^{n-1})$ in line 1. Let $f(x)=\sum_{i=1}^m w_i^2\delta(x-\lambda_i(\bv{T}))$ be the output of the algorithm. Then, for any $j \in \{0,1,2,\ldots, m-1\}$, $\langle \Bar{T}_j,f \rangle=\bv{g}^T\Bar{T}_j(\bv{A})\bv{g} $.

\end{lemma}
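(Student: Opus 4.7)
The plan is to unfold the inner product $\langle \bar{T}_j, f \rangle$ using the definition of $f$, rewrite it as a quadratic form in the matrix $\bar{T}_j(\bv T)$, and then transport it back to a quadratic form in $\bar{T}_j(\bv A)$ via the Lanczos identity of Lemma~\ref{lem:lanczos-id1}. First I would observe that, since the inner product on the spectral side reduces to integration against Dirac masses,
\[
\langle \bar{T}_j, f \rangle = \sum_{i=1}^m w_i^2 \, \bar{T}_j(\lambda_i(\bv T)).
\]
Diagonalizing the symmetric tridiagonal matrix as $\bv T = \bv V \bv \Lambda_{\bv T} \bv V^T$ with the columns of $\bv V$ being the eigenvectors $\bv v_1,\dots,\bv v_m$, the defining formula $w_i = \bv v_i^T \bv e_1$ gives $\sum_i w_i^2 \bar{T}_j(\lambda_i(\bv T)) = \bv e_1^T \bar{T}_j(\bv T) \bv e_1$, so $\langle \bar{T}_j, f\rangle = \bv e_1^T \bar{T}_j(\bv T)\bv e_1$.

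Next I would use the fact that $\bv g$ is sampled from the unit sphere, hence $\|\bv g\|_2 = 1$ and so the first Lanczos vector is $\bv q_1 = \bv g$. This gives $\bv Q \bv e_1 = \bv g$, and since $\bv Q$ has orthonormal columns, $\bv Q^T \bv g = \bv Q^T \bv Q \bv e_1 = \bv e_1$. Substituting in,
\[
\bv e_1^T \bar{T}_j(\bv T)\bv e_1 = \bv g^T \bv Q\, \bar{T}_j(\bv T)\, \bv Q^T \bv g.
\]

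The final step is to rewrite the right-hand side as $\bv g^T \bar{T}_j(\bv A)\bv g$ by exchanging the polynomial applied to $\bv T$ for the polynomial applied to $\bv A$. Lemma~\ref{lem:lanczos-id1} gives $\bv A^k \bv g = \bv Q \bv T^k \bv Q^T \bv g$ for every $k \in [m-1]$; the case $k=0$ holds trivially because $\bv Q \bv Q^T \bv g = \bv Q \bv e_1 = \bv g$. Since $\bar{T}_j$ is a polynomial of degree $j \le m-1$, linearity in the coefficients of $\bar{T}_j$ upgrades this monomial identity to $\bv Q\, \bar{T}_j(\bv T)\, \bv Q^T \bv g = \bar{T}_j(\bv A)\bv g$. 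Chaining the three equalities yields $\langle \bar{T}_j, f\rangle = \bv g^T \bar{T}_j(\bv A)\bv g$, as claimed.

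There is no real obstacle here: given Lemma~\ref{lem:lanczos-id1}, the proof is essentially bookkeeping. The only subtle points worth being explicit about are (i) verifying the identity also holds at $k=0$ so that the constant term of $\bar{T}_j$ is handled correctly, and (ii) using $\|\bv g\|_2 = 1$ to identify $\bv Q^T \bv g$ with $\bv e_1$ without any normalization factor.
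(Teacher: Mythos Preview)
Your proof is correct and follows essentially the same approach as the paper's: reduce $\langle \bar T_j, f\rangle$ to $\bv e_1^T \bar T_j(\bv T)\bv e_1$ via the eigendecomposition of $\bv T$, then use $\bv Q\bv e_1=\bv g$ together with Lemma~\ref{lem:lanczos-id1} to rewrite it as $\bv g^T \bar T_j(\bv A)\bv g$. Your explicit handling of the $k=0$ case is a nice touch that the paper glosses over.
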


\begin{proof}
    Let $L=\|\Tb \|_2$. The $j^{th}$ normalized Chebyshev moment of $f$ is given by $\langle\Bar{T}_j, f\rangle=\int_{-L}^{L}\Bar{T}_j(x)f(x)dx=\int_{-L}^{L}\Bar{T}_j(x) \sum_{i=1}^m w_i^2\delta(x-\lambda_i(\bv{T}))dx=\sum_{i=1}^m w_i^2\Bar{T}_j(\lambda_i(\Tb))=\sum_{i=1}^m \Bar{T}_j(\lambda_i(\Tb))\bv{e}_1^T \bv{v}_i\bv{v}_i^T\bv{e}_1= \bv{e}_1^T\left(\sum_{i=1}^m \Bar{T}_j(\lambda_i(\Tb))\bv{v}_i\bv{v}_i^T\right) \bv{e}_1=\bv{e}_1^T \Bar{T}_j(\bv{T})\bv{e}_1$.

    Let $\Qb$ be the orthonormal basis computed by the Lanczos algorithm (Algorithm~\ref{alg:lanczos-slq}) in line~\ref{alg4:line2} of Algorithm~\ref{alg:slq}. From Lemma \ref{lem:lanczos-id1} we know that $\Ab^p\bv{g}=\Qb\Tb^p\Qb^T\bv{g}$ for any $p \in [m-1]$. Thus, $\Bar{T}_j(\Ab)\bv{g}=\Qb\Bar{T}_j(\Tb)\Qb^T\bv{g}$ for any $j \leq m-1$. Note that $\Qb \bv{e}_1=\bv{g}$ since $\bv{g}$ is set as the first column of $\Qb$ in Algorithm~\ref{alg:lanczos-slq} ($\bv g$ is a random unit vector). Thus, we get $\bv{e}_1^T \Bar{T}_j(\bv{T})\bv{e}_1=  \bv{g}^T\Qb\Bar{T}_j(\Tb)\Qb^T\bv{g}=\bv{g}^T\Bar{T}_j(\bv{A})\bv{g}$. 
\end{proof}

We next prove that the $i^{th}$ normalized Chebyshev moment of the output of SLQ $f$ is almost equal to the $i$th normalized Chebyshev moment of the SDE of $\Ab$ via the error bounds for a modified hutchinson's trace estimator~\cite{meyer2021hutch++} that uses a random vector on the unit sphere as opposed to a gaussian or a random sign vector. We note that the analysis of hutchinson's using a random vector on the unit sphere is different from the usual analysis which assumes the elements of the random vector are independent and identically distributed with zero mean. We also note that we require that SLQ use a random vector on the unit sphere as opposed to a say, a gaussian vector or a random sign vector is because such a vector has the same distribution as a normalized gaussian vector. As we will see, this helps us leverage the rotational invariance of the gaussian distribution to derive error bounds for the convergence of eigenvectors and eigenvalues while still ensuring that the random starting vector for Lanczos is a unit vector. Hence, we first prove error bounds for hutchinson's trace estimator using a single random vector on the unit sphere in Lemma~\ref{Lem:hutch_unit}.  

\begin{lemma}[Hutchinson's with random vectors on the unit sphere]\label{Lem:hutch_unit}
Let $\Ab \in \R^{n \times n}$, $\delta \in (0,1/2]$ and $\bv g \sim \mathcal{U}(\mathcal{S}^{n-1})$. Then, assuming $n \geq \Omega(\log (1/\delta))$ and for some fixed constant $C>0$, with probability at least $1-\delta$, we have that:
\begin{align*}
    \bigg\lvert \frac{1}{n}\tr{(\Ab)} -\bv g^T \Ab \bv g \bigg\rvert \leq \frac{C \log (1/\delta)}{n}\|\Ab \|_F.
\end{align*}
\end{lemma}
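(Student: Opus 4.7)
The plan is to exploit the standard representation of the uniform distribution on the sphere in terms of Gaussians: if $\bv h \sim \mathcal{N}(\bv 0, \bv I_n)$, then $\bv g \stackrel{d}{=} \bv h / \|\bv h\|_2$, so
$$\bv g^T \Ab \bv g \stackrel{d}{=} \frac{\bv h^T \Ab \bv h}{N}, \quad \text{where } N := \|\bv h\|_2^2.$$
This lets us leverage strong concentration results for Gaussian quadratic forms. Without loss of generality we may replace $\Ab$ by its symmetric part $\tfrac{1}{2}(\Ab+\Ab^T)$, since $\bv g^T\Ab\bv g$ equals $\bv g^T\!\tfrac{1}{2}(\Ab+\Ab^T)\bv g$ and trace/Frobenius norm only decrease under symmetrization.

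The next step is a clean decomposition separating numerator and denominator fluctuations:
$$\frac{\bv h^T \Ab \bv h}{N} - \frac{\tr(\Ab)}{n} = \underbrace{\frac{\bv h^T \Ab \bv h - \tr(\Ab)}{N}}_{E_1} + \underbrace{\tr(\Ab)\left(\frac{1}{N}-\frac{1}{n}\right)}_{E_2}.$$
For $E_1$, I would apply the Hanson--Wright inequality, which gives that with probability at least $1-\delta/3$,
$$|\bv h^T \Ab \bv h - \tr(\Ab)| \le C_1\bigl(\sqrt{\log(1/\delta)}\,\|\Ab\|_F + \log(1/\delta)\,\|\Ab\|_2\bigr) \le 2C_1 \log(1/\delta)\,\|\Ab\|_F,$$
using $\|\Ab\|_2\le\|\Ab\|_F$. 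For $E_2$, I would use standard $\chi^2$ (Laurent--Massart) concentration of $N$ around $n$: with probability $\ge 1-\delta/3$, $|N-n| \le C_2(\sqrt{n\log(1/\delta)} + \log(1/\delta))$, and the assumption $n = \Omega(\log(1/\delta))$ (with a sufficiently large hidden constant) simultaneously guarantees $N \ge n/2$ with probability $\ge 1-\delta/3$.

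Combining these, $E_1 \le \frac{2}{n}\cdot 2C_1\log(1/\delta)\|\Ab\|_F$. For $E_2$, I would bound $|\tr(\Ab)| \le \sqrt{n}\,\|\Ab\|_F$ via Cauchy--Schwarz applied to the eigenvalues, and use $1/N - 1/n = (n-N)/(nN)$ together with $N \ge n/2$ to obtain
$$|E_2| \le \sqrt{n}\,\|\Ab\|_F \cdot \frac{2|N-n|}{n^2} = O\!\left(\frac{\|\Ab\|_F\sqrt{\log(1/\delta)}}{n} + \frac{\|\Ab\|_F\log(1/\delta)}{n^{3/2}}\right),$$
which is dominated by the target bound $\frac{C\log(1/\delta)}{n}\|\Ab\|_F$. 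A union bound over the three failure events, and an appropriate choice of $C$, completes the argument.

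The main subtlety, rather than any single step being a true obstacle, is ensuring that the denominator $N$ is well-controlled: the naive bound on $E_2$ loses a factor of $\sqrt{n}$ because of the $\sqrt{n}\,\|\Ab\|_F$ upper bound on $|\tr(\Ab)|$, and this is precisely compensated by the $\sqrt{n\log(1/\delta)}$ concentration of $N$, giving final error $O(\log(1/\delta)\|\Ab\|_F / n)$. The condition $n \ge \Omega(\log(1/\delta))$ is what allows one to simultaneously apply a two-sided lower bound on $N$ and obtain the sub-Gaussian (rather than sub-exponential) rate in the dominant term.
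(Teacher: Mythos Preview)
Your proposal is correct and follows essentially the same approach as the paper: represent $\bv g$ as a normalized Gaussian, apply Hanson--Wright (the paper cites the Hutch++ trace estimation bound) for the quadratic form, and use $\chi^2$ concentration together with $|\tr(\Ab)|\le\sqrt{n}\|\Ab\|_F$ for the normalizer. Your algebraic decomposition $E_1+E_2$ is slightly cleaner than the paper's triangle-inequality split (the paper pivots through $\tfrac{1}{n}\bv y^T\Ab\bv y$, so its second term carries $|\bv y^T\Ab\bv y|$ rather than $|\tr(\Ab)|$), but the ingredients and the resulting bound are identical.
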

\begin{proof}
We have $\bv{g} :\stackrel{d}{=}\frac{ \bv y_j}{\sqrt{\sum_{j=1}^l  \bv y^2_j}}$ where $\bv y \in \R^n$ is such that 
 $\bv y_j \sim \mathcal{N}(0,1)$ for $j \in [n]$~\cite{chen2021analysis}. Now, using triangle inequality, we have that :
\begin{align}\label{eq:triangle1}
\bigg\lvert \frac{1}{n}\tr{(\Ab)} - \bv g^T \Ab \bv g \bigg\rvert \leq \bigg\lvert \frac{1}{n}\tr{(\Ab)} -\frac{1}{n} \bv y^T \Ab \bv y \bigg\rvert + \bigg\lvert  \bv g^T \Ab \bv g -\frac{1}{n} \bv y^T \Ab \bv y \bigg\rvert.     
 \end{align}
 We will bound the terms individually. The first term is just the error bound for the hutchinson's estimator using a random gaussian vector. The second term can be bounded as the norm of $\bv y$, which is just a Chi-suared distribution, can be shown to concentrate around $n$ using standard concentartion bounds. First observe that, from Lemma 2 of~\cite{meyer2021hutch++}, we have 
 \begin{align}\label{eq:hutch1}
     \bigg\lvert \frac{1}{n}\tr{(\Ab)} -\frac{1}{n} \bv y^T \Ab \bv y \bigg\rvert \leq \frac{1}{n}\bigg\lvert \tr{(\Ab)} - \bv y^T \Ab \bv y \bigg\rvert \leq \frac{\log (1/\delta)}{n}\|\Ab \|_F,
 \end{align}
 with probability at least $1-\delta$. This bounds the first term in~\eqref{eq:triangle1}. Now,  we bound the second term. From~\eqref{eq:hutch1}, we have:
 \begin{align}\label{eq:hutch}
     | \bv{y}^T \Ab \bv{y} | \leq \log (1/\delta)\|\Ab \|_F+\tr{(\Ab)} \leq \log (1/\delta)\|\Ab \|_F +\sqrt{n}\|\Ab \|_F.
 \end{align}
 with probability at least $1-\delta$. Here, the second inequality follows from the fact that $\tr{(\Ab)} \leq \sqrt{n}\|\Ab \|_F$. Observe that by the concentration properties of the Chi-squared distribution, we have $\big \lvert \frac{1}{n}\|\bv{y} \|_2^2-1 \big \rvert \leq  \sqrt{\frac{\log (l/\delta)}{n}}$, with probability at least $1-\delta$, assuming $ n \geq \Omega(\log (l /\delta))$~\cite{Wainwright_2019}. Rearranging, we get 
 \begin{align}\label{eq:chisq}
     \bigg \lvert \frac{n}{\|\bv y \|^2_2} -1\bigg \rvert \leq 2\sqrt{\frac{\log (l/\delta)}{n}}.
 \end{align}
 Thus, we have:
 \begin{align*}
     \bigg\lvert  \bv g^T \Ab \bv g -\frac{1}{n} \bv y^T \Ab \bv y \bigg\rvert &= \bigg\lvert  \frac{\bv y^T \Ab \bv y}{\|\bv{y} \|_2^2} -\frac{1}{n}\bv y^T \Ab \bv y \bigg\rvert \\
     &\leq \frac{1}{n}  |\bv y^T \Ab \bv y| \bigg \lvert  \frac{n}{\|\bv{y} \|_2^2}-1 \bigg \rvert \\
     &\leq \frac{1}{n}  (\log (1/\delta)\|\Ab \|_F+\sqrt{n}\|\Ab \|_F )  \cdot 2\sqrt{\frac{\log (1/\delta)}{n}}\\
     &\leq \bigg(\frac{2\log^{3/2} (1/\delta) }{n\sqrt{n}}+\frac{2\sqrt{\log (1/\delta)}}{n} \bigg) \|\Ab \|_F \\
     &\leq \frac{3 \log (1/\delta)}{n}\|\Ab \|_F.
 \end{align*} 
 where in the second step, we used the triangle inequality and in the third step, we used the upper bounds from~\eqref{eq:hutch} and~\eqref{eq:chisq}. In the final step, we used the fact that $n \geq \Omega(\log (1/\delta))$. Thus, using the upper bounds from above and from~\eqref{eq:hutch1} in~\eqref{eq:triangle1}, we finally get $\bigg\lvert \frac{1}{n}\tr{(\Ab)} -\bv g^T \Ab \bv g \bigg\rvert \leq  \frac{C \log (1/\delta)\| \Ab\|_F}{n}$ for some constant $C>0$.
\end{proof}

\begin{lemma} 
\label{lem:abs-slq-true-A}
Consider the setting of Lemma~\ref{lem:eq-slq-lanczos}. For $n \geq \Omega(\log (1/\delta))$, with probability at least $1-\delta$, $\left|\langle \Bar{T}_i, f \rangle - \langle\Bar{T}_i, s_{\Ab}\rangle\right| \leq \frac{C \log (m/\delta)}{\sqrt{n}}$ for all $i \in \{0,1,2,\ldots, m-1\}$ where $C>0$ is a large constant.
\end{lemma}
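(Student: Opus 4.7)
The plan is to directly reduce the claim to the Hutchinson-style bound in Lemma~\ref{Lem:hutch_unit}. Throughout, as in the setting of Lemma~\ref{lem:eq-slq-lanczos}, we work under the implicit normalization $\|\Ab\|_2 \le 1$ so that all eigenvalues of $\Ab$ lie in $[-1,1]$ and the normalized Chebyshev polynomials are well-defined on the spectrum.

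First, I would rewrite both sides of the bound in a form amenable to trace estimation. By Lemma~\ref{lem:eq-slq-lanczos}, for each $i \in \{0,1,\ldots,m-1\}$ we have $\langle \bar T_i, f\rangle = \bv g^T \bar T_i(\Ab)\bv g$. Meanwhile, using the definition of the spectral density $s_{\Ab}(x)=\frac{1}{n}\sum_{j=1}^n \delta(x-\lambda_j(\Ab))$, one immediately gets $\langle \bar T_i, s_{\Ab}\rangle = \frac{1}{n}\sum_{j=1}^n \bar T_i(\lambda_j(\Ab)) = \frac{1}{n}\tr(\bar T_i(\Ab))$. Hence bounding $|\langle \bar T_i,f\rangle - \langle \bar T_i,s_{\Ab}\rangle|$ is exactly the task of bounding the error of estimating $\frac{1}{n}\tr(\bar T_i(\Ab))$ by $\bv g^T \bar T_i(\Ab)\bv g$ with a single sample $\bv g \sim \mathcal U(\mathcal S^{n-1})$.

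Next, I would apply Lemma~\ref{Lem:hutch_unit} to the symmetric matrix $\bar T_i(\Ab)$ with failure probability $\delta/m$ (so that a union bound gives the simultaneous guarantee for all $i$). This yields, with probability $\ge 1-\delta/m$,
\begin{align*}
\left| \tfrac{1}{n}\tr(\bar T_i(\Ab)) - \bv g^T \bar T_i(\Ab)\bv g\right| \le \frac{C\log(m/\delta)}{n}\,\|\bar T_i(\Ab)\|_F.
\end{align*}
To finish, I bound $\|\bar T_i(\Ab)\|_F$ using the uniform bound $|\bar T_i(x)| \le \sqrt{2/\pi}$ on $[-1,1]$ (stated just before the definition of the Chebyshev series in Section~\ref{sec:prelim}): since the eigenvalues of $\bar T_i(\Ab)$ are $\bar T_i(\lambda_j(\Ab))$ and each $|\lambda_j(\Ab)|\le 1$, we get $\|\bar T_i(\Ab)\|_F^2 = \sum_{j=1}^n \bar T_i(\lambda_j(\Ab))^2 \le \frac{2n}{\pi}$, so $\|\bar T_i(\Ab)\|_F \le \sqrt{2n/\pi}$. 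Plugging in gives the per-$i$ bound $\frac{C'\log(m/\delta)}{\sqrt n}$ for an appropriate constant $C'$, and a union bound over the $m$ values of $i \in \{0,\ldots,m-1\}$ yields the uniform guarantee with probability at least $1-\delta$. The assumption $n \ge \Omega(\log(m/\delta))$ required by Lemma~\ref{Lem:hutch_unit} is inherited from the statement's assumption $n \ge \Omega(\log(1/\delta))$ (absorbing the $\log m$ into the $\log(1/\delta)$ up to constants, or by slightly strengthening it).

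There is no real obstacle here beyond bookkeeping: the main point is that SLQ's moment $\langle \bar T_i, f\rangle$ is, by Lemma~\ref{lem:lanczos-id1} and the expansion in Lemma~\ref{lem:eq-slq-lanczos}, literally a single-sample Hutchinson estimate of the moment of the true spectral density, and the uniform boundedness of the normalized Chebyshev polynomials keeps $\|\bar T_i(\Ab)\|_F$ at most $O(\sqrt n)$, which is what turns the $1/n$ concentration of Lemma~\ref{Lem:hutch_unit} into the stated $1/\sqrt n$ bound.
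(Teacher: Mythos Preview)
Your proposal is correct and follows essentially the same approach as the paper's proof: both identify $\langle \bar T_i, f\rangle = \bv g^T \bar T_i(\Ab)\bv g$ and $\langle \bar T_i, s_{\Ab}\rangle = \tfrac{1}{n}\tr(\bar T_i(\Ab))$, apply Lemma~\ref{Lem:hutch_unit} with failure probability $\delta/m$, bound $\|\bar T_i(\Ab)\|_F \le O(\sqrt{n})$ via the uniform bound on Chebyshev polynomials on $[-1,1]$, and finish with a union bound over $i$.
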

\begin{proof}

Observe that $\langle\Bar{T}_i, s_{\Ab}\rangle = \frac{1}{n}\tr(\Bar{T}_j(\Ab))$. From Lemma \ref{lem:eq-slq-lanczos} we have $\langle \Bar{T}_i, f \rangle =  \bv g^T\Bar{T}_i(\Ab)\bv g$ for $i \in \{0,1,2,\ldots, m-1\}$. From Lemma~\eqref{Lem:hutch_unit}, since $n \geq \Omega(\log^2(1/\delta))$, for every $i \in \{0,1,\ldots, m-1 \}$, with probability at least $1-\frac{\delta}{m}$ for a constant $C$, we have:
\begin{align*}
    \left|\frac{1}{n}\tr(\Bar{T}_i(\Ab)) - \bv g^T\Bar{T}_i(\Ab)\bv g \right| \leq \frac{C
    \log (m/\delta)}{n}\|T_i(\Ab) \|_F \leq \frac{C\log(m/\delta)}{\sqrt{n}},
\end{align*}
where in the second inequality, we used the fact that $\|T_i(\Ab) \|_2 \leq 1$ since $\|\Ab\|_2 \leq 1$. Applying a union bound for all $i \in \{0,1,\ldots, m-1 \}$ gives the bound.
\end{proof}

We now state the final result of this section which gives the error bound for SLQ. 

\begin{theorem}\label{thm:slq_main}
    Let $\Ab\in\R^\n$ be a symmetric matrix. Let $f(x)$ be the output of Algorithm \ref{alg:slq} with input $\Ab$ and $m=O\left(\frac{1}{\epsilon}\right) $. Then, for some constant $C$ and for $\epsilon,\delta \in (0,1)$, we have $$W_1(s_{\Ab}, f) \leq \epsilon \|\Ab \|_2+\frac{C\log (1/\epsilon \delta) \log (1/\epsilon)}{\sqrt{n}}  \|\Ab \|_2,$$ with probability at least $1-\delta$. Also, Algorithm~\ref{alg:slq} performs $m = O\left(\frac{1}{\epsilon}\right)$ matrix vector products with $\Ab$.
\end{theorem}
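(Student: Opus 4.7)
The plan is to analyze SLQ as an implicit Chebyshev moment matching method, following the strategy sketched in Section~\ref{sec:techOverview}. First, by homogeneity of Wasserstein distance we may rescale so that $\|\Ab\|_2 \le 1$; since $\Tb = \Qb^T \Ab \Qb$ with $\Qb$ orthonormal, this also guarantees that $\Tb$ has spectrum in $[-1,1]$, so both $s_{\Ab}$ and $f$ are supported on $[-1,1]$. The target bound will then follow after multiplying through by $\|\Ab\|_2$ at the end.

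Second, I would set $m = c_1/\epsilon$ for a sufficiently large constant $c_1$ and argue that, with probability at least $1-\delta$, the first $m$ normalized Chebyshev moments of $f$ and $s_{\Ab}$ are close. Concretely, by Lemma~\ref{lem:eq-slq-lanczos}, $\langle \Bar T_j, f \rangle = \bv g^T \Bar T_j(\Ab) \bv g$ for every $j \le m-1$, and by Lemma~\ref{lem:abs-slq-true-A}, $|\langle \Bar T_j, f\rangle - \langle \Bar T_j, s_{\Ab}\rangle| \le \Delta$ uniformly in $j$, where $\Delta = C \log(m/\delta)/\sqrt{n}$. So SLQ automatically produces a distribution whose low-degree Chebyshev moments match those of $s_{\Ab}$ up to error $\Delta$, exactly the property that the Braverman et al.\ analysis exploits.

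Third, I would convert this Chebyshev-moment closeness into a Wasserstein bound using the dual Kantorovich--Rubinstein formulation. Given any $1$-Lipschitz $h$ on $[-1,1]$, I would approximate $h$ by its degree-$m$ truncated Chebyshev series $p_m(x) = \sum_{k=0}^{m-1} a_k \Bar T_k(x)$. By Jackson's theorem, $\|h - p_m\|_\infty = O(1/m)$, and the Chebyshev coefficients of a $1$-Lipschitz function satisfy $\sum_{k=0}^{m-1} |a_k| = O(\log m)$. Splitting
\[
\int h\,(s_{\Ab} - f) \,dx = \int (h - p_m)(s_{\Ab} - f)\,dx + \sum_{k=0}^{m-1} a_k \bigl(\langle \Bar T_k, s_{\Ab}\rangle - \langle \Bar T_k, f\rangle\bigr),
\]
the first term is at most $2\|h-p_m\|_\infty = O(1/m) = O(\epsilon)$, and the second is bounded by $\Delta \cdot \sum_k |a_k| = O(\Delta \log m)$. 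Combining gives $W_1(s_{\Ab}, f) \le O(\epsilon) + O\!\bigl(\log(1/\epsilon\delta)\log(1/\epsilon)/\sqrt{n}\bigr)$, and rescaling by $\|\Ab\|_2$ recovers the theorem. The matrix-vector count is immediate: Algorithm~\ref{alg:slq} uses exactly $m = O(1/\epsilon)$ Lanczos iterations, each requiring one product with $\Ab$.

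The main technical step is the Chebyshev-moments-to-Wasserstein conversion, in particular controlling the $\ell^1$ growth of Chebyshev coefficients of a Lipschitz function (this is where the $\log(1/\epsilon)$ factor enters). This is essentially Lemma~3.1 of \cite{braverman:2022} combined with a standard Jackson-type estimate, so I would cite or lightly adapt that statement rather than rederive it. The probabilistic part is routine: Lemma~\ref{lem:abs-slq-true-A} already handles the union bound over the $m$ moments, and the condition $n \ge \Omega(\log(1/\delta))$ needed there is harmless because otherwise the stated error bound becomes vacuous.
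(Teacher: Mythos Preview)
Your proposal is correct and follows essentially the same route as the paper. The paper rescales to $\Bb=\Ab/\|\Ab\|_2$, invokes Lemma~3.1 of \cite{braverman:2022} (which packages exactly the Jackson-type approximation and Chebyshev-coefficient decay you describe) together with Lemma~\ref{lem:abs-slq-true-A}, and then multiplies back by $\|\Ab\|_2$; your outline is simply a more explicit unpacking of what that cited lemma does internally.
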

\begin{proof}
 Assume that we run Algorithm~\ref{alg:slq} with $\Bb=\frac{1}{\|\Ab \|_2}\Ab$ as input. Then, let $s_{\Bb}$ be the spectral density of $\Bb$ and let $f_{\Bb}$ be the output of Algorithm~\ref{alg:slq} after $m$ iterations. Then, observe that
 \begin{align*}
     \W_1(s_{\Ab},f)= \|\Ab \|_2 \cdot W_1(s_{\Bb},f_{\Bb}).
 \end{align*}
 Since $\| \Bb\|_2 \leq 1$, as stated in Lemma~\ref{lem:eq-slq-lanczos}, the symmetric tridiagonal matrix $\Tb$, which is the output of Lanczos with input matrix $\Bb$ and starting vector $\bv{g}$ in Line \ref{alg4:line2} of Algorithm~\ref{alg:slq}, can be written as $\bv{T}=\Qb^T\Bb\Qb$ where $\Qb$ is an orthonormal matrix. Thus, $\|\Tb\|_2 = \| \Qb^T\Bb\Qb\|_2 \leq \|\Bb \|_2 \leq 1$. So, the support of density function $f_{\Bb}$ output by Algorithm~\ref{alg:slq} is in $[-1,1]$. Using Lemma 3.1 of \cite{braverman:2022} for any two distributions $s_{\Bb},f_{\Bb}$ with support in $[-1,1]$, we have
    \begin{align*}
        \W_1(s_{\Bb}, f_{\Bb}) \leq \frac{36}{m-1} + 2\sum_{i=1}^{m-1}\frac{|\langle\Bar{T}_i, s_{\Ab} \rangle - \langle\Bar{T}_i, f\rangle|}{i}.
    \end{align*}
    From Lemma \ref{lem:abs-slq-true-A}, $|\langle\Bar{T}_i, s_{\Bb}\rangle - \langle\Bar{T}_i, f_{\Bb}\rangle| \leq \frac{C \log (m/\delta)}{\sqrt{n}}$ with probability $1-\delta$ for $i \in \{0,1,2,\ldots, m-1\}$ as long as $\delta \in \big(\frac{1}{e^n},1 \big)$ (due to the assumption $n \geq \Omega(\log (l/n))$). By setting  $m=O(1/\epsilon)$, this gives us, for constants $C_1$ and $C_2$, $\W_1(s_{\Bb}, f_{\Bb}) \leq C_1\epsilon+\frac{C_2\log (1/\epsilon \delta) \log (1/\epsilon)}{\sqrt{n}} $ from the equation above. Finally, we get $\W_1(s_{\Ab},f) \leq \|\Ab \|_2\W_1(s_{\Bb},f_{\Bb}) \leq C_1\epsilon\| \Ab\|_2+\frac{C_2\log (1/\epsilon \delta) \log (1/\epsilon)}{\sqrt{n}}\|\Ab \|_2 $ 

     Since the Lanczos algorithm is run for $m=O\left(\frac{1}{\epsilon}\right)$ iterations, the number of matrix vector products with $\Ab$ is $O\left(\frac{1}{\epsilon}\right)$. 
\end{proof}

We also note that the second term $\frac{\log (1/\epsilon \delta) \log (1/\epsilon)}{\sqrt{n}}$ in the error bound $\epsilon\| \Ab\|_2+\frac{\log (1/\epsilon \delta) \log (1/\epsilon)}{\sqrt{n}} \| \Ab\|_2$ can be made smaller ($< \epsilon \|\Ab \|_2$) by averaging the resulting distributions over multiple random starting vectors in Algorithm~\ref{alg:slq} instead of a single random starting vector. However, this would complicate the analysis for the improved deflation based bounds for SLQ that we derive in the subsequent section as it would require carefully analyzing the convergence of the `average' distribution for different starting vectors for SLQ. Hence, we do the analysis with a single random starting vector.

\subsection{Error Bounds for Lanczos}\label{sec:slq2}

Since SLQ uses the Lanczos algorithm (Algorithm~\ref{alg:lanczos-slq}) as a subroutine, we now derive eigenvalue and eigenvector approximation error bounds for the Lanczos algorithm, which is a Krylov method with a single random starting vector. We will use some results and proof techniques developed in~\cite{meyer:2024} to derive our bounds. We start by describing the critical observation in~\cite{meyer:2024} that the span of the Krylov subspace generated with a single vector as the starting block is the same as the span of the Krylov subspace with a large starting block with fewer iterations. Let Krylov subspace generated by Lanczos after $q$ iterations with starting vector $\bv{g}$ be $$\mathcal{K}_q(\Ab,\bv g)=[\bv{g}, \ldots, \Ab^{q-1}\bv{g}].$$ Note that here we overload the notation as in Section~\ref{sec:explicit}, we had defined $\mathcal{K}_q(\Ab,\bv g)$ in terms of $\Ab\Ab^T$ and $\Ab \bv g$ while here we define it in terms of $\Ab$ and $\bv g$. The Lanczos algorithm finds an orthonormal basis $\Qb$ of $\mathcal{K}_q(\Ab,\bv g)$ such that $\Tb=\Qb^T\Ab\Qb$ where $\Tb$ is a tridiagonal matrix. We are interested in the bounding the error between the eigenvalues of $\Tb$ and the true eigenvalues of $\Ab$. Since the eigenvalues of $\Tb$ are the same up to a rotation of the orthonormal basis $\Qb$ of the Krylov subspace, the eigenvalues estimated by the Lanczos algorithm depends only on the span of the Krylov subspace $\mathcal{K}$ generated after $q$ iterations and not on the specific $\Qb$ found by Algorithm~\ref{alg:lanczos-slq}. Let $\bv{S}_l$ be such that:
\begin{align}\label{Eq:span1}
    \bv{S}_l = \left[\bv{g}, \Ab\bv{g}, \Ab^2\bv{g}, \ldots, \Ab^{l-1}\bv{g} \right].
\end{align}
Then observe that:
\begin{align}\label{Eq:span2}
    \text{span}(\mathcal{K}_{q}(\Ab, \bv g)) = \text{span}\left( \left[\bv{S}_l, \Ab\bv{S}_l, \Ab^2\bv{S}_l, \ldots, \Ab^{q-l}\bv{S}_l \right] \right).
\end{align}
So, the span of the Krylov subspace generated by Lanczos with $\bv{g}$ as the starting vector after $q$ iterations matches the span of the Krylov subspace generated after $q-l+1$ iterations with $\bv{S}_l$ as the starting block.  So, it is enough to analyze the orthonormal basis of $\mathcal{K}_{q-l+1}(\Ab, \bv{S}_l)$. Let $\bv{Q}$ be the orthonormal basis for $\mathcal{K}_{q-l+1}(\Ab, \bv{S}_l)$ found by the Lanczos algorithm (Algorithm~\ref{alg:lanczos-slq}). Similar to Lemma~\ref{lem:sin_bound}, we first want to show that $\Qb$ approximately spans the top subspace of $\Ab$. However, note that the proof of Lemma~\ref{lem:sin_bound} relies on Lemma~\ref{thm:angle_bound} which assumes that for the starting block $\Xb \in \R^{n \times l}$, we have $\rank(\bv{V}_l^T\Xb) =l$. While this is true for a random Gaussian starting block $\Xb \in \R^{n \times l}$, the starting block $\bv{S}_l$ in~\eqref{Eq:span1} is far from being completely random as its columns are highly correlated. We will first prove that $\text{rank}(\bv{S}_l^T\bv{U}_l)=l$ (where $\Ub_l \in \R^{n \times l}$ contains the first $l$ columns of $\Ub$) by following the approach presented in the proof of Theorem 3 in~\cite{meyer:2024}. Then, we will apply Lemma~\ref{thm:angle_bound} to bound the projection error of the top subspace of $\Ab$ onto $\Qb\Qb^T$. 

\medskip

\noindent\textbf{Gap Dependence.} We note that the number of iterations of Lanczos will depend logarithmically on the inverse of the minimum relative singular value gap $g_{\min}=\min_{i \in [l]}\frac{\sigma_i(\Ab)-\sigma_{i+1}(\Ab)}{\sigma_i(\Ab)}$ among the top $l$ singular values of $\Ab$, similar to the bounds in~\cite{meyer:2024}. We need $g_{\min}>0$ to prove $\rank(\Ub_l^T\bv{S}_l) =l$ by using the fact that a degree $l-1$ polynomial cannot be exactly $0$ of $l$ distinct points. Note that if $g_{\min}=0$, then $\Qb$ will not converge to the the top subspace. For example, when $\Ab$ is an identity matrix, $\Qb$ only spans the starting vector (which is always an eigenvector of the identity matrix) and Lanczos never finds the other eigenvectors. In general, it is reasonable to expect that for most matrices, $g_{\min}$ should be at least a small constant. We also note that the bounds can be made gap independent by a random perturbation analysis as in~\cite{meyer:2024}. 

\begin{lemma}\label{lem:slq_sin}
For a symmetric matrix $\bv{A} \in \R^{n \times n}$ such that $\text{rank}(\Ab) >l$, let $\Qb$ be an orthonormal basis of the Krylov subspace $\mathcal{K}_{q-l+1}(\Ab, \bv{S}_l)$ where $ \bv{S}_l = \left[\bv{g}, \Ab\bv{g}, \Ab^2\bv{g}, \ldots, \Ab^{l-1}\bv{g} \right]$ and $\bv g \sim \mathcal{U}(\mathcal{S}^{n-1})$. Let $g_{\min}=\min_{i \in [l]}\frac{\sigma_i(\Ab)-\sigma_{i+1}(\Ab)}{\sigma_i(\Ab)}$. Let $\alpha= \max \left(\sigma_{l+1}(\Ab), \frac{\|\Ab \|_2g^{c/4}_{\min}}{n^{c/4}} \right)$ for some large constant $c>0$ and let $k \in [l]$ such that $\sigma_{k}(\Ab) \geq 2\alpha$ and $\sigma_{k+1}(\Ab) \leq 2\alpha$. Then, for any $\epsilon \in (0,1]$, $\delta \in (0,1)$ $\kappa=\frac{\|\Ab \|_2}{2\alpha}$ and $q=O(l\log (\frac{1}{g_{\min}})+\frac{1}{\epsilon}\log (\frac{n \cdot \kappa}{\delta})) $, we have $$\|\Ub_{k} - \Qb\Qb^T\Ub_{k}\|_F^2 \leq \frac{g^{cl}_{\min}}{(n\cdot \kappa)^{c/\epsilon}}, $$ with probability $1-\delta$.
\end{lemma}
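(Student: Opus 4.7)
The plan is to reduce to the block Krylov setting of Section~\ref{subsec:errblk} by exploiting the span identity \eqref{Eq:span2}, $\text{span}(\mathcal{K}_{q}(\Ab, \bv g)) = \text{span}(\mathcal{K}_{q-l+1}(\Ab, \bv S_l))$, and then invoking Lemma~\ref{thm:angle_bound} with ``starting block'' $\bv X = \bv S_l$ and target dimension $p = k$. Because $\sigma_k(\Ab) \geq 2\alpha \geq 2\sigma_{l+1}(\Ab)$, the relative gap $\gamma = \sigma_k(\Ab)/\sigma_{l+1}(\Ab) - 1$ is at least $1$, so the odd-power gap-amplifying Chebyshev polynomial of Lemma~4.5 of \cite{drineas2018structural}, taken with degree $2(q-l+1)+1$, satisfies
$$\|\phi(\bv{\Sigma}_{l,\perp})\|_2^2 \cdot \|\phi(\bv{\Sigma}_k)^{-1}\|_2^2 \leq \frac{16}{2^{\Omega(q-l)}},$$
exactly as in the proof of Lemma~\ref{lem:sin_bound}. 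This yields exponential decay in the number of iterations beyond the first $l$ used to form $\bv S_l$.

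The main technical step is to verify the rank condition $\rank(\bv V_l^T \bv S_l) = l$ required by Lemma~\ref{thm:angle_bound} and to bound $\|\bv V_{l,\perp}^T \bv S_l (\bv V_l^T \bv S_l)^{-1}\|_2^2$ in terms of $g_{\min}$, following the approach of Theorem~3 of \cite{meyer:2024}. Factor $\bv V_l^T \bv S_l = \bv D \bv W$, where $\bv D = \diag(\bv V_l^T \bv g)$ and $\bv W \in \R^{l \times l}$ is a Vandermonde matrix whose nodes are the top $l$ signed eigenvalues of $\Ab$. Since $g_{\min} > 0$, these nodes are distinct and hence $\bv W$ is invertible; standard quantitative bounds on inverses of Vandermonde matrices give $\sigma_{\min}(\bv W) \geq g_{\min}^{O(l)} \cdot \|\Ab\|_2^{l-1}$, after converting the relative gap $g_{\min}$ into absolute node separations via rescaling by $\|\Ab\|_2$. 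The diagonal $\bv D$ is controlled by anti-concentration of $\bv V_l^T \bv g$: writing $\bv g \stackrel{d}{=} \bv y / \|\bv y\|_2$ for $\bv y \sim \mathcal{N}(\bv 0, \bv I_n)$ as in the proof of Lemma~\ref{Lem:hutch_unit} and using rotational invariance of the Gaussian, each coordinate of $\bv V_l^T \bv g$ is distributed as a standard Gaussian divided by $\|\bv y\|_2$. Gaussian anti-concentration combined with a union bound over the $l$ coordinates and a Chi-squared concentration bound on $\|\bv y\|_2$ gives $\min_i |\bv D_{ii}| \geq \poly(\delta/(nl))$ with probability at least $1-\delta$. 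Finally, $\|\bv V_{l,\perp}^T \bv S_l\|_2 \leq \|\bv S_l\|_F \leq \sqrt{l}\cdot\|\Ab\|_2^{l-1}$ since $\|\bv g\|_2 = 1$. Combining these bounds yields $\|\bv V_{l,\perp}^T \bv S_l (\bv V_l^T \bv S_l)^{-1}\|_2^2 \leq \poly(n, \kappa, 1/\delta) \cdot g_{\min}^{-O(l)}$.

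Substituting into Lemma~\ref{thm:angle_bound} gives
$$\|\Ub_k - \Qb \Qb^T \Ub_k\|_F^2 \leq \frac{\poly(n, \kappa, 1/\delta)}{g_{\min}^{O(l)} \cdot 2^{\Omega(q-l)}},$$
and choosing $q = O\!\left(l \log(1/g_{\min}) + \tfrac{1}{\epsilon}\log(n \kappa/\delta)\right)$ with a sufficiently large constant absorbs the $g_{\min}^{-O(l)}$ factor using the first $O(l \log(1/g_{\min}))$ iterations and the $\poly(n,\kappa,1/\delta)$ factor using the remaining $O(\tfrac{1}{\epsilon}\log(n \kappa/\delta))$ iterations, yielding the claimed $g_{\min}^{cl}/(n\cdot \kappa)^{c/\epsilon}$ bound for any desired constant $c$ by taking the constant in $q$ large enough. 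The main obstacle I anticipate is the quantitative Vandermonde-inverse bound $\sigma_{\min}(\bv W) \geq g_{\min}^{O(l)} \cdot \|\Ab\|_2^{l-1}$, which is precisely what forces the $l\log(1/g_{\min})$ term in the iteration count and requires careful translation between relative gaps $g_{\min}$ and absolute node separations after rescaling. A secondary but routine issue is handling the dependence among coordinates of the unit vector $\bv V_l^T \bv g$, which is dispatched cleanly via the same Gaussian-normalization trick already used in the proof of Lemma~\ref{Lem:hutch_unit}.
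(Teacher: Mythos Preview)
Your high-level strategy is exactly the paper's: apply Lemma~\ref{thm:angle_bound} with the gap-amplifying polynomial, reduce the problem to bounding $\|\bv U_{l,\perp}^T\bv S_l(\bv U_l^T\bv S_l)^{-1}\|_2$, and control that quantity via the Vandermonde/Lagrange structure of $\bv S_l$ together with Gaussian anti-concentration on the coordinates of $\bv U_l^T\bv g$ (handled, as you say, through the $\bv g\stackrel{d}{=}\bv y/\|\bv y\|_2$ device). The obstacle you flagged is real, and your proposed resolution of it does not work as stated.

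The bound $\sigma_{\min}(\bv W)\geq g_{\min}^{O(l)}\|\Ab\|_2^{l-1}$ is false: for $l=2$ with nodes $M$ and $M(1-g)$ one has $\sigma_{\min}(\bv W)=\Theta(g)$, not $\Theta(gM)$. More importantly, splitting as $\|\bv V_{l,\perp}^T\bv S_l\|_2\cdot\|(\bv V_l^T\bv S_l)^{-1}\|_2$ is too lossy in principle, because $\|\bv W^{-1}\|_2$ depends on \emph{absolute} node separations. Take $\|\Ab\|_2=1$, $l=3$, $\lambda_1=1$, $\lambda_2=2\eta$, $\lambda_3=\eta$; then $g_{\min}=1/2$ yet $\|\bv W^{-1}\|_2=\Theta(1/\eta)$. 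For longer clusters of this type the loss compounds to $(\|\Ab\|_2/\sigma_l)^{\Theta(l)}$, which can be $\kappa^{\Theta(l)}$; this is not $\poly(\kappa)$ and is not absorbed by the stated iteration budget $q=O(l\log(1/g_{\min})+\tfrac{1}{\epsilon}\log(n\kappa/\delta))$.

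The paper (following Theorem~3 of \cite{meyer:2024}) avoids this by \emph{not} separating: it rewrites $\|\bv U_{l,\perp}^T\bv S_l(\bv U_l^T\bv S_l)^{-1}\|_2^2=\max_{\deg\hat p\le l-1}\|\hat p(\bv\Lambda_{l,\perp})\bv y\|_2^2/\|\hat p(\bv\Lambda_l)\bv y\|_2^2$ and bounds the polynomial ratio directly via Lagrange interpolation. The crucial cancellation is per-factor: for $|x|\le\sigma_{l+1}$ one has $|x-\lambda_k|\le 2\sigma_k$ and $|\lambda_j-\lambda_k|\ge g_{\min}\sigma_k$, so each Lagrange factor is at most $2/g_{\min}$ and the $\sigma_k$'s disappear, yielding a bound of order $n l^3\log(n/\delta)/(g_{\min}^{4l}\delta^2)$ with no hidden $\kappa$. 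In your Vandermonde language, the fix is to keep the two Vandermonde pieces together: writing $\bv V_{l,\perp}^T\bv S_l=\bv D'\bv W'$ analogously, the entries of $\bv W'\bv W^{-1}$ are exactly the Lagrange values $\ell_j(\lambda_{l+i})$, so bound $\|\bv W'\bv W^{-1}\|$ jointly rather than $\|\bv W'\|\cdot\|\bv W^{-1}\|$.
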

\begin{proof}
Our proof will utilize the results and proof the techniques from Lemma~\ref{thm:angle_bound}, Lemma~\ref{lem:sin_bound} and Theorem 3 of~\cite{meyer:2024}. We will first prove that $\text{rank}(\bv{S}_l^T\bv{U}_l)=l$ (where $\Ub_l \in \R^{n \times l}$ contains the first $l$ columns of $\Ub$) by following the approach presented in the proof of Theorem 3 in~\cite{meyer:2024}. Noet that we can't directly use Theorem 3 from~\cite{meyer:2024} as it is stated for a gaussian starting vector while our starting vector is random on the unit sphere. Observe that for any $\bv{x} \in \R^l$, $\bv{S}_l \bv{x}= \hat{p}(\bv{A})\bv{g}$ for some degree $l-1$ polynomial $\hat{p}$ with coefficients determined by the entries in $\bv{x}$. Also, by the rotational invariance of the gaussian distribution, we have $(\bv{U}_l^T\bv{g})_i:\stackrel{d}{=}\frac{y_i}{\sqrt{\sum_{j=1}^l y^2_j}}$ where $y_i \sim \mathcal{N}(0,1)$ for $i \in [l]$.  Then, $$\bv{U}_l^T\bv{S}_l \bv{x}=\bv{U}_l^T \hat{p}(\bv{A}) \bv{g} = \bv{U}_l^T \bv{U} \hat{p}(\bv{\Lambda})\bv{U}^T \bv{g}:=\hat{p}(\bv{\Lambda}_l) \frac{\bv{y}}{\| \bv{y}\|_2},$$ where $\bv{\Lambda}_l$ contains the top $l$ eigenvalues of $\Ab$ on its diagonal. By Lemma 4 of~\cite{meyer:2024}, we have $\min_{i \in [l]} y^2_i \geq \frac{2\delta^2}{\pi l^2}$ with probability at least $1-\delta$. Then, we can bound the numerator above as:
\begin{equation}\label{Eq:denom}
     \left\|\hat{p}(\bv{\Lambda}_l) \bv{y} \right\|_2^2 = \sum_{i=1}^l (\hat{p}(\lambda_i(\Ab)))^2y_i^2 \geq \frac{2\delta^2}{\pi l^2 }\sum_{i=1}^l(\hat{p}(\lambda_i(\Ab)))^2.
\end{equation}
Since $\hat{p}$ has degree $l-1$, and none of the eigenvalues are repeated (as $g_{\min}>0$), $\left\|\hat{p}(\bv{\Lambda}_l) \bv{y} \right\|_2^2 >0$. We also have $\|\bv y \|_2 > 0$ with probability 1. Thus, we get that $\|\bv{U}_l^T\bv{S}_l \bv{x} \|_2^2 >0$ for any $\bv{x}$. So, $\sigma_{\min}(\bv{U}_l^T\bv{S}_l) >0$ and hence, $\bv{U}_l^T\bv{S}_l$ is invertible which means $\text{rank}(\bv{S}_l^T\bv{U}_l)=l$. Note that we also have $\sigma_k(\Ab) \geq 2\alpha \geq 2\sigma_{l+1}(\Ab)$ and so $k \in [l]$. So, we can apply Lemma~\ref{thm:angle_bound} to bound $\|\sin \bv{\Theta}(\Qb,\Ub_k) \|_F^2$. Without loss of generality, assume that $q$ is odd. Let $\phi(x)$ be a gap amplifying polynomial of degree $q+1$ consisting of only even powers as defined in Lemma 4.5 of~\cite{drineas2018structural} with parameters $\alpha=\sigma_{l+1}(\Ab)$ and gap $\gamma=\frac{\sigma_{k}(\Ab)}{\sigma_{l+1}(\Ab)}-1 \geq 1$. Note that for even $q$, we can similarly define an amplifying polynomial of degree $q$ consisting of only even powers. For any $i \in [k]$, $\phi(\sigma_i(\Ab)) \geq \sigma_i(\Ab) >0$ as $\sigma_i(\Ab) \geq \sigma_k(\Ab) \geq (1+\gamma)\sigma_{l+1}(\Ab)=(1+\gamma)\alpha$ by our choice of parameters for the gap-amplifying polynomial. Thus, $\phi(\bv{\Lambda}_p)$ is non-singular for any $p \in [l]$. Let $\bv{\Phi}=\Ub\phi(\bv{\Lambda})\Ub^T\bv{S}_l$. The columns of $\bv{\Phi}$ lie in $\text{span}(\mathcal{K}_{q-1})$~\eqref{Eq:span2} and thus $\range(\bv{\Phi})\subseteq \range(\mathcal{K}_{q-1}) = \range(\bv Q)$. Also, $\bv{\Phi}\bv{\Phi}^{\dag}$ and $\bv Q \bv Q^T$ are the orthogonal projectors onto $\range(\bv{\Phi})$ and $\range(\mathcal{K}_{q-1})$ respectively. So, following the proof of Lemma~\ref{thm:angle_bound} we get:
\begin{align}\label{Eq:angle_bound1}
   \|\Ub_k-\Qb\Qb^T\Ub_k \|_F^2 \leq \|\phi(\bv{\Lambda}_{l,\perp}) \|^2_2 \| \phi(\bv{\Lambda}_k)^{-1}\|^2_2\|\bv{U}^T_{l,\perp}\bv{S}_l(\bv{U}^T_{l}\bv{S}_l)^{-1} \|^2_2.
\end{align}
Since $q$ is odd, $q+1$ is even, giving $\phi(\lambda_i(\Ab))=\phi(-\lambda_i(\Ab))=\phi(\sigma_i(\Ab))$ for $i \in [n]$. So, $\phi(\bv{\Lambda}_{l,\perp})=\phi(\bv{\Sigma}_{l,\perp})$ and $\phi(\bv{\Lambda}_k)=\phi(\bv{\Sigma}_k)$. Following the proof of Lemma~\ref{lem:sin_bound} (which in turn uses Lemma 4.5 of~\cite{drineas2018structural} based on the properties of $\phi$), we get the bounds 
\begin{align}\label{eq:phil}
    \|\phi(\bv{\Lambda}_{l,\perp})\|_2=\|\phi(\bv{\Sigma}_{l,\perp}) \|_2 \leq \frac{4\sigma_{l+1}(\Ab)}{2^{(q+1)\min (\sqrt{\gamma},1)}}.
\end{align} and 
\begin{align}\label{eq:phik}
    \|\phi(\bv{\Lambda}_k)^{-1} \|_2=\|\phi(\bv{\Sigma}_k)^{-1} \|_2 \leq \sigma_k(\Ab).
\end{align}
So, the final step is to bound $\|\bv{U}^T_{l,\perp}\bv{S}_l(\bv{U}^T_{l}\bv{S}_l)^{-1} \|^2_2 $. We will bound this by following the proof technique presented in Theorem 3 of~\cite{meyer:2024} to bound the same quantity. We just give an outline of the proof and skip the details since the quantity is the same. Observe that we have:
\begin{align}
   \|\bv{U}^T_{l,\perp}\bv{S}_l(\bv{U}^T_{l}\bv{S}_l)^{-1} \|^2_2  &=\max_{\bv{x}} \frac{\|\bv{U}^T_{l,\perp}\bv{S}_l(\bv{U}^T_{l}\bv{S}_l)^{-1} \bv{x}\|_2}{\|\bv{x} \|_2} \nonumber \\
   &\leq \max_{\bv{x}} \frac{\|\bv{U}^T_{l,\perp}\bv{S}_l\bv{x} \|_2}{\|\bv{U}^T_{l}\bv{S}_l\bv{x} \|_2}=\max_{\text{deg}(\hat{p}) \leq l-1} \frac{\|\bv{U}^T_{l,\perp}\hat{p}(\Ab) \bv{y}\|_2^2}{\|\bv{U}^T_{l} \hat{p}(\Ab)\|_2^2}\notag\\
   &=\max_{\text{deg}(\hat{p}) \leq l-1} \frac{\|\hat{p}(\bv{\Lambda}_{l,\perp})\bv{y}\|_2^2}{\|\hat{p}(\bv{\Lambda}_{l})\bv{y} \|_2^2}.
\end{align}
The denominator is already bounded from below in~\eqref{Eq:denom}. We now bound the numerator following the proof in~\cite{meyer:2024}. Note that  $y^2_i \leq 1+4\log (1/\delta)$ for all $i \in [l]$ with probability at least $1-\delta$ by standard concentration bounds for chi-squared random variables~\cite{wainwright2019high}. Then, by a union bound, $\max_i y_i \leq 5\log (n/\delta)$ for $n>2$.  We thus have:
\begin{align}\label{eq:num}
   \|\hat{p}(\bv{\Lambda}_{l,\perp})\bv{y}\|_2^2 \leq 5\log (n/\delta) \sum_{i=1}^n (\hat{p}(\lambda_i(\Ab)))^2 \leq 5n\log (n/\delta)\max_{i \in [n]} (\hat{p}(\lambda_i(\Ab)))^2
\end{align}
Then, combining~\eqref{eq:num} and~\eqref{Eq:denom}, we get:
\begin{align*}
    \|\bv{U}^T_{l,\perp}\bv{S}_l(\bv{U}^T_{l}\bv{S}_l)^{-1} \|^2_2 \leq \frac{5\pi n l^2\log (n/\delta)}{2\delta^2}\frac{ \max_{i \in [n]} (\hat{p}(\lambda_i(\Ab)))^2}{\sum_{i=1}^l(\hat{p}(\lambda_i(\Ab)))^2}
\end{align*}
We can then bound $\frac{ \max_{i \in [n]} (\hat{p}(\lambda_i(\Ab)))^2}{\sum_{i=1}^l(\hat{p}(\lambda_i(\Ab)))^2} \leq \frac{l}{g^{4l}_{\min}}$ by expanding $\hat{p}$ as a Lagrange interpolating polynomial over $\sigma_1^2(\Ab), \ldots, \sigma_{l}^2(\Ab)$ in exactly the same way as in the proof of Theorem 3 in~\cite{meyer:2024}. We finally get 
\begin{align}\label{Eq:ul}
    \|\bv{U}^T_{l,\perp}\bv{S}_l(\bv{U}^T_{l}\bv{S}_l)^{-1} \|^2_2 \leq \frac{5 \pi n l^3}{2 g^{4l}_{\min}\delta^2} \log (\frac{n}{\delta}).
\end{align}
Then, using the bounds~\eqref{Eq:ul},~\eqref{eq:phik} and~\eqref{eq:phil} on the right hand side of~\eqref{Eq:angle_bound1} and using the fact that $\gamma =\frac{\sigma_k(\Ab)}{\sigma_{l+1}(\Ab)}-1 \geq 1$ (by assumption) we get that for $q=O(l\log (\frac{1}{g_{\min}})+\frac{1}{\epsilon}\log (\frac{n \cdot \kappa}{\delta})) $ (as long as the constant $c>0$ is large enough):
\begin{align*}
    \|\Ub_k-\Qb\Qb^T\Ub_k \|_F^2 \leq O \left(\frac{ \sigma^2_{l+1}(\Ab)}{\sigma^2_{k}(\Ab)}\cdot \frac{nl^3 \log (n/\delta)}{2^{(2q+1)\min(\sqrt{\gamma},1 )} g^{4l}_{\min} \delta^2} \right) \leq \frac{g^{cl}_{\min}}{(n\cdot \kappa)^{c/\epsilon}}.
\end{align*}
\end{proof}

Based on the results from Lemma~\ref{lem:slq_sin}, we can now generalize the error bounds for the randomized block krylov method in Section~\ref{subsec:errblk} to the single vector Lanczos Algorithm~\ref{alg:lanczos-slq}. We will utilize the gap between the eigenvalues ($g_{\min}>0$) to give stronger convergence guarantees for the top $k$ eigenvalues and eigenvectors of $\bv{Q}\bv{Q}^T\bv{A}\bv{Q}\bv{Q}^T$. We first state a stronger version of Theorem~\ref{thm:eig_error} which gives shows that the large magnitude eigenvalues of $\bv{Q}\bv{Q}^T\bv{A}\bv{Q}\bv{Q}^T$ converge to those of $\Ab$ as long as there exists some $k$ such that $\sigma_k(\Ab)$ is larger than $\sigma_{l+1}(\Ab)$ by at least a constant factor. Roughly, we are able to prove the stronger statement as the gaps between the singular values ensure the estimated singular values are also well separated.

\begin{lemma}\label{lem:eig_error_slq}
Consider the setting of Lemma~\ref{lem:slq_sin}. Let $c>0$ be the constant in the error boound of Lemma~\ref{lem:slq_sin}. Then, for every $i \in [k]$, with probability at least $1-\delta$, we have: 
\begin{align*}
    |\lambda_i(\Ab)-\lambda_{i}(\Qb\Qb^T \Ab \Qb \Qb^T)| \leq \frac{\|\Ab \|_2g^{cl/2}_{\min}}{(n\cdot \kappa)^{c/2\epsilon}}.
\end{align*}
\end{lemma}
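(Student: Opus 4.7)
The strategy is to mirror the structure of the proof of Theorem~\ref{thm:eig_error}, but with the single-vector Krylov subspace bound of Lemma~\ref{lem:slq_sin} replacing the block Krylov bound of Lemma~\ref{lem:sin_bound}. Because Lemma~\ref{lem:slq_sin} already delivers convergence of the top $k$ singular vectors directly---the singular value gap hypothesis $g_{\min} > 0$ removes the need to search for a sub-index $p \in [k, l]$ as in Lemma~\ref{Lem:eig_vec}---we can work with $p = k$ throughout, and the role that the small polynomial gap $\|\Ab\|_2 / n^{c/2+1}$ played in the block Krylov proof is now played by the much larger gap $g_{\min} \sigma_k(\Ab) \geq g_{\min}\|\Ab\|_2/\kappa$.

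\textbf{Intermediate bounds and Parlett's theorem.} The first step is to port Lemmas~\ref{lem:spec1} and~\ref{lem:bound3} to the single-vector setting. Writing $\Qb\Qb^T\Ub_k = \Ub_k + \bv{E}$ with $\|\bv{E}\|_2 \leq \|\bv{E}\|_F \leq g_{\min}^{cl/2}/(n\kappa)^{c/2\epsilon}$ by Lemma~\ref{lem:slq_sin}, the same algebraic manipulations as in Lemmas~\ref{lem:spec1} and~\ref{lem:bound3} yield, for a constant $C > 0$ and every $i \in [k]$,
\[
\|\Qb\Qb^T\Ab\Qb\Qb^T\Ub_k - \Ub_k\Ub_k^T\Qb\Qb^T\Ab\Qb\Qb^T\Ub_k\|_2 \;\leq\; \frac{C \|\Ab\|_2 \, g_{\min}^{cl/2}}{(n\kappa)^{c/2\epsilon}}
\]
and, via Weyl's inequality (Fact~\ref{fact:weyl}),
\[
\bigl|\lambda_i(\Ub_k^T\Qb\Qb^T\Ab\Qb\Qb^T\Ub_k) - \lambda_i(\Ab)\bigr| \;\leq\; \frac{C \|\Ab\|_2 \, g_{\min}^{cl/2}}{(n\kappa)^{c/2\epsilon}}.
\]
Then applying Parlett's theorem (Theorem~\ref{thm:parlett}) with $\bv{C} = \Ub_k$ and $\bv{D} = \Qb\Qb^T\Ab\Qb\Qb^T$ extracts $k$ eigenvalues $\alpha_1, \ldots, \alpha_k$ of $\Qb\Qb^T\Ab\Qb\Qb^T$ with $|\alpha_i - \lambda_i(\Ub_k^T\Qb\Qb^T\Ab\Qb\Qb^T\Ub_k)|$ bounded by the same quantity, and a triangle inequality gives $|\alpha_i - \lambda_i(\Ab)| \leq 2C \|\Ab\|_2 g_{\min}^{cl/2}/(n\kappa)^{c/2\epsilon}$ for every $i \in [k]$.

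\textbf{Main obstacle: identifying $\alpha_i$ with $\lambda_i(\Qb\Qb^T\Ab\Qb\Qb^T)$.} Unlike Theorem~\ref{thm:eig_error}, which could settle for \emph{some} permutation, the statement here pins $\alpha_i$ down as the exact $i$-th largest eigenvalue (in magnitude) of $\Qb\Qb^T\Ab\Qb\Qb^T$, with the identity permutation. This is where the hypothesis $g_{\min} > 0$ is crucial. First, no $\alpha_j$ can fall outside the top $k$: if $\alpha_j = \lambda_{k+r}(\Qb\Qb^T\Ab\Qb\Qb^T)$ for some $r \geq 1$, the minimax principle forces $|\alpha_j| \leq |\lambda_{k+1}(\Ab)|$, so $|\lambda_j(\Ab)| - |\lambda_{k+1}(\Ab)| \leq 2C\|\Ab\|_2 g_{\min}^{cl/2}/(n\kappa)^{c/2\epsilon}$, contradicting the fact that $|\lambda_j(\Ab)| - |\lambda_{k+1}(\Ab)| \geq g_{\min}\sigma_k(\Ab) \geq g_{\min}\|\Ab\|_2/\kappa$ once $c$ is chosen large enough (since $g_{\min}, 1/(n\kappa) \leq 1$). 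Second, the gap also forces the identity ordering within the top $k$: for $i < j$ in $[k]$, $|\alpha_i| - |\alpha_j| \geq |\lambda_i(\Ab)| - |\lambda_j(\Ab)| - 4C\|\Ab\|_2 g_{\min}^{cl/2}/(n\kappa)^{c/2\epsilon} \geq g_{\min}|\lambda_i(\Ab)|/2 > 0$ for large enough $c$, so the $\alpha_i$ are in strictly decreasing magnitude. Combined with their being exactly the top $k$ eigenvalues of $\Qb\Qb^T\Ab\Qb\Qb^T$, this gives $\alpha_i = \lambda_i(\Qb\Qb^T\Ab\Qb\Qb^T)$ for every $i \in [k]$, completing the proof after absorbing all constants into $c$. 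The delicate aspect is bookkeeping the constants so that the error from Parlett is genuinely dominated by the $g_{\min}/\kappa$ gap, which it is since the former is exponentially small in the iteration count while the latter is an inverse polynomial.
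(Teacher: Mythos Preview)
Your proposal is correct and follows essentially the same approach as the paper: port Lemmas~\ref{lem:spec1} and~\ref{lem:bound3} using the single-vector bound of Lemma~\ref{lem:slq_sin}, apply Parlett's theorem to extract $k$ eigenvalues $\alpha_1,\ldots,\alpha_k$ of $\Qb\Qb^T\Ab\Qb\Qb^T$ close to $\lambda_1(\Ab),\ldots,\lambda_k(\Ab)$, and then use the gap $g_{\min}$ to pin down the identity permutation. The only organizational difference is in the last step: the paper argues directly, for each fixed $i$, that the index $j$ with $\alpha_i=\lambda_j(\Qb\Qb^T\Ab\Qb\Qb^T)$ can be neither $>i$ nor $<i$ (using $\sigma_i(\Ab)-\sigma_{i+1}(\Ab)\geq g_{\min}\sigma_i(\Ab)\geq 2g_{\min}\alpha$ to derive a contradiction with the Parlett bound), whereas you first confine all $\alpha_j$ to the top $k$ and then separately establish strict ordering $|\alpha_1|>\cdots>|\alpha_k|$; both routes are valid and rest on the same gap-vs-error comparison.
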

\begin{proof}
First, we can follow the proofs of Lemmas~\ref{lem:spec1} and~\ref{lem:bound3} along with the stronger bound on $\|\Ub_k-\Qb\Qb^T\Ub_k \|_2$ from Lemma~\ref{lem:slq_sin} to prove that the following guarantees hold with probability at least $1-\delta$:
\begin{enumerate}
    \item $\|\Qb\Qb^T \Ab \Qb \Qb^T \Ub_k-\Ub_k\Ub_k^T\Qb\Qb^T\Ab\Qb\Qb^T\Ub_k \|_2 \leq \frac{\|\Ab \|_2g^{cl}_{\min}}{n^{c/\epsilon-1}( \kappa)^{c/\epsilon}}$,
    \item $| \lambda_i(\bv{U}_k^T\bv{Q}\bv{Q}^T\bv{A}\bv{Q}\bv{Q}^T\bv{U}_k)-\lambda_i(\bv{A})| \leq \frac{\|\Ab \|_2g^{cl}_{\min}}{n^{c/\epsilon-1}( \kappa)^{c/\epsilon}}$.
\end{enumerate}
Next, we can follow the proof of Theorem~\ref{thm:eig_error}, along with the stronger error bounds above and in Lemma~\ref{lem:slq_sin}, with the stronger bound of $\frac{\|\Ab \|_2g^{cl}_{\min}}{n^{c/\epsilon-1}( \kappa)^{c/\epsilon}}$, which (similar to ~\eqref{Eq:eig_gap1}) gives us that for every $i \in [k]$, there exists some $\lambda_j(\bv{Q}\bv{Q}^T\bv{A}\bv{Q}\bv{Q}^T)$ such that: 
\begin{align}\label{Eq:alp}
    |\lambda_j(\bv{Q}\bv{Q}^T\bv{A}\bv{Q}\bv{Q}^T)-\lambda_i(\Ab)| \leq \frac{\|\Ab \|_2g^{cl}_{\min}}{n^{c/\epsilon-1}( \kappa)^{c/\epsilon}},
\end{align}
 Using the min-max principle of singular values, $\sigma_i(\bv{Q}\bv{Q}^T\bv{A}\bv{Q}\bv{Q}^T) \leq \sigma_i(\Ab)$ for all $i \in [n]$. We will now prove that $j =i$ for all $i \in [k]$. Assume, for contradiction, there exists some $i \in [k]$ such that~\eqref{Eq:alp} is only satisfied by some $\lambda_j(\bv{Q}\bv{Q}^T\bv{A}\bv{Q}\bv{Q}^T)$ such that $j>i$. Then, we get:
\begin{align*}
    |\lambda_j(\bv{Q}\bv{Q}^T\bv{A}\bv{Q}\bv{Q}^T)-\lambda_i(\Ab)| &\geq \sigma_i(\Ab)-\sigma_j(\bv{Q}\bv{Q}^T\bv{A}\bv{Q}\bv{Q}^T) \\
    &\geq \sigma_i(\Ab)-\sigma_j(\Ab) \\
    &\geq \sigma_i(\Ab)-\sigma_{i+1}(\Ab) \\
    &\geq g_{\min} \sigma_i(\Ab) \\
    &\geq g_{\min} \sigma_k(\Ab) \geq 2g_{\min}\alpha \geq \frac{2 \|\Ab \|_2g^{(c/4+1)}_{\min}}{n^{c/4}}  \geq \frac{\|\Ab \|_2g^{cl}_{\min}}{n^{c/\epsilon-1}( \kappa)^{c/\epsilon}}.
\end{align*}
for a large enough $c>0$. This contradicts~\eqref{Eq:alp} and thus, we must have $j \leq i$. We can similarly rule out the case $j<i$. Thus, we must have $j=i$ for every $i \in [k]$.
\end{proof}

We next prove a stronger version of Theorem~\ref{thm:converge} below which shows that every eigenvector of $\bv{Q}\bv{Q}^T\bv{A}\bv{Q}\bv{Q}^T$, corresponding to a large magnitude eigenvalue, converges to the corresponding eigenvector of $\Ab$ as long as there exists some $k$ such that $\sigma_k(\Ab)$ is larger than $\sigma_{l+1}(\Ab)$ by at least a constant factor.

\begin{lemma}\label{lemma:converge_slq}
    Consider the setting of Lemma~\ref{lem:slq_sin}. Let the eigenvectors of $\bv{Q}\bv{Q}^T\bv{A}\bv{Q}\bv{Q}^T$ be $\bv{z}_1, \ldots, \bv{z}_n$. Then, for any $i \in [k]$, with probability at least $1-\delta$, for some constant $c_1$, we have 
$$\|\bv{z}_i-\bv{u}_i \|_2 \leq \frac{g^{(c_1l)}_{\min}}{(n \cdot \kappa)^{\frac{c_1}{\epsilon}}} \textrm{ or } \|\bv{z}_i+\bv{u}_i \|_2 \leq \frac{g^{(c_1l)}_{\min}}{(n \cdot \kappa)^{\frac{c_1}{\epsilon}}}.$$ 
\end{lemma}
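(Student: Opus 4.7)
The plan is to upgrade the subspace-level bound of Lemma~\ref{Lem:eig_vec} (used in the block Krylov analysis) to a per-eigenvector bound, exploiting the eigenvalue gap $g_{\min}$ which separates each top-$k$ eigenvalue from every other eigenvalue (not just from the tail). First, I mimic the setup from the proof of Lemma~\ref{Lem:eig_vec}: expand $\Ub_k = \bv Z \bv C$, where $\bv Z \in \R^{n \times n}$ has columns $\bv z_1,\dots,\bv z_n$, and observe that following the same algebraic manipulation as there (starting from $\Qb\Qb^T\Ub_k = \Ub_k + \bv E$ with $\|\bv E\|_F \le \frac{g^{cl/2}_{\min}}{(n\kappa)^{c/2\epsilon}}$ given by Lemma~\ref{lem:slq_sin}) yields the componentwise inequality
$$|\lambda_i(\Ab) - \lambda_j(\Qb\Qb^T\Ab\Qb\Qb^T)| \cdot |\bv C_{ji}| \;\le\; \frac{C \|\Ab\|_2 \, g^{cl/2}_{\min}}{(n\kappa)^{c/2\epsilon}}, \qquad \forall\, i \in [k],\ j \in [n],$$
for an absolute constant $C$.

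The key step is to lower-bound the eigenvalue gap $|\lambda_i(\Ab) - \lambda_j(\Qb\Qb^T\Ab\Qb\Qb^T)|$ whenever $j \neq i$. There are two cases. For $j \leq k$ with $j \neq i$, Lemma~\ref{lem:eig_error_slq} gives $|\lambda_j(\Qb\Qb^T\Ab\Qb\Qb^T) - \lambda_j(\Ab)| \leq \frac{\|\Ab\|_2 g^{cl/2}_{\min}}{(n\kappa)^{c/2\epsilon}}$, and the gap definition (handling signs via the reverse triangle inequality, since $||\lambda_i| - |\lambda_j|| = |\sigma_i(\Ab) - \sigma_j(\Ab)| \geq g_{\min} \sigma_k(\Ab)$ and when signs differ the gap is only larger) implies $|\lambda_i(\Ab) - \lambda_j(\Qb\Qb^T\Ab\Qb\Qb^T)| \geq g_{\min} \sigma_k(\Ab)/2$. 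For $j > k$, the min-max principle yields $|\lambda_j(\Qb\Qb^T\Ab\Qb\Qb^T)| \le \sigma_{k+1}(\Ab)$, so the gap is at least $\sigma_k(\Ab) - \sigma_{k+1}(\Ab) \geq g_{\min}\sigma_k(\Ab)$. Combining, the gap is always at least $g_{\min} \sigma_k(\Ab)/2 \geq g_{\min}\alpha \geq \frac{\|\Ab\|_2 g^{c/4+1}_{\min}}{n^{c/4}}$. Dividing through gives
$$|\bv C_{ji}| \;\le\; \frac{g^{c_2 l}_{\min}}{(n\kappa)^{c_2/\epsilon}}, \qquad \forall\, j \neq i,$$
for a suitable constant $c_2 > 0$, provided $c$ in Lemma~\ref{lem:slq_sin} is taken sufficiently large (so that the denominator $g_{\min}\alpha$ dominates the small factor $g^{c/4+1}_{\min}/n^{c/4}$ in the resulting ratio).

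Finally, since $\Ub_k$ and $\bv Z$ both have orthonormal columns, $\sum_{j=1}^n \bv C_{ji}^2 = 1$, so the above off-diagonal bound forces $\bv C_{ii}^2 \geq 1 - n \cdot \frac{g^{2c_2 l}_{\min}}{(n\kappa)^{2c_2/\epsilon}}$, i.e., $\bv C_{ii} = \pm 1 \mp O(n g^{2c_2 l}_{\min}/(n\kappa)^{2c_2/\epsilon})$. Expanding $\bv u_i = \sum_{j} \bv C_{ji} \bv z_j$ and using orthonormality of $\bv Z$,
$$\min\bigl(\|\bv u_i - \bv z_i\|_2^2,\ \|\bv u_i + \bv z_i\|_2^2\bigr) \;=\; (1 - |\bv C_{ii}|)^2 + \sum_{j \neq i} \bv C_{ji}^2 \;\le\; \frac{g^{2c_1 l}_{\min}}{(n\kappa)^{2c_1/\epsilon}}$$
for an appropriately chosen $c_1 > 0$, which gives the claimed bound. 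The main obstacle is ensuring the gap lower bound cleanly handles the sign ambiguity of eigenvalues and that the perturbation from Lemma~\ref{lem:eig_error_slq} is smaller than $g_{\min}\sigma_k(\Ab)$ in an absolute sense; both are controlled by taking the constant $c$ in Lemma~\ref{lem:slq_sin} large enough, which is already built into its statement.
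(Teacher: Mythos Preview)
Your proposal is correct and follows essentially the same approach as the paper: expand $\bv u_i$ in the eigenbasis $\{\bv z_j\}$ of $\Qb\Qb^T\Ab\Qb\Qb^T$, derive the componentwise inequality $|\lambda_i(\Ab)-\lambda_j(\Qb\Qb^T\Ab\Qb\Qb^T)|\cdot|\bv C_{ji}|\le O(\|\Ab\|_2 g_{\min}^{cl/2}/(n\kappa)^{c/2\epsilon})$ from Lemma~\ref{lem:slq_sin}, lower-bound the eigenvalue gap for $j\neq i$ using $g_{\min}$ and Lemma~\ref{lem:eig_error_slq}, and conclude that the off-diagonal coefficients are tiny so $\bv u_i\approx\pm\bv z_i$. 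Your explicit case split into $j\le k$ (via Lemma~\ref{lem:eig_error_slq} and the triangle inequality) versus $j>k$ (via the min-max principle directly) is in fact slightly cleaner than the paper's presentation, which invokes the triangle inequality with $|\lambda_j(\Ab)-\lambda_j(\Qb\Qb^T\Ab\Qb\Qb^T)|$ for all $j$ even though Lemma~\ref{lem:eig_error_slq} only covers $j\in[k]$.
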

\begin{proof}
  From Lemma~\ref{lem:slq_sin}, we have that for any eigenvector $\bv{u}_i$ of $\Ab$ for $i \in [k]$, $\bv{Q}\bv{Q}^T\bv{u}_i=\bv{u}_i+\bv{e}_i$ where $\|\bv{e}_i \|_2 \leq \frac{\|\Ab \|_2g^{cl}_{\min}}{(n\cdot \kappa)^{c/\epsilon}}$ for some constant $c>0$. So, we have $\bv{Q}\bv{Q}^T\bv{A}\bv{Q}\bv{Q}^T \bv{u}_i=\bv{Q}\bv{Q}^T\bv{A}(\bv{u}_i+\bv{e}_i)=\lambda_i(\bv{A})\bv{Q}\bv{Q}^T\bv{u}_i+\bv{Q}\bv{Q}^T\bv{Ae}_i=\lambda_i(\Ab)(\bv{u}_i+\bv{e}_i)+\bv{Q}\bv{Q}^T\bv{Ae}_i=\lambda_i(\Ab)\bv{u}_i+\lambda_i(\Ab)\bv{e}_i+\bv{Q}\bv{Q}^T\bv{Ae}_i$. Thus, we get that 
   \begin{equation}\label{Eq:eig_vec}
       \bv{Q}\bv{Q}^T\bv{A}\bv{Q}\bv{Q}^T \bv{u}_i=\lambda_i(\Ab)\bv{u}_i+\bv{r}_i,
   \end{equation}
where  $\|\bv{r}_i \|_2 \leq \frac{2\|\Ab \|_2g^{cl}_{\min}}{(n.\kappa)^{c/\epsilon}}$.

   Since the eigenvectors $\zb_1, \ldots, \zb_n$ form an orthonormal basis for $\R^n$, we can write $\bv{u}_i=\sum_{j=1}^n a_j\bv{z}_j$ where $a_j$ are constants such that $\sum_{j=1}^n a_j^2=1$. Thus, 
   \begin{align}\label{eq:eig_vec1}
       \bv{Q}\bv{Q}^T\bv{A}\bv{Q}\bv{Q}^T \bv{u}_i=\sum_{j=1}^n a_j\lambda_j(\bv{Q}\bv{Q}^T\bv{A}\bv{Q}\bv{Q}^T) \bv{z}_j.
   \end{align}
   From~\eqref{Eq:eig_vec}, we get $\bv{Q}\bv{Q}^T\bv{A}\bv{Q}\bv{Q}^T \bv{u}_i=\sum_{j=1}^n \lambda_i(\Ab)a_j\bv{z}_j+\bv{r}_i$. Hence, from~\eqref{eq:eig_vec1} we have: $\sum_{j=1}^n \lambda_i(\Ab)a_j\bv{z}_j+\bv{r}_i=\sum_{j=1}^n a_j\lambda_j(\bv{Q}\bv{Q}^T\bv{A}\bv{Q}\bv{Q}^T) \bv{z}_j$. Rearranging, we get: $\sum_{j=1}^n (\lambda_j(\bv{Q}\bv{Q}^T\bv{A}\bv{Q}\bv{Q}^T)-\lambda_i(\Ab))a_j \bv{z}_j=\bv{r}_i$. Taking 2-norm on both sides, we get: $$\|\sum_{j=1}^n (\lambda_j(\bv{Q}\bv{Q}^T\bv{A}\bv{Q}\bv{Q}^T)-\lambda_i(\Ab))a_j \bv{z}_j \|_2 \leq \frac{2g^{cl}_{\min}\|\Ab \|_2}{(n\cdot \kappa)^{c/\epsilon}},$$ using the fact that $\|\bv{r}_i\|_2 \leq \frac{2g^{cl}_{\min}\|\Ab \|_2}{(n\cdot \kappa)^{c/\epsilon}}$. Now, $\|\sum_{j=1}^n (\lambda_j(\bv{Q}\bv{Q}^T\bv{A}\bv{Q}\bv{Q}^T)-\lambda_i(\Ab))a_j \bv{z}_j \|_2=\sqrt{\sum_{j=1}^n (\lambda_j(\bv{Q}\bv{Q}^T\bv{A}\bv{Q}\bv{Q}^T)-\lambda_i(\Ab))^2a^2_j }$. Thus, for all $j \in [n]$ we have 
   \begin{equation}\label{Eq:bound1}
       |\lambda_j(\bv{Q}\bv{Q}^T\bv{A}\bv{Q}\bv{Q}^T)-\lambda_i(\Ab)| \cdot |a_j| \leq \frac{2\|\Ab \|_2g^{cl}_{\min}}{(n\cdot \kappa)^{c/\epsilon}}.
   \end{equation} 
   Now, we have $|\lambda_i(\Ab)-\lambda_i(\bv{Q}\bv{Q}^T\bv{A}\bv{Q}\bv{Q}^T)| \leq  \frac{\|\Ab \|_2g^{cl/2}_{\min}}{(n\cdot \kappa)^{c/2\epsilon}}$ for $i \in [k]$, from Lemma~\ref{lem:eig_error_slq}. Also, from our assumptions on the singular value gaps, for any $i \in [k]$ and $j \in [n]$ such that $i \neq j$,  
   \begin{align*}
       |\lambda_i(\Ab)-\lambda_{j}(\Ab)| \geq |\sigma_i(\Ab)-\sigma_j(\Ab)| &\geq  g_{\min}\max(\sigma_i(\Ab),\sigma_j(\Ab)) \\
       &\geq g_{\min}\sigma_i(\Ab) \geq g_{\min}\sigma_{k}(\Ab) \\
       &\geq 2g_{\min}\alpha \\
       &\geq \frac{\|\Ab \|_2g^{3cl/4}_{\min}}{(n\cdot \kappa)^{3c/4\epsilon}},
   \end{align*}
   Using the min-max principle of singular values, $\sigma_i(\bv{Q}\bv{Q}^T\bv{A}\bv{Q}\bv{Q}^T) \leq \sigma_i(\Ab)$ for all $i \in [n]$. So, for $j \neq i$ and $i \in [k]$, using triangle inequality we have that $|\lambda_i(\Ab)-\lambda_j(\bv{Q}\bv{Q}^T\bv{A}\bv{Q}\bv{Q}^T)| \geq |\lambda_i(\Ab)-\lambda_j(\Ab)|-|\lambda_j(\Ab)-\lambda_j(\bv{Q}\bv{Q}^T\bv{A}\bv{Q}\bv{Q}^T)| \geq \frac{\|\Ab \|_2g^{cl/2}_{\min}}{(n\cdot \kappa)^{c/2\epsilon}}-\frac{\|\Ab \|_2g^{3cl/4}_{\min}}{(n\cdot \kappa)^{3c/4\epsilon}}$. Thus, from~\eqref{Eq:bound1}, we get that: 
   \begin{align*}
        |a_j| \leq \frac{\frac{2\|\Ab \|_2g^{cl}_{\min}}{(n\cdot \kappa)^{c/\epsilon}}}{\frac{\|\Ab \|_2g^{cl/2}_{\min}}{(n\cdot \kappa)^{c/2\epsilon}}-\frac{\|\Ab \|_2g^{3cl/4}_{\min}}{(n\cdot \kappa)^{3c/4\epsilon}}} \leq \frac{\frac{2g^{cl}_{\min}}{(n\cdot \kappa)^{c/\epsilon}}}{\frac{g^{cl/2}_{\min}}{(n\cdot \kappa)^{c/2\epsilon}}-\frac{g^{3cl/4}_{\min}}{(n\cdot \kappa)^{3c/4\epsilon}}} \leq \frac{g^{(c_4l)}_{\min}}{(n\cdot \kappa)^{(c_4/\epsilon)}},  
   \end{align*}
   for some constant $c_4>0$. Since $\sum_{j=1}^n a_j^2=1$, we get that $|a_i| \geq 1-\frac{g^{(c_4l)}_{\min}}{(n\cdot \kappa)^{c_4/\epsilon-1}}$. Let $a_i>0$. Then, using triangle inequality, we have $\|\bv{u}_i-\bv{z}_i\|_2=\|\bv{u}_i-a_i\bv{z}_i+\bv{z}_i(a_i-1)\|_2\leq \|\bv{u}_i-a_i\bv{z}_i\|_2+\|\bv{z}_i(a_i-1)\|_2 \leq \|\sum_{j \neq i}a_j\bv{z}_j \|_2+|a_i-1| \leq \frac{C'g^{(c_4l)}_{\min}}{(n\cdot \kappa)^{c_4/2\epsilon-1}}$ where $C'$ is some constant. Similarly, we can show when $a_i<0$, $\|\bv{u}_i+\bv{z}_i \|_2 \leq \frac{C'g^{(c_4l)}_{\min}}{(n\cdot \kappa)^{c_4/2\epsilon-1}}$. We complete the proof by choosing the constant $c_1>0$ suitably. 
\end{proof}

\subsection{Improved Error bounds for SLQ via implicit deflation based analysis}\label{sec:slq3}

In this section, we prove the main error bounds for SLQ (Algorithm~\ref{alg:slq}) by showing that it implicitly performs a deflation followed by moment matching. We first show that there exists a polynomial $r(x)$ of degree $O(l\log (\frac{1}{g_{\min}})+\frac{1}{\epsilon}\log (\frac{n \cdot \kappa}{\delta}))$ which is almost zero on the large magnitude eigenvalues of $\Ab$ and $\Qb^T\Ab\Qb$ which have magnitude greater than $\sigma_k(\Ab)$ and close to one on the small magnitude eigenvalues of $\Ab$ and $\Qb^T\Ab\Qb$ where $k \in [l]$ is an index as defined in 
Lemmas~\ref{lem:eig_error_slq} and~\ref{lemma:converge_slq} such that $\sigma_k(\Ab)$ is larger than $\sigma_{l+1}(\Ab)$ by at least a constant multiplicative gap. Then, a polynomial of the form $t_i(x)=r(x)\Bar{T}_i(x)$ where $\Bar{T}_i(x)$ is the $i$\tth Chebyshev polynomial for $i \in O(1/\epsilon)$ will have degree at most $O(l\log (\frac{1}{g_{\min}})+\frac{1}{\epsilon}\log (\frac{n \cdot \kappa}{\delta}))$ and can be represented in the span of the Krylov subspace~\eqref{Eq:span2} generated by the Lanczos algorithm (Algorithm~\ref{alg:lanczos-slq}) as long as we run Lanczos for at least this many number of iterations. Intuitively, this implies that the polynomial $t_i(x)$ behaves like a Chebyshev polynomial for the small magnitude eigenvalues (with magnitude smaller than $\sigma_k(\Ab)$ and $\sigma_k(\Qb^T\Ab\Qb)$) of $\Ab$ and $\Qb^T\Ab\Qb$ while it is almost zero on the large magnitude eigenvalues of $\Ab$ and $\Qb^T\Ab\Qb$. We then show that the moments of the spectral density of the small magnitude eigenvalues $\Ab$ and the part of the density output by SLQ which only depends on the small magnitude eigenvalues of $\Qb^T\Ab\Qb$ with respect to the polynomial $t_i(x)$ are very close to each other for $i \in O(1/\epsilon)$. Since $t_i(x)$ behaves like a Chebyshev polynomial on these small magnitude eigenvalues, via an argument similar to the Chebyshev moment matching argument in Section~\ref{sec:slq1}, we say that the spectral densities of the small magnitude eigenvalues of $\Ab$ and $\Qb^T\Ab\Qb$ are close to each other. On the other hand, via the results in Section~\ref{sec:slq2}, we know that the large magnitude eigenvalues of $\Ab$ are approximated by the corresponding large magnitude eigenvalues of $\Qb^T\Ab\Qb$. Combining the arguments for the large and small magnitude eigenvalues, we claim the the spectral densities of $\Ab$ and that defined by SLQ algorithm (Algorithm~\ref{alg:slq}) are close to each other.

The polynomial $r(x)$ is defined as $1-g(x)L(x)$ where the polynomial $g(x)$ is a Chebyshev mimizing polynomial (Lemma~\ref{Lem:chebyshev}) and $L(x)$ can be interpreted as a variant of the Lagrange interpolating polynomial through the points $(\lambda_i(\Ab), \frac{1}{g(\lambda_i(\Ab))})$ for $i \in [k]$. We describe this in more detail now. Suppose we have a set of $k$ `basis' polynomials such that the $i$'th polynomial in the set is almost $1$ at $\lambda_i(\Ab)$ and $\lambda_i(\Qb^T\Ab\Qb)$ and almost zero at all other $\lambda_j(\Ab)$ and $\lambda_j(\Qb^T\Ab\Qb)$. Then, $r(x)$ can be defined as 1 minus the sum of these polynomials i.e. we will have $r(\lambda_i(\Ab))$ and $r(\lambda_i(\Qb^T\Ab\Qb))$ almost to 1 for every $i \in [k]$ and almost $0$ at all other eigenvalues. As a first step, in Lemma~\ref{Lem:p(x)} for each $i \in [k]$ we first define a polynomial $p_i(x)$ which is almost $1$ at $\lambda_i(\Ab)$ and $\lambda_i(\Qb^T\Ab\Qb)$ (referred to as $\tilde{\lambda}_i(\Ab)$ in the lemmas below) and zero at all other $\lambda_j(\Ab)$ and $\lambda_j(\Qb^T\Ab\Qb)$ for $j \neq i$ and $j \in [l]$. However, outside the top $l$ eigenvalues, i.e. on the small magnitude eigenvalues of $\Ab$ and $\Qb^T\Ab\Qb$,  $p_i(x)$ can be potentially be very large (up to $(2/g_{\min})^{O(l)}$). To ensure the polynomials $p_i(x)$ do not blow up on the small magnitude eigenvalues of $\Ab$ and $\Qb^T\Ab\Qb$, we will multiply each $p_i(x)$ with a corresponding minimizing polynomial (defined in Lemma~\ref{Lem:chebyshev}) which squishes its values down to almost $0$ on the small magnitude eigenvalues (while keeping its values almost same on the large magnitude eigenvalues). The final set of polynomials is each $p_i(x)$ multiplied by its corresponding minimizing polynomials. We can then define $r(x)$ as $1$ minus this polynomial.

We first define the polynomials $p_i(x)$ for $i \in [k]$ in the lemma below. In the lemmas below, $\tilde{\lambda}_i(\Ab)=\lambda_i(\Qb^T\Ab\Qb)$ for ease of notation.
\begin{lemma}\label{Lem:p(x)}
Consider the setting of Lemma~\ref{lem:slq_sin}. Let $\tilde{\lambda}_i(\Ab)=\lambda_i(\Qb\Qb^T\Ab\Qb\Qb^T)$ for $i \in [n]$. Then, for $i \in [k]$, 
\begin{align*}
    p_i(x)=\prod_{j \in [l], j \neq i} \left( \frac{x-\lambda_j(\Ab)}{\lambda_i(\Ab)-\lambda_j(\Ab)} \right) \prod_{j \in [l], j \neq i} \left( \frac{x-\tilde{\lambda}_j(\Ab)}{\lambda_i(\Ab)-\tilde{\lambda}_j(\Ab)} \right).
\end{align*} 
Then, for some constant $c>0$, with probability at least $1-\delta$:
 \begin{enumerate}
 \item For any $i \in [k]$ $p_i(\lambda_i(\Ab))=1$ and $p_j(\lambda_i(\Ab))=0$ for $j \neq i$.
 \item For any $i \in \{k+1, \ldots, l\}$ and $j \in [k]$, $p_j(\lambda_i(\Ab))=0$ and $p_j(\tilde{\lambda}_i(\Ab))=0$.
     \item  $|p_i(x)| \leq \frac{2^{3l}}{g^{2l}_{\min}}$ when $|x| \leq |\lambda_k(\Ab)|$ for $i \in [k]$.
\item For any $i \in [k]$, $|p_i(\tilde{\lambda}_i(\Ab))-1| \leq \frac{g^{cl/2}_{\min}}{(n\cdot \kappa)^{c/2\epsilon-4}}$ and $p_j(\tilde{\lambda}_i(\Ab))=0$ for $j \neq i$.
 \end{enumerate}
\end{lemma}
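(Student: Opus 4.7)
My plan is to dispatch the four sub-claims in order: (1) and (2) by direct substitution into the Lagrange-style construction, (3) by a factor-by-factor size bound, and (4) by a first-order perturbation of (1) driven by the tight eigenvalue convergence in Lemma~\ref{lem:eig_error_slq}. Parts (1) and (2) I will verify by direct plug-in: $x=\lambda_i(\Ab)$ makes every factor of both products equal to $1$; and for $j\in[k]$ with $j\neq i$, the first (resp.\ second) product of $p_j$ contains the factor $(x-\lambda_i(\Ab))/(\lambda_j(\Ab)-\lambda_i(\Ab))$ (resp.\ $(x-\tilde\lambda_i(\Ab))/(\lambda_j(\Ab)-\tilde\lambda_i(\Ab))$), which vanishes at $x=\lambda_i(\Ab)$ (resp.\ $x=\tilde\lambda_i(\Ab)$). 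This settles (1), the $p_j(\tilde\lambda_i)=0$ half of (4), and, with $i$ replaced by an index in $\{k+1,\dots,l\}$, also (2).

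\textbf{Part (3).} Fix $i\in[k]$ and $|x|\leq|\lambda_k(\Ab)|$. I will bound each of the $2(l-1)$ factors of $p_i(x)$ individually. For the first product, a three-case analysis ($j<i$, $i<j\leq k$, $j>k$) combined with $|\lambda_j(\Ab)|=\sigma_j(\Ab)$, the singular-value-gap inequality $|\lambda_i(\Ab)-\lambda_j(\Ab)|\geq g_{\min}\sigma_{\min(i,j)}(\Ab)$, and the triangle-inequality bound $|x-\lambda_j(\Ab)|\leq 2\sigma_{\min(i,j)}(\Ab)$, shows each factor is at most $2/g_{\min}$. For the second product, when $j\in[k]$ I use Lemma~\ref{lem:eig_error_slq} to replace $\tilde\lambda_j$ by $\lambda_j$ up to an exponentially small error and reduce to the first-product case; when $j>k$ I combine the minimax inequality $|\tilde\lambda_j|\leq\sigma_{k+1}(\Ab)\leq 2\alpha\leq\sigma_k(\Ab)\leq|\lambda_i(\Ab)|$ with the gap $\sigma_k(\Ab)-\sigma_{k+1}(\Ab)\geq g_{\min}\sigma_k(\Ab)$ to obtain $|\lambda_i(\Ab)-\tilde\lambda_j|\geq g_{\min}\sigma_k(\Ab)$. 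Each factor is then at most $4/g_{\min}$, so $|p_i(x)|\leq (2/g_{\min})^{l-1}(4/g_{\min})^{l-1}\leq 2^{3l}/g_{\min}^{2l}$.

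\textbf{Part (4).} Write $\epsilon_j\coloneqq\tilde\lambda_j-\lambda_j$; for $j\in[k]$, Lemma~\ref{lem:eig_error_slq} gives $|\epsilon_j|\leq\|\Ab\|_2 g_{\min}^{cl/2}/(n\kappa)^{c/2\epsilon}$. Algebraic rewriting yields
\begin{align*}
p_i(\tilde\lambda_i)\;=\;\prod_{j\in[l],\,j\neq i}\!\left(1+\frac{\epsilon_i}{\lambda_i-\lambda_j}\right)\prod_{j\in[l],\,j\neq i}\!\left(1+\frac{\epsilon_i}{\lambda_i-\tilde\lambda_j}\right),
\end{align*}
so every factor has the form $1+\eta_j$ with $|\eta_j|\leq|\epsilon_i|/D$, where the denominator bounds $D\geq g_{\min}\sigma_k(\Ab)/2$ are exactly those established in the proof of (3). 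Using $\|\Ab\|_2/\sigma_k(\Ab)\leq\kappa$ this becomes $|\eta_j|=O(\kappa g_{\min}^{cl/2-1}/(n\kappa)^{c/2\epsilon})$, which is far below $1/(2l)$ once $c$ is chosen large enough. Applying $|\prod_j(1+\eta_j)-1|\leq 2\sum_j|\eta_j|$ over the $2(l-1)$ factors and absorbing the polynomial multiplicative slack $O(l\kappa/g_{\min})$ into the exponent of $(n\kappa)$ yields the claimed bound $|p_i(\tilde\lambda_i)-1|\leq g_{\min}^{cl/2}/(n\kappa)^{c/2\epsilon-4}$.

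The main obstacle is handling the range $j>k$ in the denominators $|\lambda_i-\tilde\lambda_j|$ throughout (3) and (4): Lemma~\ref{lem:eig_error_slq} provides no individual convergence of $\tilde\lambda_j$ to $\lambda_j$ there, so the only available tools are the crude minimax bound $|\tilde\lambda_j|\leq\sigma_{k+1}(\Ab)$ together with the gap $\sigma_k(\Ab)-\sigma_{k+1}(\Ab)\geq g_{\min}\sigma_k(\Ab)$. It is crucial that these still keep $\tilde\lambda_j$ strictly separated from $\lambda_i$, because a single near-collision would blow up a Lagrange-type factor and destroy both the size bound in (3) and the approximate-one property in (4).
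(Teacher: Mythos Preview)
Your proposal is correct and follows essentially the same route as the paper: Claims (1), (2), and the $p_j(\tilde\lambda_i)=0$ part of (4) by direct substitution into the Lagrange construction; Claim (3) by bounding each of the $2(l-1)$ factors using the singular-value gap $g_{\min}$ together with the eigenvalue convergence of Lemma~\ref{lem:eig_error_slq} for $j\in[k]$ and the minimax bound $|\tilde\lambda_j|\leq\sigma_j(\Ab)$ for $j>k$; and Claim (4) by writing each factor of $p_i(\tilde\lambda_i)$ as $1+\eta_j$ and using the same denominator lower bounds. The only cosmetic difference is that for $j>k$ the paper lower-bounds $|\lambda_i-\tilde\lambda_j|$ via $\sigma_i-\sigma_j$ while you use $\sigma_k-\sigma_{k+1}$, but both yield the required $g_{\min}\sigma_k$ separation.
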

\begin{proof}
 
 First observe that, from Lemma~\ref{lem:eig_error_slq}, $|\lambda_i(\Ab)-\tilde{\lambda}_i(\Ab)| \leq \frac{\|\Ab \|_2g^{cl}_{\min}}{(n\cdot \kappa)^{c/\epsilon}}$ for $i \in [k]$ with probability at least $1-\delta$ for some constant $c>0$. The first two claims are straightforward from the definition of $p_i(x)$. So we proceed to prove the third claim.

 \medskip
 
\noindent\textbf{Claim 3:} We now bound $|p_i(x)|$ when $|x| \leq \sigma_{k}(\Ab)$ to prove the third claim. We have $$|p_i(x)|=\prod_{j \in [l], j \neq i} \left|  \frac{x-\lambda_j(\Ab)}{\lambda_i(\Ab)-\lambda_j(\Ab)}\right| \prod_{j \in [l], j \neq i} \left|\frac{x-\tilde{\lambda}_j(\Ab)}{\lambda_i(\Ab)-\tilde{\lambda}_j(\Ab)} \right|.$$ We bound each term separately.

   $\left|  \frac{x-\lambda_j(\Ab)}{\lambda_i(\Ab)-\lambda_j(\Ab)}\right|$: For any $i \in [k]$, $\left|  \frac{x-\lambda_j(\Ab)}{\lambda_i(\Ab)-\lambda_j(\Ab)}\right| \leq \frac{2\max (|x|,\sigma_j(\Ab))}{|\lambda_i(\Ab)-\lambda_j(\Ab)|} \leq \frac{2\max (\sigma_i(\Ab),\sigma_j(\Ab))}{|\lambda_i(\Ab)-\lambda_j(\Ab)|} \leq \frac{2}{g_{\min}}$ where the second inequality follows from the fact that $\left|  x-\lambda_j(\Ab)\right| \leq 2\max (|x|,\sigma_j(\Ab))$, the third inequality from the fact that $|x| \leq \sigma_{k}(\Ab) \leq \sigma_i(\Ab)$ for $i \in [k]$ and the final inequality follows from the definition of $g_{\min}$.

    $\left|\frac{x-\tilde{\lambda}_j(\Ab)}{\lambda_i(\Ab)-\tilde{\lambda}_j(\Ab)} \right|$: We consider the cases $j \in [k]$ and $k \leq j \leq l$ separately. 

    \medskip
    
    \noindent\textbf{Case 1.} ($j \in [k]$): We have $\left|\frac{x-\tilde{\lambda}_j(\Ab)}{\lambda_i(\Ab)-\tilde{\lambda}_j(\Ab)} \right| \leq \frac{2\max(|x|, |\tilde{\lambda}_j(\Ab)|)}{|\lambda_i(\Ab)-\lambda_j(\Ab)| -\frac{\|\Ab \|_2g^{cl}_{\min}}{(n\cdot \kappa)^{c/\epsilon}}} \leq \frac{2\max(|x|, |\lambda_j(\Ab)|)}{|\lambda_i(\Ab)-\lambda_j(\Ab)| -\frac{\|\Ab \|_2g^{cl}_{\min}}{(n\cdot \kappa)^{c/\epsilon}}}.$ Here, for the numerator we used the fact that $|\tilde{\lambda}_j(\Ab)| \leq |\lambda_j(\Ab)|$ by the minimax principle and for the denominator we used triangle inequality to get $|\lambda_i(\Ab)-\tilde{\lambda}_j(\Ab)| \geq |\lambda_i(\Ab)-\lambda_j(\Ab)|- |\lambda_j(\Ab)-\tilde{\lambda}_j(\Ab)|$ and the fact that $|\lambda_j(\Ab)-\tilde{\lambda}_j(\Ab)| \leq \frac{\|\Ab \|_2g^{cl}_{\min}}{(n\cdot \kappa)^{c/\epsilon}}$ for $j \in [k]$. Since $|x| \leq |\lambda_i(\Ab)|$ for $i \in [k]$ we have: 
    \begin{align*}
        \frac{2\max(|x|, |\lambda_j(\Ab)|)}{|\lambda_i(\Ab)-\lambda_j(\Ab)| -\frac{\|\Ab \|_2g^{cl}_{\min}}{(n\cdot \kappa)^{c/\epsilon}}} &\leq \frac{2\max(|\lambda_i(\Ab)|, |\lambda_j(\Ab)|)}{|\lambda_i(\Ab)-\lambda_j(\Ab)| -\frac{\|\Ab \|_2g^{cl}_{\min}}{(n\cdot \kappa)^{c/\epsilon}}} \\        
        &\leq \frac{2}{\frac{|\lambda_i(\Ab)-\lambda_j(\Ab)|}{\max(|\lambda_i(\Ab)|, |\lambda_j(\Ab)|)} -\frac{\|\Ab \|_2g^{cl}_{\min}}{\max(|\lambda_i(\Ab)|, |\lambda_j(\Ab)|) (n\cdot \kappa)^{c/\epsilon}}}\\
        &\leq \frac{2}{g_{\min}-\frac{\|\Ab \|_2g^{cl}_{\min}}{\max(|\lambda_i(\Ab)|, |\lambda_j(\Ab)|) (n\cdot \kappa)^{c/\epsilon}}} \\
        &\leq \frac{4}{g_{\min}} .
    \end{align*}
    Here, the second to last step follows from the definition of $g_{\min}$ and the last step follows from the assumptions that $\max(|\lambda_i(\Ab)|,|\lambda_j(\Ab)|) \geq \sigma_k(\Ab) \geq 2\alpha \geq \frac{2\|\Ab \|_2g^{(cl)/4}_{\min}}{(n\cdot \kappa)^{c/4}}$ (the first step follows from the fact that $i,j \in [k]$), which gives us $\frac{\|\Ab \|_2g^{cl}_{\min}}{\max(|\lambda_i(\Ab)|,|\lambda_j(\Ab)|)(n\cdot \kappa)^{c/\epsilon}} \leq \frac{\|\Ab \|_2g^{3cl/4}_{\min} (n\cdot \kappa)^{c/4}}{2(n\cdot \kappa)^{c/\epsilon}}\leq  \frac{g_{\min}}{2}$. Thus, we get $\left|\frac{x-\tilde{\lambda}_j(\Ab)}{\lambda_i(\Ab)-\tilde{\lambda}_j(\Ab)} \right|  \leq \frac{4}{g_{\min}}$ for $j \in [k]$ and $|x| \leq \sigma_k(\Ab)$.

    \medskip

    \noindent\textbf{Case 2.} ($j \in \{k+1, \ldots, l\}$): For $k+1 \leq j \leq p$ and $i \in [k]$, we have $|\tilde{\lambda}_j(\Ab)| \leq |\lambda_j(\Ab)| \leq |\lambda_i(\Ab)|$ where the first inequality follows from our assumption and the second from the fact that $j \geq i$. So, $|\lambda_i(\Ab)-\tilde{\lambda}_j(\Ab)| \geq |\lambda_i(\Ab)|-|\tilde{\lambda}_j(\Ab)| \geq |\lambda_i(\Ab)|-|\lambda_j(\Ab)|=\sigma_i(\Ab)-\sigma_j(\Ab)$. So, we get $$\left|\frac{x-\tilde{\lambda}_j(\Ab)}{\lambda_i(\Ab)-\tilde{\lambda}_j(\Ab)} \right| \leq  \frac{2\max(|x|,|\tilde{\lambda}_j(\Ab)|)}{\sigma_i(\Ab)-\sigma_j(\Ab)} \leq \frac{2\max(|\lambda_i(\Ab)|,|\lambda_j(\Ab)|)}{\sigma_i(\Ab)-\sigma_j(\Ab)} \leq \frac{2}{g_{\min}},$$ where in the second inequality we used the fact that $|x| \leq |\lambda_i(\Ab)|$ and $|\tilde{\lambda}_j(\Ab)| \leq |\lambda_j(\Ab)|$ by the minimax principle and the last step follows from the definition of $g_{\min}$. So, we finally have $\left|\frac{x-\tilde{\lambda}_j(\Ab)}{\lambda_i(\Ab)-\tilde{\lambda}_j(\Ab)} \right| \leq \frac{2}{g_{\min}}$ for $j \in \{k+1, \ldots, l\}$.
    
    Combining the two cases above, we get $\left|\frac{x-\tilde{\lambda}_j(\Ab)}{\lambda_i(\Ab)-\tilde{\lambda}_j(\Ab)} \right| \leq \frac{4}{g_{\min}}$ when $|x| \leq \sigma_k(\Ab)$. Thus, plugging in the upper bound on each term of $|p_i(x)|$ , we get $|p_i(x)| < \frac{2^{3l}}{g^{2l}_{\min}}$ when $|x| \leq \sigma_k(\Ab)$ for $i \in [k]$. 

    \medskip

    \noindent\textbf{Claim 4:} We now prove the fourth claim of our theorem. For any $i,j \in [k]$, $p_j(\tilde{\lambda}_i(\Ab))=0$ for $j \neq i$ from the definition of $p_j(x)$. This gives the second part of the claim. We now prove the first part of the claim. Observe that, 
    \begin{align}\label{eq:ptilde}
        \left| p_i(\tilde{\lambda}_i(\Ab)) \right|=\prod_{j \in [l], j \neq i} \left| \frac{\tilde{\lambda}_i(\Ab)-\lambda_j(\Ab)}{\lambda_i(\Ab)-\lambda_j(\Ab)} \right|\prod_{j \in [l], j \neq i} \left| \frac{\tilde{\lambda}_i(\Ab)-\tilde{\lambda}_j(\Ab)}{\lambda_i(\Ab)-\tilde{\lambda}_j(\Ab)} \right|.
    \end{align}
    We will bound each term in the product individually. First, observe that $$\left| \frac{\tilde{\lambda}_i(\Ab)-\lambda_i(\Ab)}{\lambda_i(\Ab)-\lambda_j(\Ab)} \right| \leq \frac{\|\Ab \|_2g^{cl}_{\min}}{(n\cdot \kappa)^{c/\epsilon}|\lambda_i(\Ab)-\lambda_j(\Ab)|} \leq \frac{\|\Ab \|_2g^{cl/2}_{\min}}{(n\cdot \kappa)^{c/2\epsilon}} ,$$ where the second step follows from bounding the numerator using fact that $|\tilde{\lambda}_i(\Ab)-\lambda_i(\Ab)| \leq \frac{\|\Ab \|_2g^{cl}_{\min}}{(n\cdot \kappa)^{c/\epsilon}}$ for $i \in [k]$ and last step follows from the fact that for $i,j \in [k+1]$ and $i \neq j$ (and a large enough $c$):
    \begin{align}\label{eq:last}
        |\lambda_i(\Ab)-\lambda_j(\Ab)| \geq |\sigma_i(\Ab)-\sigma_j(\Ab)| \nonumber &\geq g_{\min}\max(\sigma_i(\Ab),\sigma_j(\Ab)) \nonumber\\ &\geq g_{\min}\sigma_k(\Ab) \nonumber\\&\geq \frac{2\|\Ab \|_2g^{(c/4)+1}_{\min}}{(n\cdot \kappa)^{c/4}} \nonumber \\
        &\geq \frac{\|\Ab \|_2g^{cl/2}_{\min}}{(n\cdot \kappa)^{c/2\epsilon}}.
    \end{align}
    Thus, we get 
    \begin{align}\label{eq:b1}
        \left| \frac{\tilde{\lambda}_i(\Ab)-\lambda_j(\Ab)}{\lambda_i(\Ab)-\lambda_j(\Ab)}-1\right|&=\left| \frac{\tilde{\lambda}_i(\Ab)-\lambda_i(\Ab)}{\lambda_i(\Ab)-\lambda_j(\Ab)} \right|\nonumber \\ &\leq \frac{\|\Ab \|_2g^{cl}_{\min} }{(n\cdot \kappa)^{c/\epsilon}|\lambda_i(\Ab)-\lambda_j(\Ab)|} \nonumber\\& \leq\frac{g^{cl/2}_{\min}}{(n\cdot \kappa)^{c/2\epsilon}} .
    \end{align}
    Next, we have  
    \begin{align}\label{eq:b2}
        \left| \frac{\tilde{\lambda}_i(\Ab)-\tilde{\lambda}_j(\Ab)}{\lambda_i(\Ab)-\tilde{\lambda}_j(\Ab)}-1 \right| &= \left| \frac{\tilde{\lambda}_i(\Ab)-\lambda_i(\Ab)}{\lambda_i(\Ab)-\tilde{\lambda}_j(\Ab)}\right|\notag\\
        &\leq \frac{\|\Ab \|_2g^{cl}_{\min} }{(n\cdot \kappa)^{c/\epsilon}|\lambda_i(\Ab)-\tilde{\lambda}_j(\Ab)|} \nonumber \\
        &\leq  \frac{\|\Ab \|_2g^{cl}_{\min} }{(n\cdot \kappa)^{c/\epsilon}(|\lambda_i(\Ab)-\lambda_j(\Ab)|- \frac{\|\Ab \|_2g^{cl}_{\min}}{(n\cdot \kappa)^{c/\epsilon}})}  \nonumber \\
        &\leq \frac{1 }{n^{c/\epsilon} \frac{g^{cl/2}_{\min}}{g^{cl}_{\min} (n\cdot \kappa)^{c/2\epsilon}}-1} \leq 2\frac{g^{cl/2}_{\min}}{(n\cdot \kappa)^{c/2\epsilon}}.
    \end{align}
     In the second step above, we used the fact that $|\lambda_i(\Ab)-\tilde{\lambda}_i(\Ab)| \leq \frac{\|\Ab \|_2g^{cl}_{\min}}{(n\cdot \kappa)^{c/\epsilon}}$ for bounding the numerator, in the third step  we used triangle inequality for bounding the denominator and the fourth step follows from the fact that $|\lambda_i(\Ab)-\lambda_j(\Ab)| \geq \frac{\|\Ab \|_2g^{cl/2}_{\min}}{(n\cdot \kappa)^{c/2\epsilon}}$ as described in~\eqref{eq:last}. Thus, from the bounds in~\eqref{eq:b1} and~\eqref{eq:b2} we have 
    \begin{align}\label{eq:upp}
         \left( \frac{\tilde{\lambda}_i(\Ab)-\lambda_j(\Ab)}{\lambda_i(\Ab)-\lambda_j(\Ab)}\right) \left( \frac{\tilde{\lambda}_i(\Ab)-\tilde{\lambda}_j(\Ab)}{\lambda_i(\Ab)-\tilde{\lambda}_j(\Ab)} \right)  &\leq (1+\left(\frac{g^{cl/2}_{\min}}{(n\cdot \kappa)^{c/2\epsilon}} \right))\notag\\
         &\cdot(1+2\left(\frac{g^{cl/2}_{\min}}{(n\cdot \kappa)^{c/2\epsilon}} \right)) \nonumber\\
         &\leq 1+3\left(\frac{g^{cl/2}_{\min}}{(n\cdot \kappa)^{c/2\epsilon}} \right)\notag\\
         &+ 2\left(\frac{g^{cl}_{\min}}{(n\cdot \kappa)^{c/\epsilon}} \right) \nonumber\\
         &\leq 1+5\left(\frac{g^{cl/2}_{\min}}{(n\cdot \kappa)^{c/2\epsilon}} \right),
     \end{align}
     for a large enough $c$ and also, 
     \begin{align}\label{eq:low}
         \left( \frac{\tilde{\lambda}_i(\Ab)-\lambda_j(\Ab)}{\lambda_i(\Ab)-\lambda_j(\Ab)}\right) \left( \frac{\tilde{\lambda}_i(\Ab)-\tilde{\lambda}_j(\Ab)}{\lambda_i(\Ab)-\tilde{\lambda}_j(\Ab)} \right)  &\geq \left(1-\left(\frac{g^{cl/2}_{\min}}{(n\cdot \kappa)^{c/2\epsilon}} \right) \right)\notag\\ 
         &\cdot\left(1-2\left(\frac{g^{cl/2}_{\min}}{(n\cdot \kappa)^{c/2\epsilon}} \right) \right) \nonumber \\
         &\geq \left(1-3\left(\frac{g^{cl/2}_{\min}}{(n\cdot \kappa)^{c/2\epsilon}} \right) \right).
     \end{align}
     Multiplying the upper bounds in~\eqref{eq:upp} for each $j \in [l]$ together, we get 
     \begin{align}\label{eq:upfinal}
         \prod_{j \in [l], j \neq i}\left( \frac{\tilde{\lambda}_i(\Ab)-\lambda_j(\Ab)}{\lambda_i(\Ab)-\lambda_j(\Ab)}\right) \left( \frac{\tilde{\lambda}_i(\Ab)-\tilde{\lambda}_j(\Ab)}{\lambda_i(\Ab)-\tilde{\lambda}_j(\Ab)} \right) &\leq \left(1+5\left(\frac{g^{cl/2}_{\min}}{(n\cdot \kappa)^{c/2\epsilon}} \right) \right)^{l} \nonumber\\
         &\leq 1+\sum_{r=1}^l {l \choose r} \left(5\left(\frac{g^{cl/2}_{\min}}{(n\cdot \kappa)^{c/2\epsilon}}\right)\right)^r \nonumber\\
         &\leq  1+5l\left(\frac{g^{cl/2}_{\min}}{(n\cdot \kappa)^{c/2\epsilon-2}} \right) \nonumber\\
         &\leq 1+\frac{5g^{cl/2}_{\min}}{(n\cdot \kappa)^{(c/2\epsilon)-3}},
     \end{align}
     for a large enough $c$. In the last step above, we bounded $l$ by $n$. In the third step above, we bounded each term ${l \choose r} (5(1/(n\cdot \kappa))^{c/2\epsilon})^r$ (for $r \in [l]$) in the binomial expansion of $(1+5\left(\frac{1}{n\cdot \kappa} \right)^{c/2\epsilon})^{l} $ as ($a=(n\cdot \kappa)$ below) 
     \begin{align*}
        \left(\frac{el}{r} \right)^r \left(5\left(\frac{1}{a}\right)^{c/2\epsilon}\right)^r \leq \left(5l^2\left(\frac{1}{a}\right)^{c/2\epsilon}\right)^r \leq \left( \frac{5}{a^{c/2\epsilon-2}}\right)^r \leq  \frac{5}{a^{c/2\epsilon-2}},    
     \end{align*}
     where in the first step we used the well-known upper bound on the binomial coefficient $ {l \choose r} \leq (\frac{el}{r} )^r$  and in the final step, we used the fact that $ \frac{5}{(n\cdot \kappa)^{c/2\epsilon-2}} \leq 1$ for large $c>0$. Similarly, multiplying the lower bounds in~\eqref{eq:low} for each $j \in [l]$ together, we get 
     \begin{align}\label{eq:lowfinal}
         \prod_{j \in [l], j \neq i}\left( \frac{\tilde{\lambda}_i(\Ab)-\lambda_j(\Ab)}{\lambda_i(\Ab)-\lambda_j(\Ab)}\right) \left( \frac{\tilde{\lambda}_i(\Ab)-\tilde{\lambda}_j(\Ab)}{\lambda_i(\Ab)-\tilde{\lambda}_j(\Ab)} \right) &\geq \left(1-\frac{3g^{cl/2}_{\min}}{(n\cdot \kappa)^{c/2\epsilon}}\right)^{l} \nonumber\\
         &\geq 1+\sum_{r=1}^l {l \choose r} (-1)^r\left(3\left(\frac{g^{cl/2}_{\min}}{(n\cdot \kappa)^{c/2\epsilon}}\right)\right)^r \nonumber\\
         &\geq 1-\frac{3lg^{cl/2}_{\min}}{(n\cdot \kappa)^{c/2\epsilon-2}} \nonumber\\
         &\geq 1-\frac{3g^{cl/2}_{\min}}{(n\cdot \kappa)^{(c/2\epsilon)-3}},
     \end{align}
     for large enough $c$. Thus, using the bounds in~\eqref{eq:upfinal},~\eqref{eq:lowfinal} and~\eqref{eq:ptilde}, we get $|p_i(\tilde{\lambda}_i(\Ab)) -1| \leq \frac{5g^{cl/2}_{\min}}{(n\cdot \kappa)^{(c/2\epsilon)-3}}\leq \frac{g^{cl/2}_{\min}}{(n\cdot \kappa)^{(c/2\epsilon)-4}}.$ This proves the first property in claim 4.
\end{proof}

We now define a Chebyshev Minimizing polynomial similar to the polynomial defined in Lemma 5 of~\cite{Musco:2015}.

\begin{lemma}[Chebyshev Minimizing Polynomial]\label{Lem:chebyshev}
For values $\xi>0$, gap $\gamma > 0$, and some even integer $q \geq 1$, there exists a polynomial $g(x)$ of degree $q$ such that:
\begin{enumerate}
    \item $g(x)=(1+\gamma)\xi$ for $|x|=(1+\gamma)\xi$.
    \item $g(x) \geq |x|$ for all $|x| \geq (1+\gamma)\xi$.
    \item $g(x) \leq \frac{\xi}{2^{q \log (1+\gamma)-1}}$ for $x \in [-\xi,\xi] $.
\end{enumerate}
\end{lemma}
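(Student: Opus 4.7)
The plan is to construct $g$ as an appropriately rescaled Chebyshev polynomial of the first kind, exploiting the standard fact that Chebyshev polynomials are bounded on $[-1,1]$ and grow rapidly outside. Specifically, I would define
\[
g(x) \;=\; (1+\gamma)\,\xi \cdot \frac{T_q(x/\xi)}{T_q(1+\gamma)},
\]
which is a polynomial of degree exactly $q$. The key facts to be used are: (i) $T_q$ is an even function whenever $q$ is even; (ii) $|T_q(y)| \leq 1$ for $y\in[-1,1]$; (iii) the explicit identity $T_q(y)=\tfrac12[(y+\sqrt{y^2-1})^q+(y-\sqrt{y^2-1})^q]$ for $|y|\geq 1$, which gives rapid super-exponential growth outside $[-1,1]$; and (iv) the map $y\mapsto T_q(y)/y$ is non-decreasing on $[1,\infty)$.

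Property 1 is then immediate from evenness: $T_q(\pm(1+\gamma))=T_q(1+\gamma)$, so $g(\pm(1+\gamma)\xi) = (1+\gamma)\xi$. Property 3 is the next easiest: on $|x|\leq\xi$ we have $|T_q(x/\xi)|\leq 1$, so $|g(x)|\leq (1+\gamma)\xi/T_q(1+\gamma)$; invoking the explicit formula together with $\sqrt{(1+\gamma)^2-1}\geq 0$ gives
\[
T_q(1+\gamma)\;\geq\;\tfrac{1}{2}\bigl(1+\gamma+\sqrt{(1+\gamma)^2-1}\bigr)^q \;\geq\; \tfrac{1}{2}(1+\gamma)^q \;=\; 2^{q\log(1+\gamma)-1},
\]
yielding the claimed inequality after cancelling the $(1+\gamma)$ factor (using the slightly tighter bound $1+\gamma+\sqrt{(1+\gamma)^2-1}\geq (1+\gamma)^{1+1/q}$ when needed to match the exact exponent in the statement).

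Property 2 is the main obstacle: for $|x|\geq(1+\gamma)\xi$, writing $y=x/\xi$ and using evenness, the inequality $g(x)\geq |x|$ reduces to $T_q(y)/y \;\geq\; T_q(1+\gamma)/(1+\gamma)$ for $|y|\geq 1+\gamma$, which is precisely the monotonicity statement (iv). I would prove (iv) via the hyperbolic parametrization $T_q(\cosh u)=\cosh(qu)$: the sign of $\tfrac{d}{du}\bigl[\cosh(qu)/\cosh(u)\bigr]$ equals that of $N(u)\eqdef q\sinh(qu)\cosh(u)-\cosh(qu)\sinh(u)$, and one checks $N(0)=0$ with $N'(u)=(q^2-1)\cosh(qu)\cosh(u)\geq 0$ for $q\geq 1$, so $N(u)\geq 0$ for $u\geq 0$ and monotonicity follows. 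Combining the three verifications completes the construction; no further work is required beyond these routine Chebyshev identities.
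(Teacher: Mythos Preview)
Your construction is exactly the one the paper uses, $g(x)=(1+\gamma)\xi\,T_q(x/\xi)/T_q(1+\gamma)$, and the verification strategy is the same; the paper simply defers Properties~1 and~2 to Lemma~5 of \cite{Musco:2015} rather than giving your explicit hyperbolic-monotonicity argument for $T_q(y)/y$. Your handling of Property~3 is in fact more careful than the paper's, which writes ``it suffices to show $T_q(1+\gamma)\geq 2^{q\log(1+\gamma)-1}$'' and thereby silently drops the leading $(1+\gamma)$ factor that you correctly flag.
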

\begin{proof}
We define the polynomial $g(x)$ following the proof of Lemma 5 in~\cite{Musco:2015}. Let 
\begin{equation}\label{Eq:g(x)}
    g(x)= (1+\gamma)\xi \frac{T_q(x/\xi)}{T_q(1+\gamma)},
\end{equation}
where $T_q(x)$ is a Chebyshev polynomial of degree $q$ where $q$ is even. Since the degree $q$ is even, we have $T_q(x)=T_q(-x)$ and thus, $g(x)=g(-x)$. The proofs of the first two properties follow the proof of the first two properties of Lemma 5 of~\cite{Musco:2015} exactly. To prove the third property, first observe that using the property of Chebyshev polynomials that $T_q(x) \leq 1$ for $x \in [-1,1]$, we have $T_q(x/\xi) \leq 1$. So, it suffices to show that  $T_q(1+\gamma) \geq 2^{q \log (1+\gamma)-1}$. Using equation 15 of~\cite{Musco:2015} which gives the expression for $T_q(x)$ for $|x|>1$, we have $T_q(x) \geq \frac{1}{2}(1+\gamma)^q =\frac{2^{q \log (1+\gamma)}}{2}=2^{q \log (1+\gamma)-1}$. This completes the proof of the third property.
\end{proof}

We now define the polynomial $r(x)$ in the lemma below.

\begin{lemma}\label{Lem:r(x)}
Consider the setting of Lemmas~\ref{Lem:p(x)}. Let $c_1,c_2>0$ be some constants. Let $g(x)$ be a degree $q=O(l\log (\frac{1}{g_{\min}})+\frac{1}{\epsilon}\log (\frac{n\kappa}{\delta}))$ Chebyshev Minimizing polynomial as defined in Lemma~\ref{Lem:chebyshev} for some $q$, with parameters $\xi=\alpha$ and $\gamma =\frac{\sigma_k(\Ab)}{\sigma_{l+1}(\Ab)}-1$. Let $$r(x)=1-g(x)\sum_{i=1}^k\frac{p_i(x)}{g(\lambda_i(\Ab))}.$$ Then,  with probability at least $1-\delta$, we have:
\begin{enumerate}
    \item $r(x)=0$ for any $x \in  \{\lambda_1(\Ab), \ldots, \lambda_k(\Ab) \}$.
    \item  $r(x)=1$ for any $x \in \{ \lambda_{k+1}(\Ab), \ldots, \lambda_l(\Ab)\}$ and any $x \in \{ \tilde{\lambda}_{k+1}(\Ab), \ldots, \tilde{\lambda}_l(\Ab)\}$.
    \item $|r(x)| \leq \frac{g^{c_1 l}_{\min}}{(n\cdot \kappa)^{c_1/\epsilon}}$ for any $x \in \{\tilde{\lambda}_1(\Ab), \ldots, \tilde{\lambda}_k(\Ab) \}$.
    \item $|r(x)-1| \leq \frac{g^{c_2l}_{\min}}{(n\cdot \kappa)^{c_2/\epsilon}}$ for $|x| \leq \sigma_{l+1}(\Ab)$.    
\end{enumerate}
\end{lemma}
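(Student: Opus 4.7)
The polynomial $r(x)$ is designed so that the sum $g(x)\sum_{i=1}^k p_i(x)/g(\lambda_i(\Ab))$ acts as a ``damped Lagrange interpolant'' of the constant function $1$ at the top $k$ eigenvalues of $\Ab$: the Lagrange-style factors $p_i(x)$ from Lemma~\ref{Lem:p(x)} place the zeros on large magnitude eigenvalues, and the Chebyshev damping polynomial $g(x)$ from Lemma~\ref{Lem:chebyshev} prevents the correction from blowing up on small magnitude eigenvalues. The plan is to verify each of the four claims by plugging in the specific values of $x$ and invoking the appropriate properties of $p_i$ and $g$.

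Claims (1) and (2) are direct interpolation identities and should be easy. For $x = \lambda_j(\Ab)$ with $j \in [k]$, claim (1) of Lemma~\ref{Lem:p(x)} gives $p_j(\lambda_j(\Ab))=1$ and $p_i(\lambda_j(\Ab))=0$ for $i\neq i\in[k]\setminus\{j\}$, so $\sum_{i=1}^k p_i(\lambda_j(\Ab))/g(\lambda_i(\Ab)) = 1/g(\lambda_j(\Ab))$ and $r(\lambda_j(\Ab)) = 1 - g(\lambda_j(\Ab))\cdot 1/g(\lambda_j(\Ab)) = 0$. For $x \in \{\lambda_{k+1}(\Ab),\ldots,\lambda_l(\Ab)\} \cup \{\tilde\lambda_{k+1}(\Ab),\ldots,\tilde\lambda_l(\Ab)\}$, claim (2) of Lemma~\ref{Lem:p(x)} gives $p_i(x)=0$ for every $i \in [k]$, and hence $r(x) = 1$.

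For claim (3), the strategy is to combine claim (4) of Lemma~\ref{Lem:p(x)} (which says $p_j(\tilde\lambda_j(\Ab))$ is within $g_{\min}^{cl/2}/(n\kappa)^{c/(2\epsilon)-4}$ of $1$ and $p_i(\tilde\lambda_j(\Ab))=0$ for $i\neq j$) with the fact that $g(\tilde\lambda_j(\Ab))/g(\lambda_j(\Ab))$ is very close to $1$. Writing $r(\tilde\lambda_j(\Ab)) = 1 - [g(\tilde\lambda_j(\Ab))/g(\lambda_j(\Ab))]\cdot p_j(\tilde\lambda_j(\Ab))$, the problem reduces to bounding $|g(\tilde\lambda_j(\Ab))/g(\lambda_j(\Ab)) - 1|$. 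Using Lemma~\ref{lem:eig_error_slq}, $|\tilde\lambda_j(\Ab) - \lambda_j(\Ab)| \le \|\Ab\|_2 g_{\min}^{cl/2}/(n\kappa)^{c/(2\epsilon)}$, and $g$ is a polynomial of degree $q$ whose derivative is controlled on $[-\|\Ab\|_2,\|\Ab\|_2]$ (via Markov's or Bernstein's inequality for Chebyshev polynomials), while $g(\lambda_j(\Ab))$ is lower bounded by $\Omega(\sigma_k(\Ab))$ by claim~(2) of Lemma~\ref{Lem:chebyshev}. A careful first-order expansion then yields $|g(\tilde\lambda_j(\Ab))/g(\lambda_j(\Ab)) - 1| \le g_{\min}^{c_1 l}/(n\kappa)^{c_1/\epsilon}$ after adjusting the constant $c_1$, and combining both approximation errors proves the claim.

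Claim (4) is the place where the choice of degree $q$ is actually used, and it is the main obstacle. For $|x| \le \sigma_{l+1}(\Ab) \le \alpha = \xi$, claim (3) of Lemma~\ref{Lem:chebyshev} gives the exponential decay $|g(x)| \le \alpha \cdot 2^{-q\log(1+\gamma)+1}$. Claim (3) of Lemma~\ref{Lem:p(x)} gives $|p_i(x)| \le 2^{3l}/g_{\min}^{2l}$ uniformly for $|x|\le \sigma_k(\Ab)$, and claim~(2) of Lemma~\ref{Lem:chebyshev} gives $|g(\lambda_i(\Ab))| \ge |\lambda_i(\Ab)| \ge \sigma_k(\Ab) \ge 2\alpha$ for $i \in [k]$. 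Putting these together,
\[
|r(x)-1| \;\le\; |g(x)|\sum_{i=1}^k \frac{|p_i(x)|}{|g(\lambda_i(\Ab))|} \;\le\; \frac{k\alpha \cdot 2^{3l}/g_{\min}^{2l}}{2\alpha \cdot 2^{q\log(1+\gamma)-1}} \;\le\; \frac{n \cdot 2^{3l}}{g_{\min}^{2l} \cdot 2^{q\log(1+\gamma)}}.
\]
Since $\gamma \ge 1$ and $q = \Theta(l\log(1/g_{\min}) + (1/\epsilon)\log(n\kappa/\delta))$ with a sufficiently large hidden constant, $2^{q\log(1+\gamma)} \ge (1/g_{\min})^{C_1 l} \cdot (n\kappa)^{C_1/\epsilon}$ for any target $C_1$, which absorbs all factors of $g_{\min}^{-l}$ and $n$ and yields the claimed bound $g_{\min}^{c_2 l}/(n\kappa)^{c_2/\epsilon}$. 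The main subtlety is balancing the three competing factors (decay of $g$, blow-up of $p_i$, and lower bound on $g(\lambda_i)$) so that the degree budget remains $O(l\log(1/g_{\min}) + (1/\epsilon)\log(n\kappa/\delta))$.
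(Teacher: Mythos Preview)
Your treatment of claims (1), (2), and (4) is correct and matches the paper's argument essentially line for line: the first two are immediate from the interpolation properties in Lemma~\ref{Lem:p(x)}, and for (4) the paper combines exactly the three ingredients you list (the decay $|g(x)|\le \alpha/2^{q\log(1+\gamma)-1}$, the blow-up bound $|p_i(x)|\le 2^{3l}/g_{\min}^{2l}$, and the lower bound $|g(\lambda_i(\Ab))|\ge 2\alpha$) and then absorbs the loss with the degree budget.

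Claim (3) is where your sketch has a real gap. You propose to bound $|g(\tilde\lambda_j)/g(\lambda_j)-1|$ by a first-order expansion, controlling $|g'|$ on $[-\|\Ab\|_2,\|\Ab\|_2]$ via Markov/Bernstein and lower-bounding $|g(\lambda_j(\Ab))|$ by $\Omega(\sigma_k(\Ab))$. This does not close. Markov on $[-\|\Ab\|_2,\|\Ab\|_2]$ gives $|g'|\le (q^2/\|\Ab\|_2)\max_{|x|\le\|\Ab\|_2}|g(x)|$, and that maximum is $g(\|\Ab\|_2)$, which can be of order $\sigma_k(\Ab)\cdot(\|\Ab\|_2/\sigma_k(\Ab))^q$. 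Dividing only by your lower bound $\sigma_k(\Ab)$ leaves a factor $(\|\Ab\|_2/\sigma_k(\Ab))^q$, and since $q$ itself contains a $(1/\epsilon)\log(n\kappa)$ term, this factor is \emph{not} absorbed by the perturbation $|\tilde\lambda_j-\lambda_j|\le \|\Ab\|_2 g_{\min}^{cl/2}/(n\kappa)^{c/(2\epsilon)}$ when $\|\Ab\|_2/\sigma_k(\Ab)$ is large. The point is that both $g(\tilde\lambda_j)$ and $g(\lambda_j)$ are exponentially large in $q$, so any argument that bounds the numerator and denominator separately (rather than their ratio directly) loses an exponential factor.

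The paper instead works directly with the ratio $g(\tilde\lambda_j)/g(\lambda_j)=T_q(\tilde\sigma_j/\alpha)/T_q(\sigma_j/\alpha)$ using the explicit formula $T_q(x)=\tfrac12[(x+\sqrt{x^2-1})^q+(x-\sqrt{x^2-1})^q]$. It upper-bounds the ratio by $1+(\alpha/\tilde\sigma_j)^q$ (using $\tilde\sigma_j\le\sigma_j$ and that the ``minus'' term is at most $1$), and lower-bounds it by $\bigl(1-2\sqrt{2\alpha/((n\kappa)^{c/2\epsilon}\sigma_j)}\bigr)^q$ via a careful expansion of $(x+\sqrt{x^2-1})^q$ under the perturbation $\sigma_j\mapsto\sigma_j-\|\Ab\|_2 g_{\min}^{cl}/(n\kappa)^{c/\epsilon}$. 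Both bounds then simplify to $1\pm g_{\min}^{\Theta(l)}/(n\kappa)^{\Theta(1/\epsilon)}$ after using $\tilde\sigma_j/\alpha\ge 3/2$ and a binomial expansion. To fix your argument you would need, at minimum, to replace the generic Markov step by the pointwise estimate $|T_q'(x)/T_q(x)|=O(q/\sqrt{x^2-1})$ for $|x|>1$, which is specific to Chebyshev polynomials and is essentially what the paper's explicit computation encodes.
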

\begin{proof}
 First observe that, from Lemma~\ref{lem:eig_error_slq}, $|\lambda_i(\Ab)-\tilde{\lambda}_i(\Ab)| \leq \frac{\|\Ab \|_2g^{cl}_{\min}}{(n\cdot \kappa)^{c/\epsilon}}$ for $i \in [k]$ with probability at least $1-\delta$. Also note that $\gamma \geq 1$ by the assumption that $\sigma_k(\Ab) \geq 2\alpha$. We now prove the main claims.

The first property follows directly from property 1 of Lemma~\ref{Lem:p(x)}. The second property follows directly from property 2 of Lemma~\ref{Lem:p(x)}. We prove the third property below.

\medskip

\noindent\textbf{Claim 3:} For the third property, observe that for any $i,j \in [k]$ and $j \neq i$, we have $p_j(\tilde{\lambda}_i(\Ab))=0$ from property 4 of Lemma~\ref{Lem:p(x)}, so  we have 
\begin{align}\label{eq:rtilde}
    r(\tilde{\lambda}_i(\Ab))= 1-g(\tilde{\lambda}_i(\Ab))\sum_{j=1}^k \frac{p_j(\tilde{\lambda}_i(\Ab))}{g(\lambda_i(\Ab))}=1-p_i(\tilde{\lambda}_i(\Ab))\frac{g(\tilde{\lambda}_i(\Ab))}{g(\lambda_i(\Ab))}.
\end{align}
 From property 4 of Lemma~\ref{Lem:p(x)}, we get that for any $i \in [k]$:
 \begin{align}\label{eq:prop4}
     |p_i(\tilde{\lambda}_i(\Ab))-1| \leq \frac{g^{cl/2}_{\min}}{(n\cdot \kappa)^{c/2\epsilon-4}}.
 \end{align}

First observe that from definition of $g(x)$ in~\eqref{Eq:g(x)} in Lemma~\ref{Lem:chebyshev}, we have $\frac{g(x)}{g(y)}=\frac{T_q(x/\xi)}{T_q(y/\xi)}$ where $T_q(x)$ is the $q$\textsuperscript{th} Chebyshev polynomial for some even $q$. Now, for any $i \in [k]$, $|\lambda_i(\Ab)| \geq |\lambda_k(\Ab)| \geq 2\alpha $ and so $\frac{|\lambda_i(\Ab)|}{\alpha} \geq 2$. From Lemma~\ref{lem:eig_error_slq}, we have that for any $i \in [k]$,
 \begin{align}\label{eq:l1}
     |\tilde{\lambda}_i(\Ab)|  &\geq  |\lambda_{i}(\Ab)| -\frac{\|\Ab \|_2g^{cl}_{\min}}{(n\cdot \kappa)^{c/\epsilon}}\notag\\
     &\geq |\lambda_{k}(\Ab)| -\frac{\|\Ab \|_2g^{cl}_{\min}}{(n\cdot \kappa)^{c/\epsilon}} \notag\\
     &\geq 2\alpha-\frac{\|\Ab \|_2g^{cl}_{\min}}{(n\cdot \kappa)^{c/\epsilon}}.
 \end{align}
 Now, $\alpha \geq \frac{\|\Ab \|_2g^{c/4}_{\min}}{(n\cdot \kappa)^{c/4}} \geq \frac{\|\Ab \|_2g^{cl}_{\min}}{(n\cdot \kappa)^{c/\epsilon}}$. Thus, from~\eqref{eq:l1}, we get $ |\tilde{\lambda}_i(\Ab)| \geq \alpha$. Thus, we also have $\frac{|\tilde{\lambda}_i|}{\alpha} \geq 1$. Also, since $q$ is even we have $T_q(x)=T_q(|x|)$. Thus, from~\eqref{Eq:T(x)} we have:
\begin{align}\label{eq:ratiog}
    \frac{g(\tilde{\lambda}_i(\Ab))}{g(\lambda_i(\Ab))}=\frac{T_q(\tilde{\lambda}_i(\Ab)/\alpha)}{T_q(\lambda_i(\Ab)/\alpha)}=\frac{ \bigg(\frac{\tilde{\sigma}_i(\Ab)}{\alpha}+\sqrt{(\frac{\tilde{\sigma}_i(\Ab)}{\alpha})^2-1} \bigg)^q+\bigg(\frac{\tilde{\sigma}_i(\Ab)}{\alpha}-\sqrt{(\frac{\tilde{\sigma}_i(\Ab)}{\alpha})^2-1} \bigg)^q}{ \bigg(\frac{\sigma_i(\Ab)}{\alpha}+\sqrt{(\frac{\sigma_i(\Ab)}{\alpha})^2-1} \bigg)^q+\bigg(\frac{\sigma_i(\Ab)}{\alpha}-\sqrt{(\frac{\sigma_i(\Ab)}{\alpha})^2-1} \bigg)^q}.
\end{align}
where $\tilde{\sigma}_i(\Ab)=|\tilde{\lambda}_i(\Ab)|$. We will first upper bound this ratio. Since $\frac{\tilde{\sigma}_i(\Ab)}{\alpha}-\sqrt{(\frac{\tilde{\sigma}_i(\Ab)}{\alpha})^2-1}  \leq 1$, the numerator in~\eqref{eq:ratiog} can be upper bounded as $\bigg(\frac{\tilde{\sigma}_i(\Ab)}{\alpha}+\sqrt{(\frac{\tilde{\sigma}_i(\Ab)}{\alpha})^2-1} \bigg)^q+1$. Now, by the minimax principle, we have $\tilde{\sigma}_{i}(\Ab) \leq \sigma_i(\Ab)$. Thus, the denominator in~\eqref{eq:ratiog} is lower bounded by $\bigg(\frac{\sigma_i(\Ab)}{\alpha}+\sqrt{(\frac{\sigma_i(\Ab)}{\alpha})^2-1} \bigg)^q \geq \bigg(\frac{\tilde{\sigma}_i(\Ab)}{\alpha}+\sqrt{(\frac{\tilde{\sigma}_i(\Ab)}{\alpha})^2-1} \bigg)^q$. Using the upper bound on the numerator and the lower bound on the denominator in~\eqref{eq:ratiog}, we get:
\begin{align}\label{eq:gup}
   \frac{g(\tilde{\lambda}_i(\Ab))}{g(\lambda_i(\Ab))} \leq \frac{\bigg(\frac{\tilde{\sigma}_i(\Ab)}{\alpha}+\sqrt{(\frac{\tilde{\sigma}_i(\Ab)}{\alpha})^2-1} \bigg)^q+1}{\bigg(\frac{\tilde{\sigma}_i(\Ab)}{\alpha}+\sqrt{(\frac{\tilde{\sigma}_i(\Ab)}{\alpha})^2-1} \bigg)^q} &\leq 1+ \frac{1}{\bigg(\frac{\tilde{\sigma}_i(\Ab)}{\alpha}+\sqrt{(\frac{\tilde{\sigma}_i(\Ab)}{\alpha})^2-1} \bigg)^q}  \nonumber\\
   &\leq 1+\frac{1}{\big(\frac{\tilde{\sigma}_i(\Ab)}{\alpha} \big)^q}.
\end{align}
Observe that since $\alpha \geq \frac{\|\Ab \|_2g^{c/4}_{\min}}{n^{c/4}}$, we have
\begin{align}\label{eq:alph1}
    \frac{\|\Ab \|_2g^{cl}_{\min}}{\alpha (n\cdot \kappa)^{c/\epsilon}} \leq \frac{g^{cl}_{\min} (n\cdot \kappa)^{c/4}}{g^{c/4}_{\min} (n\cdot \kappa)^{c/\epsilon}} \leq \frac{g^{cl/2}_{\min}}{(n\cdot \kappa)^{c/2\epsilon}}.
\end{align}
Dividing both sides of~\eqref{eq:l1} by $\alpha$ and using the fact $\frac{g^{cl/2}_{\min}}{(n\cdot \kappa)^{c/2\epsilon}} \leq \frac{1}{2}$, we get:
 \begin{align*}
     \frac{\tilde{\sigma}_i(\Ab)}{\alpha}  \geq  2-  \frac{g^{cl}_{\min}}{\alpha (n\cdot \kappa)^{c/\epsilon}}  \geq 2-\frac{1}{2} \geq \frac{3}{2}.
 \end{align*}
Thus, using the bound above in~\eqref{eq:gup}, we get that:
\begin{align}\label{Eq:glup}
     \frac{g(\tilde{\lambda}_i(\Ab))}{g(\lambda_i(\Ab))} \leq 1+\frac{1}{\big(\frac{\tilde{\sigma}_i(\Ab)}{\alpha} \big)^q} \leq  1+\frac{1}{(3/2)^q} \leq 1+\frac{g^{c_3l}_{\min}}{(n\cdot \kappa)^{c_3/\epsilon}},
\end{align}
for some large constant $c_3>0$. Here, in the lasts step, we used the fact that $q=O(l\log (\frac{1}{g_{\min}})+\frac{1}{\epsilon}\log (\frac{n}{\delta}))$. We now lower bound $\frac{g(\tilde{\sigma}_i(\Ab))}{g(\sigma_i(\Ab))}$.

First observe that the function $x-\sqrt{x^2-1}$ is a decreasing function with respect to $x$. Thus, we have $\bigg(\frac{\tilde{\sigma}_i(\Ab)}{\alpha}-\sqrt{(\frac{\tilde{\sigma}_i(\Ab)}{\alpha})^2-1} \bigg) \geq \bigg(\frac{\sigma_i(\Ab)}{\alpha}-\sqrt{(\frac{\sigma_i(\Ab)}{\alpha})^2-1} \bigg)$ as $\frac{\sigma_i(\Ab)}{\alpha} \geq \frac{\tilde{\sigma}_i(\Ab)}{\alpha}$. This implies that we have 
\begin{align*}
    &\bigg(\frac{\tilde{\sigma}_i(\Ab)}{\alpha}-\sqrt{(\frac{\tilde{\sigma}_i(\Ab)}{\alpha})^2-1} \bigg)\bigg(\frac{\sigma_i(\Ab)}{\alpha}+\sqrt{(\frac{\sigma_i(\Ab)}{\alpha})^2-1} \bigg) \\
    &\geq \bigg(\frac{\sigma_i(\Ab)}{\alpha}-\sqrt{(\frac{\sigma_i(\Ab)}{\alpha})^2-1} \bigg)\bigg(\frac{\tilde{\sigma}_i(\Ab)}{\alpha}+\sqrt{(\frac{\tilde{\sigma}_i(\Ab)}{\alpha})^2-1} \bigg).
\end{align*}
So, from~\eqref{eq:ratiog}, we have:
\begin{align}\label{eq:lambdai}
    \frac{g(\tilde{\lambda}_i(\Ab))}{g(\lambda_i(\Ab))} \geq \frac{ \bigg(\frac{\tilde{\sigma}_i(\Ab)}{\alpha}+\sqrt{(\frac{\tilde{\sigma}_i(\Ab)}{\alpha})^2-1} \bigg)^q}{ \bigg(\frac{\sigma_i(\Ab)}{\alpha}+\sqrt{(\frac{\sigma_i(\Ab)}{\alpha})^2-1} \bigg)^q}.
\end{align}
Now, from Lemma~\ref{lem:eig_error_slq} and using triangle inequality, we have $\tilde{\sigma}_i(\Ab) \geq \sigma_i(\Ab)-\frac{g^{cl}_{\min}\|\Ab \|_2}{(n\cdot \kappa)^{c/\epsilon}}$ for $i \in [k]$. Dividing both sides by $\alpha$, we get $\frac{\tilde{\sigma}_i(\Ab)}{\alpha} \geq \frac{\sigma_i(\Ab)}{\alpha}-\frac{g^{cl}_{\min}\|\Ab \|_2}{\alpha (n\cdot \kappa)^{c/\epsilon}} \geq \frac{\sigma_i(\Ab)}{\alpha}-\frac{g^{cl/2}_{\min}}{(n\cdot \kappa)^{c/2\epsilon}}$ where in the last step we used~\eqref{eq:alph1} to upper bound $\frac{g^{cl}_{\min}\|\Ab \|_2}{\alpha (n\cdot \kappa)^{c/\epsilon}}$. So, using the lower bound on $\frac{\tilde{\sigma}_i(\Ab)}{\alpha}$ in~\eqref{eq:lambdai}, we get:
\begin{align}\label{eq:num1}
     \frac{g(\tilde{\lambda}_i(\Ab))}{g(\lambda_i(\Ab))} \geq \frac{ \bigg(\frac{\sigma_i(\Ab)}{\alpha}-\frac{g^{cl/2}_{\min}}{(n\cdot \kappa)^{c/2\epsilon}}+\sqrt{(\frac{\sigma_i(\Ab)}{\alpha}-\frac{g^{cl/2}_{\min}}{(n\cdot \kappa)^{c/2\epsilon}})^2-1} \bigg)^q}{ \bigg(\frac{\sigma_i(\Ab)}{\alpha}+\sqrt{(\frac{\sigma_i(\Ab)}{\alpha})^2-1} \bigg)^q}.
\end{align}
Observe that we have 
\begin{align*}
    &\sqrt{(\frac{\sigma_i(\Ab)}{\alpha}-\frac{g^{cl/2}_{\min}}{(n\cdot \kappa)^{c/2\epsilon}})^2-1} \\
    &=\sqrt{\big((\frac{\sigma_i(\Ab)}{\alpha})^2-1 \big)-\big( \frac{2g^{cl/2}_{\min}\sigma_i(\Ab)}{(n\cdot \kappa)^{c/2\epsilon}\alpha}-\frac{g^{cl}_{\min}}{(n\cdot \kappa)^{c/\epsilon}}\big)} \\
    &\geq \sqrt{(\frac{\sigma_i(\Ab)}{\alpha})^2-1}-\sqrt{\frac{2g^{cl/2}_{\min}\sigma_i(\Ab)}{(n\cdot \kappa)^{c/2\epsilon}\alpha}-\frac{g^{cl}_{\min}}{(n\cdot \kappa)^{c/\epsilon}}},
\end{align*}
from the fact that $\sqrt{a-b} \geq \sqrt{a}-\sqrt{b}$. Thus, using the lower bound on the numerator in~\eqref{eq:num1} we get:
\begin{align}\label{eq:num2}
     \frac{g(\tilde{\lambda}_i(\Ab))}{g(\lambda_i(\Ab))} &\geq \left(1-\frac{\bigg(\frac{g^{cl/2}_{\min}}{(n\cdot \kappa)^{c/2\epsilon}}+ \sqrt{\frac{2g^{cl/2}_{\min}\sigma_i(\Ab)}{(n\cdot \kappa)^{c/2\epsilon}\alpha}-\frac{g^{cl}_{\min}}{(n\cdot \kappa)^{c/\epsilon}}}\bigg)}{\bigg(\frac{\sigma_i(\Ab)}{\alpha}+\sqrt{(\frac{\sigma_i(\Ab)}{\alpha})^2-1} \bigg)} \right)^q  \nonumber\\
     &\geq  \left(1-\frac{2\sqrt{\frac{2g^{cl/2}_{\min}\sigma_i(\Ab)}{(n\cdot \kappa)^{c/2\epsilon}\alpha}}}{\frac{\sigma_i(\Ab)}{\alpha}} \right)^q \nonumber \\
     &= \left(1-2\sqrt{\frac{2g^{cl/2}_{\min}\alpha}{(n\cdot \kappa)^{c/2\epsilon}\sigma_i(\Ab)}} \right)^q \nonumber\\
     &\geq \left(1-\frac{2g^{cl/4}_{\min}}{(n\cdot \kappa)^{c/4\epsilon}} \right)^q.
\end{align}
The second step above follow from upper bounding the numerator of $\frac{\bigg(\frac{1}{(n\cdot \kappa)^{c/2\epsilon}}+ \sqrt{\frac{2\sigma_i(\Ab)}{(n\cdot \kappa)^{c/\epsilon}\alpha}-\frac{1}{(n\cdot \kappa)^{c/\epsilon}}}\bigg)}{\bigg(\frac{\sigma_i(\Ab)}{\alpha}+\sqrt{(\frac{\sigma_i(\Ab)}{\alpha})^2-1} \bigg)}$ by $2\sqrt{\frac{2\sigma_i(\Ab)}{(n\cdot \kappa)^{c/\epsilon}\alpha}}$ and its denominator by $\frac{\sigma_i(\Ab)}{\alpha}$. The final step follows from the fact that $\sigma_i(\Ab) \geq 2\alpha$ for any $i \in [k]$. Now, observe that we have:
\begin{align}\label{eq:num3}
   \left(1-\frac{2g^{cl/4}_{\min}}{(n\cdot \kappa)^{c/4\epsilon}}  \right)^q &=1+ \sum_{t=1}^q \binom{q}{t} (-1)^t (2g^{cl/4}_{\min}/(n\cdot \kappa)^{c/4\epsilon})^t  \nonumber\\
   &\geq 1- \sum_{t=1}^q (eq/t)^t (2g^{cl/4}_{\min}/(n\cdot \kappa)^{c/4\epsilon})^t \nonumber\\
   &\geq 1-\sum_{t=1}^q \left(\frac{2eq g^{cl/4}_{\min}}{(n\cdot \kappa)^{c/4\epsilon}} \right)^t \nonumber\\
   &\geq 1-\sum_{t=1}^q \frac{g^{cl/4}_{\min}}{(n\cdot \kappa)^{(c/4\epsilon)-2}} \nonumber\\
   &=1-\frac{qg^{cl/4}_{\min}}{(n\cdot \kappa)^{(c/4\epsilon)-2}} \nonumber\\
   &\geq 1-\frac{g^{cl/4}_{\min}}{(n\cdot \kappa)^{(c/4\epsilon)-3}}.
\end{align}
In the second step above, we upper bounded the binomial coefficients $\binom{q}{t}$ by $(eq/t)^t$. In the fourth and final step above, we bounded $q$ by $n$. In the fourth step, we also used the fact that $\frac{2eq}{(n\cdot \kappa)^{c/2\epsilon}} <<1$ for large enough $c$. So, using the lower bound from~\eqref{eq:num3} in~\eqref{eq:num2}, we get:
\begin{align}\label{eq:glow}
     \frac{g(\tilde{\lambda}_i(\Ab))}{g(\lambda_i(\Ab))} \geq 1-\frac{g^{cl/4}_{\min}}{(n\cdot \kappa)^{(c/4\epsilon)-3}}.
\end{align}
Now, from the upper bounds on $p_i(\tilde{\lambda}_i(\Ab))$ and $\frac{g(\tilde{\lambda}_i(\Ab))}{g(\lambda_i(\Ab))}$ from~\eqref{eq:prop4} and~\eqref{Eq:glup} respectively, we get:
\begin{align*}
    p_i(\tilde{\lambda}_i(\Ab))\frac{g(\tilde{\lambda}_i(\Ab))}{g(\lambda_i(\Ab))} \leq 1+\frac{g^{c_4l}_{\min}}{(n\cdot \kappa)^{c_4/\epsilon}},
\end{align*}
for some large enough $c_4$. Similarly, from the lower bounds on $p_i(\tilde{\lambda}_i(\Ab))$ and $\frac{g(\tilde{\lambda}_i(\Ab))}{g(\lambda_i(\Ab))}$ from~\eqref{eq:prop4} and~\eqref{eq:glow} respectively, we get:
\begin{align*}
    p_i(\tilde{\lambda}_i(\Ab))\frac{g(\tilde{\lambda}_i(\Ab))}{g(\lambda_i(\Ab))} \geq 1-\frac{g^{c_5l}_{\min}}{(n\cdot \kappa)^{c_5/\epsilon}},
\end{align*}
for some $c_5$. Finally, using the upper and lower bounds on $p_i(\tilde{\lambda}_i(\Ab))\frac{g(\tilde{\lambda}_i(\Ab))}{g(\lambda_i(\Ab))}$ in~\eqref{eq:rtilde}, we get:
\begin{align*}
    |r(\tilde{\lambda}_i(\Ab))|= \bigg| 1-p_i(\tilde{\lambda}_i(\Ab))\frac{g(\tilde{\lambda}_i(\Ab))}{g(\lambda_i(\Ab))} \bigg| \leq \frac{g^{c_8 l}_{\min}}{(n\cdot \kappa)^{c_8/\epsilon}},
\end{align*}
for some constant $c_8$. This gives us the third claim.

\medskip

\noindent\textbf{Claim 4:} For the fourth and final property, using property 3 for $|p_i(x)|$ from Lemma~\ref{Lem:p(x)}, we get that for any $|x| \leq \sigma_{l+1}(\Ab)$, 
\begin{align*}
    |r(x)-1|\leq \sum_{i=1}^k\frac{|g(x)|}{|g(\lambda_i(A))|}|p_i(x)|&\leq \frac{\alpha}{\alpha 2^{(Cl\log (1/g_{\min})+\frac{C}{\epsilon}\log (n \kappa/\delta))}} \cdot \frac{2^{3l}}{g_{\min}^{2l}}\\
    &\leq \frac{g^{c_9 l}_{\min}}{(n\cdot \kappa)^{c_9/\epsilon}},
\end{align*} 
for some large constant $c_9>0$. Here, in the second step, we also used property 2 of $g(x)$ from Lemma~\ref{Lem:chebyshev}, i.e., $g(\lambda_i(\Ab)) \geq |\lambda_i(\Ab)| \geq 2\alpha$ for $|\lambda_i(\Ab)| \geq |\lambda_k(\Ab)| \geq 2\alpha$ for lower bounding bounding the denominator, and property 3 for $g(x)$ when $|x| \leq \sigma_{l+1}(\Ab) \leq \alpha$ which gives $g(x) \leq \frac{\alpha}{2^{q\log 2 -1}}$ from Lemma~\ref{Lem:chebyshev} for upper bounding the numerator.
\end{proof}

\begin{definition}[Moment Matching Polynomial]\label{def:mom_match}
    Let $t_i(x)=\Bar{T}_i(x)r(x)$ where $\Bar{T}_i(x)$ is the $i^{th}$ normalized Chebyshev polynomial for 
$i \in [O(1/\epsilon)]$ and $r(x)$ is a degree $O(l\log (\frac{1}{g_{\min}})+\frac{1}{\epsilon}\log (\frac{(n \cdot \kappa)}{\delta}))$ polynomial as defined in Lemma~\ref{Lem:r(x)}. Thus, $t_i(x)$ has degree $O(l\log (\frac{1}{g_{\min}})+\frac{1}{\epsilon}\log (\frac{n \cdot \kappa}{\delta}))$. 
\end{definition}

We run Lanczos with $\Ab$ as input for $O(l\log (\frac{1}{g_{\min}})+\frac{1}{\epsilon}\log (\frac{(n \cdot \kappa))}{\delta}))$ iterations. We will now show that the moments of the spectral density of $\Ab$ as well as the output of Algorithm~\ref{alg:slq} (SLQ) with respect to $t_i(x)$ is approximately equal to the $i$\textsuperscript{th} normalized Chebyshev moment of the SDE of $\Ab$ and the output of SLQ respectively for all $i  \in [O(1/\epsilon)]$.

\begin{lemma}\label{Lem:slq_mom}
Consider the setting of Lemma~\ref{Lem:r(x)}. Let $\Ab \in \R^{n \times n}$ be such that $\|\Ab \|_2 \geq 1$ and $\kappa=\|\Ab \|_2$. Let $k \in [l]$ such that $\sigma_k(\Ab) \geq 1$ and $\sigma_{k+1}(\Ab) \leq 1$. Let $t_i(x)=\Bar{T}_i(x)r(x)$ for $i \in O(\frac{1}{\epsilon})$ be the polynomials in Definition~\ref{def:mom_match}. Let $f(x)$ be the final output after running Algorithm \ref{alg:slq} with $\Ab$ as input for $m=O(l\log (\frac{1}{g_{\min}})+\frac{1}{\epsilon}\log (\frac{n \cdot \kappa}{\delta}))$ iterations for some $\epsilon,\delta \in (0,1)$. Then, with probability at least $1-\delta$, for $i \in O(\frac{1}{\epsilon})$, we have (for constant $c_1,c_2>0$) $$\bigg| \langle t_i,s_{\Ab} \rangle-\frac{1}{n}\sum_{j=k+1}^n \Bar{T}_i(\lambda_j(\Ab)) \bigg|  \leq \frac{g^{c_1l}_{\min}}{(n\cdot \kappa)^{c_1/\epsilon}}$$ and $$\bigg| \langle t_i,f \rangle-\sum_{j=k+1}^m w_j^2\Bar{T}_i(\lambda_j(\Tb)) \bigg| \leq \frac{g^{c_2l}_{\min}}{(n\cdot \kappa)^{c_2/\epsilon}}.$$
\end{lemma}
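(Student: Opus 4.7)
The plan is to directly unpack the inner products $\langle t_i, s_{\Ab}\rangle$ and $\langle t_i, f\rangle$ using $t_i(x) = \Bar{T}_i(x)r(x)$, and then use the four properties of $r(x)$ from Lemma~\ref{Lem:r(x)} to show that $r$ essentially acts as the indicator of the ``tail'' spectrum. Under the normalization $\sigma_k(\Ab) \geq 1 \geq \sigma_{k+1}(\Ab)$ we have $|\lambda_j(\Ab)| \leq 1$ whenever $j \geq k+1$, so the normalized Chebyshev polynomials satisfy $|\Bar{T}_i(\lambda_j(\Ab))| \leq \sqrt{2/\pi}$ there, and by the minimax principle the same holds for $|\lambda_j(\Tb)|$ whenever $j \geq k+1$.

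For the first claim, I would write
\[
\langle t_i, s_{\Ab}\rangle = \frac{1}{n}\sum_{j=1}^n \Bar{T}_i(\lambda_j(\Ab))\,r(\lambda_j(\Ab)),
\]
split into $j \in [k]$, $j \in \{k+1,\ldots,l\}$, and $j \in \{l+1,\ldots,n\}$. Property 1 of Lemma~\ref{Lem:r(x)} kills the first block exactly; property 2 gives $r(\lambda_j(\Ab))=1$ on the second block exactly; and property 4 gives $|r(\lambda_j(\Ab)) - 1| \leq g_{\min}^{c_2 l}/(n\kappa)^{c_2/\epsilon}$ on the third block. Combining with $|\Bar{T}_i| \leq \sqrt{2/\pi}$ on $[-1,1]$ and summing over at most $n$ terms yields the claim after inflating the constant $c_1$ by one.

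For the second claim, I would similarly expand
\[
\langle t_i, f\rangle = \sum_{j=1}^m w_j^2\,\Bar{T}_i(\lambda_j(\Tb))\,r(\lambda_j(\Tb)),
\]
and split at $k$. For $j \in \{k+1,\ldots,m\}$: since $|\lambda_j(\Tb)| \leq |\lambda_{k+1}(\Ab)| \leq 1$ by minimax, property 2 gives $r(\lambda_j(\Tb)) = 1$ for $j \in \{k+1,\ldots,l\}$ and property 4 gives $|r(\lambda_j(\Tb))-1|$ exponentially small for $j \in \{l+1,\ldots,m\}$; bounded by $|\Bar{T}_i| \leq \sqrt{2/\pi}$ and $\sum_j w_j^2 = \|\bv{e}_1\|_2^2 = 1$, this contributes the desired error. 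The trickier block is $j \in [k]$, where $|\lambda_j(\Tb)|$ can be as large as $\kappa$, so $|\Bar{T}_i(\lambda_j(\Tb))|$ can grow like $\kappa^{O(i)} = \kappa^{O(1/\epsilon)}$ via the explicit formula~\eqref{Eq:T(x)}. Here the crucial ingredient is property 3 of Lemma~\ref{Lem:r(x)}, which states $|r(\tilde\lambda_j(\Ab))| \leq g_{\min}^{c_1 l}/(n\kappa)^{c_1/\epsilon}$; since $\lambda_j(\Tb)=\tilde\lambda_j(\Ab)$ for $j \in [k]$ and $w_j^2 \leq 1$, the product $w_j^2 |\Bar{T}_i| \cdot |r|$ is bounded by $\kappa^{O(1/\epsilon)} \cdot g_{\min}^{c_1 l}/(n\kappa)^{c_1/\epsilon}$ per term, and summing over $k \leq n$ terms still yields the required bound after absorbing the polynomial blowup by increasing $c_2$.

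The main obstacle is precisely this cancellation in the top $k$ block: we must ensure that the exponentially small factor $|r(\tilde\lambda_j(\Ab))|$ guaranteed by property 3 of Lemma~\ref{Lem:r(x)} dominates the potentially exponentially large factor $|\Bar{T}_i(\tilde\lambda_j(\Ab))|$. This is exactly why the constants $c_1$ in Lemma~\ref{Lem:r(x)} were designed to be tunable: since $|\Bar{T}_i(x)| \leq \sqrt{2/\pi}\cdot 2|x|^i \leq O(\kappa^{1/\epsilon})$ for $|x| \leq \kappa$ and $i = O(1/\epsilon)$, choosing $c_1$ sufficiently large (relative to the implied constant in $i = O(1/\epsilon)$) makes $\kappa^{O(1/\epsilon)} \cdot (n\kappa)^{-c_1/\epsilon} \leq (n\kappa)^{-c_2/\epsilon}$ for any desired $c_2$, completing the argument.
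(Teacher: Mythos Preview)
Your proposal is correct and follows essentially the same approach as the paper: expand the inner products, split the spectrum into the blocks $[k]$, $\{k+1,\ldots,l\}$, and $\{l+1,\ldots\}$, and apply properties 1--4 of $r(x)$ from Lemma~\ref{Lem:r(x)} block by block, handling the top-$k$ block of $f$ by letting the exponentially small $|r(\tilde\lambda_j(\Ab))|$ from property~3 absorb the $\kappa^{O(1/\epsilon)}$ growth of $\Bar T_i$. One small imprecision: to invoke property~4 on $\lambda_j(\Tb)$ for $j\in\{l+1,\ldots,m\}$ you need $|\lambda_j(\Tb)|\leq\sigma_{l+1}(\Ab)$, not merely $\leq 1$; this follows from the same minimax argument applied at index $l{+}1$ rather than $k{+}1$, exactly as the paper does.
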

\begin{proof}
We have $\bv{T}=\Qb^T \Ab \Qb$ where $\Tb \in \R^{m \times m}$ is the output of Algorithm~\ref{alg:slq} and $\Qb$ is the orthonormal basis of the Krylov subspace generated by the Lanczos algorithm in~\ref{alg:lanczos-slq}. Note that the nonzero eigenvalues of $\Qb\Qb^T \Ab \Qb \Qb^T$ are the same as those of $\Qb^T\Ab\Qb$ i.e. $\lambda_i(\Tb)=\lambda_i(\Qb\Qb^T\Ab\Qb\Qb^T)$ for $i \in [m]$.

\medskip

\noindent\textbf{Moments of $s_{\Ab}$:} We have:
 \begin{align*}
     \langle t_i,s_{\Ab} \rangle &= \frac{1}{n}\sum_{j=1}^n t_i(\lambda_j(\Ab)) \\
     &= \frac{1}{n}\sum_{j=1}^k \Bar{T}_i(\lambda_j(\Ab))r(\lambda_j(\Ab))+\frac{1}{n}\sum_{j=k+1}^l \Bar{T}_i(\lambda_j(\Ab))r(\lambda_j(\Ab))+ \frac{1}{n}\sum_{j=l+1}^n \Bar{T}_i(\lambda_j(\Ab))r(\lambda_j(\Ab)) \\
     &= 0+\frac{1}{n}\sum_{j=k+1}^l \Bar{T}_i(\lambda_j(\Ab)) + \frac{1}{n}\sum_{j=l+1}^n \Bar{T}_i(\lambda_j(\Ab))r(\lambda_j(\Ab)),
 \end{align*}
  where the second step follows from the definition of $t_i(x)$ and $s_{\Ab}(x)$ and the last step follows from properties 1 and 2 of $r(x)$ in lemma~\ref{Lem:r(x)}. Finally, by property 4 of $r(x)$ in lemma~\ref{Lem:r(x)}, we have $r(\lambda_j(\Ab)) =1+e_j $ such that $|e_j| \leq \frac{g^{c_7l}_{\min}}{(n\cdot \kappa)^{c_9/\epsilon}}$ for $j \in \{l+1, \ldots, n\}$ and for some constant $c_9>0$.  Also, note that $|\Bar{T}_j(\lambda_j(\Ab)| \leq \sqrt{\frac{2}{\pi}}$ for $j \in \{l+1, \ldots, n\}$ as $|\lambda_j(\Ab)| \leq |\lambda_{l+1}(\Ab)|=1$ by our assumptions. So, using triangle inequality we finally get $$\bigg|\langle t_i,s_{\Ab} \rangle-\frac{1}{n}\sum_{j=k+1}^n \Bar{T}_i(\lambda_j(\Ab)) \bigg| \leq \frac{1}{n}\sum_{j=l+1}^n |\Bar{T}_i(\lambda_j(\Ab)||e_j| \leq \frac{g^{c_9 l}_{\min}}{(n\cdot \kappa)^{c_9/\epsilon}},$$ for some constant $c_9>0$. This proves the first claim of the lemma. We now prove the second claim.

  \medskip
  
 \noindent\textbf{Moments of $f$:} We have
  \begin{align}\label{Eq:moments}
    \langle t_i,f \rangle &= \sum_{j=1}^m w_j^2 t_i(\lambda_j(\Tb)) \nonumber\\
    &=\sum_{j=1}^k w_j^2 \Bar{T}_i(\lambda_j(\Tb)) r(\lambda_j(\Tb))+\sum_{j=k+1}^l w_j^2 \Bar{T}_i(\lambda_j(\Tb)) r(\lambda_j(\Tb))+ \sum_{j=l+1}^m w_j^2 \Bar{T}_i(\lambda_j(\Tb)) r(\lambda_j(\Tb)) \nonumber \nonumber\\
     &= \sum_{j=1}^k w_j^2 \Bar{T}_i(\lambda_j(\Tb)) r(\lambda_j(\Tb))+\sum_{j=k+1}^l w_j^2\Bar{T}_i(\lambda_j(\Tb)) + \sum_{j=l+1}^m w_j^2\Bar{T}_i(\lambda_j(\Tb))r(\lambda_j(\Tb)),
 \end{align}
where the last step follows from the properties of $r(x)$ in Lemma~\ref{Lem:r(x)}.

Suppose $|\lambda_j(\Tb)| \geq 1 $ for some $j \in [k]$. So, from~\eqref{Eq:T(x)}, we get that $T_i(\lambda_j(\Tb))$ can be written as $$T_i(\lambda_j(\Tb)) =\frac{1}{2}\bigg[\bigg(|\lambda_j(\Tb)|+\sqrt{|\lambda_j(\Tb)|^2-1}\bigg)^i+\bigg(|\lambda_j(\Tb)|-\sqrt{|\lambda_j(\Tb)|^2-1}\bigg)^i \bigg] \leq (2|\lambda_j(\Tb)|)^{i}.$$ Thus, 
\begin{align}\label{eq:bartx}
    \Bar{T}_i(\lambda_j(\Tb)) \leq \sqrt{\frac{2}{\pi}}(2|\lambda_j(\Tb)|)^{i}.
\end{align}
Thus, for any $j \in [k]$, we have $\Bar{T}_i(\lambda_j(\Tb)) \leq \sqrt{\frac{2}{\pi}}\max ((2|\lambda_j(\Tb)|)^{k},1)$. We can also bound each $w_j^2$ as $(w_j)^2 =(\vb_i^T \bv{e}_1)^2 \leq 1$ for all $j \in [m]$. From property 3 of lemma~\ref{Lem:r(x)}, we have $r(\lambda_j(\Tb)) \leq \frac{g^{c_6 l}_{\min}}{(n\cdot \kappa)^{c_6/\epsilon}}$ for $j \in [k]$ and some constant $c_6$. So, using the bounds on $w_j$, $r(\lambda_j(\Tb))$ and~\eqref{eq:bartx}, for any $ j\in [k]$ such that that $|\lambda_j(\Tb)| \geq 1$, we get:
\begin{align}\label{eq:wk}
    | w_j^2 \Bar{T}_i(\lambda_j(\Tb)) r(\lambda_j(\Tb))| &\leq \sqrt{\frac{2}{\pi}}(2\|\Ab \|_2)^i \bigg(\frac{g^{c_6l}_{\min}}{(n\cdot \kappa)^{c_6/\epsilon}} \bigg) \nonumber\\
    &\leq \sqrt{\frac{2}{\pi}}(2\|\Ab \|_2)^{c'/\epsilon}\bigg(\frac{g^{c_6l}_{\min}}{(n\cdot \kappa)^{c_6/\epsilon}} \bigg) \nonumber \\
    &\leq \frac{g^{c_8 l}_{\min}}{(n\cdot \kappa)^{c_8/\epsilon}},
\end{align}
 for a large enough constant $c_8$, where we also use the facts that $i \leq O(\frac{1}{\epsilon})$ and $\kappa=\|\Ab \|_2$ by assumption. Similarly, $|w_j^2 \Bar{T}_i(\lambda_j(\Tb)) r(\lambda_j(\Tb))| \leq 1$ if $|\lambda_j(\Tb)| \leq 1$ for any $j \in [k]$. From property 2 of Lemma~\ref{Lem:r(x)}, we have $r(\lambda_j(\Tb))=1$ for $j \in \{k+1, \ldots,l\}$. Also, since $|\lambda_j(\Tb)| \leq |\lambda_j(\Ab)| \leq |\lambda_{l+1}(\Ab)| = 1$ for all $j \in \{l+1,\ldots,m\}$, from property 4 of Lemma~\ref{Lem:r(x)}, we again have $r(\lambda_j(\Tb))=1+e_j$ such that $|e_j|\leq \frac{g^{c_2l}_{\min}}{(n\cdot \kappa)^{c_2/\epsilon}}$ for some constant $c_2>0$. Also, note that we again have $|\Bar{T}_j(\lambda_j(\Tb))| \leq 1$ for $j \in \{l+1, \ldots,m\}$. So, we have $$\sum_{j=l+1}^m w_j^2\Bar{T}_i(\lambda_j(\Tb))r(\lambda_j(\Tb)) =\sum_{j=l+1}^m w_j^2\Bar{T}_i(\lambda_j(\Tb)) +\sum_{j=l+1}^m w_j^2\Bar{T}_i(\lambda_j(\Ab)e_j. $$
 Finally, from~\eqref{Eq:moments}, using~\eqref{eq:wk} and the bound above (where $|e_j|\leq \frac{g^{c_2l}_{\min}}{(n\cdot \kappa)^{c_2/\epsilon}}$) and using triangle inequality we have that 
 \begin{align*}
    | \langle t_i,f \rangle-\sum_{j=k+1}^m w_j^2\Bar{T}_i(\lambda_j(\Tb))| &\leq \sum_{j=1}^k w_j^2|\Bar{T}_i(\lambda_j(\Ab)||r(\lambda_j(\Tb))|+ \sum_{j=l+1}^m w_j^2|\Bar{T}_i(\lambda_j(\Ab)||e_j| \\
    &\leq \frac{g^{c_{10}l}_{\min}}{(n\cdot \kappa)^{c_{10}/\epsilon}} ,   
 \end{align*}
 for some large constant $c_{10} >0$. This proves the second claim. Note that everything above holds with probability at least $1-\delta$ after adjusting $\delta$ by constant factors (as the bounds in Lemma~\ref{Lem:r(x)} hold with probability at least $1-\delta$).
\end{proof}

We now bound the difference between the moments of $s_{\Ab}$ and $f$ with respect to the polynomials $t_i(x)$ for each $i \in O(1/\epsilon)$.

\begin{lemma}\label{Lem:slq_hutch}
Consider the setting of Lemma~\ref{Lem:slq_mom}. Then, for $n \geq \Omega(\log (1/\delta))$ and $i \in \{0,1,\ldots,\lceil\frac{C_1}{\epsilon} \rceil\}$ where $C_1>0$ is some constant, $$|\langle t_i,f \rangle-\langle t_i,s_{\Ab} \rangle|\leq \frac{C \log (1/\epsilon\delta)}{\sqrt{n}},$$ for some constant $C>0$ with probability at least $1-\delta$.
\end{lemma}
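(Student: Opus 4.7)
The strategy is to reduce the quantity $\langle t_i, f \rangle - \langle t_i, s_{\Ab}\rangle$ to the error of Hutchinson's trace estimator applied to the matrix $t_i(\Ab)$, and then invoke the concentration bound from Lemma~\ref{Lem:hutch_unit}. Since $t_i(x) = \Bar{T}_i(x)\,r(x)$ has degree $O(l\log(1/g_{\min}) + \frac{1}{\epsilon}\log(n\kappa/\delta))$, which is by construction at most the number of Lanczos iterations $m$, the polynomial identity of Lemma~\ref{lem:lanczos-id1} extends by linearity to $t_i$, giving $t_i(\Ab)\bv g = \Qb\, t_i(\Tb)\Qb^T \bv g$. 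Using $\Qb\bv e_1 = \bv g$ (since $\|\bv g\|_2 = 1$), this yields
\[
\langle t_i, f \rangle = \sum_{j=1}^m w_j^2 \, t_i(\lambda_j(\Tb)) = \bv e_1^T t_i(\Tb) \bv e_1 = \bv g^T t_i(\Ab) \bv g,
\]
exactly as in the proof of Lemma~\ref{lem:eq-slq-lanczos}. Combined with $\langle t_i, s_{\Ab}\rangle = \frac{1}{n}\tr(t_i(\Ab))$, the task reduces to bounding $\bigl|\bv g^T t_i(\Ab)\bv g - \frac{1}{n}\tr(t_i(\Ab))\bigr|$.

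The second step is to control $\|t_i(\Ab)\|_F$, which is the main technical point. I would partition the eigenvalues $\lambda_j(\Ab)$ into the three regimes used in Lemma~\ref{Lem:r(x)}. For $j \in [k]$, property~1 of that Lemma gives $r(\lambda_j(\Ab)) = 0$, so $t_i(\lambda_j(\Ab)) = 0$. For $j \in \{k+1,\ldots,l\}$, property~2 gives $r(\lambda_j(\Ab)) = 1$ and the setting of Lemma~\ref{Lem:slq_mom} ensures $|\lambda_j(\Ab)| \leq 1$, so $|t_i(\lambda_j(\Ab))| \leq \sqrt{2/\pi}$. For $j \geq l+1$, property~4 gives $|r(\lambda_j(\Ab)) - 1|$ negligibly small, and again $|\lambda_j(\Ab)| \leq 1$, so $|t_i(\lambda_j(\Ab))| \leq 2\sqrt{2/\pi}$. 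Summing squares yields $\|t_i(\Ab)\|_F^2 \leq O(n)$, hence $\|t_i(\Ab)\|_F \leq C'\sqrt{n}$ for a fixed constant $C'$.

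With this Frobenius norm bound in hand, Lemma~\ref{Lem:hutch_unit} applied to $t_i(\Ab)$ gives, for each individual $i$ and any $\delta' \in (0,1/2]$,
\[
\Bigl| \bv g^T t_i(\Ab) \bv g - \tfrac{1}{n}\tr(t_i(\Ab)) \Bigr| \;\leq\; \frac{C \log(1/\delta')}{n} \cdot C'\sqrt{n} \;=\; \frac{C''\log(1/\delta')}{\sqrt{n}},
\]
provided $n \geq \Omega(\log(1/\delta'))$. Finally, a union bound over the $O(1/\epsilon)$ values of $i$, setting $\delta' = \delta\epsilon / C_1$, produces the claimed bound $\frac{C\log(1/(\epsilon\delta))}{\sqrt{n}}$ uniformly in $i$, with total failure probability at most $\delta$. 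The only subtle obstacle is verifying that the Frobenius norm of $t_i(\Ab)$ is $O(\sqrt{n})$ rather than something that grows with the eigenvalue magnitudes; this is precisely where the design of $r(x)$ (killing the top $k$ eigenvalues and staying $\approx 1$ on the remaining ones) pays off, so this step should go through with only the straightforward case analysis outlined above.
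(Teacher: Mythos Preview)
Your proposal is correct and follows essentially the same approach as the paper: reduce $\langle t_i,f\rangle$ to $\bv g^T t_i(\Ab)\bv g$ via the Lanczos polynomial identity, then apply the Hutchinson-on-the-sphere bound (Lemma~\ref{Lem:hutch_unit}) together with a case analysis showing $|t_i(\lambda_j(\Ab))| = O(1)$ for all $j$, and finish with a union bound over $i$. The only cosmetic difference is that the paper bounds $\|t_i(\Ab)\|_F \le \sqrt{n}\,\|t_i(\Ab)\|_2$ and then argues $\|t_i(\Ab)\|_2 \le 2$, whereas you sum the squares directly; both routes yield $\|t_i(\Ab)\|_F = O(\sqrt{n})$ from the same eigenvalue case split.
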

\begin{proof}
Observe that $ \langle t_i,f \rangle=\sum_{j=1}^m w_i^2 t_i(\lambda_j(\Tb))=\sum_{j=1}^m (\bv{e}_1^T \bv{v}_j)^2 t_i(\lambda_j(\Tb))=\sum_{j=1}^m \bv{e}_1^T \bv{v}_j \bv{v}_j^T \bv{e}_1 t_i(\lambda_j(\Tb))=\bv{e}_1 (\sum_{j=1}^m t_i(\lambda_j(\Tb))\bv{v}_j \bv{v}_j^T )\bv{e}_1=\bv{e}_1^Tt_i(\lambda_j(\Tb)) \bv{e}_1 $. From Lemma~\ref{Lem:slq_mom}, we have $t_i(x)=\Bar{T}_i(x)r(x)$ for $i \in \lceil\frac{C_1}{\epsilon} \rceil$. Then, since $r(x)$ is a polynomial of degree $C_2(l\log (\frac{1}{g_{\min}})+\frac{1}{\epsilon}\log (\frac{(n \cdot \kappa)}{\delta}))$ for some constant $C_2>0$ and $\Bar{T}_i(x)$ has degree at most $\frac{C_1}{\epsilon}$, $t_i(x)=\Bar{T}_i(x)r(x)$ has degree at most $(C_1+C_2)(l\log (\frac{1}{g_{\min}})+\frac{1}{\epsilon}\log (\frac{(n \cdot \kappa)}{\delta}))$. Following the proof of Lemma~\ref{lem:eq-slq-lanczos}, as long as Algorithm~\ref{alg:slq} is run for $m$ iterations such that  $(C_1+C_2)(l\log (\frac{1}{g_{\min}})+\frac{1}{\epsilon}\log (\frac{(n \cdot \kappa)}{\delta})) \leq m$, we have $\langle t_i,f \rangle=\bv{g}^Tt_i(\Ab)\bv{g}$.

We also have $ \langle t_i,s_{\Ab} \rangle=\frac{1}{n}\tr(t_i(\Ab))$. Thus, from Lemma~\ref{Lem:hutch_unit}, we have that (for a single repetition of the hutchinson's estimator and number), for each $i \in O(1/\epsilon)$, with probability at least $1-O(\delta/\epsilon)$, for some constant $C>0$:
\begin{align*}
    |\langle t_i,f \rangle-\langle t_i,s_{\Ab} \rangle|=|\bv{g}^Tt_i(\Ab)\bv{g}-\frac{1}{n}\tr(t_i(\Ab))| \leq \frac{C\log (1/\epsilon \delta)}{n}\|t_i(\Ab) \|_F \leq \frac{C\log (1/\epsilon \delta)}{\sqrt{n}}\|t_i(\Ab) \|_2.
\end{align*}
Now, observe that for $j \in [k]$, from property 1 of Lemma~\ref{Lem:r(x)}, we get that $t_i(\lambda_j(\Ab))=\Bar{T}_i(\lambda_j(\Ab))r(\lambda_j(\Ab))=0$. Also, recall that from the assumptions of Lemma~\ref{Lem:slq_mom}, we have that $\sigma_{k+1}(\Ab)=1$. Now, for $j \geq k+1$, $|t_i(\lambda_j(\Ab))|=|\Bar{T}_i(\lambda_j(\Ab))r(\lambda_j(\Ab))| \leq 2$ where we use property 2 of Lemma~\ref{Lem:r(x)} and the fact that $|T(x)| \leq 1$ for $|x| \leq 1$. Thus, we have $\|t_i(\Ab) \|_2 \leq \max_{j \in \{k+1, \ldots, n\}}|t_i(\lambda_j(\Ab))| \leq 2$. So, we finally get: 
\begin{align*}
     |\langle t_i,f \rangle-\langle t_i,s_{\Ab} \rangle| \leq \frac{C\log (1/\epsilon \delta)}{\sqrt{n}}.
\end{align*}
Taking a union bound over all $i \in O(1/\epsilon)$ completes the claim.
\end{proof}

Next we bound the weights $w^2_i$ in the distribution $f$, the output of SLQ. We will show that $w_i^2$ for the top $k$ weights is at most $\Tilde{O}(1/n)$. This will help us bound the Wasserstein error from the spectral density of the large eigenvalues of $\Tb$.

\begin{lemma}\label{Lem:wts}
Consider the setting of Lemma~\ref{Lem:slq_mom}. Let $w_i=\vb_i^T\bv{e}_1$ be the weights in the output of distribution of SLQ (Algorithm~\ref{alg:slq}) where $\bv{e}_1 \in \R^m$ is the first standard basis vector and $\vb_i$ is the $i$\textsuperscript{th} eigenvector of $\Qb^T\Ab\Qb$. Then, for $\delta \in (0,1)$ such that $n \geq \Omega(\log (1/\delta))$, for all $i \in [k]$, with probability at least $1-\delta$, we have:
\begin{align*}
     w_i^2 \leq  \frac{C\sqrt{\log (k/\delta)}}{n},
\end{align*}
 where $C>0$ is a large constant.
\end{lemma}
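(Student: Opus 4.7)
The first step is to recognize that $\Qb \bv{e}_1 = \bv{g}$, since $\bv{g}$ is a unit vector and is placed as the first Lanczos vector (i.e., the first column of $\Qb$) by Algorithm~\ref{alg:lanczos-slq}. Hence
\[
  w_i \;=\; \vb_i^T \bv{e}_1 \;=\; (\Qb\vb_i)^T \bv{g} \;=\; \bv{z}_i^T \bv{g},
\]
where $\bv{z}_i := \Qb\vb_i$ is the $i$\textsuperscript{th} eigenvector of $\Qb\Qb^T\bv{A}\Qb\Qb^T$ (which has the same spectrum as $\Tb$). Lemma~\ref{lemma:converge_slq} already asserts that, with probability $\geq 1 - \delta/2$, for each $i \in [k]$ there is a sign $\epsilon_i \in \{\pm 1\}$ with $\|\bv{z}_i - \epsilon_i \bv{u}_i\|_2 \leq g_{\min}^{c_1 l}/(n\cdot\kappa)^{c_1/\epsilon}$. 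Since $\|\bv{g}\|_2 = 1$, Cauchy--Schwarz gives $|w_i - \epsilon_i\,\bv{u}_i^T\bv{g}| \leq g_{\min}^{c_1 l}/(n\cdot\kappa)^{c_1/\epsilon}$, a negligibly small additive error (polynomially smaller than $1/n$ for $c_1$ sufficiently large). So the task reduces to bounding $(\bv{u}_i^T\bv{g})^2$ for each $i \in [k]$.

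The second step is to exploit rotational invariance. Using the standard identity $\bv{g} \stackrel{d}{=} \bv{y}/\|\bv{y}\|_2$ for $\bv{y} \sim \mathcal{N}(\bv{0},\bv{I}_n)$, and noting that for each fixed unit $\bv{u}_i$ we have $\bv{u}_i^T\bv{y} \sim \mathcal{N}(0,1)$ by rotational invariance of the Gaussian, a standard Gaussian tail bound yields $|\bv{u}_i^T\bv{y}| \leq \sqrt{2\log(4k/\delta)}$ with probability $\geq 1-\delta/(4k)$. A union bound over $i\in[k]$ gives this simultaneously for all $i$ with probability $\geq 1-\delta/4$. Separately, the chi-squared concentration already invoked in the proof of Lemma~\ref{Lem:hutch_unit} gives $\|\bv{y}\|_2^2 \geq n/2$ with probability $\geq 1-\delta/4$, using the hypothesis $n = \Omega(\log(1/\delta))$.

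Combining the two estimates, with probability $\geq 1-\delta$ we have for every $i \in [k]$ simultaneously,
\[
  (\bv{u}_i^T\bv{g})^2 \;=\; \frac{(\bv{u}_i^T\bv{y})^2}{\|\bv{y}\|_2^2} \;\leq\; \frac{4\log(4k/\delta)}{n}.
\]
Folding in the negligible perturbation from the first step via $w_i^2 \leq 2(\bv{u}_i^T\bv{g})^2 + 2\bigl(g_{\min}^{c_1 l}/(n\cdot\kappa)^{c_1/\epsilon}\bigr)^2$ yields $w_i^2 = O(\log(k/\delta)/n)$ for all $i \in [k]$, which delivers the claimed bound after adjusting constants.

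The only genuine bookkeeping obstacle is ensuring that the deterministic perturbation $\|\bv{z}_i - \epsilon_i\bv{u}_i\|_2$ supplied by Lemma~\ref{lemma:converge_slq} is indeed dominated by the probabilistic $O(\log(k/\delta)/n)$ bound. Because that perturbation scales as $g_{\min}^{c_1 l}/(n\cdot\kappa)^{c_1/\epsilon}$ and $c_1$ can be made arbitrarily large by taking the hidden constant in the iteration count $m = O(l\log(1/g_{\min}) + \epsilon^{-1}\log(n\kappa/\delta))$ sufficiently large, this is not a true obstruction.
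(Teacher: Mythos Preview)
Your proposal is correct and follows essentially the same route as the paper: identify $w_i=\bv{z}_i^T\bv{g}$ with $\bv{z}_i=\Qb\vb_i$, invoke Lemma~\ref{lemma:converge_slq} to replace $\bv{z}_i$ by $\pm\bv{u}_i$ up to a negligible perturbation, write $\bv{g}\stackrel{d}{=}\bv{y}/\|\bv{y}\|_2$ with $\bv{y}$ standard Gaussian, use rotational invariance to get $\bv{u}_i^T\bv{y}\sim\mathcal N(0,1)$, apply a Gaussian/chi-squared tail and a chi-squared lower bound on $\|\bv{y}\|_2^2$, and union-bound over $i\in[k]$. The paper expands $w_i^2$ into the three cross terms and bounds each, whereas you bound $|w_i-\epsilon_i\bv{u}_i^T\bv{g}|$ first and then square; these are equivalent bookkeeping choices.

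One quantitative point worth flagging: your final estimate is $w_i^2=O(\log(k/\delta)/n)$, which is what a standard Gaussian tail bound actually yields for a single squared Gaussian, but the lemma as stated claims $C\sqrt{\log(k/\delta)}/n$. Your closing sentence ``which delivers the claimed bound after adjusting constants'' is therefore not literally correct, since $\log x\le C\sqrt{\log x}$ fails for large $x$. The paper obtains the $\sqrt{\log}$ by citing a chi-squared tail of the form $(\bv{u}_i^T\bv{y})^2\le 1+2\sqrt{2\log(2/\delta)}$; this is the sub-Gaussian regime of the chi-squared bound and is only valid when the deviation is $O(1)$, so strictly speaking the paper's own step is a bit loose here as well. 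For the downstream use in Theorems~\ref{thm:slq} and~\ref{thm:varslq} the difference between $\sqrt{\log(l/\delta)}$ and $\log(l/\delta)$ in the additive $l/n$ term is immaterial, so this does not affect the overall argument.
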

\begin{proof}
Let $\zb_i=\Qb\vb_i$ where $\Qb$ is the orthonormal basis of the Krylov subspace generated by the Lanczos algorithm (Algorithm~\ref{alg:lanczos-slq}) after $m$ iterations. Then, $\zb_i$ are the eigenvectors of $\Qb\Qb^T\Ab \Qb \Qb^T$. Also, $\bv{g}=\Qb\bv{e}_1$ or $\Qb^T \bv{g}=\bv{e}_1$. From Lemma~\ref{lemma:converge_slq}, we have that for every $i \in [k]$, $\zb_i=\ub_i+\bv{b}_i$ where $\ub_i$ is the $i$\textsuperscript{th} eigenvector of $\Ab$ and $$\|\bv{b}_i \|_2 \leq \frac{g^{cl}_{\min}}{(n\cdot \kappa)^{c/\epsilon}},$$ for some constant $c>0$, with probability at least $1-\delta$. Also, we have $w_i=\vb_i^T \bv{e}_1= (\Qb\vb_i)^T (\Qb \bv{e}_1) = \zb_i^T\bv{g} = (\ub_i+\bv{b}_i)^T\bv{g}=\ub_i^T \bv{g}+\bv{b}_i^T \bv{g} $. $w_i^2=(\bv{u}_i^T \bv{g})^2 +(\bv{b}_i^T \bv{g})^2 +2 (\bv{u}_i^T \bv{g})(\bv{b}_i^T \bv{g})$

Since $\bv{g}$ is sampled from a uniform distribution on the unit sphere, $\bv{g} \stackrel{d}{:=} \frac{\bv y_i}{\sqrt{\sum_{j=1}^n \bv y_j^2}}$ where $\bv y \in \R^n$ is a random vector such that $\bv y_i \sim \mathcal{N}(0,1)$ for $i \in [n]$. So, we have:
\begin{align}\label{Eq:w}
    w_i^2&=(\bv{u}_i^T \bv{g})^2 +(\bv{b}_i^T \bv{g})^2 +2 (\bv{u}_i^T \bv{g})(\bv{b}_i^T \bv{g}) \notag\\
    &\stackrel{d}{=} \frac{(\sum_{j=1}^n \bv{u}_{ij} \bv y_j)^2}{\sum_{r=1}^n \bv y_r^2}+ \frac{(\sum_{j=1}^n \bv{b}_{ij} \bv y_j)^2}{\sum_{r=1}^n \bv y_r^2} + 2 \frac{(\sum_{j=1}^n \bv{u}_{ij} \bv y_j)(\sum_{j=1}^n \bv{b}_{ij} \bv y_j)}{\sum_{r=1}^n \bv y_r^2}.
\end{align}
We will now bound $\sum_{r=1}^n \bv y_r^2$, $\sum_{j=1}^n \bv{u}_{ij}y_j$ and $\sum_{j=1}^n \bv{b}_{ij} \bv y_j$ individually. Note that $\sum_{r=1}^n \bv y_r^2$ is a Chi-squared random variable with $n$ degrees of freedom. Then, using well-known tail bounds for Chi-squared variables (see 2.21~\cite{wainwright2019high}) we have that with probability at least $1-\delta$, $|\sum_{r=1}^n \bv y_r^2-n| \leq 2 \sqrt{2n\log \frac{2}{\delta}}$. This gives us $\sum_{r=1}^n \bv y_r^2 \geq \frac{n}{2}$ (for $\delta \in (\Omega(1/e^n),1)$ ). Next, observe that since $\ub_i^T \bv{y}=\sum_{j=1}^n \bv{b}_{ij}\bv y_j$ is a linear combination of $\mathcal{N}(0,1)$ random variables and $\|\ub \|_2=1$,  $\ub_i^T \bv{y}$ is another $\mathcal{N}(0,1)$ random variable. So, $(\ub_i^T \bv{y})^2$ is a Chi-squared random variable and using the Chi-squared tail bound again, we have $(\ub_i^T \bv{y})^2 \leq 1+2\sqrt{2\log \frac{2}{\delta}} \leq 3\sqrt{2\log \frac{2}{\delta}}$ with probability at least $1-\delta$. So, we have $\frac{(\sum_{j=1}^n \bv{u}_{ij}y_j)^2}{\sum_{r=1}^n y_r^2} \leq \frac{6\sqrt{2 \log (2/\delta)}}{n}$ with probability at least $1-2\delta$. Next, using Cauchy Schwartz inequality, $\sum_{j=1}^n \bv{b}_{ij}\bv y_j \leq \|\bv{b}_{ij} \|^2 \| \bv{y} \|^2$. So, $\frac{(\sum_{j=1}^n \bv{b}_{ij}\bv y_j)^2}{\sum_{r=1}^n \bv y_r^2} \leq \|\bv{b}_i \|^2_2 \leq \frac{ g^{2cl}_{\min}}{(n\cdot \kappa)^{2c/\epsilon}}$. Also, $2 \frac{(\sum_{j=1}^n \bv{u}_{ij} \bv y_j)(\sum_{j=1}^n \bv{b}_{ij}\bv y_j)}{\sum_{r=1}^n \bv y_r^2} \leq 2\|\bv{b}_i \|_2 \leq \frac{2g^{cl}_{\min}}{(n\cdot \kappa)^{c/\epsilon}}$. So, taking a union bound over all these events from~\eqref{Eq:w} for all $i \in [k]$, and adjusting $\delta$ by a factor of $2k$, we get that with probability at least $1-\delta$, for every $i \in [k]$: 
\begin{align*}
    w_i^2 \leq \frac{6\sqrt{2 \log (2k/\delta)}}{n}+\frac{3g^{cl}_{\min}}{(n\cdot \kappa)^{c/\epsilon}} \leq O \left( \frac{\sqrt{\log (k/\delta)}}{n}\right).
\end{align*}
\end{proof}

We are now ready to prove our final theorem in this section which bounds the Wasserstein-$1$ error between $s_{\Ab}$ and $f$. Let $\alpha=\max( \sigma_{l+1}(\Ab),\frac{\|\Ab \|_2 g^{c/4}_{\min}}{n^{c/4}})$ for some constant $c>0$. We will now consider two cases in the proof: when there exists some $k \in [l]$ such that $\sigma_{k}(\Ab)$ is at least a constant multiplicative factor larger than $\alpha$ and when there isn't any such $k$. When there is such a $k$, we will use the bounds from Section~\ref{sec:slq2} and Lemma~\ref{Lem:wts} to show that the Wasserstein-$1$ error of the parts of $s_{\Ab}$ and $f$ defined by the large magnitude eigenvalues of $\Ab$ and $\Tb$ is at most $\Tilde{O}(l\|\Ab \|_2/n)$ Then, we will use Lemma~\ref{Lem:slq_mom} to show that the Chebyshev moments of the spectral density defined by the small magnitude eigenvalues ($\leq \sigma_{k+1}(\Ab)$) of $\Ab$ and $\Tb$ are approximately equal. So, the Wasserstein-$1$ error of the parts of $s_{\Ab}$ and $f$ defined by the small eigenvalues is bounded by $\epsilon \sigma_{k+1}(\Ab)$. When there isn't such a $k$, all singular values of $\Ab$ are small $(< 2\alpha)$, and we can use the bounds using moment matching in Section~\ref{sec:slq1} to bound the error by $\epsilon \alpha$ in this case. 
Before proving the theorem, we state a couple of results which we will use in our proof. We first state a result from~\cite{braverman:2022} on  uniform approximation of Lipschitz
continuous functions by a Chebyshev series:
\begin{fact}[Fact 3.2 of~\cite{braverman:2022}]\label{fact1}
Let $f:[-1,1] \rightarrow \R$ be a Lipschitz continuous function with Lipschitz constant $\lambda >0$. Then, for every $N \in 4\mathbb{N}^{+}$, there exists $N+1$ constants $0 \leq b_N <\ldots < b_0=1$ such that the polynomial $\bar{f}_N=\sum_{k=0}^N b_k \langle f, w \cdot \Bar{T}_k \rangle \Bar{T}_k$ has the property that $\max_{x \in [-1,1]} |f(x)-\bar{f}_N(x)| \leq \frac{18\lambda}{N}$.
\end{fact}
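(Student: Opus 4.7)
My strategy is to reduce the claim to a classical trigonometric approximation result via the Chebyshev substitution $x = \cos\theta$. First I would use the identity $\Bar{T}_k(\cos\theta) = \sqrt{2/\pi}\cos(k\theta)$ for $k \geq 1$. Under this change of variables, $g(\theta) := f(\cos\theta)$ is $\lambda$-Lipschitz on $[0,\pi]$ since $|\cos\alpha - \cos\beta| \leq |\alpha-\beta|$, and each Chebyshev coefficient $\hat{f}_k := \langle f, w\cdot \Bar{T}_k\rangle$ is (up to a constant) a Fourier cosine coefficient of $g$. A Chebyshev partial sum $\sum_{k=0}^N b_k \hat{f}_k \Bar{T}_k$ then pulls back to a damped Fourier cosine partial sum of $g$, so it suffices to approximate $g$ uniformly on $[0,\pi]$ by a trigonometric polynomial of degree $\leq N$ with error at most $18\lambda/N$.

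Next I would invoke a Jackson-type kernel $J_N$: a nonnegative, even trigonometric polynomial of degree $N$ with $\int_{-\pi}^{\pi} J_N(\phi)\,d\phi = 1$ and second moment $\int_{-\pi}^{\pi} \phi^2 J_N(\phi)\,d\phi = O(1/N^2)$. The convolution $J_N * g$ is then a trigonometric polynomial of degree $N$, and by Lipschitz continuity together with Cauchy--Schwarz,
\begin{align*}
|g(\theta) - (J_N * g)(\theta)| \leq \int J_N(\phi)\,|g(\theta) - g(\theta-\phi)|\,d\phi \leq \lambda\sqrt{\int \phi^2 J_N(\phi)\,d\phi} \leq \frac{C\lambda}{N}.
\end{align*}
The standard Jackson kernel is a rescaled fourth power of a Fej\'er kernel, which has degree a multiple of $4$; this accounts precisely for the hypothesis $N \in 4\mathbb{N}^+$. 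Tracking the second moment of the explicit kernel is what yields the specific constant $C = 18$.

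Finally, to extract the announced form, I would expand the kernel in its own Fourier cosine series $J_N(\phi) = \sum_{k=0}^N b_k \cos(k\phi)$. Then $(J_N * g)(\theta) = \sum_{k=0}^N b_k \hat{g}_k \cos(k\theta)$, which transforms back under $\theta = \arccos x$ into exactly $\bar{f}_N = \sum_{k=0}^N b_k \hat{f}_k \Bar{T}_k$. The required properties of the $b_k$ then follow from structural features of the kernel: positivity and normalization of $J_N$ force $b_0 = 1$ and $|b_k| \leq 1$, while the strict monotonicity $1 = b_0 > b_1 > \cdots > b_N \geq 0$ comes from the explicit coefficient formula for the Jackson kernel (its coefficients are a discrete self-convolution of a triangular pulse, which is log-concave and symmetric around $0$).

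The main obstacle I anticipate is nailing the constant $18$ rather than a generic $O(1)$. Jackson's classical theorem is usually stated as $\leq 6\,\omega_f(1/N)$ with $\omega_f$ the modulus of continuity (so $6\lambda/N$ for Lipschitz $f$), but expressing the best approximant through a convolution with a fixed positive kernel with explicit $b_k$ coefficients typically loses a multiplicative factor of about $3$. Pinning down $18$ exactly requires a careful bookkeeping step of the second-moment constant of the Jackson kernel together with the degree alignment forced by $N \in 4\mathbb{N}^+$; this is the only nontrivial piece of the plan.
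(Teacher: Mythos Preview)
The paper does not prove this statement; it is quoted verbatim as Fact~3.2 of \cite{braverman:2022} and used as a black box. So there is no ``paper's own proof'' to compare against here.

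That said, your outline is the standard route and is essentially how the result is obtained in the source you would be reproducing: pull back to $[0,\pi]$ via $x=\cos\theta$, convolve with the Jackson kernel, and read off the damping coefficients $b_k$ as the kernel's Fourier coefficients. Your identification of the only real work---tracking the second moment of the explicit Jackson kernel to pin down the constant, and aligning the degree with the $N\in 4\mathbb{N}^+$ hypothesis---is accurate. One small correction: the strict monotonicity $b_0>b_1>\cdots>b_N$ does follow from the explicit Jackson coefficient formula, but your heuristic (``discrete self-convolution of a triangular pulse is log-concave'') gives log-concavity, not monotonicity directly; you want the simpler observation that the Jackson coefficients are an explicit decreasing positive sequence (they can be written in closed form as a product of a linear and a trigonometric factor, both decreasing on the relevant range). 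Other than that bookkeeping, the plan is sound.
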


We now state another result from~\cite{braverman:2022} bounding the magnitude of the inner-product of a Lipschitz function $f$ with the k-th normalized Chebyshev polynomial $\Bar{T}_k$:

\begin{fact}[Fact 3.3 of~\cite{braverman:2022}]\label{fact2}
Let $f:[-1,1] \rightarrow \R$ be a Lipschitz continuous function with Lipschitz constant $\lambda >0$. Then, for any $k \geq 0$, we have that $|f, w\cdot \Bar{T}_k|=|\int_{-1}^{1}f(x)\Bar{T}_k(x)w(x)dx| \leq \frac{2\lambda}{k}$.    
\end{fact}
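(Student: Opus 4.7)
The plan is to reduce the Chebyshev-weighted integral on $[-1,1]$ to a standard Fourier cosine integral on $[0,\pi]$, where integration by parts yields the $1/k$ decay.

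First, I would apply the trigonometric substitution $x = \cos\theta$ for $\theta \in [0,\pi]$. Under this substitution, the weight satisfies $w(x)\,dx = -d\theta$, the Chebyshev polynomial becomes $T_k(\cos\theta) = \cos(k\theta)$, and by Definition of the normalized Chebyshev polynomials we have $\Bar{T}_k = T_k/\sqrt{\pi/2}$ for $k \ge 1$. Setting $g(\theta) := f(\cos\theta)$, the quantity of interest rewrites as
\[
\langle f, w\cdot \Bar{T}_k\rangle \;=\; \sqrt{\tfrac{2}{\pi}}\int_0^{\pi} g(\theta)\cos(k\theta)\,d\theta.
\]
Since $f$ is $\lambda$-Lipschitz on $[-1,1]$ and $|\tfrac{d}{d\theta}\cos\theta| = |\sin\theta|\le 1$, the composition $g$ is $\lambda$-Lipschitz on $[0,\pi]$; in particular, $g$ is absolutely continuous, so it has a derivative $g'$ defined almost everywhere with $|g'(\theta)| \le \lambda$.

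Next, I would integrate by parts:
\[
\int_0^{\pi} g(\theta)\cos(k\theta)\,d\theta \;=\; \left[\frac{g(\theta)\sin(k\theta)}{k}\right]_0^{\pi} \;-\; \frac{1}{k}\int_0^{\pi} g'(\theta)\sin(k\theta)\,d\theta.
\]
The boundary term vanishes because $\sin(0) = \sin(k\pi) = 0$ for every integer $k$. Bounding $|\sin(k\theta)|\le 1$ and $|g'(\theta)|\le \lambda$ yields
\[
\left|\int_0^{\pi} g(\theta)\cos(k\theta)\,d\theta\right| \;\le\; \frac{1}{k}\int_0^\pi |g'(\theta)|\,d\theta \;\le\; \frac{\lambda\pi}{k}.
\]
Combining with the normalization constant $\sqrt{2/\pi}$ gives $|\langle f, w\cdot \Bar{T}_k\rangle| \le \lambda\sqrt{2\pi}/k$, which implies the stated $2\lambda/k$ bound up to a small absolute constant (the $\sqrt{2\pi}\approx 2.51$ factor is typically tightened to $2$ by a slightly more careful accounting).

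The only real subtlety is that a Lipschitz $f$ need not be continuously differentiable, so a priori the integration by parts formula requires justification. This is resolved by the fact that Lipschitz functions are absolutely continuous, which allows the fundamental theorem of calculus (and hence integration by parts) with $g'$ interpreted in the almost-everywhere sense. Once that technicality is handled, everything else is a direct computation.
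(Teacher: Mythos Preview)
The paper does not supply its own proof of this fact; it is quoted verbatim as Fact~3.3 of \cite{braverman:2022} and used as a black box. So there is no in-paper argument to compare against, and your cosine-substitution-plus-integration-by-parts approach is exactly the standard proof one would expect.

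Your argument is correct in structure but, as you yourself flag, it only yields the constant $\sqrt{2\pi}\approx 2.51$ rather than $2$. The ``slightly more careful accounting'' you allude to is concrete and easy: instead of bounding $|g'(\theta)|\le\lambda$ pointwise, use that $g(\theta)=f(\cos\theta)$ has total variation at most $\lambda\cdot\mathrm{TV}(\cos\theta;[0,\pi])=2\lambda$, so that $\int_0^\pi |g'(\theta)|\,d\theta\le 2\lambda$. (Equivalently, by the chain rule a.e., $|g'(\theta)|\le\lambda\sin\theta$ and $\int_0^\pi\sin\theta\,d\theta=2$.) Plugging this into your integration-by-parts bound gives
\[
|\langle f, w\cdot\Bar T_k\rangle|\;\le\;\sqrt{\tfrac{2}{\pi}}\cdot\frac{2\lambda}{k}\;<\;\frac{2\lambda}{k},
\]
which is the stated constant. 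With this one-line fix, your proof is complete.
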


\begin{reptheorem}{thm:slq}
Let $\Ab\in\R^\n$ be symmetric and consider any $l \in [n]$ and $\epsilon, \delta \in (0,1)$. Let $g_{\min}=\min_{i \in [l]} \frac{\sigma_i(\Ab)-\sigma_{i+1}(\Ab)}{\sigma_i(\Ab)}$ and $\kappa=\frac{\|\Ab \|_2}{2\alpha}$. Let $\alpha=\max\left( \sigma_{l+1}(\Ab),\frac{\|\Ab \|_2 g^{c/4}_{\min}}{n^{c/4}} \right)$ for some constant $c>0$. Algorithm~\ref{alg:slq} (SLQ) run for $m = O(l\log \frac{1}{g_{\min}}+\frac{1}{\epsilon}\log \frac{n \cdot \kappa}{\delta})$ iterations performs $m$ matrix vector products with $\bv A$ and outputs a probability density function $f$ such that, with probability at least $1-\delta$, for a fixed constant $C$,
$$W_1(s_{\Ab},f) \leq \epsilon \cdot \sigma_{l+1}(\bv A) + \frac{C\log (n/\epsilon)\log (1/\epsilon) }{\sqrt{n}} \cdot \sigma_{l+1}(\Ab) + \frac{Cl\log (1/\epsilon)\sqrt{\log (l/\delta)}}{n}\|\Ab \|_2.$$
\end{reptheorem}
\begin{proof}

We will prove the theorem for two complementary cases and analyze them separately below:

\medskip

\noindent\textbf{Case 1:} Let there be some $k \in [l]$ such that $\sigma_k(\Ab) \geq 2\alpha$ and $\sigma_{k+1}(\Ab) < 2\alpha$. Assume that we run SLQ (Algorithm~\ref{alg:slq}) with input $\bv{B}=\frac{1}{2\alpha}\Ab$ instead of $\Ab$ for $m=O(l\log \frac{1}{g_{\min}}+\frac{1}{\epsilon}\log \frac{(n \cdot \kappa)}{\delta})$ iterations. The output of Lanczos (Algorithm~\ref{alg:lanczos-slq}) will be the scaled tridiagonal matrix $\Tb_1=\frac{1}{2\alpha}\Tb $ after $m$ iterations. Observe that $\Bb$ satisfies all the conditions of  Lemma~\ref{Lem:slq_mom} since $\sigma_{k}(\Bb)=\frac{\sigma_k(\Ab)}{2\alpha}\geq 1$, $\sigma_{k+1}(\Bb)\leq 1$, $\|\Bb \|_2=\frac{\|\Ab \|_2}{2\alpha} \geq 1$ and $\kappa_{\Bb}= \| \Bb\|_2$.

Let the output of Algorithm~\ref{alg:slq} with $\Bb$ as input be $f_{\Bb}(x)=\sum_{j=1}^m w_j^2 \delta(x-\lambda_{j}(\Tb_1))=\sum_{j=1}^m w_j^2 \delta \left(x-\frac{\lambda_j(\Tb)}{\sigma_{l+1}(\Ab)} \right)=f\left(\frac{x}{\sigma_{l+1}(\Ab)} \right)$ where $w_j =\vb_j^T\bv{e}_1$ (recall that $\vb_j$ is the eigenvector of $\Tb$ corresponding to $\lambda_j(\Tb)$). Also, the spectral density of $\Bb$ is given by $s_{\Bb}(x)=\frac{1}{n}\sum_{j=1}^m \delta(x-\lambda_j(\Tb_1))=\frac{1}{m}\sum_{j=1}^m \delta(x-\frac{\lambda_j(\Tb)}{\sigma_{l+1}(\Ab)})=s_{\Ab}\left(\frac{x}{\sigma_{l+1}(\Ab)} \right)$. Thus, we must have:
 \begin{align}\label{Eq:w1_b}
     W_1(s,f) \leq 2\alpha \cdot W_1(s_{\Bb},f_{\Bb}).
 \end{align}
Let $L=\|\Bb \|_2=\frac{1}{2\alpha} \|\Ab \|_2$. Then, the spectrum of both $\Bb$ and $\Tb_1$ are in $[-L,L]$ (as $|\lambda_i(\Tb_1)| \leq |\lambda_i(\Bb)|$ for all $i \in [m]$ by the minimax principle). So, we have:
      \begin{align}\label{Eq:w1_1}
          W_1(s_{\Bb},f_{\Bb}) &= \sup_{h \in \text{1-Lip}}\int_{-L}^{L} h(x) \left(s_{\Bb}(x)-f_{\Bb}(x) \right) dx \nonumber \\
          &=\int_{-L}^{L} h^*(x) \left(\frac{1}{n}\sum_{i=1}^n \delta(x-\lambda_i(\Bb))- \sum_{i=1}^m w_i^2\delta(x-\lambda_i(\Tb_1)) \right) dx \nonumber \\
          &= \underbrace{\int_{-L}^{L} h^*(x) \left(\frac{1}{n}\sum_{i=1}^k \delta(x-\lambda_i(\Bb))- \sum_{i=1}^k w_i^2\delta(x-\lambda_i(\Tb_1)) \right) dx}_\text{$I_1$} \nonumber \\
          &+ \underbrace{\int_{-L}^{L}h^*(x) \left(\frac{1}{n}\sum_{i=k+1}^n \delta(x-\lambda_i(\Bb))- \sum_{i=k+1}^m w_i^2\delta(x-\lambda_i(\Tb_1)) \right) dx}_\text{$I_2$}.
      \end{align}
      Here, $h^*(x)$ is a $1$-Lipschitz function that maximizes the integral for computing the Wasserstein distance. We will bound $I_1$ and $I_2$ separately. 

Note that $\sigma_{i}(\Bb) \leq \sigma_{k+1}(\Bb)\leq 1 $ for $i \in \{k+1, \ldots, n\}$ and  $\sigma_{i}(\Tb_1) \leq \sigma_{i}(\Bb) \leq 1$  for $i \in \{k+1, \ldots, m\}$ by the minimax principle. So, the support of $\frac{1}{n}\sum_{i=k+1}^n \delta(x-\lambda_i(\Bb))$ and $\sum_{i=k+1}^m w_i^2\delta(x-\lambda_i(\Tb_1)$ is in $[-1,1]$ and we can write $I_2$ as follows:
\begin{align*}
   I_2 &= \int_{-1}^{1} h^*(x) \left(\frac{1}{n}\sum_{i=k+1}^n \delta(x-\lambda_i(\Bb))- \sum_{i=k+1}^m w_i^2\delta(x-\lambda_i(\Tb_1)) \right) dx = \int_{-1}^{1} h^*(x) \left( r_1(x) - r_2(x)) \right) dx,
\end{align*}
where $r_1(x)=\sum_{i=k+1}^n \frac{1}{n}\delta(x-\lambda_i(\Bb))$ and $r_2(x)=\sum_{i=k+1}^m  w_i^2\delta(x-\lambda_i(\Tb_1))$.

Let $N=O(1/\epsilon)$. We will now bound this by following the Chebyshev moment matching proof of Lemma 3.1 in~\cite{braverman:2022} where we match $N$ normalized Chebyshev moments of $r_1(x)$ and $r_2(x)$. Let $\bar{h}_N(x)=\sum_{i=0}^N b_i \langle h^*, w.\Bar{T}_i \rangle \Bar{T}_i$ (where $w(x)=\frac{1}{\sqrt{1-x^2}}$) be the function from Fact~\ref{fact1} for constants $0\leq b_N \ldots \leq b_0=1$  such that $\max_{x \in [-1,1]}|h^*(x)-\bar{h}_N(x)| \leq \frac{18}{N}$. Then, using triangle inequality, the integral can be upper bounded:
\begin{align}\label{Eq:I2}
I_2 &\leq \underbrace{\int_{-1}^{1} |h^*(x)-\bar{h}_N(x)|(r_1(x)-r_2(x)) dx}_{t_1} + \underbrace{\int_{-1}^{1} \bar{h}_N(x) (r_1(x)-r_2(x)) dx}_{t_2}.
\end{align}
Since $\max_{x \in [-1,1]}|h^*(x)-\bar{h}_N(x)| \leq \frac{18}{N}$, $\int_{-1}^{1}r_1(x)=\frac{n-k}{n}$ and $\int_{-1}^{1} r_2(x) = \sum_{i=k+1}^m w_i^2$, we have $$t_1 \leq \frac{18}{N}\left( \frac{n-k}{n}+\sum_{i=k+1}^m w_i^2 \right) \leq \frac{36}{N},$$ where we used the fact that $\sum_{i=k+1}^m w_i^2 \leq 1$. Next, we bound $t_2$ using the Chebyshev series expansion of $\bar{h}_N(x)$. First note that $r_1(x)-r_2(x) \in [-1,1]$ and so its Chebsyshev series expansion is $\sum_{i=0}^{\infty} \langle r_1-r_2, \Bar{T}_i \rangle \Bar{T}_i$. This gives us:
\begin{align*}
    t_2 & = \int_{-1}^{1} \bar{h}_N(x) w(x) \frac{r_1(x)-r_2(x)}{w(x)} dx =\int_{-1}^{1} \bar{h}_N(x) w(x) \cdot \sum_{i=0}^{\infty} \langle r_1-r_2, \Bar{T}_i \rangle \Bar{T}_i dx  \\
    &=  \int_{-1}^{1}  w(x) \left(\sum_{i=0}^N b_i \langle h^*, w.\Bar{T}_i \rangle \Bar{T}_i \right) \left(\sum_{i=0}^{\infty} \langle r_1-r_2, \Bar{T}_i \rangle \Bar{T}_i \right) dx.
\end{align*}
By the orthogonality of Chebyshev polynomials under weight $w(x)$ and since $\langle \Bar{T}_k,w\Bar{T}_k \rangle =1$ for $k \in [N]$, we can bound $t_2$ as:
\begin{align}\label{Eq:t2}
    t_2 &\leq \sum_{i=0}^N \langle h^*, w.\Bar{T}_i \rangle \cdot (\langle r_1, \Bar{T}_i \rangle -\langle r_2, \Bar{T}_i \rangle) \nonumber \\
    &\leq \langle h^*, w.\Bar{T}_0 \rangle \cdot (\langle r_1, \Bar{T}_0 \rangle -\langle r_2, \Bar{T}_0 \rangle) +\sum_{i=1}^N |\langle h^*, w.\Bar{T}_i \rangle| \cdot |\langle r_1, \Bar{T}_i \rangle -\langle r_2, \Bar{T}_i \rangle| \nonumber \\
    &\leq \frac{\langle h^*, w.\Bar{T}_0 \rangle}{\sqrt{\pi}} \cdot \left( \frac{n-k}{n}-\sum_{j=k+1}^m w_j^2 \right) +\sum_{i=1}^N \frac{2}{i} \cdot |\langle r_1, \Bar{T}_i \rangle -\langle r_2, \Bar{T}_i \rangle| \nonumber \\
    &\leq  \frac{\langle h^*, w.\Bar{T}_0 \rangle}{\sqrt{\pi}} \cdot \left( \frac{n-k}{n}-\sum_{j=k+1}^m w_j^2 \right) +\sum_{i=1}^N \frac{2}{i} \cdot \bigg| \frac{1}{n}\sum_{j=k+1}^n \Bar{T}_i(\lambda_j(\Bb)) -\sum_{j=k+1}^m w_j^2 \Bar{T}_i(\lambda_j(\Tb_1)) \bigg|,
\end{align}
where for the first step, we used the fact that $\frac{\hat{b}_{N}(i)}{\hat{b}_{N}(0)} \leq 1$, and $ \int_{-1}^{1}\langle r_1-r_2, \Bar{T}_i \rangle = \langle r_1, \Bar{T}_i \rangle -\langle r_2, \Bar{T}_i \rangle$ for $i \in [N]$. For the second step, we bound the sum from $1$ to $N$ by its absolute value. For the third inequality, we used the fact that $\Bar{T}_o(x)=\frac{1}{\sqrt{\pi}}$ and so $\langle r_1, \Bar{T}_0 \rangle=\frac{n-k}{n\sqrt{\pi}}$ and $\langle r_2, \Bar{T}_0 \rangle=\frac{\sum_{j=k+1}^m w_j^2}{\sqrt{\pi}}$ for the first term and use Fact~\ref{fact2} which gives us $|\langle h^*, w.\Bar{T}_i \rangle| \leq \frac{2}{i}$ for $ i\in [N]$ for the second term. For the final step, we replace $\langle r_1, \Bar{T}_i \rangle$ and $\langle r_2, \Bar{T}_i \rangle$ by evaluating the integrals. Let $t_i(x)$ be the polynomial defined in Lemma~\ref{Lem:slq_mom}. Then, using triangle inequality, for constants $c_9,c_{10}>0$ and $C>0$, we get for any $i \in [N]$:
\begin{align}\label{Eq:t22}
 \bigg| \frac{1}{n}\sum_{j=k+1}^n \Bar{T}_i(\lambda_j(\Bb)) -\sum_{j=k+1}^m w_j^2 \Bar{T}_i(\lambda_j(\Tb_1)) \bigg| &\leq  \bigg| \frac{1}{n}\sum_{j=k+1}^n \Bar{T}_i(\lambda_j(\Bb))-  \langle t_i,s_{\Bb} \rangle \bigg| \nonumber \\
 &+  \bigg| \sum_{j=k+1}^m w_j^2 \Bar{T}_i(\lambda_j(\Tb_1))-  \langle t_i,f_{\Bb} \rangle \bigg| + \bigg| \langle t_i,f_{\Bb} \rangle -\langle t_i,s_{\Bb} \rangle  \bigg| \nonumber \\
 &\leq \frac{g^{c_9 l}_{\min}}{(n \cdot \kappa)^{c_9/\epsilon}}+\frac{g^{c_{10} l}_{\min}}{(n \cdot \kappa)^{c_{10}/\epsilon}} \nonumber \\
 &+\frac{C \log (N/\delta)}{\sqrt{n}}.
\end{align}
To bound $\bigg| \langle t_i,f_{\Bb} \rangle -\langle t_i,s_{\Bb} \rangle  \bigg|$, the final step uses Lemma~\ref{Lem:slq_hutch} and the fact that $\kappa_{\Bb}=\|\Bb \|_2=\frac{\|\Ab \|_2}{2\alpha}=\kappa$. Putting together the bounds on $t_1$ and $t_2$ from~\eqref{Eq:t2} and~\eqref{Eq:t22}, and bounding $\sum_{i=1}^N \frac{2}{i} \leq \log N$, we get from~\eqref{Eq:I2} (for some constant $c_{11}$):
\begin{align}\label{Eq:I2_final}
    I_2 \leq  \frac{36}{N}+\frac{\langle h^*, w.\Bar{T}_0 \rangle}{\sqrt{\pi}} \cdot \left( \frac{n-k}{n}-\sum_{j=k+1}^m w_j^2 \right)+\frac{2g^{c_{11}l}_{\min} \log N}{(n \cdot \kappa)^{c_{11}/\epsilon}}+\frac{C \log (N/\delta)\log N}{\sqrt{n}}.
\end{align}
We now bound $I_1$. We have the following:
\begin{align}\label{Eq:I1}
    I_1
    &= \int_{-L}^{L} h^*(x) \left(\frac{1}{n}\sum_{i=1}^k \delta(x-\lambda_i(\Bb))- \sum_{i=1}^k w_i^2\delta(x-\lambda_i(\Tb_1)) \right) dx \nonumber \\
    &= \frac{k}{n}\int_{-L}^{L} h^*(x) \left(\frac{1}{k}\sum_{i=1}^k \delta(x-\lambda_i(\Bb))- \frac{1}{k}\sum_{i=1}^k \delta(x) \right) dx \nonumber\\
    &+ \sum_{j=1}^k w_j^2 \int_{-L}^{L} h^*(x) \left( \frac{\sum_{i=1}^k w_i^2 \delta(x) }{\sum_{j=1}^k w_j^2} -\frac{\sum_{i=1}^k w_i^2 \delta(x-\lambda_i(\Tb_1))}{\sum_{j=1}^k w_j^2} \right) dx \nonumber \\
    &+ \int_{-L}^{L} h^*(x) \left( \frac{1}{n}\sum_{i=1}^k\delta(x) -\sum_{i=1}^k w_i^2 \delta(x) \right) dx.
\end{align}
We now bound the three terms above. First observe that $\frac{1}{k}\sum_{i=1}^k \delta(x-\lambda_i(\Bb))$ and $\frac{1}{k}\sum_{i=1}^k \delta(x)$ are density functions of distributions (defined by dirac deltas at $\lambda_1(\Bb),\ldots, \lambda_k(\Bb)$ with weights $\frac{1}{n}$ and at $0$ with weight $1$ respectively). So, $\sup_{h \in 1-\text{Lip}}\int_{-L}^{L} h^*(x) \left(\frac{1}{k}\sum_{i=1}^k \delta(x-\lambda_i(\Bb))- \frac{1}{k}\sum_{i=1}^k \delta(x) \right)$ is the Wasserstein-$1$ distance between the distributions. Using the Earth mover's distance interpretation of Wasserstein-$1$ distance, we have:
\begin{align*}
    \int_{-L}^{L} h^*(x) \left(\frac{1}{k}\sum_{i=1}^k \delta(x-\lambda_i(\Bb))- \frac{1}{k}\sum_{i=1}^k \delta(x) \right) &\leq \sup_{h \in 1-\text{Lip}}\int_{-L}^{L} h(x) \left(\frac{1}{k}\sum_{i=1}^k \delta(x-\lambda_i(\Bb))- \frac{1}{k}\sum_{i=1}^k \delta(x) \right) \\
    &\leq \frac{\sum_{i=1}^k|\lambda_i(\Bb)|}{k} \leq \|\Bb\|_2.
\end{align*}
Similarly, by interpreting $\frac{\sum_{i=1}^k w_i^2 \delta(x) }{\sum_{j=1}^k w_j^2}$ and $\frac{\sum_{i=1}^k w_i^2 \delta(x-\lambda_i(\Tb_1))}{\sum_{j=1}^k w_j^2}$ as probability densities and using the earth mover's distance interpretation of Wasserstein-$1$ distance we have:
\begin{align*}
  \int_{-L}^{L} h^*(x) \left( \frac{\sum_{i=1}^k w_i^2 \delta(x) }{\sum_{j=1}^k w_j^2} -\frac{\sum_{i=1}^k w_i^2 \delta(x-\lambda_i(\Tb_1))}{\sum_{j=1}^k w_j^2} \right) dx \leq \sum_{i=1}^k \frac{w_i^2 |\lambda_i(\Tb_1)|}{\sum_{j=1}^k w_j^2} \leq  \sum_{i=1}^k \frac{w_i^2 \|\Tb_1 \|_2}{\sum_{j=1}^k w_j^2}.
\end{align*}
Finally, to bound $\int_{-L}^{L} h^*(x) \left( \frac{1}{n}\sum_{i=1}^k\delta(x) -\sum_{i=1}^k w_i^2 \delta(x) \right) dx$, we again use the proof technique outlined in Lemma 3.1 of~\cite{braverman:2022} and for bounding $I_2$ here. Observe that similar to~\eqref{Eq:I2} we have:
\begin{align*}
    &\int_{-L}^{L} h^*(x) \left( \frac{1}{n}\sum_{i=1}^k\delta(x) -\sum_{i=1}^k w_i^2 \delta(x) \right) dx \\
    &\leq \underbrace{\int_{-1}^{1} (h^*(x)-\Bar{h}_N(x))\left( \frac{1}{n}\sum_{i=1}^k\delta(x) -\sum_{i=1}^k w_i^2 \delta(x) \right) dx}_{\text{$t_1$}} + \underbrace{\int_{-1}^{1} \bar{h}_N(x) \left( \frac{1}{n}\sum_{i=1}^k\delta(x) -\sum_{i=1}^k w_i^2 \delta(x) \right) dx}_{\text{$t_2$}}.
\end{align*}
We can bound $t_1$ and $t_2$ the same way as in the case of $I_2$  to get $t_1 \leq \frac{36}{N}$ and (similar to~\eqref{Eq:t2})
\begin{align*}
    t_2 &\leq \frac{\langle h^*,w\cdot \Bar{T}_o \rangle}{\sqrt{\pi}}\left( \frac{k}{n}-\sum_{j=1}^k w_j^2\right)+\sum_{i=1}^N \frac{2}{i}\cdot \bigg| \frac{1}{n}\sum_{j=1}^k\Bar{T}_i(0) -\sum_{j=1}^k w_j^2 \Bar{T}_i(0) \bigg| \\
    &\leq \frac{\langle h^*,w\cdot \Bar{T}_o \rangle}{\sqrt{\pi}}\left( \frac{k}{n}-\sum_{j=1}^k w_j^2\right)+ \sum_{i=1}^N \frac{2}{i} \bigg| \frac{k}{n} -\sum_{j=1}^k w_j^2  \bigg|.
\end{align*}
From Lemma~\ref{Lem:wts}, we have $\sum_{j=1}^k w_j^2 \leq \frac{Ck \sqrt{\log 2/\delta}}{n}$ for some large $C>0$. So, using the upper bounds on all the terms on the right hand side of~\eqref{Eq:I1} (and using the fact that $\| \bv{T}_1\|_2 \leq \|\Bb \|_2$), we get that:
\begin{align}
    I_1 &\leq \frac{k}{n}\|\Bb \|_2 + (\sum_{i=1}^k w_i^2) \|\Tb_1 \|_2 + \frac{\langle h^*,w\cdot \Bar{T}_o \rangle}{\sqrt{\pi}}\left( \frac{k}{n}-\sum_{j=1}^k w_j^2\right)+ \sum_{i=1}^N \frac{2}{i} \bigg| \frac{k}{n} -\sum_{j=1}^k w_j^2  \bigg| \notag\\
    &\leq \frac{k(1+C\sqrt{\log k/\delta})}{n}\|\Bb \|_2 +  \frac{\langle h^*,w\cdot \Bar{T}_o \rangle}{\sqrt{\pi}}\left( \frac{k}{n}-\sum_{j=1}^k w_j^2\right)+ \frac{3Ck\log N\sqrt{\log k/\delta}}{n}\label{Eq:I1_final},
\end{align}
where in the last step we used triangle inequality to bound the final term. Finally, using the bounds on $I_1$ and $I_2$ from~\eqref{Eq:I1_final} and~\eqref{Eq:I2_final} respectively in~\eqref{Eq:w1_1} (and using the fact that $\sum_{j=1}^m w_j^2=1$), we get that:
\begin{align*}
    W_1(s_{\Bb},f_{\Bb}) &\leq \frac{36}{N}+\frac{k(1+4C\log N\sqrt{\log k/\delta})}{n}\|\Bb \|_2  +\frac{2g^{c_{11}l}_{\min} \log N}{(n \cdot \kappa)^{c_{11}/\epsilon}}+\frac{C \log (N/\delta)\log N}{\sqrt{n}} \\
    &\leq \frac{36}{N}+\frac{5Ck\log N\sqrt{\log k/\delta}}{n}\|\Bb \|_2 +\frac{2C \log (N/\delta)\log N}{\sqrt{n}},
\end{align*}
where we also use the fact that the constant $c_{11}$ can be chosen to be large enough so that the third term in the first inequality can be made small enough. Recall that we  set $N=O(\frac{1}{\epsilon})$. From~\eqref{Eq:w1_b}, we also have $W_1(s,f) \leq 2\alpha W_1(s_{\Bb},f_{\Bb})$. Thus, we finally get (using the fact $\Bb=\frac{\Ab}{2\alpha}$):
\begin{align*}
    W_1(s,f) &\leq 72\epsilon \alpha+\frac{5Ck\log (1/\epsilon)\sqrt{\log k/\delta}}{n}\|\Ab \|_2 + \frac{4C \log (1/\epsilon\delta)\log (1/\epsilon) \alpha}{\sqrt{n}} \\
    &\leq \frac{5Cl\log (1/\epsilon)\sqrt{\log (k/\delta)}}{n}\|\Ab \|_2 + \left(72\epsilon+\frac{4C \log (1/\epsilon\delta )\log (1/\epsilon) }{\sqrt{n}} \right)\sigma_{l+1}(\Ab) \\
    &\leq \frac{5Cl\log (1/\epsilon)\sqrt{\log (l/\delta)}}{n}\|\Ab \|_2 +\left(72\epsilon+\frac{4C \log (1/\epsilon\delta )\log (1/\epsilon) }{\sqrt{n}} \right)\sigma_{l+1}(\Ab).
\end{align*}
where we also used the fact that $ 2 \alpha < 2\sigma_{l+1}(\Ab) +\frac{2\|\Ab \|_2 g^{c/4}_{\min}}{n^{c/4}}$ and $k \leq l$. Also note that the term $\left(72\epsilon+\frac{4C \log (1/\epsilon\delta )\log (1/\epsilon) }{\sqrt{n}} \right)\frac{2\|\Ab \|_2 g^{c/4}_{\min}}{n^{c/4}}< \frac{Cl\log (1/\epsilon)\sqrt{\log (l/\delta)}}{n}\|\Ab \|_2$ for a large $c$ and hence is absorbed into that term by making the constant $C$ larger. We get the final bound by adjusting $\epsilon$ by constant factors. This concludes the case when such a $k$ exists.

\medskip

\noindent\textbf{Case 2:} Now, when such a $k$ doesn't exist, i.e. we have $\|\Ab \|_2 \leq  2 \alpha < 2\sigma_{l+1}(\Ab) +\frac{2\|\Ab \|_2 g^{c/4}_{\min}}{n^{c/4}}$, from Theorem~\ref{thm:slq_main} we directly get error $\W_1(s_{\Ab},f) \leq 2\epsilon \alpha +\frac{C \log (1/\epsilon \delta) \log (1/\epsilon)}{\sqrt{n}}2\alpha \leq \epsilon\sigma_{l+1}(\Ab)+ \frac{2C \log (1/\epsilon \delta) \log (1/\epsilon)}{\sqrt{n}}\sigma_{l+1}(\Ab)+\frac{C\|\Ab \|_2}{n} $ after adjusting $\epsilon$ by constants. 

Finally, observe that the condition $n \geq \Omega(\log (1/\delta))$ in Lemmas~\ref{Lem:slq_mom} and~\ref{lem:abs-slq-true-A} must always be satisfied if we want a non-trivial ($\leq n$) number of matrix vector products with $\Ab$. Adjusting $\delta$ by some constant factors gives us the final bound.
\end{proof}

\subsection{Variance reduced SLQ}\label{sec:slq4}

We introduce our variant of SLQ with better error guarantees here. The algorithm, which we call the Variance reduced SLQ is described in Algorithm~\ref{alg:slq+}. The main difference between this algorithm and Algorithm~\ref{alg:slq} is that here, we set the weights $w_i^2$ corresponding to the \textit{converged} large magnitude eigenvalues of $\Tb$ (the output of Lanczos) to $\frac{1}{n}$ (and adjust the remaining weights so that the square of the weights sum to one). This is described in lines 4-9 of Algorithm~\ref{alg:lanczos-slq}. From Lemma~\ref{lem:eig_error_slq}, we know that in the presence of a constant multiplicative gap between $\sigma_k(\Ab)$ and $\sigma_{l+1}(\Ab)$, the top $k$ eigenvalues of $\Tb$ will approximately be equal to those of $\Ab$. Hence the \textit{correct} weights corresponding to these eigenvalues must be $\frac{1}{n}$. This lets us avoid the $\Tilde{O}(\frac{l}{n}\|\Ab \|_2)$ error SLQ was incurring on the large magnitude eigenvalues. However, we do not actually know the value of $k$ or even if there exists such a $k \in [l]$ with a constant multiplicative gap. Hence, we check two convergence conditions for each of the top $l$ indices which will always be true for the top $k \leq l$ eigenvectors of $\Tb$ (the output of Lanczos) and corresponding weights $(\bv{v}_j^T\bv{e}_1)^2$ if such a $k \in [l]$ exists. 

The two convergence conditions to find the indices for which we set the weights to $\frac{1}{n}$ are given in line 5 of Algorithm~\ref{alg:slq+}. The first condition ($\|\Ab \Qb \vb_j-  \lambda_j(\Tb) \Qb\vb_j\|_2 \leq \frac{\|\Ab \|_2}{n^{\beta/\epsilon}}$) checks if an eigenvector  $\vb_j$ of  $\Tb$ (and its corresponding eigenvalue) is also approximately an eigenvector and eigenvalue of $\Ab$. Note that from Lemma~\ref{lemma:converge_slq} this condition will always be true for the top $k$ eigenvalues and eigenvectors of $\Tb$ as long as there is at least a constant multiplicative factor gap between $\sigma_k(\Ab)$ and $\sigma_{l+1}(\Ab)$. This helps us set the correct weights of $\frac{1}{n}$ for the top $k$ eigenvalues. However, there might be some eigenvectors of $\Tb$ outside of the top $k$ indices for which this condition is also true. Unfortunately, we can't guarantee that the corresponding eigenvalues of $\Tb$ outside of the top $k$ have converged to an eigenvalue of $\Ab$. The second condition ($(\bv{v}_j^2\bv{e}_1)^2 \leq \frac{\sqrt{\log (l/\delta)}}{n}$) ensures that in case this happens, we don't incur too much error by correcting the weights to $\frac{1}{n}$. Note that from Lemma~\ref{Lem:wts}, this condition will also be true for the top $k$ weights of the SLQ distribution as long as there is at least a constant multiplicative factor gap between $\sigma_k(\Ab)$ and $\sigma_{l+1}(\Ab)$. But in case there is spurious convergence of an eigenvector of $\Tb$ without the corresponding eigenvalue of $\Tb$ converging to an eigenvalue of $\Ab$, this condition essentially bound the error we incur by correcting the weights by $\frac{\sqrt{\log(l/\delta)}}{n}$. Also, since the corresponding eigenvalue of $\Ab$ (and $\Tb$) is at most $\sigma_{k+1}(\Ab)$, the total error we incur by incorrectly setting weights for these spurious eigenvectors is at most $\Tilde{O}(\frac{l\sigma_{k+1}(\Ab)}{n}) \leq \Tilde{O}(\frac{l\sigma_{l+1}(\Ab)}{n})$ as opposed to $\Tilde{O}(\frac{l\|\Ab \|_2}{n})$ as the magnitude of these eigenvalues will always be less than $\sigma_{k+1}(\Ab) \leq O(\sigma_{l+1}(\Ab))$. 

We now analyze the algorithm below. The proof is similar to that of Theorem~\ref{thm:slq} except now, we use the Backward Perturbation Bound (Lemma~\ref{Lem:bkwd_err}) to first show that there exists some matrix with the same converged eigenvalues and eigenvectors of $\Ab$ and which is spectrally close to $\Ab$ (as we had done in the proof of Theorem~\ref{thm:sde1}). This helps us bound the error incurred on the converged eigenvectors. For bounding the Wasserstein error of the spectral density corresponding to the eigenvalues with non-converged eigenvectors (which will also have small magnitude), we use the moment matching analysis of Theorem~\ref{thm:slq} again.

\begin{savenotes}
\begin{algorithm}[H] 
\caption{Variance reduced Stochastic Lanczos Quadrature }
\label{alg:slq+}
\begin{algorithmic}[1]
\Require{Symmetric $\bv A \in \mathbb{R}^{n\times n}$, number of iterations $m (\leq n)$, convergence parameters $l \in [n]$, $\beta,C>0$.}\footnote{refer to Theorem~\ref{thm:varslq} for instructions on setting these parameters.}
\State Sample $\bv g \sim \mathcal{U}(\mathcal{S}^{n-1})$. 
\State Run Lanczos (Algorithm~\ref{alg:lanczos-slq}) on $\Ab, \bv g$ for $m$ steps to compute symmetric tridiagonal matrix $\Tb \in \R^{m\times m}$ and orthonormal basis $\Qb \in \R^{n \times m}$ such that $\Tb=\Qb^T\Ab \Qb$~\cite{golub2009matrices}. Let the eigenvectors of $\Tb$ be $\vb_1, \ldots, \vb_m$.
\State Set $S=\{\}$
\For{$j=1, \ldots, l$} 
\If{$\|\Ab \Qb \vb_j-  \lambda_j(\Tb) \Qb\vb_j\|_2 \leq \frac{\|\Ab \|_2}{n^{\beta}}$ and $(\bv{v}_j^T \bv{e}_1)^2 \leq \frac{C\sqrt{\log (l/\delta)}}{n}$\footnote{$\delta \in (0,1)$ is the failure probability.}} \label{varslq:line1}
\State $S=S \cup \{j \}$
\EndIf
\EndFor
\State Set $f (x) =  \sum_{j \in S} \frac{1}{n}\delta(x-\lambda_j(\bv{T}))+ \bigg(\frac{1-\frac{|S|}{n}}{\sum_{i \in [m]\setminus S} w_i^2}\bigg)\sum_{j \in [m]\setminus S} w_j^2\delta(x-\lambda_j(\bv{T}))$, where $w_j = \bv{v}_j^T \bv{e}_1$.
\State \Return $f (x)$.
\end{algorithmic}
\end{algorithm}
\end{savenotes}

\begin{reptheorem}{thm:varslq}
Let $\Ab\in\R^\n$ be symmetric and consider any $l \in [n]$ and $\epsilon, \delta \in (0,1)$. Let $g_{\min}=\min_{i \in [l]} \frac{\sigma_i(\Ab)-\sigma_{i+1}(\Ab)}{\sigma_i(\Ab)}$ and $\kappa=\frac{\|\Ab \|_2}{2\alpha}$. Let $\alpha=\max\left( \sigma_{l+1}(\Ab),\frac{\|\Ab \|_2 g^{c/4}_{\min}}{n^{c/4}} \right)$ for some constant $c>0$. Algorithm~\ref{alg:slq+} run for $m = O(l\log \frac{1}{g_{\min}}+\frac{\log (n \cdot \kappa)}{\epsilon})$ iterations, performs $m$ matrix vector products with $\bv A$ and outputs a distribution $f$ such that, with probability at least $1-\delta$, for large fixed constants $C>0$ and $c_2>0$,
$$W_1(s_{\Ab},f) \leq \epsilon \cdot \sigma_{l+1}(\bv A) + \left (\frac{C \log(n/\epsilon) \log(1/\epsilon)}{\sqrt{n}} + \frac{C l \log(1/\epsilon) \sqrt{\log (l/\delta)}}{n} \right) \cdot \sigma_{l+1}(\bv A) + \frac{\| \Ab\|_2}{n^{c_2}}.$$
\end{reptheorem}
\begin{proof}
Throughout the proof, for ease of writing, we will abuse notation slightly and assume that the first convergence condition in line~\ref{varslq:line1} of Algorithm~\ref{alg:slq+} is given by $\|\Ab \Qb \vb_j-  \lambda_j(\Tb) \Qb\vb_j\|_2 \leq \frac{\|\Ab \|_2}{n^{\beta/\epsilon}}$ i.e. the first parameter is $\beta/\epsilon$ instead of just $\beta$.
Let $k \in [l]$ be such that $\sigma_k(\Ab) \geq 2\alpha$ and $\sigma_{k+1}(\Ab) < 2\alpha$. We will again consider two cases as in Theorem~\ref{thm:slq}, i.e. when such a $k$ exists and when it doesn't. 
\medskip

\noindent\textbf{Case 1:} We first consider the case when such a $k$ exists. In this case, similar to proof of case 1 of Theorem~\ref{thm:slq_main}, let $\Bb=\frac{1}{2\alpha}\Ab$ and let $\Tb_1=\frac{1}{2\alpha}\Tb$ where $\bv{T}_1$ is the output of running Lanczos with $\Bb$ as input. Then, as in~\eqref{Eq:w1_b}, we have: 
 \begin{align}\label{Eq:sf}
     W_1(s_{\Ab},f) \leq 2\alpha W_1(s_{\Bb},f_{\Bb}),
\end{align}
 where $s_{\Bb}(x)=\sum_{j=1}^{m} \frac{1}{m}\delta(x-\lambda_i(\Bb))$ and $f_{\Bb}(x)=\sum_{j \in S} \frac{1}{n}\delta(x-\lambda_j(\bv{T}_1)+ \bigg(\frac{1-\frac{|S|}{n}}{\sum_{i \in [m]\setminus S} w_i^2}\bigg)\sum_{j \in [m]\setminus S} w_j^2\delta(x-\lambda_j(\bv{T}_1))$, where $w_j=\vb_j^T\bv{e}_1$ for each $j \in [m]$ and $S$ is the set of indices containing the \textit{converged} eigenvectors as defined in Algorithm~\ref{alg:slq+}. Let $\Vb_S \in \R^{m \times |S|}$ contain the set of all eigenvectors $\vb_i$ of $\Tb_1$ such that $i \in S$. Let $\Zb_S=\Qb\Vb_S$ where $\Qb$ is the orthonormal basis of the Krylov subspace generated by the Lanczos algorithm (Algorithm~\ref{alg:lanczos-slq}). Then, $\|\Bb\Zb_S-\Zb_S\Tilde{\bv{\Lambda}} \|_F=\sqrt{\sum_{i\in S} \|\Bb\Qb\vb_i-\lambda_i(\Ab)\Qb\vb_i\|^2_2} \leq \frac{\|\Bb \|_2}{n^{(\beta-1)/\epsilon}}$ where $\Tilde{\bv{\Lambda}}$ is a matrix with the eigenvalues of $\Tb_1$ corresponding to the eigenvectors of $\Vb_S$ on its diagonal and zeros everywhere else. Then, using the backwards error bound (Lemma~\ref{Lem:bkwd_err}) we get that there exists a matrix $\Cb$ such that $\Cb\Zb_S=\Zb_S\Tilde{\bv{\Lambda}}$ and $\|\Bb-\Cb \|_2 \leq \frac{2\|\Bb \|_2}{n^{(\beta-1)/\epsilon}}$. Using Weyls' inequality (Fact~\ref{fact:weyl}), we get that for all $i \in [n]$
 \begin{align}\label{eq:weyl_slq}
     |\lambda_i(\Bb)-\lambda_i(\Cb)| \leq \frac{2\|\Bb \|_2}{n^{(\beta-1)/\epsilon}}.
 \end{align}
 Then, $\frac{1}{n}\sum_{i=1}^n |\lambda_i(\Bb)-\lambda_i(\Cb)| \leq \frac{2\|\Bb \|_2}{n^{(\beta-1)/\epsilon}}$ which implies that 
 \begin{align}\label{Eq:sb}
     \W_1(s_{\Bb},s_{\Cb}) \leq \frac{2\|\Bb \|_2}{n^{(\beta-1)/\epsilon}}.
 \end{align}
 where $s_{\Cb}(x)$ is the spectral density function of $\Cb$. Let $S_1 \subseteq [n]$ such $|S_1|=|S|$ and for every $j \in S_1$, there exists an $i \in S$ such that $\lambda_j(\Cb)=\lambda_i(\Tb_1)$. We know such a set $S_1$ must exist as $\Tilde{\bv{\Lambda}}$ are eigenvalues of $\Cb$. Let $L=\max(\|\Cb \|_2,\|\Bb \|_2)$. Then we have:
 \begin{align}\label{Eq:sc}
     \W_1(s_{\Cb},f_{\Bb}) &=\sup_{h \in 1-\text{Lip}}\int_{-L}^{L} h(x)(s_{\Cb}(x)-f_{\Bb}(x)) dx \nonumber\\
     &= \int_{-L}^{L} h^*(x) \bigg( \frac{1}{n}\sum_{i=1}^n \delta(x-\lambda_i(\Cb))- \sum_{j \in S} \frac{1}{n}\delta(x-\lambda_j(\bv{T}_1) \nonumber \\
     &- \bigg(\frac{1-\frac{|S|}{n}}{\sum_{i \in [m]\setminus S} w_i^2}\bigg)\sum_{j \in [m]\setminus S} w_j^2\delta(x-\lambda_j(\bv{T}_1))\bigg) dx \nonumber \\
     &= \underbrace{\frac{|S|}{n}\int_{-L}^{L} h^*(x) \bigg(\frac{1}{|S|}\sum_{i \in S_1}\delta(x-\lambda_i(\Cb))-\frac{1}{|S|}\sum_{i \in S}\delta(x-\lambda_i(\Tb_1))  \bigg) dx}_{I_1} \nonumber \\
     &+ \underbrace{\frac{n-|S|}{n}\int_{-L}^{L} h^*(x) \bigg(\frac{1}{n-|S|}\sum_{i \in [n]\setminus S_1}\delta(x-\lambda_i(\Cb))-\frac{1}{\sum_{j \in [m]\setminus S}w^2_j}\sum_{i \in [m]\setminus S}w^2_i\delta(x-\lambda_i(\Tb_1))  \bigg) dx}_{I_2}.
 \end{align}
 Here, in the second step, $h^*(x)$ is the function that maximizes the integral in the first step. By definition of the set $S_1$ we have $I_1=0$. Now we bound $I_2$. Let $s_1(x)=\frac{1}{n-|S|}\sum_{i \in [n]\setminus S_1}\delta(x-\lambda_i(\Cb))$, $s_2(x)=\frac{1}{n-|S|}\sum_{i \in [n]\setminus S_1}\delta(x-\lambda_i(\Bb))$ and $f_1(x)=\frac{1}{\sum_{j \in [m]\setminus S}w^2_j}\sum_{i \in [m]\setminus S}w^2_i\delta(x-\lambda_i(\Tb_1)$ be three density functions. Using triangle inequality, and the fact that $h^*(x)$ is a 1-Lipschitz function, we have:
  \begin{align}\label{Eq:I2_sql+}
      I_2 &= \frac{n-|S|}{n}\int_{-L}^{L} h^*(x)(s_1(x)-f_1(x)) dx  \nonumber\\
      &\leq \frac{n-|S|}{n} \W_1(s_1,f_1) \leq \frac{n-|S|}{n} \big( \W_1(s_1,s_2) +\W_1(s_2,f_1) \big).
  \end{align}
  Using~\eqref{eq:weyl_slq} and the earth movers' interpretation of the 1-Wasserstein distance, we can bound $W_1(s_1,s_2)$ as 
  \begin{align}\label{Eq:s12}
      \W_1(s_1,s_2) \leq \frac{1}{n-|S|}\sum_{i \in [n]\setminus S_1}|\lambda_i(\Cb)-\lambda_i(\Bb)| \leq \frac{2\|\Bb \|_2}{n^{(\beta-1)/\epsilon}}.
  \end{align}
  We now bound $W_1(s_2,f_1)$. Let $\zb_i=\Qb \vb_i$. Recall that $\vb_i$ are eigenvectors of $\Qb^T\Bb\Qb$ for eigenvalues $\lambda_i(\Qb^T\Bb\Qb)$, $\zb_i$ are the eigenvectors of $\Qb\Qb^T\Bb \Qb\Qb^T$ corresponding to eigenvalues $\lambda_i(\Qb^T\Bb \Qb)=\lambda_i(\Qb\Qb^T\Bb \Qb\Qb^T)$. From Lemma~\ref{lemma:converge_slq}, we have that for every $i \in [k]$, either $\|\bv{z}_i-\bv{u}_i \|_2 \leq \frac{g^{cl/4}_{\min}}{(n\cdot \kappa)^{c/2\epsilon-1}}$ or $\|\bv{z}_i+ \bv{u}_i \|_2 \leq \frac{g^{cl/4}_{\min}}{(n\cdot \kappa)^{c/2\epsilon-1}}$. From Lemma~\ref{lem:eig_error_slq}, we have that for $i \in [k]$ , $$|\lambda_i(\Bb)-\lambda_i(\Qb\Qb^T\Bb\Qb\Qb^T)| \leq \frac{g^{cl}_{\min}\|\Bb \|_2}{(n\cdot \kappa)^{c/\epsilon}}.$$  Let us assume we have $\|\bv{z}_i-\bv{u}_i \|_2 \leq \frac{g^{cl/4}_{\min}}{(n\cdot \kappa)^{c/2\epsilon-1}}$ for some $i \in [k]$. Using triangle inequality and spectral submultiplicativity, we get that for $i \in [k]$:
  \begin{align*}
      \|\Bb\bv{z}_i -\lambda_i(\Qb^T\Bb\Qb)\zb_i\| &\leq \|\Bb(\zb_i-\ub_i) \|_2 +\|\Bb\ub_i-\lambda_i(\Qb^T\Bb\Qb)\ub_i \|_2+\|\lambda_i(\Qb^T\Bb\Qb)(\ub_i-\zb_i) \|_2 \\
      &\leq \frac{g^{cl/4}_{\min}\|\Bb \|_2}{(n\cdot \kappa)^{c/2\epsilon-1}}+\|(\lambda_i(\Bb)-\lambda_i(\Qb\Qb^T\Bb\Qb\Qb^T))\ub_i \|_2 +\frac{g^{cl/4}_{\min}\|\Bb \|_2}{(n\cdot \kappa)^{c/2\epsilon-1}} \\
      &\leq \frac{3g^{cl/4}_{\min}\|\Bb \|_2}{(n\cdot \kappa)^{c/2\epsilon-3}}.
  \end{align*}
 For the second step, we also used the fact that $\lambda_i(\Qb\Qb^T\Bb \Qb\Qb^T) =\lambda_i(\Qb^T\Bb \Qb)$. We can similarly prove that $\|\Bb\bv{z}_i -\lambda_i(\Qb\Qb^T\Bb\Qb\Qb^T)\zb_i\| \leq \frac{3g^{cl/4}_{\min}\|\Bb \|_2}{(n\cdot \kappa)^{c/2\epsilon-3}}$ when $\|\bv{z}_i+\bv{u}_i \|_2 \leq \frac{g^{cl/4}_{\min}}{(n\cdot \kappa)^{c/2\epsilon-1}}$. Moreover, from Lemma~\ref{Lem:wts}, for any $i \in [k]$, we have $w_i^2=(\vb_i^T\bv{e}_1)^2 \leq \frac{C\sqrt{\log (k/\delta)}}{n} \leq \frac{C\sqrt{\log (l/\delta)}}{n}$ for some constant $C>0$. Thus, if we set the constants $\beta,C_1$ in Algorithm~\ref{alg:slq+} such that $\beta \leq \frac{c}{2}-3\epsilon$, and $C_1\geq C$, then  $\vb_1, \ldots \vb_k$ must satisfy the conditions in line 5 that $\|\Bb \Qb \vb_j-  \lambda_j(\Tb_1) \Qb\vb_j\|_2 \leq \frac{\|\Bb \|_2}{n^{\beta/\epsilon}}$ and $(\vb_j^T\bv{e}_1)^2 \leq \frac{C_1\sqrt{\log (l/\delta)}}{n}$, i.e. $[k] \subseteq S.$ Thus, $\max_{i \in [n]\setminus S} |\lambda_i(\Bb)| \leq |\lambda_{k+1}(\Bb)| \leq 1$. Also, by the minimax principle, $\max_{i \in [m]\setminus S} |\lambda_i(\Tb_1)| \leq |\lambda_{k+1}(\Tb_1)| \leq |\lambda_{k+1}(\Bb)| \leq 1$. Thus, the support of $s_2(x)$ and $f_1(x)$ is in $[-1,1]$. 
  To bound $W_1(s_2,f_1)$, we use Lemma 3.1 of~\cite{braverman:2022} to get that for any $N \in 4\mathbb{N}^+$, we have:
  \begin{align*}
      \W_1(s_2,f_1) &\leq \frac{36}{N}+2\sum_{i=1}^{N}\frac{|\langle \Bar{T}_i,s_2 \rangle- \langle \Bar{T}_i,f_1 \rangle|}{i},
  \end{align*}
  where $\Bar{T}_i$ is the $i$th normalized Chebyshev polynomial.
  Then, we get:
  \begin{align}\label{Eq:s2}
    \frac{n-|S|}{n} W_1(s_2,f_1) &\leq  \frac{n-|S|}{n}\cdot\frac{36}{N} +2\sum_{i=1}^{N}\frac{1}{i} \bigg|\frac{1}{n}\sum_{j \in [n]\setminus S_1}\Bar{T}_i(\lambda_j(\Bb))- \frac{\frac{n-|S|}{n}}{\sum_{p \in [m]\setminus S}w^2_p}\sum_{j \in [m]\setminus S} w_j^2\Bar{T}_i(\lambda_j(\Tb_1)) \bigg|.
  \end{align}
  We can set $N=O\big(\frac{1}{\epsilon} \big)$ since the total number of iterations of Lanczos is $m=\Tilde{O} \left(l +\frac{1}{\epsilon} \right)$. Then, using triangle inequality, and the fact that $[k] \subseteq S$ and $[k] \subseteq S_1$, we get that for any $i \in [N]$:
\begin{align}\label{Eq:var_slq_diff}
 &\bigg|\frac{1}{n}\sum_{j \in [n]\setminus S_1}\Bar{T}_i(\lambda_j(\Bb))- \frac{\frac{n-|S|}{n}}{\sum_{p \in [m]\setminus S}w^2_p}\sum_{j \in [m]\setminus S} w_j^2\Bar{T}_i(\lambda_j(\Tb_1)) \bigg| \nonumber \\ 
 &\leq \bigg|\frac{1}{n}\sum_{j=k+1}^n\Bar{T}_i(\lambda_j(\Bb))- \sum_{j=k+1}^m w_j^2\Bar{T}_i(\lambda_j(\Tb_1)) \bigg| +\bigg| \frac{1}{n}\sum_{j \geq k+1, j \in S_1 }\Bar{T}_i(\lambda_j(\Bb)) - \sum_{j \geq k+1, j \in S} w_j^2 \Bar{T}_i(\lambda_j(\Tb_1)) \bigg| \nonumber \\
 &+  \bigg| \frac{\frac{n-|S|}{n}}{\sum_{p \in [m]\setminus S}w^2_p}\sum_{j \in [m]\setminus S} w_j^2\Bar{T}_i(\lambda_j(\Tb_1))- \sum_{j \in [m] \setminus S} w_j^2 \Bar{T}_i(\lambda_j(\Tb_1)) \bigg|.
\end{align}
We will now bound the three terms above individually. Let $f_2(x)=\sum_{i=1}^m w_i^2 \delta(x-\lambda_i(\Tb_1))$, i.e., the output of SLQ with $\Bb$ as input. Next, observe that $\Bb$ satisfies all the conditions of  Lemma~\ref{Lem:slq_mom} since $\sigma_{k}(\Bb) =\frac{\sigma_k(\Ab)}{2\alpha}\geq 1$ and $\sigma_{k+1}(\Bb)\leq 1$. Let $t_i(x)$ be the polynomial defined in Lemma~\ref{Lem:slq_mom} for $i \in [N]$. Then, for the first term, using triangle inequality, we have that:
\begin{align*}
    \bigg|\frac{1}{n}\sum_{j=k+1}^n\Bar{T}_i(\lambda_j(\Bb))- \sum_{j=k+1}^m w_j^2 \Bar{T}_i(\lambda_j(\Tb_1)) \bigg| &\leq \bigg| \frac{1}{n}\sum_{j =k+1}^n\Bar{T}_i(\lambda_j(\Bb))-  \langle t_i,s_{\Bb} \rangle \bigg|   \nonumber \\
    &+ \bigg|\sum_{j=k+1}^m w_j^2\Bar{T}_i(\lambda_j(\Tb_1))  - \langle t_i,f_2 \rangle \bigg| 
    + \bigg| \langle t_i,f_2 \rangle -\langle t_i,s_{\Bb} \rangle  \bigg|.
\end{align*}
From Lemma~\ref{Lem:slq_mom}, we have that $\bigg|\sum_{j=k+1}^m w_j^2 \Bar{T}_i(\lambda_j(\Tb_1))  -  \langle t_i,f_2 \rangle \bigg| \leq \frac{g^{c_3 l}_{\min}}{(n \cdot \kappa)^{c_3/\epsilon}}$ and $\bigg| \frac{1}{n}\sum_{j =k+1}^n \Bar{T}_i(\lambda_j(\Bb))-  \langle t_i,s_{\Bb} \rangle \bigg| \leq \frac{g^{c_4 l}_{\min}}{(n \cdot \kappa)^{c_4/\epsilon}}$ for constants $c_3,c_4>0$ for all $i \in O(\frac{1}{\epsilon})$. From Lemma~\ref{Lem:slq_hutch}, we have that $\bigg| \langle t_i,f_2 \rangle -\langle t_i,s_{\Bb} \rangle  \bigg| \leq \frac{C_2\log(N/\delta) }{\sqrt{n}} \leq \frac{C_2\log(1/\epsilon\delta) }{\sqrt{n}}$ for some constant $C_2$ and for all $i \in O(1/\epsilon)$ with probability at least $1-\delta$. Thus, we get that 
\begin{align*}
    &\bigg|\frac{1}{n}\sum_{j=k+1}^n\Bar{T}_i(\lambda_j(\Bb))- \sum_{j=k+1}^m w_j^2 \Bar{T}_i(\lambda_j(\Tb_1)) \bigg| \\ &\leq \frac{g^{c_3 l}_{\min}}{(n \cdot \kappa)^{c_3/\epsilon}}+ \frac{g^{c_4 l}_{\min}}{(n \cdot \kappa)^{c_4/\epsilon}} + \frac{C_2\log(1/\epsilon\delta) }{\sqrt{n}} \leq \frac{2C_2\log(1/\epsilon\delta) }{\sqrt{n}},
\end{align*}
for all $i \in O(\frac{1}{\epsilon})$. We now bound the second term in~\eqref{Eq:var_slq_diff}. Note that since $|\lambda_j(\Bb)| \leq 1 $ for $j \geq k+1$, we have that $\Bar{T}_i(\lambda_j(\Bb)) \leq \sqrt{\frac{2}{\pi}}$. Also, $|\lambda_j(\Tb_1)| \leq |\lambda_{j}(\Bb)| \leq 1$ for $j \geq k+1$ and $w_j^2 \leq \frac{C_1 \sqrt{\log (l/\delta)}}{n}$ for all $ j \in S$.  Thus, we get that: 
\begin{align*}
    &\bigg| \frac{1}{n}\sum_{j \geq k+1, j \in S_1 }\Bar{T}_i(\lambda_j(\Bb)) - \sum_{j \geq k+1, j \in S} w_j^2 \Bar{T}_i(\lambda_j(\Tb_1)) \bigg| \\
    &\leq \frac{1}{n}\sum_{j \geq k+1, j \in S_1 } |\Bar{T}_i(\lambda_j(\Bb))| + \sum_{j \geq k+1, j \in S} w_j^2 |\Bar{T}_i(\lambda_j(\Tb_1))| \\
    &\leq \sqrt{\frac{2}{\pi}}\frac{|S_1|}{n}+  \sqrt{\frac{2}{\pi}}\frac{C_1|S|\sqrt{\log(l/\delta)}}{n} \leq \frac{C_3 l \sqrt{\log (l/\delta)}}{n}.
\end{align*}
for some constant $C_3>0$ and where we used the fact that $|S_1|=|S|$ and $|S| \leq l$ in the last step. We finally bound the last term in~\eqref{Eq:var_slq_diff}. Observe that we have (for some constant $C_4>0$):
\begin{align*}
    &\left| \frac{\frac{n-|S|}{n}}{\sum_{p \in [m]\setminus S}w^2_p}\sum_{j \in [m]\setminus S} w_j^2\Bar{T}_i(\lambda_j(\Tb_1))- \sum_{j \in [m] \setminus S} w_j^2 \Bar{T}_i(\lambda_j(\Tb_1)) \right| \\
    &\leq  \left| \sum_{j \in [m]\setminus S} w_j^2 \Bar{T}_i(\lambda_j(\Tb_1))\left(\frac{\frac{n-|S|}{n}}{\sum_{p \in [m]\setminus S}w^2_p}-1 \right) \right| \\
    &\leq \left| \frac{\frac{n-|S|}{n} - \sum_{p \in [m]\setminus S}w^2_p}{\sum_{p \in [m]\setminus S}w^2_p}  \right| + \left|  \sum_{p \in [m]\setminus S}w^2_p \sqrt{\frac{2}{\pi}} \right| \\
    &\leq \sqrt{\frac{2}{\pi}} \left| \sum_{p \in S} w^2_p -\frac{|S|}{n} \right| \\
    &\leq \sqrt{\frac{2}{\pi}}\frac{C_2|S|\sqrt{\log(l/\delta)}}{n} + \sqrt{\frac{2}{\pi}}\frac{|S|}{n} \\
    &\leq \frac{C_4l\sqrt{\log(l/\delta)}}{n}.
\end{align*}
In the second step above, we used the fact that since $[k] \subseteq S$, we have $\lambda_j(\Tb_1) \leq \lambda_k(\Tb_1) \leq \lambda_k(\Bb) \leq 1$ for $j \in [m] \setminus S$. So we have $\Bar{T}_i(\lambda_j(\Tb_1)) \leq \sqrt{\frac{2}{\pi}}$  for $j \in [m] \setminus S$. In the third step, we used the fact that $\sum_{i \in [m]} w_i^2=1$ and in the fourth step, we bounded $w_p^2 \leq \frac{C_2\sqrt{\log (l/\delta)}}{n}$. Finally, using the upper bounds on the three terms on the right hand side of~\eqref{Eq:var_slq_diff}, and observing that for large enough we get that for $i \in O(\frac{1}{\epsilon})$ (for constants $C_2, C_5>0$), with probability at least $1-\delta$:
\begin{align*}
\left|\frac{1}{n}\sum_{j \in [n]\setminus S_1}\Bar{T}_i(\lambda_j(\Bb))- \frac{\frac{n-|S|}{n}}{\sum_{p \in [m]\setminus S}w^2_p}\sum_{j \in [m]\setminus S} w_j^2\Bar{T}_i(\lambda_j(\Tb_1)) \right| \leq \frac{C_2 \log (1/\epsilon \delta)}{\sqrt{n}}+\frac{C_5 l\sqrt{\log(l/\delta)}}{n}.
\end{align*}
From~\eqref{Eq:s2}, we get that (using the fact that $N=O\left( \frac{1}{\epsilon} \right)$):
\begin{align*}
     \frac{n-|S|}{n}W_1(s_2,f_1) &\leq C_6\epsilon \frac{n-|S|}{n} + \left(\frac{C_6 \log (1/\epsilon \delta)}{\sqrt{n}}+\frac{C_6 l\sqrt{\log(l/\delta)}}{n} \right) \sum_{i=1}^N \frac{1}{i} \\
     &\leq  C_6\epsilon +\frac{C_6 \log (1/\epsilon \delta) \log (1/\epsilon)}{\sqrt{n}} +\frac{C_6 l\sqrt{\log(l/\delta) }\log(1/\epsilon)}{n},
\end{align*}
for large constant $C_6>0$. Finally, using the bounds on $\W_1(s_2,f_1)$ from above, and on $\W_1(s_1,s_2)$ from~\eqref{Eq:s12}, in~\eqref{Eq:I2_sql+}, we get that:
\begin{align*}
    I_2 \leq \frac{2\|\Bb \|_2}{n^{(\beta-1)/\epsilon}} + C_6\epsilon +\frac{C_6 \log (1/\epsilon \delta) \log (1/\epsilon)}{\sqrt{n}} +\frac{C_6 l\sqrt{\log(l/\delta) }\log(1/\epsilon)}{n},
\end{align*}
where we also used the fact that $\frac{n-|S|}{n} \leq 1$. Thus, from~\eqref{Eq:sc}, we get that $\W_1(s_{\Cb},f_{\Bb}) \leq \frac{2\|\Bb \|_2}{n^{(\beta-1)/\epsilon}} + C_6\epsilon +\frac{C_6 \log (1/\epsilon \delta) \log (1/\epsilon)}{\sqrt{n}} +\frac{C_6 l\sqrt{\log(l/\delta) }\log(1/\epsilon)}{n}$. Then, from~\eqref{Eq:sb} and using triangle inequality, we get that:
\begin{align*}
   \W_1(s_{\Bb},f_{\Bb}) &\leq  \W_1(s_{\Bb},s_{\Cb})+\W_1(s_{\Cb},f_{\Bb}) \\
   &\leq \frac{4\|\Bb \|_2}{n^{(\beta-1)/\epsilon}} + C_6\epsilon +\frac{C_6 \log (1/\epsilon \delta) \log (1/\epsilon)}{\sqrt{n}} +\frac{C_6 l\sqrt{\log(l/\delta) }\log(1/\epsilon)}{n}.
\end{align*}
Finally, from~\eqref{Eq:sf}, we get that:
\begin{align*}
     \W_1(s_{\Ab},f) &\leq 2\alpha \left( \frac{4\|\Bb \|_2}{n^{(\beta-1)/\epsilon}} + C_6\epsilon +\frac{C_6 \log (1/\epsilon \delta) \log (1/\epsilon)}{\sqrt{n}} +\frac{C_6 l\sqrt{\log(l/\delta) }\log(1/\epsilon)}{n} \right) \\
     &\leq \frac{4\| \Ab\|_2}{n^{(\beta-1)/\epsilon}}+ 2\alpha \left( C_6\epsilon+ \frac{C_6 \log (1/\epsilon \delta) \log (1/\epsilon)}{\sqrt{n}} +\frac{C_6 l\sqrt{\log(l/\delta) }\log(1/\epsilon)}{n}\right),
\end{align*}
where the second step follows from the facts that $\Bb=\frac{\Ab}{2\alpha}$. Also note that we have $\alpha \leq \sigma_{l+1}(\Ab)+\frac{\|\Ab \|_2g^{c/4}_{\min}}{n^{c/4}}$. Simplifying the above expression we get that for some large constants $C'>0$ and $c_2>0$ such that $\frac{4}{n^{(\beta-1)/\epsilon}} << \frac{1}{n^{c_2}}$, we have:
\begin{align*}
    \W_1(s_{\Ab},f) &\leq C'\epsilon \sigma_{l+1}(\Ab)  +\frac{C'\log (1/\epsilon \delta) \log (1/\epsilon)}{\sqrt{n}}\sigma_{l+1}(\Ab) +\frac{ C'l\sqrt{\log(l/\delta) }\log(1/\epsilon)}{n}\sigma_{l+1}(\Ab) +\frac{\| \Ab\|_2}{n^{c_2}}.
\end{align*}

\medskip

\noindent\textbf{Case 2:} We now consider the case when such a $k$ doesn't exist. In this case, we have $\|\Ab \|_2 \leq 2\alpha$. Then, following case 2 of Theorem~\ref{thm:slq}, let $\Bb=\frac{\Ab}{2\alpha}$ such that $\|\Bb \|_2 \leq 1$. Then, following case 1, we can again apply the backwards error to get a matrix $\Cb$ such that $\|\Bb-\Cb \|_2 \leq \frac{2\|\Bb \|_2}{n^{(\beta-1)/\epsilon}}$. We can again split $ \W_1(s_{\Cb},f_{\Bb})$ into the integrals $I_1$ and $I_2$ as in~\eqref{Eq:sc} such that $I_1=0$ and $I_2$ can be bounded as $I_2 \leq \frac{n-|S|}{n} \big( \W_1(s_1,s_2) +\W_1(s_2,f_1) \big)$ (as in~\eqref{Eq:I2_sql+}) such that $\W_1(s_1,s_2) \leq \frac{2\|\Bb \|_2}{n^{(\beta-1)/\epsilon}}$ (as in~\eqref{Eq:s12}) and $\frac{n-|S|}{n}\W_1(s_2,f_1)$ is bounded using Lemma 3.1 of~\cite{braverman:2022} by the Chebyshev moments of $s_2$ and $f_1$ as in~\eqref{Eq:s2}. Then, using triangle inequality, we can bound the moments of $f_1$ and $s_2$ again as in~\eqref{Eq:var_slq_diff} such that we have:
\begin{align}
 &\bigg|\frac{1}{n}\sum_{j \in [n]\setminus S_1}\Bar{T}_i(\lambda_j(\Bb))- \frac{\frac{n-|S|}{n}}{\sum_{p \in [m]\setminus S}w^2_p}\sum_{j \in [m]\setminus S} w_j^2\Bar{T}_i(\lambda_j(\Tb_1)) \bigg| \nonumber \\ 
 &\leq \bigg|\frac{1}{n}\sum_{j \in [n]}\Bar{T}_i(\lambda_j(\Bb))- \sum_{j \in [m]} w_j^2\Bar{T}_i(\lambda_j(\Tb_1)) \bigg| +\bigg| \frac{1}{n}\sum_{j  \in S_1 }\Bar{T}_i(\lambda_j(\Bb)) - \sum_{j \in S} w_j^2 \Bar{T}_i(\lambda_j(\Tb_1)) \bigg| \nonumber \\
 &+  \bigg| \frac{\frac{n-|S|}{n}}{\sum_{p \in [m]\setminus S}w^2_p}\sum_{j \in [m]\setminus S} w_j^2\Bar{T}_i(\lambda_j(\Tb_1))- \sum_{j \in [m] \setminus S} w_j^2 \Bar{T}_i(\lambda_j(\Tb_1)) \bigg|. 
\end{align}
Since $w_j^2 \leq \frac{C_2\sqrt{\log (l/\delta)}}{n}$ for $j \in S$ and $\lambda_j(\Tb_1) \leq \lambda_j(\Bb) \leq 1$, the second and third terms on the right hand side above can be bounded by $O\left( \frac{l\sqrt{\log (l/\delta)}}{n} \right)$ as in case 1. To bound the first term, observe that $\frac{1}{n}\sum_{j \in [n]}\Bar{T}_i(\lambda_j(\Bb))=\frac{1}{n}\tr(\Bar{T}_i(\Bb))$ and $\sum_{j \in [m]} w_j^2\Bar{T}_i(\lambda_j(\Tb_1))$ is the $i$th Chebyshev moment of the output of SLQ (Algorithm~\ref{alg:slq}) with $\Bb$ as input. Thus, from Lemma~\ref{lem:abs-slq-true-A} we get that $\bigg|\frac{1}{n}\sum_{j \in [n]}\Bar{T}_i(\lambda_j(\Bb))- \sum_{j \in [m]} w_j^2\Bar{T}_i(\lambda_j(\Tb_1)) \bigg| \leq O \left( \frac{\log (N/\delta)}{\sqrt{n}} \right) \leq  O \left( \frac{\log (1/\epsilon\delta)}{\sqrt{n}} \right)$. Thus, we get $\bigg|\frac{1}{n}\sum_{j \in [n]\setminus S_1}\Bar{T}_i(\lambda_j(\Bb))- \frac{\frac{n-|S|}{n}}{\sum_{p \in [m]\setminus S}w^2_p}\sum_{j \in [m]\setminus S} w_j^2\Bar{T}_i(\lambda_j(\Tb_1)) \bigg| \leq O \left( \frac{\log (1/\epsilon\delta)}{\sqrt{n}} +\frac{l\sqrt{\log (l/\delta)}}{n} \right)$.The rest of the proof follows the proof of case 1 which gives us the final bound of $$\W_1(s,f) \leq \epsilon \sigma_{l+1}(\Ab)  +\frac{C'\log (1/\epsilon \delta) \log (1/\epsilon)}{\sqrt{n}}\sigma_{l+1}(\Ab) +\frac{ C'l\sqrt{\log(l/\delta) }\log(1/\epsilon)}{n}\sigma_{l+1}(\Ab) +\frac{\| \Ab\|_2}{n^{c_2}}.$$

Finally, observe that adjusting $\delta$ by some constant factors gives us the final bound.
\end{proof}

\section{Lower Bound}\label{sec:lowerBound}

We now prove the lower bound on the number of matrix vector queries required by any algorithm to estimate the spectral density of any matrix $\Ab$ upto Wasserstein error $\epsilon \sigma_{l+1}(\Ab)$. Our proof proceeds via a reduction of the spectral density estimation problem to the problem of distinguishing between two Wishart matrices with ranks very close to each other (Theorem 17 of~\cite{SwartworthWoodruff:2023}). Our lower bound of $O(l +\frac{1}{\epsilon})$ shows that our upper bounds for estimating the SDE via explicit deflationa s well as SLQ are nearly tight (upto polylog factors).

\begin{reptheorem}{thm:lower_bound}
 Any (possibly randomized) algorithm that given symmetric $\bv A \in \R^{n \times n}$ outputs $\tilde s_{\bv A}$ such that, with probability at least $1/2$, $W_1(s_\bv{A},\tilde s_{\bv A}) \le \epsilon \sigma_{l+1}(\Ab)$ for $\epsilon \in (0,1)$ and $l \in [n]$ must make $\Omega \left(l +\frac{1}{\epsilon} \right)$ (possibly adaptively chosen) matrix-vector product queries to $\Ab$.
 \end{reptheorem}
 \begin{proof}
      Let $\mathcal{A}$ be an adaptive algorithm that estimates the spectral density of $\Ab $ up to error $\epsilon \sigma_{l+1}(\Ab)$ in the Wasserstein-$1$ norm. Let $W(n,r)$ be the $n$ dimensional Wishart distribution with $r$ degrees of freedom i.e.,  the distribution of $\bv{G}\bv{G}^T$ where $\bv{G} \in \R^{n \times r}$ has i.i.d. standard normal entries. We will use Theorem 17 of~\cite{SwartworthWoodruff:2023} which states that at least $\Omega(r)$ (possibly adaptive) matrix vector queries are required by any adaptive algorithm to distinguish between two Wishart matrices $W(n,r)$ and $W(n,r+2)$ with probability at least $\frac{2}{3}$. We prove the lower bound by considering the two cases: $l > \frac{1}{\epsilon}$ and $l \leq \frac{1}{\epsilon}$.

     \medskip
     
  \noindent\textbf{Case 1.} $\left(l  > \frac{1}{\epsilon} \right)$:  The non-zero eigenvalues of the Wishart ensembles $W(n,l)$ and $W(n,l+2)$ are bounded between $\frac{n}{2}$ and $2n$ with probability at least $5/6$ as long as $n \geq Cl$ for some constant $C$ \cite{vershynin2018high}. Let $n=Cl$.  Consider the Wishart ensembles $\Ab_1=W(n,l)$ and $\Ab_2=W(n,l+2)$. Observe that $\Ab_1$ and $\Ab_2$ have ranks of $l$ and $l+2$ respectively. So, $\sigma_{l+3}(\Ab_1)=\sigma_{l+3}(\Ab_2)=0$. Let $\mathcal{A}$ use $k$ matrix-vector products with the input matrix to estimate the spectral density $s_{\bv A}(x)$ up to error $\epsilon \sigma_{l+3}(\Ab)=0$ in both cases with probability at least $0.5$ i.e. the spectral density of the input matrix is estimated exactly. So, the rank of $\Ab$ in both cases is given exactly by $n\int_{n/2}^{2n} s_{\bv A}(x) dx$ with probability at least $0.5$. Hence, $\mathcal{A}$ can distinguish between $W(n,l)$ and $W(n,l+2)$ with probability at least $\frac{2}{3}$. Thus, from Theorem 17 of~\cite{SwartworthWoodruff:2023}, we must have $k = \Omega(l)$.

\medskip

\noindent\textbf{Case 2.} $\left(l \leq \frac{1}{\epsilon} \right)$: In this case, let $r=\lfloor \frac{1}{\epsilon} \rfloor$ and let us consider the normalized Wishart ensembles $\Ab_1=\frac{1}{2n}W(n,r)$ and $\Ab_2=\frac{1}{2n}W(n,r+2)$ where $n=2Cr$. Let either $\Ab_1$ or $\Ab_2$ be the input to $\mathcal{A}$. Since the nonzero eigenvalues of $\Ab_1$ and $\Ab_2$ lie in $[0.25,1]$ with probability at least $5/6$, by setting the error parameter to $\frac{\epsilon}{1000C}$ we can estimate the spectral density of the input to $\mathcal{A}$ up to error $\frac{\epsilon}{1000C}\max(\sigma_{l+1}(\Ab_1),\sigma_{l+1}(\Ab_2)) \leq  \frac{\epsilon}{1000C}$ in the Wasserstein-$1$ distance. Let $\lambda_1 \geq \ldots \geq \lambda_n$ be the true eigenvalues of the input to $\mathcal{A}$. Then, using the estimated spectral density returned by $\mathcal{A}$, we can construct a list of n values $[\tilde{\lambda}_1, \ldots , \tilde{\lambda}_n]$ in time linear in $n$ and $\frac{1}{\epsilon}$ such that $\sum_{i=1}^n |\lambda_i-\tilde{\lambda}_i| \leq \frac{3\epsilon n}{1000C} \leq \frac{6}{1000}$ (see~\cite{Cohen-SteinerKongSohler:2018}, theorem B.1 in~\cite{braverman:2022}). So, we have $|\lambda_{r+1}-\tilde{\lambda}_{r+1}| \leq \sum_{i=1}^n |\lambda_i-\tilde{\lambda}_i| \leq \frac{6}{1000}$. When, $\Ab_1$ is the input we gt $|\tilde{\lambda}_{r+1}| \leq \frac{6}{1000}$ since $\lambda_{r+1}=0$ in this case.  When, $\Ab_3$ is the input, via triangle inequality, we get $|\tilde{\lambda}_{r+1}| \geq |\lambda_{r+1}|- \frac{6}{1000} \geq 0.25-\frac{6}{1000}$ since $\lambda_{r+1}>0.25$ in this case. So, we can distinguish between $\Ab_1$ and $\Ab_2$ with probability at least $\frac{2}{3}$. So, using Theorem 17 of~\cite{SwartworthWoodruff:2023}, we again have $k \geq \Omega(\frac{1}{\epsilon})$.

Finally, observe that setting $\delta=\frac{1}{n^{c'}}$ for some constant $c'$ gives us the final bound.
\end{proof}

\section{Empirical Evaluation}\label{sec:exp}

In this section, we compare the empirical performance of the algorithms studied in this paper and several other standard algorithms in approximating the spectral densities of several synthetic and publicly available matrices. We observe that the SLQ algorithm (Algorithm \ref{alg:slq}) and its variance reduced variant, VR-SLQ (Algorithm \ref{alg:slq+}) generally out perform explicit moment matching methods like KPM and Chebyshev moment matching in the Wasserstein distance metric. This finding aligns with our results in Theorem \ref{thm:slq} and \ref{thm:varslq} that these methods perform implicit deflation for {any deflation parameter $\ell$} allowing them to adapt to the matrix spectrum to achieve stronger error bounds.

\subsection{Datasets}

Our comparisons are performed on the following matrices:

\begin{itemize}
    \item The \emph{Gaussian matrix}, which is a $5000\times 5000$ matrix, constructed by first drawing $5000$ eigenvalues from a Gaussian distribution to get $\bv{\Lambda} \in \mathcal{N}(0,1)^{5000}$, normalizing $\bv{\Lambda} = \bv{\Lambda} / \norm{\bv\Lambda}_{\infty}$, then generating a random orthonoromal matrix $\Vb\in \R^{5000\times 5000}$, and finally computing $\Ab = \Vb\mathbf{\Lambda}\Vb^T$. This matrix is generated from the descriptions of the Gaussian matrix in \cite{braverman:2022}.

    \item The \emph{Uniform matrix} is constructed analogously to the Gaussian matrix, except its eigenvalues are drawn uniformly and independently from $[-1,1]$. This matrix is generated from the descriptions of the Uniform matrix in \cite{braverman:2022}.

    \item The \emph{inverse spectrum matrix} is a $5000\times 5000$ diagonal matrix with entries $1, 1/2, \ldots, 1/5000$.

    \item The \emph{power law spectrum matrix}  is a $5000\times 5000$ diagonal matrix with entries $1, 1/2^2, \ldots, 1/2^{5000}$.

    \item The \emph{low-rank matrix} is a $5000\times 5000$ diagonal matrix with $100$ entries drawn uniformly and independently from a Gaussian distribution $\mathcal{N}(0,1)$, and normalized as $\bv{\Lambda} = \bv{\Lambda} / \norm{\bv\Lambda}_{\infty}$. The remaining diagonal entries are set at $0$.

    \item Finally, \emph{Erdos992} is the adjacency matrix of the Erd\"os collaboration network (snapshot of 1992) with $6100$ vertices and containing $15030$ undirected edges. It is taken from a publicly available sparse matrix collection \cite{davis2011university}. The adjacency matrix has one eigenvalue at $1$, roughly $1000$ eigenvalues at $0$, and roughly $500$ eigenvalues of magnitude greater than $0.2$.
\end{itemize}

We note that all algorithms considered in this paper are `rotationally invariant' in that their performance should not depend on the actual eigenvector basis of $\bv A$. For this reason, most of our test matrices are simply diagonal matrices. And we indeed see no systematic difference in performance depending on the eigenvector basis.

\subsection{Implementation Details}

We consider three baseline SDE algorithms: 1) stochastic Lanczos quadrature (SLQ), 2) the Chebyshev moment matching (CMM) algorithm (Algorithm 1 of \cite{braverman:2022}), and 3) the Jackson damped kernel polynomial method (KPM) (Algorithm 6 of \cite{braverman:2022}) which is a popular moment matching algorithm that can be thought of as an approximation  of CMM. We implement two versions of our explicit deflation algorithm (Algorithm \ref{alg:sde}) -- one of which uses CMM after deflation (the one we analyze), and one that uses KPM. We call these algorithms {def-CMM} and {def-KPM} respectively. Along with these algorithms, we also compare the performance of SLQ algorithm (Algorithm \ref{alg:slq}) and VR-SLQ (Algorithm \ref{alg:slq+}).

Since we test for relatively small $n$, as the number of iterations/moments performed by each algorithm increases, we may reach small $\epsilon = \tilde o(1/\sqrt{n})$ values for which more than one random vectors in Hutchinson's method in the explicit moment matching methods or more than one independent trial of SLQ are needed. For simplicity, for all moment matching  methods we use 15 random vectors for Hutchinson's method. For SLQ-based methods we perform $15$ independent trials of the method (using $15$ independent random starting vectors) and average together the output densities to obtain our final spectral density estimate. For block Krylov based deflation we perform $15$ iterations to generate the Krylov subspace.

We compare each algorithm based on the Wasserstein-$1$ error achieved for a fixed number of total matrix-vector queries to the input matrix. Since the iterations in Krylov and number of vectors in Hutchinson's algorithm are fixed at 15, to ensure that def-CMM or def-KPM uses the same number of matrix-vector queries as other algorithms we then need to split the moment budget of CMM or KPM to accomodate for the matrix-vector queries due to block Krylov. We split this in the ratio $1:3$, i.e., for every $2$ moment computations of CMM or KPM in def-CMM or def-KPM, we use a block size of $3$ to compute deflation via block Krylov algorithm. Note, for block size of $3$ and $\ell$ iterations, block Krylov method uses $6\ell$ matrix-vector queries. We vary the total matrix-vector query budget for all algorithms, i.e., increase the total moments approximated by CMM-based algorithms, or increase the total number of iterations in SLQ-variants, and report the Wasserstein-$1$ error in Figure \ref{fig:main plot} across all matrices.

To compute $N$ moment estimates, the CMM algorithm \cite{braverman:2022} is evaluated on a discrete and evenly-spaced grid of length $d+1$ in the interval $[-1,1]$. Theoretically, the algorithm requires setting $d=\lceil N^3/3\rceil$. In our experiments, we found setting $d=20000$ to be sufficient. To solve the $\ell_1$-regression problem in the CMM algorithm, we use HiGHS solvers \cite{huangfu2018parallelizing} within SciPy \cite{2020SciPy-NMeth}. To ensure that the solution vector falls within the probability simplex we use Algorithm 1 of \cite{condat2016fast}.
The eigenvalues of any matrix are computed using numpy \cite{numpyeigvals}. 

We repeat each experiment for $t=10$ independent trials. In the plots in Figure \ref{fig:main plot}, the $x$-axis denotes the total matrix-vector queries used by each algorithm per trial, and the $y$-axis represents the corresponding Wasserstein-$1$ SDE error. The bold lines in the plot represent the mean error across $10$ trials. The $10$\tth and the $90$\tth percentile of the observed errors are represented by the faded envelope around the bold lines. 

\paragraph{Code.} All codes are written in Python and available at \url{https://github.com/archanray/SDE_SLQ}.

\subsection{Summary of Results}

We observe that across all of the matrices, SLQ (Algorithm \ref{alg:slq}) and VR-SLQ (Algorithm \ref{alg:slq+}) outperform the explicit moment matching-based algorithms that we test. Among these two algorithms, VR-SLQ more often outperforms SLQ, especially when the spectrum of the input matrix contains only a few large eigenvalues, as in this case, the variance reduction step can have a significant positive effect on the spectral density estimate.
We also observe that the variants of Algorithm \ref{alg:sde} (def-CMM and def-KPM) more often outperform naive CMM and KPM, in particular for matrices with only a few large eigenvalues, as expected.

\begin{figure}[H]
    \centering
    \begin{subfigure}[t]{\textwidth}
        \centering
        \includegraphics[width=0.32\textwidth]{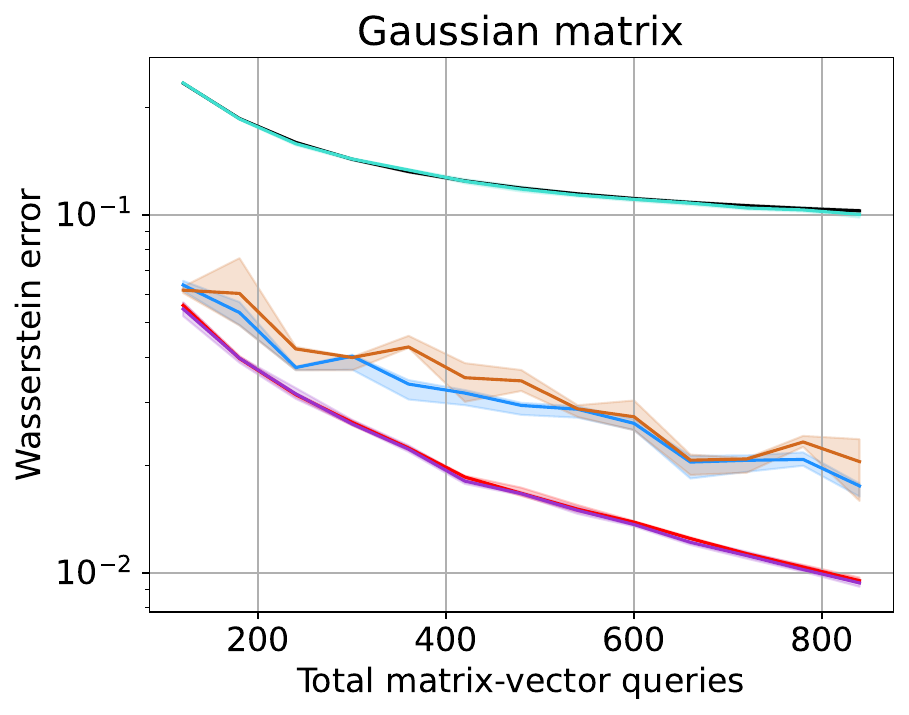}
        \includegraphics[width=0.32\textwidth]{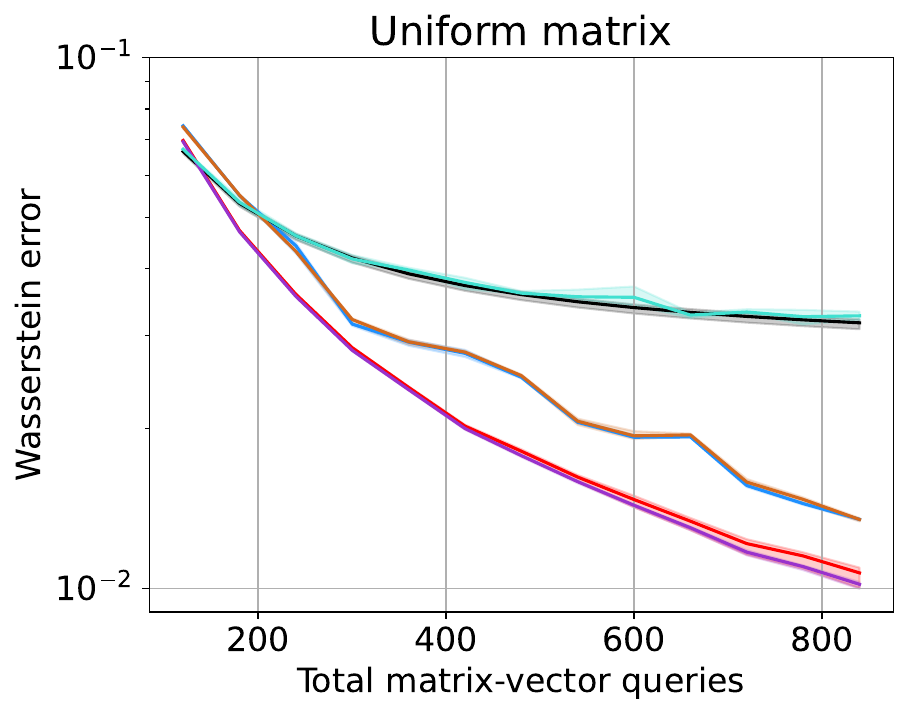}
        \includegraphics[width=0.32\textwidth]{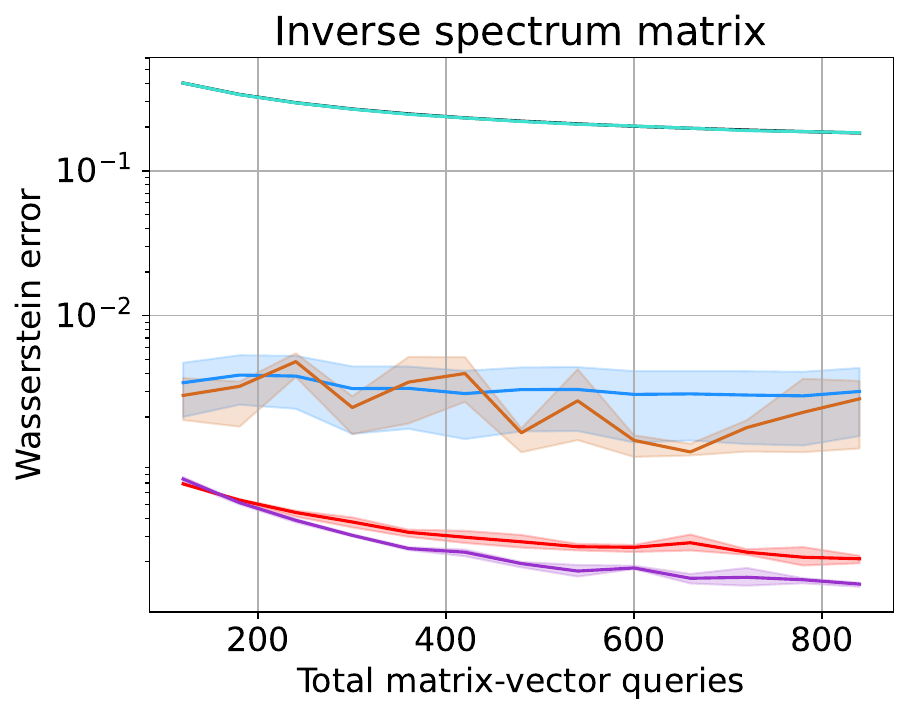}
    \end{subfigure}
    \begin{subfigure}[t]{\textwidth}
        \centering
        \includegraphics[width=0.32\textwidth]{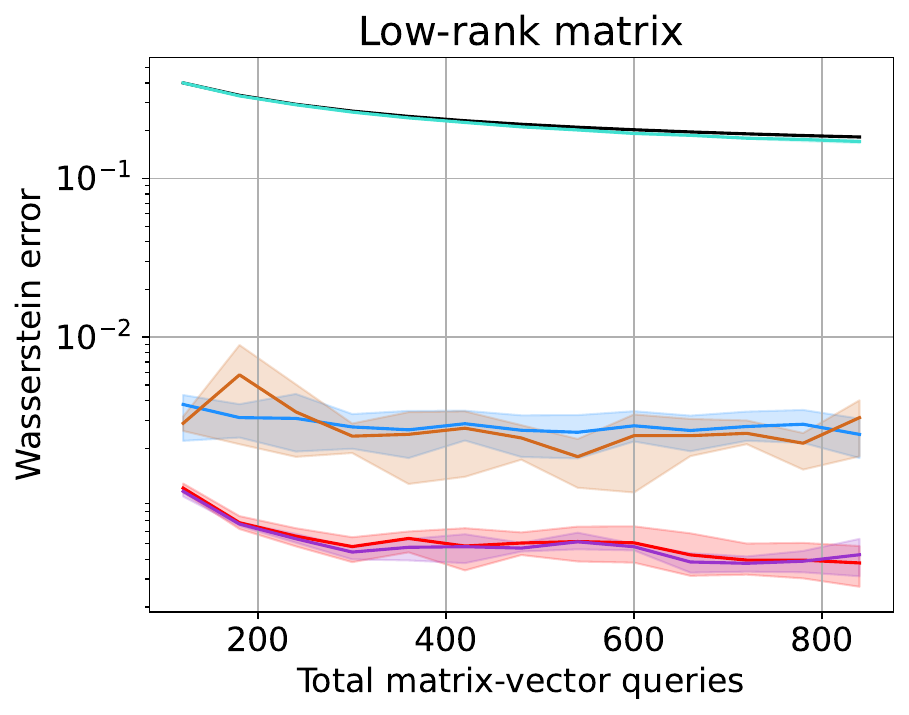}
        \includegraphics[width=0.32\textwidth]{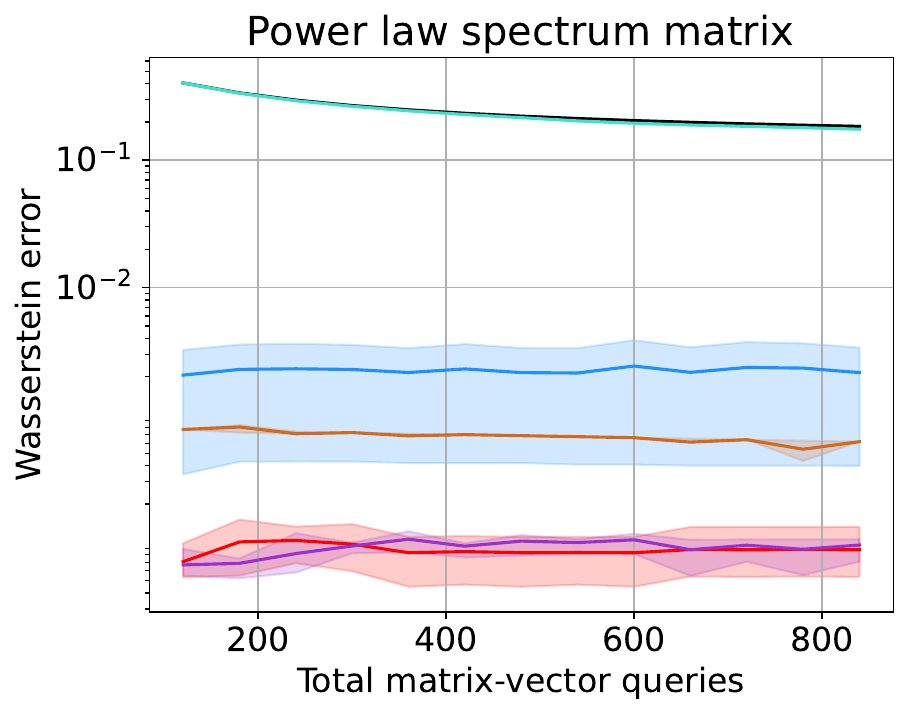}
        \includegraphics[width=0.32\textwidth]{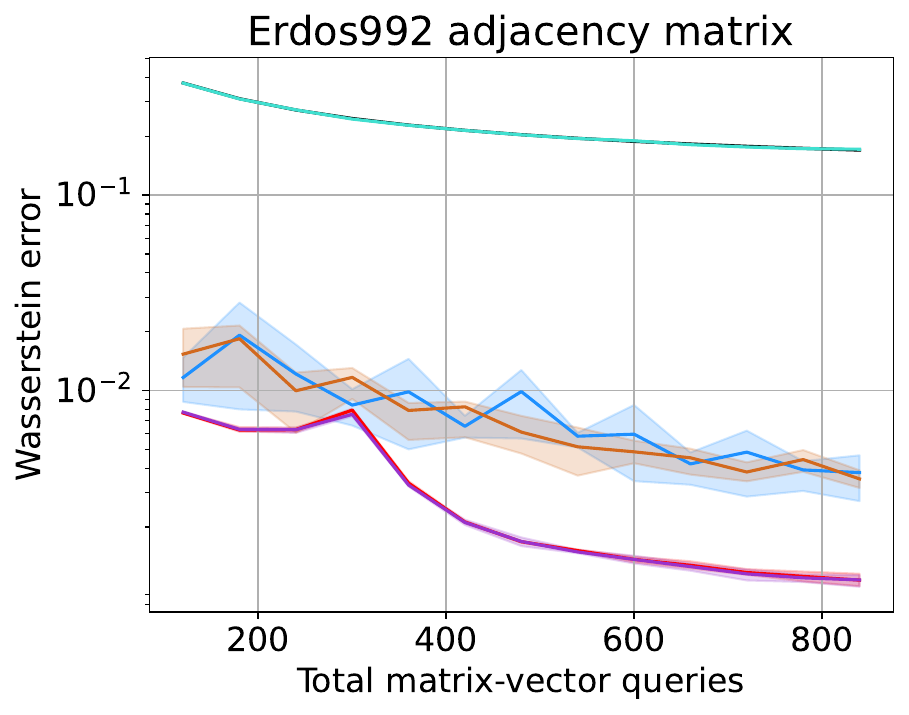}
    \end{subfigure}
    \begin{subfigure}[t]{\textwidth}
        \centering
        \includegraphics[width=0.9\textwidth]{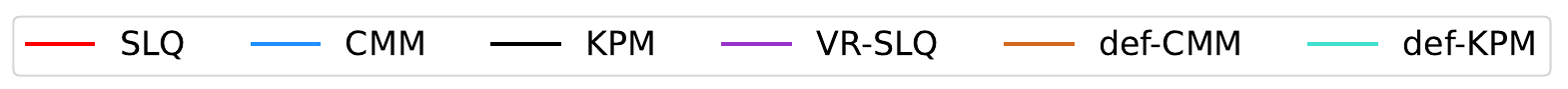}
    \end{subfigure}
    \caption{\textbf{Wasserstein-$1$ error of spectral density estimation approximation algorithms}. In the figures above, we plot the Wasserstein-$1$ error of approximating the spectral density of several matrices using the algorithms presented in the paper and some  baseline algorithms. We observe that in almost all cases, VR-SLQ algorithm outperforms all other SDE algorithms. We also observe that the variants of our deflation algorithm (Algorithm \ref{alg:sde}) generally outperform the corresponding baseline of CMM and KPM.}
\label{fig:main plot}
\end{figure}

\section*{Acknowledgements}

Cameron, Rajarshi, and Archan were partially supported by NSF grants 1934846, 2046235, and 2427363. Part of the work was completed while Archan was at the University of Massachusetts Amherst, NY. Christopher was supported by NSF grant 2045590.
\bibliography{refs}

\end{document}